\def\showauthornotes{1}
\def\showtableofcontents{1}
\def\showkeys{0}
\def\showdraftbox{0}
\def\usemicrotype{1}
\def\showfixme{0}
\newtheorem{theorem}{Theorem}[section]
\newtheorem*{theorem*}{Theorem}
\newtheorem*{proposition*}{Proposition}
\newtheorem{lemma}[theorem]{Lemma}
\newtheorem*{lemma*}{Lemma}
\newtheorem*{conjecture*}{Conjecture}
\newtheorem{fact}[theorem]{Fact}
\newtheorem*{fact*}{Fact}
\newtheorem*{hypothesis*}{Hypothesis}
\theoremstyle{definition}
\newtheorem{definition}[theorem]{Definition}
\newtheorem{example}[theorem]{Example}
\theoremstyle{remark}
\newtheorem{claim}[theorem]{Claim}
\newtheorem*{claim*}{Claim}
\newtheorem{remark}[theorem]{Remark}
\newtheorem*{remark*}{Remark}
\newtheorem{observation}[theorem]{Observation}
\newtheorem*{observation*}{Observation}
\let\mathbb\varmathbb
\newcommand{\savehyperref}[2]{\texorpdfstring{\hyperref[#1]{#2}}{#2}}
\newcommand{\Sref}[1]{\hyperref[#1]{\S\ref*{#1}}}
\newcommand{\Authornote}[2]{{\sffamily\small\color{red}{[#1: #2]}}}
\newcommand{\Authornotecolored}[3]{{\sffamily\small\color{#1}{[#2: #3]}}}
\newcommand{\Authorcomment}[2]{{\sffamily\small\color{gray}{[#1: #2]}}}
\newcommand{\Authorstartcomment}[1]{\sffamily\small\color{gray}[#1: }
\newcommand{\Authorfnote}[2]{\footnote{\color{red}{#1: #2}}}
\newcommand{\Authorfixme}[1]{\Authornote{#1}{\textbf{??}}}
\newcommand{\Authormarginmark}[1]{\marginpar{\textcolor{red}{\fbox{\Large #1:!}}}}
\newcommand{\Authornote}[2]{}
\newcommand{\Authornotecolored}[3]{}
\newcommand{\Authorcomment}[2]{}
\newcommand{\Authorstartcomment}[1]{}
\newcommand{\Authorfnote}[2]{}
\newcommand{\Authorfixme}[1]{}
\newcommand{\Authormarginmark}[1]{}
\newcommand{\pnote}{\Authornote{P}}
\newcommand{\Bnote}{\Authornote{B}}
\newcommand{\Paren}[1]{\left(#1\right)}
\newcommand{\Brac}[1]{\left[#1\right]}
\newcommand{\Abs}[1]{\left\lvert#1\right\rvert}
\newcommand{\Norm}[1]{\left\lVert#1\right\rVert}
\newcommand{\iprod}[1]{\langle#1\rangle}
\newcommand{\ceil}[1]{\lceil #1 \rceil}
\newcommand{\Esymb}{\mathbb{E}}
\newcommand{\Psymb}{\mathbb{P}}
\DeclareMathOperator*{\E}{\Esymb}
\DeclareMathOperator*{\ProbOp}{\Psymb}
\renewcommand{\Pr}{\ProbOp}
\newcommand{\defeq}{\stackrel{\mathrm{def}}=}
\newcommand{\mper}{\,.}
\newcommand{\mcom}{\,,}
\newcommand\bdot\bullet
\DeclareMathOperator{\Tr}{Tr}
\DeclareMathOperator{\val}{val}
\DeclareMathOperator{\poly}{poly}
\newcommand{\Erdos}{Erd\H{o}s\xspace}
\newcommand{\Renyi}{R\'enyi\xspace}
\newcommand{\Lovasz}{Lov\'asz\xspace}
\newcommand{\N}{\mathbb N}
\newcommand{\R}{\mathbb R}
\newcommand{\cC}{\mathcal C}
\newcommand{\cE}{\mathcal E}
\newcommand{\cL}{\mathcal L}
\newcommand{\cM}{\mathcal M}
\newcommand{\cN}{\mathcal N}
\newcommand{\cQ}{\mathcal Q}
\newcommand{\cR}{\mathcal R}
\newcommand{\cS}{\mathcal S}
\newcommand{\cT}{\mathcal T}
\newcommand{\cU}{\mathcal U}
\newcommand{\cV}{\mathcal V}
\newcommand{\cW}{\mathcal W}
\newcommand{\cZ}{\mathcal Z}
\renewcommand{\leq}{\leqslant}
\renewcommand{\geq}{\geqslant}
\newcommand{\draftbox}{\begin{center}
  \fbox{%
    \begin{minipage}{2in}%
      \begin{center}%
%        \begin{Large}%
          \Large\textsc{Working Draft}\\%
%        \end{Large}\\
        Please do not distribute%
      \end{center}%
    \end{minipage}%
  }%
\end{center}
\vspace{0.2cm}}
\newcommand{\draftbox}{}
\let\epsilon=\varepsilon
\numberwithin{equation}{section}
\newcommand\MYcurrentlabel{xxx}
\newcommand{\MYstore}[2]{%
  \global\expandafter \def \csname MYMEMORY #1 \endcsname{#2}%
}
\newcommand{\MYload}[1]{%
  \csname MYMEMORY #1 \endcsname%
}
\newcommand{\MYnewlabel}[1]{%
  \renewcommand\MYcurrentlabel{#1}%
  \MYoldlabel{#1}%
}
\newcommand{\MYdummylabel}[1]{}
\newcommand{\torestate}[1]{%
  % overwrite label command
  \let\MYoldlabel\label%
  \let\label\MYnewlabel%
  #1%
  \MYstore{\MYcurrentlabel}{#1}%
  % restore old label command
  \let\label\MYoldlabel%
}
\newcommand{\restatetheorem}[1]{%
  % overwrite label command with dummy
  \let\MYoldlabel\label
  \let\label\MYdummylabel
  \begin{theorem*}[Restatement of \prettyref{#1}]
    \MYload{#1}
  \end{theorem*}
  \let\label\MYoldlabel
}
\newcommand{\restatelemma}[1]{%
  % overwrite label command with dummy
  \let\MYoldlabel\label
  \let\label\MYdummylabel
  \begin{lemma*}[Restatement of \prettyref{#1}]
    \MYload{#1}
  \end{lemma*}
  \let\label\MYoldlabel
}
\newcommand{\restateprop}[1]{%
  % overwrite label command with dummy
  \let\MYoldlabel\label
  \let\label\MYdummylabel
  \begin{proposition*}[Restatement of \prettyref{#1}]
    \MYload{#1}
  \end{proposition*}
  \let\label\MYoldlabel
}
\newcommand{\restatefact}[1]{%
  % overwrite label command with dummy
  \let\MYoldlabel\label
  \let\label\MYdummylabel
  \begin{fact*}[Restatement of \prettyref{#1}]
    \MYload{#1}
  \end{fact*}
  \let\label\MYoldlabel
}
\newcommand{\restate}[1]{%
  % overwrite label command with dummy
  \let\MYoldlabel\label
  \let\label\MYdummylabel
  \MYload{#1}
  \let\label\MYoldlabel
}
\newcommand{\addreferencesection}{
  \phantomsection
  \addcontentsline{toc}{section}{References}
}
\let\origparagraph\paragraph
\renewcommand{\paragraph}[1]{\origparagraph{#1.}}
\let\citet\cite
\theoremstyle{definition}
\newcommand{\Id}{\mathop{\mathrm{Id}}\!\mathinner{}}
\newcommand{\pE}{\tilde {\mathbb E}}
\let\cL\relax
\DeclareMathOperator{\cL}{\mathcal L}
\newcommand{\Span}{\mathsf{Span}}
\DeclareUrlCommand\email{}
\newcommand{\omegan}{{\left( \frac{\omega}{n} \right)}}
\newcommand{\fU}{\mathfrak{U}}
\newcommand{\fD}{\mathfrak{D}}
\newcommand{\eqdef}{\stackrel{\textrm{def}}{=}}
\newcommand{\ignore}[1]{}
\definecolor{corlinks}{RGB}{64,128,128}
\definecolor{cormenu}{RGB}{0,37,94}
\definecolor{corurl}{RGB}{0,46,91}
\newcommand{\on}{\{-1,1\}}
\newcommand{\1}{\mathds{1}}
\renewcommand{\P}{\mathcal{P}}
\newcommand{\G}{\mathcal{G}}
\newcommand{\V}{\mathcal{V}}
\newcommand{\transposed}{\dagger}
\newcommand{\g}{G}
\newcommand{\M}{\mathcal{M}}
\newcommand{\floor}[1]{\left \lfloor #1 \right \rfloor}
\newcommand{\zo}{\{0, 1\}}
\renewcommand{\cal}[1]{\mathcal{#1}}
\renewcommand{\int}{\mathsf{int}}
\newcommand{\nchoose}[1]{{{[n]} \choose {#1}}}
\renewcommand{\top}{\dagger}
\renewcommand{\1}{\mathbf{1}}
\renewcommand{\hat}{\widehat}
\renewcommand{\on}[1]{\tfrac{\binom{\omega}{#1}}{\binom{n}{#1}}}
\newcommand{\xor}{\oplus}
\renewcommand{\emptyset}{\varnothing}
\let\orgdescriptionlabel\descriptionlabel
\renewcommand*{\descriptionlabel}[1]{%
  \let\orglabel\label
  \let\label\@gobble
  \phantomsection
  \edef\@currentlabel{#1}%
  \let\label\orglabel
  \orgdescriptionlabel{#1}%
}
\renewcommand{\on}{\{-1,1\}}
\title{A Nearly Tight Sum-of-Squares Lower Bound for the Planted Clique Problem}
\author{
Boaz Barak
\thanks{Harvard John A. Paulson School of Engineering and Applied Sciences, \url{b@boazbarak.org}. Part of the work was done while the author was at Microsoft Research New England.}
\\ Harvard University
\and Samuel B. Hopkins
\thanks{\url{samhop@cs.cornell.edu}. Partially supported by an NSF GRFP under grant no. 1144153. Part of this work was done while the author was at Microsoft Research New England.}
\\Cornell University
\and Jonathan Kelner
\thanks{\url{kelner@mit.edu}.  Partially supported by NSF Award 1111109.}
\\MIT
\and Pravesh K. Kothari
\thanks{\url{kothari@cs.utexas.edu} Part of the work was done while the author was at Microsoft Research New England.}
\\ UT Austin
\and Ankur Moitra
\thanks{\url{moitra@mit.edu} Partially supported by NSF CAREER Award CCF-1453261, a grant from the MIT NEC Corporation and a Google Faculty Research Award.}
\\MIT 
\and Aaron Potechin
\thanks{\url{aaronpotechin@gmail.com} Part of the work was done while the author was at Microsoft Research New England.}
\\Cornell University}
\begin{document}

\maketitle
\draftbox
\thispagestyle{empty}

\begin{abstract}
We prove that with high probability over the choice of a random graph $G$ from the Erdős–Rényi distribution $G(n,1/2)$, the $n^{O(d)}$-time degree $d$ Sum-of-Squares semidefinite programming relaxation for the clique problem will give a value of at least $n^{1/2-c(d/\log n)^{1/2}}$ for some constant $c>0$.  
This yields a nearly tight $n^{1/2 - o(1)}$ bound on the value of this program for any degree $d = o(\log n)$. Moreover we introduce a new framework that we call \emph{pseudo-calibration} to construct Sum of Squares lower bounds.
This framework is  inspired by taking a computational analog of Bayesian probability theory. It yields a general recipe for constructing  good pseudo-distributions (i.e., dual certificates for the Sum-of-Squares semidefinite program),  and sheds further light on the ways in which this hierarchy differs from others. 
\end{abstract}

\clearpage

% tableofcontents added for better navigability of the document
\ifnum\showtableofcontents=1
{
\tableofcontents
\thispagestyle{empty}
 }
\fi

\clearpage

\setcounter{page}{1}
\color{black}
\section{Introduction}

The \emph{planted clique} (also known as \emph{hidden clique}) problem is a central question in average-case complexity.
Arising from the 1976 work of Karp~\cite{Kar76}, the problem was formally defined by Jerrum \cite{DBLP:journals/rsa/Jerrum92} and Kucera \cite{DBLP:journals/dam/Kucera95} as follows: given a random \Erdos-\Renyi graph $G$ from the distribution $G(n,1/2)$ (where every edge is chosen to be included with probability $1/2$  independently of all others) in which we \emph{plant} an additional clique (i.e., set of vertices that are all neighbors of one another) $S$ of size $\omega$, find $S$. It is not hard to see that the problem is solvable by brute force search (which in this case takes quasipolynomial time) whenever $\omega > c \log n$ for any constant $c>2$. However, despite intense effort, the best polynomial-time algorithms only work for $\omega = \epsilon \sqrt{n}$, for any constant $\epsilon>0$ \cite{DBLP:conf/soda/AlonKS98}. 

Over the years the planted clique problem and related problems have been connected to many other questions in a variety of areas including finding communities~\cite{HajekWX15}, finding signals in molecular biology \cite{PS00}, discovering network motifs in biological networks~\cite{Milo, JM15},
computing Nash equilibrium~\cite{HazanK11,AustrinBC13}, property testing~\cite{DBLP:conf/stoc/AlonAKMRX07}, sparse principal component analysis~\cite{BerthetR13}, compressed sensing~\cite{KoiranZ14}, cryptography~\cite{AriJuels2000,DBLP:conf/stoc/ApplebaumBW10} and even mathematical finance~\cite{DBLP:conf/innovations/AroraBBG10}.

Thus, the question of whether the currently known algorithms can be improved is of great interest. Unfortunately, it is unlikely that lower bounds for planted clique (because it is an average-case problem) can be derived from conjectured complexity class separations such as $P\neq NP$~\cite{FeigenbaumFortnow,BogdanovTrevisan}. Our best evidence for the difficulty of this problem comes from works showing limitations on particular \emph{classes} of algorithms. In particular, since many of the algorithmic approaches for this and related problems involve spectral techniques and convex programs, limitations for these types of algorithm are of significant interest. One such negative result was shown by Feige and Krauthgamer~\cite{DBLP:journals/siamcomp/FeigeK03} who proved that the $n^{O(d)}$-time  \emph{degree $d$ \Lovasz -Schrijver  semidefinite programming hierarchy } ($LS_+$ for short) can only recover the clique if its size is at least $\sqrt{n/2^d}$.%
\footnote{As we discuss in Remark~\ref{rem:planted-clique-variants} below, formally such results apply to the incomparable \textit{refutation} problem, which is the task of certifying that there is no $\omega$-sized clique in a random $G(n,1/2)$ graph. However, our current knowledge is consistent with these variants having the same computational complexity.}

However, recently it was shown that in several cases, the \emph{Sum-of-Squares (SoS) hierarchy}~\cite{Shor87,Parrilo00,Lasserre01} \---- a stronger family of semidefinite programs which can be solved in time $n^{O(d)}$ for degree parameter $d$ \---- can be significantly more powerful than other algorithms such as  $LS_{+}$~\cite{BBHKSZ12,BKS14,BKS15}. Thus it was conceivable that the SOS hierarchy might be able to find cliques that are much smaller than
$\sqrt{n}$ in polynomial time. 

The first SoS lower bound for planted clique was shown by Meka, Potechin and Wigderson~\cite{MPW15} who proved that the degree $d$ SOS hierarchy cannot recover a clique of size $\Tilde{O}(n^{1/d})$. 
This bound was later improved on by  Deshpande and Montanari~\cite{DM15} and then Hopkins et al~\cite{HKPRS16} to $\Tilde{O}(n^{1/2})$ for degree $d=4$ and $\Tilde{O}(n^{1/(\ceil{d/2}+1)})$ for general $d$.
However, this still left open the possibility that the  constant degree (and hence polynomial time) SoS algorithm can significantly beat the $\sqrt{n}$ bound, perhaps even being able to find cliques of size $n^{\epsilon}$ for any fixed $\epsilon > 0$.    
This paper answers this question negatively by proving the following theorem:

\begin{theorem}[Main Theorem]
  \label{thm:main}
  There is an absolute constant $c$ so that for every $d= d(n)$ and large enough $n$, the SoS relaxation of the planted clique problem has integrality gap at least $n^{1/2 - c(d/\log n)^{1/2}}$.
\end{theorem}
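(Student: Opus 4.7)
The plan is to construct a degree-$d$ pseudo-expectation operator $\pE$ on polynomials in $x_1,\dots,x_n$ that satisfies the planted-clique constraints $x_i^2 = x_i$ and $x_ix_j = 0$ for non-edges $(i,j)$, places mass $\pE[\sum_i x_i] \geq \omega := n^{1/2 - c(d/\log n)^{1/2}}$ on the clique variables, and is positive semidefinite, all simultaneously with high probability over $G \sim G(n,1/2)$. The values $\pE[x_S]$ will be defined as functions of the input graph $G$ via the \emph{pseudo-calibration} heuristic advertised in the abstract: let $\mu$ be the planted distribution, i.e.\ the joint distribution of a graph $G$ together with the $0/1$ indicator vector $x$ of a uniformly random planted $\omega$-clique. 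For a subset $S \subseteq [n]$ with $|S| \le d$, the ``true'' Bayesian pseudo-moment would be $\E_\mu[x_S \mid G]$, but this is not a low-degree polynomial in $G$ and will not produce a PSD moment matrix once the planted signal is too weak to detect. Instead, set
\[
    \pE[x_S](G) \;=\; \sum_{\substack{T \subseteq \binom{[n]}{2} \\ |V(T) \cup S| \le \tau}} \widehat{\E_\mu[x_S]}(T)\,\chi_T(G),
\]
where $\chi_T(G) = \prod_{e \in T}(-1)^{G_e}$ are the Fourier characters on $\{\pm 1\}^{\binom{n}{2}}$, the truncation parameter $\tau$ is a small polylogarithmic multiple of $d$, and the Fourier coefficients on the right are those of the planted distribution. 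This definition is the unique one that matches the conditional planted moments against all low-degree test functions of $G$, which is a principled way of choosing a pseudo-distribution that cannot be distinguished from the planted one by any low-complexity observer.

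Next I would verify the ``easy'' structural properties. The combinatorial constraints $\pE[(x_i^2 - x_i)p] = 0$ and $\pE[x_i x_j p] = 0$ for non-edges are enforced almost by construction: the planted Fourier coefficients $\widehat{\E_\mu[x_S]}(T)$ are supported only on sets $T$ such that $V(T) \cup S$ induces a potential clique (in particular, non-edges in $T$ that touch $S$ get killed), and one then shows that the truncation respects these ideals up to negligible error. The objective $\pE[\sum_i x_i] \approx \omega$ follows from the leading Fourier coefficient $\widehat{\E_\mu[x_i]}(\emptyset) = \omega/n$. Both checks reduce to straightforward manipulations of the planted distribution's Fourier expansion.

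The main obstacle is showing that the degree-$d$ moment matrix $M$, with entries $M[I,J] = \pE[x_I x_J]$ indexed by $I,J \in \binom{[n]}{\le d/2}$, is positive semidefinite with high probability. To attack this I would decompose $M$ as a linear combination of \emph{graph matrices} $M_\alpha$, indexed by ``shapes'' $\alpha$ describing an isomorphism class of configurations (a pair of index sets together with a Fourier test set and a bijection on their common vertices), where each $M_\alpha[I,J]$ is $\chi_{T_\alpha(I,J)}(G)$ summed over embeddings. The scalar coefficient of $M_\alpha$ can be read off from the planted Fourier expansion and scales like $(\omega/n)^{|V(\alpha)|}$ times a factor counting embeddings. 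The trivial shape contributes a scaled identity; every other shape is a noise term whose spectral norm must be bounded by the trace method, giving a bound of roughly $n^{(|V(\alpha)| - |\text{min separator}|)/2}$ up to polylogarithmic factors. The pseudo-calibration choice of $\omega$ and truncation $\tau$ is calibrated precisely so that, shape by shape, the coefficient times the spectral norm is dominated by the identity contribution. To make this actually produce PSDness rather than just a norm bound, I would apply a charging argument that pairs each non-trivial shape with a quadratic form built from ``half-shapes,'' so that the full matrix is expressible as $(\mathrm{diag} + E^\top D E)$ plus small corrections, with $D \succeq 0$ and the corrections handled via Schur complements.

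The technical heart of the argument is thus the shape-by-shape spectral norm bound together with the combinatorial accounting that identifies which shapes are ``large'' enough to matter; I expect minimum vertex separators in shapes to play the role analogous to what they do in the Meka--Potechin--Wigderson and Hopkins--Kothari--Potechin--Raghavendra--Schramm analyses, but here the much finer truncation forced by $\omega$ close to $\sqrt{n}$ will require a correspondingly more delicate shape taxonomy, and a careful argument that errors introduced by the truncation itself (which breaks the exact Bayesian consistency) are negligible in spectral norm. Given PSDness, the feasibility and value bounds plug in to yield the $n^{1/2 - c(d/\log n)^{1/2}}$ integrality gap claimed in \prettyref{thm:main}.
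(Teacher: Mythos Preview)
Your high-level plan---pseudo-calibrate to get $\pE$, verify the linear constraints, decompose the moment matrix into graphical matrices indexed by shapes, and control each via a vertex-separator-based trace bound---matches the paper. The gap is in the PSDness step, where you write that the moment matrix will be ``$(\mathrm{diag} + E^\top D E)$ plus small corrections, with $D \succeq 0$ and the corrections handled via Schur complements.'' This is exactly the step that does \emph{not} go through: when one factors $\cM \approx \cL \cQ_0 \cL^\top$ by splitting each ribbon at its leftmost and rightmost minimum separators, the discrepancy comes from triples $(\cR_\ell,\cR_m,\cR_r)$ whose vertex sets collide outside the separators, and for $\omega$ close to $\sqrt n$ this error is \emph{not} spectrally small. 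A single-level factorization plus a Schur-complement cleanup recovers only the earlier $n^{1/(\lfloor d/2\rfloor+1)}$-type bounds.

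The paper's actual mechanism is a \emph{recursive} factorization: the collision error $\cE_0$ is itself written as $\cL \cQ_1 \cL^\top - \cE_1 - \xi_1$, then $\cE_1 = \cL \cQ_2 \cL^\top - \cE_2 - \xi_2$, and so on, until after $2d$ steps the residual $\cE_{2d}$ vanishes (each step strictly enlarges at least one separator, and separators are bounded by $d$). This gives $\cM = \cL(\cQ_0 - \cQ_1 + \cdots + \cQ_{2d})\cL^\top$ up to genuinely negligible truncation errors $\xi_i$. The heart of the argument is then a combinatorial tradeoff lemma governing the coefficients $c_i(\cR_m')$ that appear in $\cQ_i$: at each recursive step, the increase in separator size plus the number of vertex-disjoint paths lost plus the number of newly isolated vertices is at most the number of collisions $r$, and this inequality---together with $\omega = n^{1/2-\epsilon}$---forces $|c_i| \lesssim (\omega/n)^s n^{(p - i/2)/2 + \epsilon s}$, so that $\cQ_i$ is dominated by $\cQ_0$ for all $i \geq 1$. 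Your proposal does not anticipate this recursion or the collision-accounting lemma, and without them the charging you describe cannot close for $\omega = n^{1/2 - o(1)}$.
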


Beyond improving the previously known results, our proof is significantly more general and we believe provides a better intuition behind the limitations for SoS algorithms by viewing them from a ``computational Bayesian probability'' lens that is of its own interest. Moreover, there is some hope (as we elaborate below) that this view could be useful not just for more negative results but for SoS \emph{upper bounds} as well. In particular our proof elucidates to a certain extent the way in which the SoS algorithm is more powerful than the $LS_+$ algorithm. 

\medskip

\begin{quoting}
\begin{remark}[\textit{The different variants of the planted clique problem}] \label{rem:planted-clique-variants} Like other average-case problems in $\mathbf{NP}$, the planted clique problem with parameter $\omega$ has three variants of \textit{search}, \textit{refutation}, and \textit{decision}. The \textit{search} variant is the task of recovering the clique from a graph in which it was planted. The \textit{refutation} variant is the task of \textit{certifying} that a random graph in $G(n,1/2)$  (where with high probability the largest clique has size $(2+o(1))\log n$) does not have a clique of size $\omega$. The \textit{decision} problem is to distinguish between a random graph from $G(n,1/2)$ and a graph in which an $\omega$-sized clique has been planted. 
The decision variant can be reduced to either the search or the refutation variant, but we know of no reduction between the latter two variants. 
Integrality gaps for mathematical relaxations such as the Sum-of-Squares hierarchy are most naturally stated as negative results for the \textit{refutation} variant, as they show that such relaxations cannot certify that a random graph has no $\omega$-sized clique by looking at the maximum value of the objective function.  
Our result can also be viewed as showing that the natural SoS-based algorithm for the \textit{decision} problem (which attempts to distinguish on the objective value) also fails.
Moreover, our result also rules out some types of SoS-based algorithms for the \textit{search} problem as it shows that in a graph with a planted clique, there exists a solution with an objective value of $\omega$ based only on the random part, which means that it does not contain any information about which nodes participate in the clique and hence is not useful for rounding algorithms. 
\end{remark}\end{quoting}

\section{Planted Clique and Probabilistic Inference}

We now discuss the ways in which the planted clique problem differs from  problems for which strong SoS lower bounds have been shown before, and how this relates to a ``computational Bayesian'' perspective.
There have been several strong lower bounds for the SoS algorithm before, in particular for problems such as 3SAT, 3XOR and other constraint satisfaction problems as well as the knapsack problem~\cite{Gri01,Scho08,BCK15}.
However, obtaining strong lower bounds for the planted clique problem seems to have required different techniques.
A high-level way to describe the difference between the planted clique problems and the problems tackled by previous results is that lower bounds for the planted clique problem boil down to handling \textbf{weak global constraints} as opposed to  \textbf{strong local} ones.
That is, while in the random 3SAT/3XOR setting,  the effect of one variable on another is either extremely strong (if they are "nearby" in the formula) or essentially zero, in the planted clique setting every variable has a weak \textit{global} effect on all other variables. 
We now explain this in more detail.

Consider a random graph $G$ in which a clique $S$ of size $\omega$ has been planted. 
If someone tells us that vertex $17$ is not in $S$, then it makes it slightly less likely that $17$'s  neighbors are in $S$ and slightly more likely that $17$'s non-neighbors are in $S$. 
So, this information has a  \textit{weak global} effect.
In contrast, when  we have a random sparse 3SAT formula $\varphi$ in which an assignment $x$ has been planted,  if someone tells us that $x_{17}=0$ then it gives us a lot of information about the local neighborhood of the $17^{th}$ variable (the variables that are involved in constraints with $17$ or one that have a short path of constraints to it) but there is an exponential decay of these correlations and so this information basically tells us essentially nothing about the distribution of most of the variables $x_i$ (that are far away from $17$ in the sparse graph induced by $\varphi$).%
x\footnote{This exponential decay can be shown formally for the case of satisfiable random 3SAT or 3XOR formulas whose clause density is sufficiently smaller than the threshold. In our regime of overconstrainted random 3SAT/3XOR formulas there will not exist any satisfying assignments, and so to talk about ``correlations'' in the distributions of assignments we need to talk about the ``Bayesian estimates'' that arise from algorithms such as Sum-of-Squares or belief propagation. Both these algorithms exhibit this sort of exponential decay we talk about; see also Remark~\ref{rem:3XOR}}
Thus, in the random 3SAT setting information about the assignments of individual variables has a \textit{strong local effect}. 
Indeed, previous Sum-of-Squares lower bounds for random 3SAT and 3XOR~\cite{Gri01,Scho08}, could be interpreted as producing "distribution like" objects in which, conditioned on the value of a small set of variables $S$, some of the variables "close" to $S$ in the formula were completely fixed, and the rest were completely independent. 

This difference between the random SAT and the planted clique problems means that some subtleties that can be ignored in setting of  random constraint satisfaction problems need to be tackled head-on when dealing with planted cliques.
However to make this clearer, we need to take a detour and discuss Bayesian probabilities and their relation to the Sum of Square Algorithm.

\subsection{Computational Bayesian Probabilities and Pseudo-distributions}

Strictly speaking, if a graph $G$ contains a unique clique $S$ of size $\omega$, for every vertex $i$, the probability that  $i$ is in $S$ is either zero or one. 
But,  a computationally bounded observer may not know whether $i$ is in the clique or not, and we could try to quantify this ignorance using probabilities. 
These can be thought of as a computational analogs of \textit{Bayesian probabilities}, that, rather than aiming to measure  the frequency at which an event occurs in some sample space, attempt to capture the subjective beliefs of some observer.

That is, the Bayesian probability that an observer $B$ assigns to an event $E$ can be thought of as corresponding to the odds at which $B$ would make the bet that $E$ holds. 
Note that this probability could be strictly between zero and one even if the event $E$ is fully determined, depending on the evidence available to $B$. 
While typically Bayesian analysis  does not take into account computational limitations, one could imagine that  even if $B$ has access to information that fully determines whether $E$ happened or not,  she could still rationally assign a subjective probability to $E$ that is strictly between zero and one  if making the inferences from this information is computationally infeasible.
In particular, in the example above, even if a computationally bounded observer has access to the graph $G$, which information-theoretically fully determines the planted $\omega$-sized clique, she could still assign a probability strictly between zero and one to the event that vertex $17$ is in the planted $\omega$-sized clique, based on some simple to compute statistics such as how many neighbors $17$ has, etc.

The Sum-of-Squares algorithm can be thought of as giving rise to an internally consistent set of such "computational probabilities". 
These probabilities may not capture \textit{all} possible inferences that a computationally bounded observer could make, but they do capture all inferences that can be made via a certain restricted proof system.

\paragraph{Bayesian estimates for planted clique}
To get a sense for our results and techniques, it is instructive to consider the following scenario. 
Let $G(n,1/2,\omega)$ be the distribution over pairs $(G,x)$ of $n$-vertex graphs $G$ and vectors $x\in\R^n$ that is obtained by sampling a random graph in $G(n,1/2)$, planting an $\omega$-sized clique in it, and letting $G$ be the resulting graph and $x$ the $0/1$ characteristic vector of the clique.
Let $f:\{0,1\}^{\binom{n}{2}}\times \R^n \rightarrow \R$ be some function that maps a graph $G$ and a vector $x$ into some real number $f_G(x)$. 
Now imagine two parties, Alice and Bob (where Bob can also stand for "Bayesian") that play the following game: Alice samples $(G,x)$ from the distribution $G(n,1/2,\omega)$ and sends $G$ to Bob, who needs to output the expected value of $f_G(x)$. We denote this value by $\pE_G f_G$. 

If we have no computational constraints then it is clear that Bob will simply let $\pE_G f_G$ be equal to  $\E_{x|G} f_G(x)$, by which we mean the expected value of $f_G(x)$ where $x$ is chosen according to the conditional  distribution on $x$ given the graph $G$.\footnote{The astute reader might note that this expectation is somewhat degenerate since with very high probability the graph $G$ will uniquely determine the vector $x$, but please bear with us, as in the computational setting we will be able to treat $x$ as "undetermined". }  In particular, the value $\pE_G f_G$ will be \textit{calibrated} in the sense that 
\begin{equation}
\E_{G \in_R G(n,1/2,\omega)} \pE_G f_G = \E_{(G,x) \in_R G(n,1/2,\omega)} f_G(x) \label{eq:calibration}
\end{equation}

Now if Bob is computationally bounded, then he might not be able to compute the value of $E_{x|G} f_G(x)$ even for a simple function such as $f_G(x)=x_{17}$.
Indeed, as we mentioned, since with high probability the clique $x$ is uniquely determined by $G$, $\E_{x|G} x_{17}$ will simply equal $1$ if vertex $17$ is in the clique and equal $0$ otherwise. 
However, note that we don't need to compute the true conditional expectation to obtain a calibrated estimate. In particular, in the above example, simply setting $\pE x_{17} = \omega/n$ will satisfy (\ref{eq:calibration}).

Our Sum-of-Squares lower bound amounts to coming up with some reasonable ``pseudo-expectation'' that can be efficiently computed, where $\pE_G$ is meant to capture a ``best effort'' of a computationally bounded party of approximating the Bayesian conditional expectation $\E_{x|G}$.
Our pseudo-expectation will not be even close to the true conditional expectations, but will at least be internally consistent in the sense that for ``simple'' functions $f$ it satisfies (\ref{eq:calibration}). 
It will also satisfy some basic sanity checks such as that for every graph $G$ and ``simple'' $f$, $\pE_G f_G^2 \geq 0$.
In fact, since the pseudo-expectation will not distinguish between a graph $G$ drawn 
from $G(n,1/2,\omega)$ and a random $G$ from $G(n,1/2)$ it will also satisfy the following 
\textit{pseudo-calibration} condition:

\begin{equation}
\E_{G \in_R G(n,1/2)} \pE_G f_G = \E_{(G,x) \in_R G(n,1/2,\omega)} f_G(x) \label{eq:pseudo-calibration}
\end{equation}

for all ``simple'' functions $f=f(G,x)$. Note that (\ref{eq:pseudo-calibration}) does not make sense for the estimates of a truly Bayesian (i.e., computationally unbounded)  Bob, since   almost all graphs $G$ in $G(n,1/2)$ are not even in the support of $G(n,1/2,\omega)$. 
Nevertheless, our pseudo-distributions will be well defined even for a random graph and hence will yield estimates for the probabilities over this hypothetical object (i.e., the $\omega$-sized clique) that does not exist.
The ``pseudo-calibration'' condition (\ref{eq:pseudo-calibration}) might seem innocent, but it turns out to imply many useful properties. 
In particular is not hard to see that  (\ref{eq:pseudo-calibration}) implies that for every \textit{simple strong constraint} of the clique problem--- a function $f$ such that $f(G,x)=0$ for every $x$ that is a characteristic vector of an $\omega$-clique in $G$--- it must hold that $\pE_G f_G = 0$.
But even beyond these ``strong constraints'', (\ref{eq:pseudo-calibration}) implies that the  pseudo-expectation satisfies many \textit{weak constraints} as well, such as the fact that a vertex of high degree is more likely to be in the clique and  that if  $i$ is not in the clique then its neighbors are less likely and non-neighbors are more  likely to be in it. 

Indeed, the key conceptual insight of this paper is to phrase the calibration property (\ref{eq:pseudo-calibration}) as a desiderata for our pseudo-distributions. 
Namely, we define that a function $f=f(G,x)$ is ``simple'' if it is a low degree polynomial in both the entries of $G$'s adjacency matrix and the variables $x$, and then require (\ref{eq:pseudo-calibration}) to hold for all simple functions. 
It turns out that once you do so, the choice for the pseudo-distribution is essentially determined, and hence proving the main result amounts to showing that it satisfies the constraints of the SoS algorithm. 
In the next section we will outline some of the ideas involved in this proof.

\begin{quoting}
\begin{remark}[Planted Clique vs 3XOR] \label{rem:3XOR} In the light of the discussion above, it is instructive to consider the case of random 3XOR discussed before. Random 3XOR instances on $n$ variables and $\Theta(n)$ constraints are easily seen to be maximally unsatisfiable (that is, at most $ \approx 1/2$ the constraints can be satisfied by any assignment) with high probability. On the other hand, Grigorev \cite{Gri01} constructed a sum of squares pseudoexpectation that pretends that such instances instances are satisfiable with high probability, proving a sum of squares lower bound for refuting random 3XOR formulas. 

Analogous to the planted distribution $G(n,1/2, \omega)$, one can define a natural planted distribution over 3XOR instances - roughly speaking, this corresponds to first choosing a random Boolean assignment $x^{*}$ to $n$ variables and then sampling random 3XOR constraints conditioned on being consistent with $x^{*}$. 
It is not hard to show that pseudo-calibrating with respect to this planted distribution a la (\ref{eq:pseudo-calibration}) produces precisely the pseudoexpectation that Grigoriev constructed. 
However, unlike in the planted clique case, in the case of  3XOR, the pseudo-calibration condition implies that for every low-degree monomial $x_S$, either the value of $x_S$ is completely fixed (if it can be derived via low width resolution from the 3XOR equations of the instance) or it is completely unconstrained. 

The pseudoexpectations considered in previous works \cite{FeigeK03,MPW15,DM15}) are similar to Grigoriev's construction, in the sense that they essentially  respect only strong constraints (e.g., that if $A$ is not a clique in the graph, then the probability that it is contained in the planted clique is zero), but other than that assume that variables  are independent. 
However, unlike the 3XOR case, in the planted clique problem  respecting these strong constraints is not enough to achieve the pseudo-calibration condition (\ref{eq:pseudo-calibration})  and the pseudoexpectation of \cite{FeigeK03,MPW15,DM15} can be shown to violate weak probabilistic constraints imposed by (\ref{eq:pseudo-calibration}) even at degree four. See Observation~\ref{obs:fail-to-calibrate} for an example.
\end{remark}\end{quoting}

\subsection{From Calibrated Pseudo-distributions to Sum-of-Squares Lower Bounds}
\label{sec:intro:pseudo-dist}

What do these Bayesian inferences and calibrations have to do with Sum-of-Squares? In this section, we show how calibration is almost forced on any pseudodistribution feasible for the sum of squares algorithm. Specifically, to show that the degree $d$ SoS algorithm fails to certify that a random graph does not contain a clique of size $\omega$, we need to show that for a random $G$, with high probability we can come up with an operator that maps a degree at most $d$, $n$-variate polynomial $p$ to a real number $\pE_G p$ satisfying the following constraints:

\begin{enumerate}
\vspace{-0.5ex}
\item (Linearity) The map $p\mapsto \pE_G p$ is linear.
\vspace{-0.5ex}
\item (Normalization) $\pE_G 1 = 1$.
\vspace{-0.5ex}
\item (Booleanity constraint) $\pE_G x_i^2 p = \pE x_i p$ for every $p$ of degree at most $d-2$ and $i\in [n]$.
\vspace{-0.5ex}
\item (Clique constraint) $\pE_G x_ix_j p = 0$ for every $(i,j)$ that is not an edge and $p$ of degree at most $d-2$.
\vspace{-0.5ex}
\item (Size constraint) $\pE_G \sum_{i=1}^n x_i = \omega$.
\vspace{-0.5ex}
\item (Positivity) $\pE_G p^2 \geq 0$ for every $p$ of degree at most $d/2$.
\end{enumerate}

\begin{definition}\label{def:pdist}
A map $p \mapsto \pE_G p$ satisfying the above constraints 1--6 is called a \textit{degree $d$ pseudo-distribution} (w.r.t. the planted clique problem with parameter $\omega$).  
\end{definition}

We can restate our main result as follows:

\begin{theorem}[Theorem~\ref{thm:main}, restated] \label{thm:main:restated} There is some constant $c$ such that if $\omega \leq n^{1/2-c(d/\log n)^{1/2}}$ then with high probability over $G$ sampled from $G(n,1/2)$, there is a degree $d$ pseudodistribution $\pE_G$ satisfying constraints 1--6 above.
\end{theorem}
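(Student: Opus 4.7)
The plan is to explicitly construct the pseudoexpectation $\pE_G$ via the pseudo-calibration recipe described in the introduction, and then verify that it satisfies constraints 1--6. Concretely, I would write $\pE_G x_S$ for each multilinear monomial $x_S$ with $|S| \le d/2$ in the Fourier basis over the edge indicators $\{G_{ij}\} \in \pmo$, and truncate to Fourier degree at most some threshold $D = D(d,n,\omega)$ (something like $D = \Theta(\log n / \log(n/\omega^2))$ times $d$). Each Fourier coefficient is then forced by pseudo-calibration to equal the corresponding planted moment $\E_{(G,x) \sim G(n,1/2,\omega)}[x_S \chi_E(G)]$, which is a closed-form expression of the form $(\omega/n)^{|V(E) \cup S|}$ restricted to the case where $E$ is a clique on $V(E) \cup S$. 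This determines a unique candidate operator $\pE_G$, expressible as a low-degree polynomial in $G$ whose coefficients are indexed by shapes (small multigraphs on labeled vertex sets).

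Next I would verify constraints 1--5. Linearity and normalization are immediate from the construction. The clique constraint $\pE_G x_ix_j p = 0$ for $(i,j) \notin E(G)$ and the Booleanity constraint $\pE_G x_i^2 p = \pE_G x_i p$ are not exactly satisfied by the truncated operator, but one expects them to be satisfied approximately, up to a small additive error coming from the truncation tail. I would therefore first construct an unconstrained $\tilde{\pE}_G$ by pseudo-calibration, then project it onto the linear subspace of operators exactly satisfying constraints 2--5 (e.g.\ by symmetrizing, zeroing out non-edge monomials, and rescaling). The size constraint $\pE_G \sum_i x_i = \omega$ is forced by calibration with $f = \sum_i x_i$, and the projection should move the operator by an amount negligible on the PSD scale.

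The main obstacle is constraint 6 (positivity): showing that the moment matrix $M$ with $M_{S,T} = \pE_G[x_S x_T]$ is PSD for $|S|,|T| \le d/2$, with high probability over $G \sim G(n,1/2)$. I would decompose $M = \sum_\alpha \lambda_\alpha M_\alpha$ where $\alpha$ ranges over \emph{shapes} (isomorphism classes of labeled multigraphs encoding how the index sets $S,T$ overlap and which edges of $G$ appear in the Fourier character), and each $M_\alpha$ is a graph matrix whose entries are products of Fourier characters of disjoint edges. The dominant shape (the ``trivial'' one with no Fourier edges) contributes a rank-one-like matrix built from the planted moments $(\omega/n)^{|S \cup T|}$, which is PSD and has a large least eigenvalue on the space orthogonal to an exceptional ``bad'' subspace. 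The key technical step is then to bound $\|M_\alpha\|$ for every nontrivial shape $\alpha$ using trace-method estimates on graph matrices, and show that $\sum_{\alpha \ne \text{trivial}} |\lambda_\alpha| \|M_\alpha\|$ is dominated by the smallest relevant eigenvalue of the trivial shape. This is where the parameter tradeoff $\omega \le n^{1/2 - c\sqrt{d/\log n}}$ enters: the bound on $\|M_\alpha\|$ degrades with the number of vertices in $\alpha$, the allowed truncation depth $D$, and the degree $d$, and the inequality must survive union-bounding over all shapes.

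The hardest part will be the graph-matrix norm bounds and the PSD-ness proof for the dominant piece; this is where previous weaker lower bounds broke down. I would handle it by a charging/conditioning argument: define an approximate factorization $M \approx L L^\transposed + E$ where $L$ is an explicit ``pseudo-square-root'' operator indexed by shapes with only ``left'' edges, and $E$ is an error term. Bounding $\|E\|$ reduces to bounding norms of specific graph matrices $M_\alpha$ and summing, and bounding $\lambda_{\min}(LL^\transposed)$ reduces to showing $L$ has approximately full column rank on the relevant subspace. Combining these with the projection step that enforces the exact constraints gives the desired pseudoexpectation, completing the proof.
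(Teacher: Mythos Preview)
Your high-level strategy is correct---pseudo-calibration with truncation---but several details are off, and the PSDness plan is missing the central technical idea.

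First, two smaller corrections. The planted moment $\E_{(G,x)\sim G(n,1/2,\omega)}[x_S\,\chi_T(G)]$ equals $(\omega/n)^{|\cV(T)\cup S|}$ for \emph{every} edge set $T$, not only when $T$ is a clique on $\cV(T)\cup S$; your restriction confuses this construction with the Feige--Krauthgamer moments. More importantly, the clique constraint is satisfied \emph{exactly} by the truncated operator (Lemma~\ref{lem:clique-constraints}): because the truncation is by $|\cV(T)\cup S|\le\tau$ rather than by $|T|$, any two characters $\chi_T,\chi_{T'}$ with $T\oplus T'\subseteq\binom{S}{2}$ get the same coefficient, so $\pE[x_S]$ factors as $\1_{S\text{ clique}}\cdot f_S(G)$. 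No projection step is needed (Booleanity is also automatic since $\pE$ is defined on multilinear monomials). The only approximate constraint is normalization, which is fixed by a scalar rescaling at the end.

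The real gap is in the PSDness argument. Your first plan---bound $\|M_\alpha\|$ for each nontrivial shape and show the sum is dominated by the trivial shape---is exactly the approach of \cite{MPW15,DM15,HKP15}, and the paper explains at length (Section~\ref{sec:provingpositivity}, Observation~\ref{obs:fail-to-calibrate}, Fact~\ref{fact:Kelner}) why it cannot reach $\omega\approx n^{1/2}$: there are shapes whose contribution is too large to charge to the diagonal. Your fallback plan $M\approx LL^\top+E$ is closer, but the paper's factorization is $\cM=\cL\cQ_0\cL^\top-E_0-\xi_0$ with a nontrivial middle matrix $\cQ_0$, and crucially the decomposition comes from the \emph{leftmost and rightmost minimum vertex separators} in each ribbon---not from any notion of ``left edges.'' Most critically, the error $E_0$ is \emph{not} small in spectral norm and cannot be absorbed: the paper's key new idea is to factor $E_0$ itself as $\cL\cQ_1\cL^\top-E_1-\xi_1$, and iterate this $2d$ times until the residual vanishes (Claim~\ref{clm:factorization}). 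Controlling the coefficients $c_i$ through this recursion (Lemma~\ref{lem:ci-bound}), via a delicate combinatorial tradeoff lemma about how separator sizes, path counts, and intersections evolve (Lemma~\ref{lem:graph-tradeoff}), is the heart of the proof and is entirely absent from your proposal.
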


\noindent Note that all of these constraints would be satisfied if $\pE_G p$ was obtained by taking the expectation of $p$ over a distribution on $\omega$-sized cliques in $G$. 
However, with high probability there is not event a $2.1\log n$-sized clique in $G$ (and let alone a $\approx \sqrt{n}$ sized one) so we will need a completely different mechanism to obtain such a pseudo-distribution. 

Previously, the choice of the pseudo-distribution seemed to require a ``creative guess'' or an ``ansatz''. 
For problems such as  random 3SAT this guess was fairly natural and almost ``forced'', while for planted clique planted clique as well as some related problems \cite{MW15} the choice of the pseudo-distribution seemed to have more freedom, and more than one choice appeared in the literature.

For example, Feige and Krauthgamer~\cite{FeigeK03} (henceforth FK) defined a very natural pseudo-distribution $\pE^{FK}$ for a weaker hierarchy.  
For a graph $G$ on $n$ vertices, and subset $A \subseteq [n]$, $\pE^{FK}_G x_A$ is equal to zero if $A$ is not a clique in $G$ and equal to $2^{\binom{|A|}{2}}\left( \tfrac{\omega}{n} \right)^{|A|}$ if $A$ is a clique, and extended to degree $d$ polynomials using linearity.%
\footnote{The actual pseudo-distribution used by \cite{FeigeK03} (and the followup works \cite{MPW15,DM15}) was slightly different so as to satisfy $\pE_G (\sum_{i=1}^m x_i)^\ell = \omega^\ell$ for every $\ell \in \{1,\ldots,d\}$.
This property is sometimes described as satisfying the constraint $\{ \sum_i x_i = \omega \}$.}
\cite{FeigeK03} showed that that for every $d$, and $\omega < O(\sqrt{n/2^d})$, this pseudo-distribution satisfies the constraints 1--5 as in Definition~\ref{def:pdist} as well as  a weaker version of positivity (this amounts to the so called ``Lovász-Schrijver+'' SDP). Meka, Potechin and Wigderson~\cite{MPW15} proved that the same pseudo-distribution satisfies all the constraints 1--6 (and hence is a valid degree $d$ pseudo-distribution) as long as $\omega < \Tilde{O}(n^{1/d})$. This bound on $\omega$ was later improved to $\Tilde{O}(n^{1/3})$ for $d=4$ by \cite{DM15} and to $\Tilde{O}(n^{(\floor{d/2}+1)^{-1}})$ for a general $d$ by \cite{HKP15}.

Interestingly, the FK  pseudo-distribution does \textit{not} satisfy the full positivity constraint  for larger values of $\omega$. 
The issue is that while that while the FK pseudo-distribution satisfies the ``strong'' constraints that $\pE_G^{FK} x_A = 0$ if $A$ is not a clique, it does not satisfy weaker constraints that are implied by (\ref{eq:pseudo-calibration}).
For example, for every constant $\ell$, if vertex $i$ participates in $\sqrt{n}$ more $\ell$-cliques than the expected number then one can compute that the conditional probability of $i$ belonging in the clique should be a factor $1+c\omega/\sqrt{n}$  larger for some constant $c>0$. 
However, the FK pseudo-distribution does not make this correction. In particular, for every $\ell$, there's a simple polynomial that shows that the FK pseudoexpectation is not calibrated. 

\medskip

\begin{observation} \label{obs:fail-to-calibrate}
Fix $i\in [n]$ and let $\ell$ be some constant.  If $p_G = (\sum_{j} G_{i,j} x_j)^{\ell}$ then  \textbf{(i)}  $\E_{G \sim G(n,1/2)} \pE^{FK}_G [p_G^2] \leq  \omega^{\ell}$ 
and \textbf{ii}  $\E_{(G,x) \sim G(n,1/2,\omega)} [p_G(x)^2] \geq \frac{\omega^{2\ell+1} }{n}$.
In particular, when $\omega \gg n^{\frac{1}{\ell+1}}$, $\E_{G \sim G(n,1/2)} \pE^{FK}_G [p_G^2] \ll \E_{(G,x) \sim G(n,1/2,\omega)} p_G(x)$.
\end{observation}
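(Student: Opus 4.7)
The plan is to compute both sides of (i) and (ii) directly from the explicit formula for $\pE^{FK}_G$.

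For (i), expand $p_G^2 = \sum_{(j_1,\ldots,j_{2\ell})\in[n]^{2\ell}} G_{ij_1}\cdots G_{ij_{2\ell}}\,x_{j_1}\cdots x_{j_{2\ell}}$ and apply $\pE^{FK}_G$ termwise; Booleanity collapses each $x$-monomial to $x_A$ where $A=\{j_1,\ldots,j_{2\ell}\}$, which contributes $2^{\binom{|A|}{2}}(\omega/n)^{|A|}\,\Ind[A\text{ clique in }G]$. I would then group ordered tuples by their support $A$ (of size $s$) and multiplicity vector $(k_j)_{j\in A}$ with $\sum_j k_j=2\ell$, and take $\E_{G\sim G(n,1/2)}$. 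The key observation is that for $i\notin A$, $\Ind[A\text{ clique}]$ depends only on edges inside $A$ while $\prod_j G_{ij}^{k_j}$ depends only on edges from $i$ to $A$, so they are independent; the clique indicator's expectation $2^{-\binom{s}{2}}$ then exactly cancels the inflation factor $2^{\binom{s}{2}}$ in $\pE^{FK}_G$. Under the $\pm 1$ edge convention (in which $G_{ij}$ is Rademacher under $G(n,1/2)$), $\E G_{ij}^{k_j}$ equals $1$ when $k_j$ is even and $0$ otherwise, which forces $s\le \ell$. The dominant contribution is $s=\ell$ (every $k_j=2$): there are $(2\ell)!/2^\ell = (2\ell-1)!!\,\ell!$ such ordered tuples per choice of $A$, and $\binom{n-1}{\ell}\le n^\ell/\ell!$ choices of $A$, yielding $(2\ell-1)!!\,\omega^\ell$; subdominant $s<\ell$ terms lose a factor $\omega$ per unit drop in $s$. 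Absorbing the $O_\ell(1)$ constant gives $\E_G\,\pE^{FK}_G[p_G^2]\le \omega^\ell$.

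For (ii), let $S$ be the planted clique with indicator $x$, and condition on the event $\{i\in S\}$, which has probability $\omega/n$: on this event $G_{ij}=1$ for every $j\in S\setminus\{i\}$, so $\sum_{j\in S}G_{ij} = \omega-1 \ge \omega/2$ and therefore $p_G(x)^2 \ge (\omega/2)^{2\ell}$. Hence $\E[p_G(x)^2] \ge (\omega/n)(\omega/2)^{2\ell} \gtrsim \omega^{2\ell+1}/n$; the $i\notin S$ contribution is nonnegative and only strengthens the lower bound. Combining (i) and (ii), $\omega^\ell \ll \omega^{2\ell+1}/n$ is equivalent to $\omega\gg n^{1/(\ell+1)}$, so in this regime $\pE^{FK}_G$ dramatically underestimates the true second moment and pseudo-calibration fails.

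The main obstacle is the combinatorial bookkeeping in (i): one must track how the $2^{\binom{s}{2}}$ inflation in the FK pseudo-expectation is cancelled by the independence-based factoring of $\E_G$, and verify that only even-multiplicity $G$-contributions survive (this is the step that relies on the centered/$\pm 1$ convention for $G_{ij}$). Once that is in place, the rest reduces to counting tuples and sets, and the final gap follows from a single inequality comparing $\omega^\ell$ with $\omega^{2\ell+1}/n$.
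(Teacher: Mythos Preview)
Your proposal is correct and follows essentially the same approach as the paper's proof sketch: expand $p_G^2$, apply $\pE^{FK}$ termwise, take the expectation over $G$, and observe that only tuples with at most $\ell$ distinct $j$'s survive; for (ii), condition on $i$ being in the clique. Your version is in fact more careful than the paper's sketch in one respect: you explicitly track how the $2^{\binom{s}{2}}$ inflation in $\pE^{FK}$ is cancelled by the clique probability $2^{-\binom{s}{2}}$ via the independence of edges inside $A$ from edges between $i$ and $A$, whereas the paper simply writes $(\omega/n)^{2\ell}$ and absorbs this into the ``up to a constant depending on $\ell$'' caveat.
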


\begin{proof}[Proof sketch]
For 2 note that with probability $(\omega/n)$ vertex $i$ is in the clique, in which case $\sum_j G_{i,j} x_j = \omega$, and hence the expectation of $p_G^2$ is at least $(\omega/n)\omega^{2\ell}$.
To compute 1, we open up the expectation and the definition to get (up to a constant depending on $\ell$)   $\sum_{j_1,\ldots,j_{2\ell}} G_{i,j_1}\ldots G_{i,j_{2\ell}} (\omega/n)^{2\ell} \E_{G \sim G(n,1/2)} 1_{\{i_1,\ldots,i_{2\ell}\} \text{ is clique}}$. Since this expectation is zero unless every variable $G_{i,j}$ is squared, in which case the number of distinct $j$'s is at most $\ell$, which means the sum is at most $n^\ell(\omega/n)^\ell = \omega^\ell$. 
\end{proof}

Observation \ref{obs:fail-to-calibrate} notes the failure of calibration for a specific polynomial $p_G(x)$ where the coefficients are (low-degree) functions of the graph $G$. 
The polynomial $p_G$ above can also be massaged to obtain a proof (due to Kelner, see \cite{HKP15})  that degree $d$ $\pE^{FK}$ does not satisfy the  positivity constraint  at degree $d$ for $\omega \gg n^{\frac{1}{\frac{d}{2}+1}}.$

\begin{fact} \label{fact:Kelner}
Let $p_G$ be as in the Observation \ref{obs:fail-to-calibrate}. Then, there exists a $C$ such that for $q = q_G = (C \omega^{\ell} x_S - p_G)$ with high probability over the graph $G \sim G(n, 1/2)$, $\pE^{FK}[ q_G^2] < 0$ for $\omega \gg n^{\frac{1}{\ell+1}}$.
\end{fact}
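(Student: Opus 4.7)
The plan is to take $S = \{i\}$ (so $x_S = x_i$) and expand the square to get
$\pE^{FK}_G[q_G^2] = C^2\omega^{2\ell}\,\pE^{FK}_G[x_i^2] - 2C\omega^{\ell}\,\pE^{FK}_G[x_i p_G] + \pE^{FK}_G[p_G^2]$.
The first term equals $C^2\omega^{2\ell+1}/n$ by the Booleanity constraint and the FK rule $\pE^{FK}_G[x_i]=\omega/n$. The third term has $\E_G \pE^{FK}_G[p_G^2] = O(\omega^\ell)$ by \prettyref{obs:fail-to-calibrate}. The crux is to show that the cross term also scales like $\omega^{2\ell+1}/n$, so that for the right constant $C\approx 1$ the two $\omega^{2\ell+1}/n$ contributions combine with a sign in favor of negativity.

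To evaluate the cross term, expand $\pE^{FK}_G[x_i p_G] = \sum_{j_1,\ldots,j_\ell}G_{i,j_1}\cdots G_{i,j_\ell}\,\pE^{FK}_G[x_i x_{j_1}\cdots x_{j_\ell}]$. By the FK rule, the inner pseudo-expectation vanishes unless $\{i,j_1,\ldots,j_\ell\}$ spans a clique in $G$, and on this event the weighting $G_{i,j_1}\cdots G_{i,j_\ell}$ is automatically one (those edges are forced by the clique). For ordered tuples with $\ell$ distinct $j$'s disjoint from $i$ (which dominate the sum), each term contributes $(\omega/n)^{\ell+1}$ in $G$-expectation, since the FK normalization $2^{\binom{\ell+1}{2}}$ exactly cancels the clique probability $2^{-\binom{\ell+1}{2}}$. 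Summing over the $\sim n^\ell$ such tuples yields $\E_G \pE^{FK}_G[x_i p_G] = (1\pm o(1))\,\omega^{\ell+1}/n$; tuples with $k<\ell$ distinct values contribute $\sim \omega^{k+1}/n$ and are smaller by a factor $\omega$ or more per merger, hence negligible when $\omega \ll n$. Setting $C=1$ gives $\E_G \pE^{FK}_G[q_G^2] = -\omega^{2\ell+1}/n + O(\omega^\ell)$, which is negative precisely when $\omega \gg n^{1/(\ell+1)}$.

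The remaining obstacle is promoting the expectation bound to a with-high-probability statement over $G\sim G(n,1/2)$. Both $\pE^{FK}_G[x_i p_G]$ and $\pE^{FK}_G[p_G^2]$ are, after expanding the clique indicators as low-degree polynomials in the independent Bernoullis $\{G_{j,k}\}$, fixed-degree functions on the Boolean cube, so Chebyshev or a hypercontractive concentration bound should suffice once the variance is controlled. The routine-but-careful step is to group the off-diagonal second-moment contributions by how many vertices two index-tuples share and to verify that the combinatorial multiplicity of heavily overlapping tuples is dominated by the squared mean $(\omega^{\ell+1}/n)^2$. This is the main technical work in the argument, but it is a graph-matrix-style variance calculation of a by-now familiar kind, and introduces no conceptual difficulty beyond what is already needed to make the expectation calculation above precise.
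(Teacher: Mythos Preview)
The paper does not actually prove \prettyref{fact:Kelner}; it merely states the fact and attributes it to Kelner with a pointer to \cite{HKP15}. Your argument is the natural one and is correct in outline: with $S=\{i\}$ you get three terms, the first equal to $C^2\omega^{2\ell+1}/n$, the third bounded (in expectation over $G$) by $O_\ell(\omega^\ell)$ from \prettyref{obs:fail-to-calibrate}, and your computation of the cross term $\E_G\pE^{FK}_G[x_i p_G]=(1+o(1))\,\omega^{\ell+1}/n$ is right---the key point being that on the clique event the $G_{i,j_k}$ factors are forced to $+1$, so the $2^{\binom{\ell+1}{2}}$ in the FK value exactly cancels the clique probability. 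With $C=1$ the expectation is $-\omega^{2\ell+1}/n + O_\ell(\omega^\ell)$, negative once $\omega\gg n^{1/(\ell+1)}$.

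Your concentration step is also the right plan: after expanding each clique indicator via \prettyref{fact:Fourier-and}, both $\pE^{FK}_G[x_i p_G]$ and $\pE^{FK}_G[p_G^2]$ become polynomials of degree $O(\ell^2)$ in the independent $\pm1$ variables $G_e$, so either hypercontractivity or a trace-moment bound in the style of \prettyref{lem:sum-parities} yields the needed high-probability statement. One small point worth making explicit when you write it up: the sum $\sum_j G_{i,j}x_j$ implicitly runs over $j\neq i$ (self-loops are undefined in $G(n,1/2)$), and terms with repeated indices or with some $j_k=i$ are genuinely lower order, contributing at most $O_\ell(\omega^k/n)$ for $k<\ell+1$.
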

% In particular it can be shown that if $\omega > n^{2/\ell}$ then the FK pseudo-distribution  violates (\ref{eq:pseudo-calibration}) for the function $f_G(x)=\sum S_i x_i$ where $S_i$ counts the deviation from the expectation of the  number of $\ell$ cliques vertex $i$  participates in. \Bnote{check this}

For the case $d=4$, Hopkins et al~\cite{HKPRS16} proposed an ``ad hoc'' fix for the FK pseudo-distribution that satisfies positivity up to $\omega = \tilde{O}(\sqrt{n})$, by explicitly adding a correction term to essentially calibrate for the low-degree polynomials $q_G$ from Fact \ref{fact:Kelner}. 

However, their method did not extend even for $d=6$, because of the sheer number of corrections that would need to be added and analyzed. Specifically, there are multiple families of polynomials such that their $\pE^{FK}$ value departs significantly from their calibrated value in expectation and gives multiple points of failure of positivity in a manner similar to Observation \ref{obs:fail-to-calibrate} and Fact \ref{fact:Kelner}. Moreover, "fixing" these families by the correction as in case of degree four leads to new families of polynomials that fail to achieve their calibrated value and exhibit negative pseudoexpectation for their squares etc. 

The \textit{coefficients} of the polynomial $p_G$ of Observation~\ref{obs:fail-to-calibrate} are themselves low degree polynomials in the adjacency matrix of $G$.
This turns out to be a common feature in all the families of polynomials one encounters in the above works. 
Thus our approach is  to fix all these polynomials \textit{by fiat}, by placing the constraint that the pseudo-distribution must satisfy (\ref{eq:pseudo-calibration}) for every such polynomial, and using that as our implicit definition of the pseudo-distribution. 
Indeed it turns our that once we do so, the pseudo-distribution is essentially determined. 
Moreover,  (\ref{eq:pseudo-calibration}) guarantees that it satisfies many of the ``weak global constraints'' that can be shown using Bayesian calculations. 

Pseudo-calibrating polynomials whose coefficient are low-degree in $G$ amounts to restricting the pseudo-distribution to satisfy that the map $G \mapsto \pE_G$ is itself a low degree polynomial in $G$. 
Why is it OK to make such a restriction?
One justification is the  heuristic that the pseudo-distribution itself must be simple since we know that it is  efficiently computable (via the SoS algorithm) from the graph $G$. 
Another justification is that by forcing the pseudo-distribution to be low-degree we are essentially making it \textit{smooth} or ``high entropy'',  which is consistent with the Jaynes \textit{maximum entropy principle}~\cite{Jaynes57,Jaynes58}.
Most importantly -- and this is the bulk of the technical work of this paper and the subject of the next subsection -- this pseudo-distribution can be shown to satisfy \textit{all} the constraints 1--6 of Definition~\ref{def:pdist} including the positivity constraint.

We believe that this principled approach to designing pseudo-distributions elucidates the power and limitations of the SoS algorithm in cases such as the planted clique, where accounting for  weak global correlations is a crucial aspect of the problem. 

\medskip
\begin{quoting}
\begin{remark}[\textit{Where does the planted distribution arise from?}] Theorem~\ref{thm:main:restated} (as well as Theorem~\ref{thm:main}) makes no mention of the planted distribution $G(n,1/2,\omega)$ and only refers to an actual random graph. 
Thus  it might seem strange that we base our pseudo-distribution on the planted distribution via (\ref{eq:pseudo-calibration}).  
One way to think about the planted distribution is that it corresponds to a \textit{Bayesian prior} distribution on the clique. 
Note that this is the \textit{maximum entropy} distribution on cliques of size $\omega$, and so it is a natural choice for a prior per Jaynes's principle of maximum entropy. Our actual pseudo-distribution can be viewed as correcting this planted distribution to a posterior that respects simple inferences from the observed graph $G$. 
\end{remark}
\end{quoting}

\subsection{Proving Positivity}
\label{sec:provingpositivity}

Now we have seen that pseudocalibration is desirable both \emph{a priori} and in light of the failure of previous lower-bound attempts.
We turn to the question: how do we formally define a pseudo-calibrated linear map $\pE_G$, and how do we show that it satisfies constraints (1) -- (6) with high probability, to prove Theorem~\ref{thm:main:restated}?

Recall that our goal is to give a map from $G$ to $\pE_G$ such that when $G$ is taken from $G(n,1/2)$ then with high probability $\pE_G$ satisfies constraints 1--6 of Definition~\ref{def:pdist}. 
Our strategy is to define $\pE_G$ in a way that it satisfies the pseudo-calibration requirement (\ref{eq:pseudo-calibration}) with respect to all functions $f=f(G,x)$ that are low degree polynomials in both the $G$ and $x$ variables. The above requirements determine  all the low-degree Fourier coefficients of the map $G \mapsto \pE_G$. 
Indeed, instantiating (\ref{eq:pseudo-calibration})  with every particular  function $f=f(G,x)$ defines a linear constraint on the pseudo-expectation operator. 
If we require (\ref{eq:pseudo-calibration}) to hold with respect to every function $f=f(G,x)$ that has degree at most $\tau$ in the entries of the adjacency matrix $G$ and degree at most $d$ in the variables $x$, and in addition we require that the map $G \mapsto \pE_G$ is itself of degree at most $\tau$ in $G$, then this completely determines $\pE_G$. For any $S \subseteq [n]$, $|S| \leq d$, using the Fourier transform it is not too hard to compute $\pE_G[x_S]$ as an explicit low degree polynomial in $\g_e$:
\begin{equation}
\label{eq:overview-pE} \pE_G[ x_S] = \sum_{ \substack{T \subseteq \nchoose{2} \\ |\V(T) \cup S| \leq \tau}}  \omegan^{|\V(T) \cup S|} \chi_{T}(G),
\end{equation}
where $\V(T)$ is the set of nodes  incident to the subset of edges (i.e., graph) $T$  and $\chi_T(G) = \prod_{e\in T} G_e$. We carry out this computation in Section \ref{sec:def-intro}. For $\omega \approx n^{0.5 - \epsilon}$, we will need to choose the truncation threshold $\tau \gtrapprox d/\epsilon$. It turns out that constraints 1--5 are easy to verify and thus we are left with proving the \emph{positivity constraint}. Indeed this is not surprising as verifying this constraint is always the hardest part of a sum of squares lower bound.

As is standard, to analyze this positivity requirement we work with the \emph{moment matrix} of $\pE_G$.
Namely, let $\M$ be the ${n \choose \leq d/2} \times {n \choose \leq d/2}$ matrix where $\M(I,J) = \pE_G \prod_{i \in I} x_i \prod_{j \in J} x_j$ for every pair of subsets $I,J \subseteq [n]$ of size at most $d/2$.
Our goal can be rephrased as  showing that $\M \succeq 0$ (i.e., $\M$ is positive semidefinite).

Given a (symmetric) matrix $N$, to show that $N \succeq 0$ our first hope might be to diagonalize $N$.
That is, we would hope to find a matrix $V$ and a diagonal matrix $D$ so that $N = VDV^\top$. Then as long as every entry of $D$ is nonnegative, we would obtain $N \succeq 0$.
Unfortunately, carrying this out directly can be far too complicated.
Even the eigenvectors of very simple random matrices--for example, a matrix with independent $\pm 1$ entries---are not explicitly understood.
Our moment matrix $\M$ is a much more complicated random matrix, with intricate dependencies among the entries.
However, as the next example demonstrates, it is sometimes possible to prove PSDness for a random matrix  using an \emph{approximate} diagonalization.

\paragraph{Example: Planted Clique Lower Bound for $d = 2$ (a.k.a. Basic SDP)}
Consider the problem of producing a pseudodistribution $\pE$ satisfying constraints 1--6 of Definition~\ref{def:pdist}, but only for $d = 2$.
In this simple case, it turns out that the subtleties of (pseudo)calibration may safely be ignored, but it is still instructive to revisit the proof of PSDness.
It will be enough to define $\pE x_i$ and $\pE x_i x_j$ for every $i \in [n]$ and $\{i,j\} \subseteq [n]$.
Let $\pE x_i = (\omega/n)$ for every $i$, and let $\pE x_i x_j$ equal  $\left(\tfrac{\omega}{n}\right)^2$  if $(i,j)$ is an an edge in $G$ and equal zero otherwise.
It's not hard to show that positivity of this pseudo-expectation reduces to showing that $\cN \succeq 0$ where  $\cN$ is the $n\times n$ matrix with $\cN_{i,j}=\pE x_ix_j$. 
Using standard results on random matrices, $\cN$ has one eigenvalue (with eigenvector  very close to the vector $\vec{u}=(1/\sqrt{n},\ldots,1/\sqrt{n})$) of value $\omega^2/n$, while all others are distributed in the interval $\tfrac{\omega}{n} \pm O\left( \tfrac{\omega^2}{n^2}\sqrt{n}\right)$ which is strictly positive as long as $\omega \ll \sqrt{n}$.
Thus, while we cannot explicitly diagonalize $\cN$, we have enough information to conclude that it is positive semidefinite. 
In other words, it was enough for us to get an \textit{approximate diagonalization} for $\cN$ of the form $\cN \approx \tfrac{\omega^2}{n}\vec{u}\vec{u}^\top + \tfrac{\omega}{n}Id + E$ for some sufficiently small (in spectral norm) ``error matrix'' $E$.

\paragraph{Approximate Factorization for $\M$}
We return now to the moment matrix $\M$ for our (pseudo)calibrated pseudodistribution.
Our goal is to give an approximate diagonalization of $\M$.
There are several obstacles to doing so:
\begin{enumerate}
  \item In the case $d = 2$ there was just one rank-$1$ approximate eigenspace to be handled.
  The number of these approximate eigenspaces will grow with $d$, so we will need a more generic way to handle them.
  \item Each approximate eigenspace corresponds to a family of polynomials $\{p\}$ whose calibrated pseudoexpectations are all roughly equal. (In the case $ d= 2$, the only interesting polynomial was the polynomial $\sum_j x_j$ whose coefficients are proportional to the vector $\vec{u}=(1/\sqrt{n},\ldots,1/\sqrt{n})$.) 
  As we saw in Observation~\ref{obs:fail-to-calibrate}, if $p_G$ is a polynomial whose coefficients depend on the graph $G$, even in simple ways, the calibrated value $\pE_G p_G$ may also depend substantially on the graph.
  Thus, when we write $\cM \approx \cL \cQ \cL^\top$ for some approximately-diagonal matrix $\cQ$, we will need the structured part $\cL = \cL(G)$ to itself be graph-dependent.
  \item The errors in our diagonalization of $\M$---corresponding in our $d = 2$ example to the matrix $E$---will not be so small that we can ignore them as we did above.
  Instead, each error matrix will itself have to be approximately diagonalized, recursively until these errors are driven down sufficiently far in magnitude.
\end{enumerate}

We now discuss at a high level our strategy to address items (1) and (2).
The resolution to item (3) is the most technical element of our proof, and we leave it for later. 
Consider the vector space of all polynomials $f : \{0,1\}^{{n \choose 2}} \times \R^n \rightarrow \R$ which take a graph and an $n$-dimensional real vector and yield a real number.
(We write $f_G(x)$, where $G$ is the graph and $x \in \R^n$.)
If we restrict attention to the subspace of those of degree at most $d$ in $x$, we obtain the polynomials in the domain of our operator $\pE_G$.
If we additionally restrict to the subspace of polynomials which are low degree in $G$, we obtain the family of polynomials so that $\E_G \pE_G f_G(x)$ is calibrated.
Call this subspace $\cV$.

Our goal would to be find an approximate diagonalization for all the non-trivial eigenvalues of $\cM$ using only elements from $\cV$. 
The advantage of doing so is that for every $f \in \cV$, we can calculate $\E_G \pE_G f_G^2$  using the pseudo-calibration condition (\ref{eq:pseudo-calibration}). 
In particular it means that if we find a function $f$ such that  $f_G$ is with high probability an approximate eigenvector of $G$, then we can compute the corresponding expected eigenvalue $\lambda(f)$. 

A crucial tool in finding such an approximate eigenbasis is the notion of \textit{symmetry}. 
For every $f$, if $f'$ is obtained from $f$ via a permutation of the variables $x_1,\ldots,x_n$, then $\E_G \pE_G f_G^2 = \E_G \pE_G f'^2_G$. 
The result of this symmetry, for us, is that our approximate diagonalization requires only of a constant (depending on $d$) number of eigenspaces.
This argument allows us to restrict our attention to a constant number of classes of polynomials, where each class is determined by some finite graph $U$ that we call its \textit{shape}.
For every polynomial $f$ with shape $U$, we compute (approximately) the value of $\E_G \pE_G f_G^2$ as a function of a simple combinatorial property of $U$, and our approximate eigenspaces correspond to polynomials with different shapes. 

We can show that that in expectation our approximate eigenspaces will have non-negative eigenvalues since the pseudo-calibration condition (\ref{eq:pseudo-calibration}) in particular implies that for every $f$ that is low degree in both $G$ and $x$, $\E_G \pE_G f_G^2 \geq 0$. 
However, the key issue is to deal with the error terms that arise from the fact that these are only approximate eigenspaces.
One could hope that, like in other ``structure vs. randomness'' partitions, this error term is small enough to ignore. 
Alas, this is not the case, and we need to handle it recursively, which is the cause of much of the technical complications of this paper.

\begin{quoting}
\begin{remark}[\emph{Structure vs. randomness}]
At a high level our approach can be viewed as falling into the general paradigm of ``structure vs. randomness'' as discussed by Tao~\cite{tao2005dichotomy}. 
The general idea of this paradigm is to separate an object $O$ into a ``structured'' part that is simple and predictable, and a ``random'' part that is unpredictable but has small magnitude or has some global statistical properties.

One example of this is the Szemerédi regularity lemma~\cite{Szemeredi78} as well  variants such as~\cite{FriezeK96} that partition a matrix into a sum of a  low rank and pseudorandom components.
Another example arises from the  random models for the \textit{primes} (e.g., see~\cite{Tao2015lnotes,granville1995cramer}). 
These can be thought of positing that, as far as certain simple statistics are concerned, (large enough) primes can be thought of as being selected randomly conditioned on not being divisible by $2,3,5$ etc.. up to some bound $w$. 

All these examples can be viewed from a computationally bounded Bayesian perspective. 
For every object $O$ we can consider the part of $O$ that can be inferred by a computationally bounded observer to be $O$'s \textit{structured} component, while the remaining uncertainty can be treated as if it is \textit{random}, even  if in actuality it is fully determined. 
Thus in our case, even though for almost every particular graph $G$ from $G(n,1/2,\omega)$, the clique $x$ is fully determined by $G$, we still think of $x$ as having a ``structured'' part which consists of all the inferences a ``simple'' observer can make from $G$ (e.g., that if $i$ and $j$ are non-neighbors then $x_ix_j = 0$), and a ``random'' part that consists of the remaining uncertainty. 
As in other cases of applying this paradigm, part of the technical work is bounding the magnitude (in our case in spectral norm) that arises from the ``random'' part, though as mentioned above in our case we need a particularly delicate control of the error terms which ends up causing much of the technical difficulty. 
\end{remark}
\end{quoting}

\section{Proving Positivity: A Technical Overview} \label{sec:overview-positivity}

We now discuss in more detail how we prove that the  \textit{moment matrix} $\cM$ corresponding to our pseudo-distribution is positive semidefinite. 
Recall that this is the $\binom{n}{\leq d/2} \times \binom{n}{\leq d/2}$ matrix $\cM$ such that $\cM(I,J) = \pE_G \prod_{i\in I}x_i \prod_{j\in J} x_j$  for every pair of subsets $I,J \subseteq [n]$ of size at most $d/2$, and that it is defined via (\ref{eq:overview-pE}) as  
\begin{equation}
\cM(I,J) = \sum_{ \substack{T \subseteq \nchoose{2} \\ |\V(T) \cup I \cup J| \leq \tau}}  \omegan^{|\V(T) \cup I \cup J|} \chi_{T}(G) \;. \label{eq:Mdef}
\end{equation}

The matrix $\cM$ is generated from the random graph $G$, but its entries are \textit{not} independent. Rather, each entry is a polynomial in $\g_e$, and there are some fairly complex dependencies between different them. 
Indeed, these dependencies will create a spectral structure for $\cM$ that is very different from the spectrum of standard random matrices with independent entries and makes proving $\cM$ positive semidefinite challenging.  Our approach to showing that $\cM$ is positive semidefinite is through a type of ``symbolic factorization'' or ``approximate diagonalization,'' which we explain next. 

\subsection{Warm Up} 

It is instructive to begin with the tight analysis presented in \cite{HKP15} of the moments constructed in \cite{MPW15}\footnote{The construction in \cite{MPW15} actually also satisfies $\sum x_i = \omega$ as a constraint which causes the precise form to differ. We ignore this distinction here.}. These moments can in fact obtained by using truncation threshold $\tau = |S|$ in  \eqref{eq:overview-pE}. This choice of $\tau$ is the smallest possible for which the resulting construction satisfies the hard clique constraints. \cite{HKP15} show that this construction satisfies positivity for $\omega \lessapprox n^{1/(\frac{d}{2} + 1)}.$

For the purpose of this overview, let us work with the principal submatrix $F$ indexed by subsets $I$ and $J$ of size exactly $d$.  The analysis in \cite{HKP15} proceeds by first splitting $F$ into $d+1$ components $F = F_0+ F_1 +\cdots+ F_d$ where $F_i(I,J) = F(I,J)$ if $|I \cap J| = i$ and $0$ otherwise. Below, we discuss two of the key ideas involved that will serve as an inspiration for us.

As discussed before, we must approximately diagonalize the matrix $F$ in the sense that the off diagonals blocks must be "small enough" to be charged to the on diagonal block. Thus the main question before us is obtain an (approximate) understanding of the spectrum of $F$ that allows us to come up with a "change of basis" in which the off diagonal blocks are small enough to be charged to the positive eigenmass in the on-diagonal blocks.

Let us consider the piece $F_0$ for our discussion here. As alluded to in Section \ref{sec:overview-positivity}, we want to break $F$ into minimal pieces so that each piece is symmetric under the permutation of vertices. We can hope that each piece will then essentially have a single dominating eigenvalue that can be determined relatively easily. Below, we will essentially implement this plan.

First, we need to decide what kind of "pieces" we will need. These are the \emph{graphical matrices} that we define next.
\begin{definition}[Graphical Matrices (see Def \ref{def:graphical-matrix} for a formal version)]
Let $U$ be a graph on $[2d]$ with specially identified subsets left and right subsets $[d]$ and $[2d] \setminus [d]$.  For any $I,J \in \nchoose{d}$, $I \cap J = \emptyset$, let $\pi_{I,J}$ be an injective map that takes $[d]$ into $I$ and $[2d] \setminus [d]$ into $J$ using a fixed convention. The graphical matrix $M_U$ with graph $U$ is then defined by $M_U(I,J) = \chi_{\pi_{I,J}(U)} (G).$
\end{definition}

The starting point of the analysis is to decompose $F_0 = \sum_{U} \omegan^{2d} M_U,$ where $M_U$ is the graphical matrix with shape $U$. Graphical matrices as above turn out to be the right building blocks for spectral analysis of our moment matrix.  This is because a key observation in \cite{HKP15} shows that a simple combinatorial parameter, the size of the maximum bipartite matching between the left and right index in $U$ (i.e. between $[d]$ and $[2d] \setminus [d]$), determines the spectral norm of $M_U$. Specifically, when $U$ has a maximum matching of size $t < d$, the spectral norm of $M_U$ is $\tilde{O}(n^{d-\frac{t}{2}})$, with high probability. Observe that when $d = 2$ and $U$ is a single edge connecting the left vertex with the right, $M_U$ is just the $\on$-adjacency matrix of  the underlying random graph and it is well known that the spectral norm in this case is $\Theta(\sqrt{n})$ matching the more general claim above. 

In particular, this implies that when $U$ has a perfect matching, $M_U$ is pseudorandom in the sense that $F_U$ essentially has the spectral norm $\approx n^{d/2}$, the same as that of an independent $\on$ random matrix of the same dimensions. This allows $M_U$ to be bounded against the positive eigenvalue $\omegan^{d}$ of the diagonal matrix $F_d$ as $\omegan^{d} \gg \omegan^{2d} n^{d/2}$ (even for $\omega$ approaching $\sqrt{n}$!). However for $M_U$ when $U$ has a maximum matching of size $t < d$, one can't bound against the diagonal matrix $F_d$ anymore. %On the other hand, observe that for the extreme case of $U$ being the empty graph, $F_U$ is really PSD so we shouldn't need to worry about it at all! 

The next main idea is to note that for every $M_U$ there's an appropriate "diagonal" against which we must charge the negative eigenvalues of $M_U$. When $U$ has a perfect matching, this is literally the diagonal matrix $F_d$ as done above. However, when, say, $U$ is a (bipartite) matching of size $t < d$, we should instead charge against the "diagonal" matrix that can thought of as obtained by "collapsing" each matching edge into a vertex in $U$. In particular, this collapsing produces a matrix that lies in the decomposition of $F_t$.

\vspace{-0.75pc}
\begin{figure}[h]
\begin{center}
\includegraphics[scale=0.35]{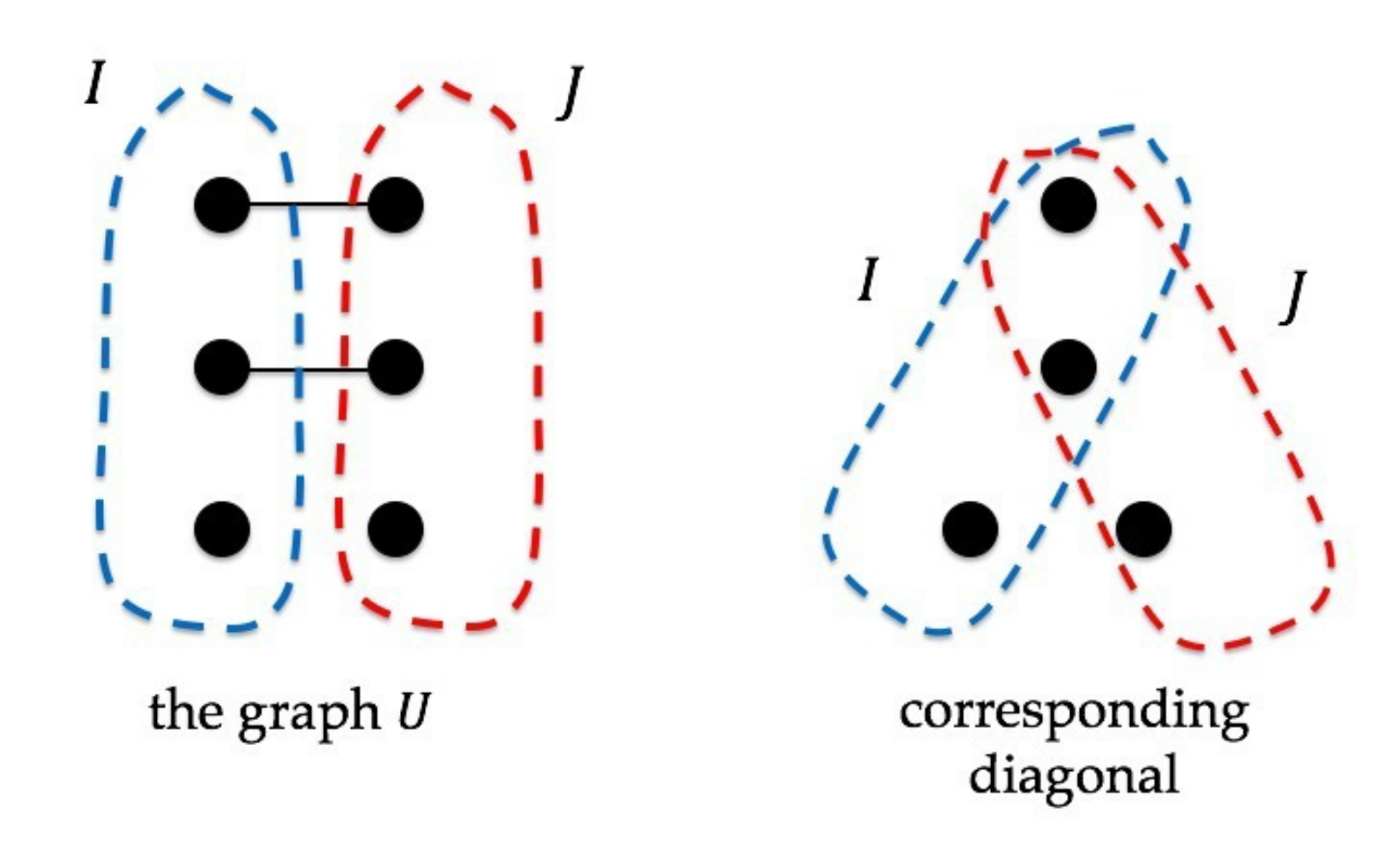}
\end{center}
\end{figure}

\vspace{-0.75pc}

%\pnote{draw a picture here. I sort of want to say that every graphical matrix should be bounded against it's own "diagonal" which is obtained by merging the matched vertices....let's see.}

There are a two main takeaways from this analysis that would serve as inspiration in the analysis of our actual construction. First is the decomposition into graphical matrices in order to have a coarse handle on the spectrum of the moment matrix. Second, the "charging" of negative eigenvalues against appropriate "diagonals" is essentially governed by the combinatorics of matchings in $U$. 

\subsection{The Main Analysis} 
We can now try to use the lessons from the warm up analysis to inspire our actual analysis. To begin with, we recall that each graphical matrix was obtained by choosing an appropriate (set of) Fourier monomials for any entry indexed by $I,J$. However, since for our actual construction we have monomials of much higher degree, we need to extend the notion of graphical matrices with \emph{shapes} corresponding to larger graphs $U$. See Def \ref{def:graphical-matrix} for a formal definition.

It turns out that the right combinatorial idea to generalize the size of the maximum matching and control the spectral norm of the graphical matrices $\cM_U$ is the maximum number of \emph{vertex disjoint paths} between specially designated left and right endpoints of $U$ (themselves the generalization of the bipartition we had in the warmup). Using Menger's theorem, this is equal to the size of a minimal collection of vertices that separates the left and right sets in the graph $U$, which we call the \emph{separator size} of $U$.

Finally, we need a "charging" argument to work with the approximate diagonalization we end up with. Generalizing the idea in the warm up here is the hardest part of our proof, but relates again to the notion of vertex separators defined above.
In the warm up, we used a naive charging scheme, breaking the moment matrix into simpler (graphical) matrices, each of which was either a ``positive diagonal'' mass or a ``negative off-diagonal mass'', and pairing up the terms.
Such a crude association doesn't work out immediately in the general setting.
Instead, large groups of graphical matrices must be treated all at once.
In each subspace of our approximate diagonalization of the moment matrix $\cM$, we collect the "positive diagonal mass" and the "negative off digonal mass" that needs to be charged to it together and build an approximately PSD matrix out of it.
As alluded to before, the error in this approximation is not negligible and thus we must further recurse on the error terms. In what follows, we discuss the factorization process that accomplishes the charging scheme implicitly and  the recursive factorization for the error terms in some more detail. 
\newcommand{\ls}{\mathsf{left-sep}}
\newcommand{\rs}{\mathsf{right-sep}}
Consider some graph $T \subseteq \binom{[n]}{2}$, that corresponds to one term in the sum  in (\ref{eq:Mdef})  above, and let $q$ be the minimum size of a set that separates $I$ from $J$ in $T$.
Such a set is not necessarily unique but we can define the \textit{leftmost} separator $\ls(T) = S_\ell$ to be the $q$-sized separator that is closest to $I$ and the \textit{rightmost} separator $\rs(T) = S_r$ to be the $q$-sized separator that is closest to $J$. 

We can rewrite the $(I,J)$ entry moment matrix $\cM$ \eqref{eq:Mdef} by collecting monomials $T$ with a fixed choice of the leftmost and rightmost separators $S_{\ell}$ and $S_r$. This step corresponds to collecting terms with similar spectral norms together accomplishing the goal of collecting together into a term, the "positive diagonal mass" and the "negative off diagonal mass" that are implicitly charged to each other in the intended approximate diagonalization. 

\begin{equation}
\cM(I,J) = \sum_{1 \leq q \leq |I|,|J|} \sum_{S_{\ell}, S_R: |S_{\ell}| = |S_r| = q} \sum_{\substack{ T \subseteq \nchoose{2}\\ |\V(T) \cup I \cup J| \leq \tau \\\ls(T) = S_{\ell}, \rs(T) = S_{r} }} \omegan^{|\V(T) \cup I \cup J|} \chi_{T}(G) \label{eq:temp-factor-1}
\end{equation}

We can then partition $T$ into three subsets $\cR_\ell$, $\cR_m$ and $\cR_r$ that represent the part of the graph $T$ between $I$ and $S_\ell$, the part between $S_\ell$ and $S_r$ and the part between $S_r$ and $J$ respectively (where edges within $S_{\ell}$ and edges within $S_r$ are all placed in $\cR_m$, see Definition~\ref{def:canfact}). 
We thus immediately obtain that $$\chi_T(G) = \chi_{\cR_\ell}(G) \chi_{\cR_m}(G) \chi_{\cR_r}(G) \;.$$

Thus:
\begin{equation} \label{eq:temp-factor-2} 
\cM(I,J) = \sum_{1 \leq q \leq |I|,|J|} \sum_{S_{\ell}, S_R: |S_{\ell}| = |S_r| = q} \!\!\!\! \sum_{\substack{ T \subseteq \nchoose{2}\\ |\V(T) \cup I \cup J| \leq \tau \\\ls(T) = S_{\ell}\\ \rs(T) = S_{r} }} 
\!\!\!\!\! \Paren{ \omegan^{|\V(\cR_{\ell})|} \chi_{\cR_{\ell}}(G)} \Paren{ \omegan^{|\V(\cR_{m})| - 2q} \chi_{\cR_m}(G)} \Paren{ \omegan^{|\V(\cR_{r})|} \chi_{\cR_r}(G)}
\end{equation}

One could hope that we could replace the RHS of (\ref{eq:temp-factor-2}) by

\begin{equation}
\sum_{\substack{1 \leq q \leq |I|,|J| \\ \tau_1+\tau_2+\tau_3\leq \tau}} \; \sum_{\substack{S_\ell \subseteq \binom{[n]}{q} \\ S_r \subseteq \binom{[n]}{q}}} \Paren{\sum_{\substack{\cR_\ell \\  \cV(\cR_\ell) \supseteq I \cup S_\ell  \\ |\V(\cR_\ell)|=\tau_1}} \!\!\!\left(\tfrac{\omega}{n}\right)^{|\cV(\cR_\ell)|} \chi_{\cR_\ell}(G)}
\Paren{ \sum_{\substack{\cR_m \\  \cV(\cR_m) \supseteq S_\ell\cup S_r \\ |\V(\cR_m)|=\tau_2}} \!\!\!\left (\tfrac \omega n\right)^{|\cV(\cR_m)| - 2q} \chi_{\cR_m}(G)} 
\Paren{\sum_{\substack{\cR_r \\ \cV(\cR_r) \supseteq S_r \cup J \\ |\V(\cR_r)| = \tau_3}} \!\!\! \left(\tfrac{\omega}{n}\right)^{|\cV(\cR_r)|}  \chi_{\cR_r}(G)}
\label{eq:Mdef-false}
\end{equation}

In fact, it turns out we can focus attention (up to sufficiently small error in the  spectral norm) to the case $\tau_1 \leq \tau/3$, $\tau_2 \leq \tau/3$, $\tau_3 \leq \tau/3$ in which case if $M(I,J)$ was equal to (\ref{eq:Mdef-false}) we could simply write 
\[
\cM = \sum_q \cL_q \cQ_q \cL_q^\dagger
\]
where for $I,S \subseteq [n]$ with $|I| \leq d$ and $|S|=q$, we let $\cL_q(I,S)$ be the sum of $(\omega/n)^{|V(\cR_\ell)|}\chi_{\cR_\ell}(G)$ over all graphs $\cR_\ell$ of at most $\tau/3$ vertices connecting $I$ to $S$, and for $S,S'$ of size $q$, we let $\cQ_{q}(S,S')$ be the sum of $(\omega/n)^{|\cR_m|-2q}\chi_{\cR_m}(G)$ over all graphs $\cR_m$ of at most $\tau/3$ vertices connecting $S$ to $S'$. 

Thus, in this case, this reduces our task of showing that $\cM$ is positive semidefinite  to showing that for every  $q$, the matrix $\cQ = \cQ_q$ is positive semidefinite.
However the main complication is that there are cross terms in the product $\cL_q \cQ_q \cL_q^\top$ that correspond to repeating the same vertex (not in $S_\ell$ and $S_r$) in more than one of $\cR_\ell$, $\cR_m$ and $\cR_r$. There is no matching term in the Fourier decomposition of $\cM(I, J)$. 
So at best, for every fixed $q$, we can write the part of $\cM$ corresponding to indices $I,J$ with minimal vertex separator equal to $q$ as 
$$\cL \cQ_0 \cL^\top - \cE_1$$
for some error matrix $\cE_1$ that exactly cancels out the extra terms contributed by cross terms with repeated vertices. 
Unfortunately,  the spectral norm of this  error matrix $\cE_1$ is not small enough that we could simply ignore it. 
Luckily however, we can recurse and factorize $\cE_1$ approximately as well. 
We can form a new graph $T'$ by taking the parity of the edge sets in $\cR_\ell$, $\cR_m$ and $\cR_r$. Now we find the leftmost and rightmost separators that separate $I$ and $J$ from each other, and from all repeated vertices. This gives us another decomposition of a graph into three pieces, from which we can write
$$\cE_1 = \cL \cQ_1 \cL^\top - \cE_2$$
for some other matrix $\cQ_1$. 
Continuing this argument gives us for every $q$ a factorization of $\cM_q$ as  
$$  \cL(\cQ_0 - \cQ_1 + \cQ_2 - \ldots - \cQ_{2d-1} + \cQ_{2d})\cL^\top - (\xi_{0} - \xi_{1} + \xi_{2} - \ldots - \xi_{2d-1} + \xi_{2d})$$
The error matrices $\xi_0, \xi_1, \ldots, \xi_{2d}$ arise from truncation issues, which we have ignored in the argument above and turn out to be negligible. 

It is not hard to show that $\cQ_0 \succeq D$ for some positive semidefinite matrix $D$ that we define later. What remains is to bound the remaining matrices $ \cQ_1, \ldots \cQ_{2d-1}$ in order to conclude that $\cM$ is positive semidefinite. Next, we elaborate on the structure of these matrices. 
It turns out that we can define the ``shape'' of a graph $\cR_m$ in an appropriate way so that
$$\cQ^U_i(S_\ell, S_r) = \sum_{ \mbox{shape}(\cR_m)= U}c_i(\cR_m) \chi_{\cR_m}$$
where $U$ is a finite (for constant $d$) sized graph with vertex set $A \cup B \cup C$, where we call $A$ the ``left'' side of $U$ and $B$ the ``right'' side of $U$.  
Moreover $\cQ_i = \sum_U \cQ^U_i$. Now $\cQ^U_i$ is a random matrix and special cases of this general family of matrices (for particular choices of $U$) arise in several earlier works on lower bounds for planted clique. 
Medarametla and Potechin~\cite{MP16} showed that the spectral norm of $\cQ^U$ can be controlled by a bound on its coefficients and a few combinatorial parameters of $U$ \---- namely $|\V(U)|$, $|A \cap B|$ and the number of vertex disjoint paths between $A / B$ and $B / A$. 

A major challenge in our work is to understand and analyze the coefficients $c_i$. 
In the course of decomposing $\cM$, we are able to characterize $c_i(\cR_m)$ as an appropriately weighted sum over $c_{i-1}(\cR_m')$ where $\cR_m'$ ranges over the middle piece of all graphs with leftmost and rightmost separators $S_\ell$ and $S_r$ that could have resulted in $\cR_m$ due to repeated vertices. 
Recall that when there are repeated vertices, we take the parity of the edge sets of the three pieces and compute a new set of left and rightmost vertex separators. 
The set of $\cR_m'$'s that could result in $\cR_m$ is complicated. Instead, our approach is to show that the various combinatorial parameters of $\cR_m'$ (which affect the spectral norm bounds) tradeoff against each other when accounting for the effect of repeated vertices. 
This allows us to bound their contribution and ultimately show that the coefficients $c_i$ decay quickly enough for all values of $\omega < n^{1/2 - \epsilon}$ that we can bound each $\cQ_i$ for $i > 1$ as $ -\frac{D}{8d} \succeq \cQ_i \succeq \frac{D}{8d}$, and this completes our proof.

\section{Preliminaries}
\subsection{General Notation}
\begin{itemize}
\item We use small Greek letters indicate constants/parameters.

\item $\P_d^n$ denotes the linear space of all \emph{multilinear} polynomials of degree at most $d$ on $\zo^n$.

\item We write $\1_Q$ for any event $Q$ to be the $0$-$1$ indicator of whether $Q$ happens.

\item For a subset $T \subseteq \nchoose{2}$ of edges of a graph on vertex set $[n]$, we write $\V(T) \subseteq [n]$ to denote the vertices that have at least one edge incident on them in $T$.

\item For a matrix $Q \in \R^{N \times N}$, $\|Q\| $ denotes its spectral norm (or the largest singular value) and $\|Q\|_{F} = \sqrt{ \sum_{x,y \in [N]} Q(x,y)^2}$ denotes its Frobenius norm.

\item For a graph $G$, let $\cC_{q} = \cC_q(G) = \{I \subseteq [n] \, : \, I \text{ is a $q$-clique in $G$} \}$, and let $\cC_{\leq q} = \bigcup_{q' \leq q} \cC_{d'}$.
Let $\cC(G) = \cC_{\leq \infty}$ be the collection of all cliques in $G$.
We count the empty set and all singletons as cliques.

\item We write $\G(n, \frac{1}{2})$ to denote the distribution on graphs on the vertex set $[n]$ where each edge is included with probability $1/2$ independently of others.

\item We say that an event $E$ with respect to the probability distribution $\G(n, \frac{1}{2})$ happens \emph{with high probability (w.h.p.)} if $\Pr[E] \geq 1- \Omega(1)/n^{10 \log n}$ for large enough $n$.

\item We write $f(n) \ll g(n)$ to mean that for every constant $c$ there is an $n_0$ such that if $n \geq n_0$, $f(n) \leq C g(n)$.

%\item The image of any map $\zeta$ is written as $\Im(\zeta)$.
\end{itemize}
\subsection{Graphs}

% \Bnote{Is there a reason not to simply identify a graph $G$ with its $\pm 1$ adjacency matrix and then use the notation $G_e$ instead of $g_e$ and use multiplication instead of XOR?}
% \pnote{I made the notation switch - hopefully nothing has broken.}

We identify a graph $G$ with its $\on$ adjacency matrix and write $G_e \in \on$ for the $\on$-indicator of whether $e \in [n] \times [n]$ is an edge (indicated by $G_e = +1$) in the graph $G$ or not.
%\Bnote{Does it make sense to use $-1$ instead of $+1$ to signify an edge? unsure if it makes any difference} 
When $G \sim \G(n, \frac{1}{2})$, $\g_e$ are independent $\on$-random variables. 

A \emph{graph function} is a real-valued function of the variables $\g_e \in \on$ for $e \in \nchoose{2}$. For graphs $G^1, G^2, \ldots, G^k$ on the vertex set $[n]$, we define $\Delta(G^1, G^2, \ldots, G^k)$ to be the graph $G$ satisfying $G_e = \Pi_{i \leq k} G^i_e.$

\begin{definition}[Vertex Separator]
For a graph $G$ on $[n]$ and vertex sets $I, J \subseteq [n]$, a set of vertices $S \subseteq [n]$ is said to be a \textit{minimal vertex separator} if $S$ is a set of smallest possible size such that every path between $I$ and $J$ in $G$ passes through some vertex of $S$. 
\end{definition}

Often, $I$ and $J$ will be allowed to intersect in which case any vertex separator must contain $I \cap J$. 

\begin{fact}[Menger's Theorem] \label{fact:Mengers}
For a graph $G$ on $[n]$ and two subsets of vertices $I, J \subseteq [n]$, the maximum number of vertex disjoint paths between $I$ and $J$ in $G$ is equal to the size of any minimal vertex separator between $I$ and $J$ in $G$.
\end{fact}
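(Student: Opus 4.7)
The plan is to prove this by reducing to the max-flow min-cut theorem on an auxiliary directed network obtained by vertex splitting. The easy direction, that the maximum number of vertex-disjoint paths between $I$ and $J$ is at most the size of any vertex separator $S$, is almost immediate: if $P_1, \ldots, P_k$ are internally vertex-disjoint paths from $I$ to $J$, then each $P_i$ must contain at least one vertex of $S$ (otherwise $P_i$ would be a path from $I$ to $J$ avoiding $S$, contradicting the definition of a separator), and since the paths are vertex-disjoint the contributed vertices of $S$ are all distinct. Thus $|S| \geq k$, and taking $S$ to be a minimum separator and $k$ the maximum number of disjoint paths yields $k \leq |S|$.

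For the harder direction I would construct a directed capacitated network $G'$ as follows. Split every vertex $v \in [n] \setminus (I \cup J)$ into two copies $v^-$ and $v^+$ joined by a directed edge $(v^-, v^+)$ of capacity $1$; leave vertices in $I \cup J$ as single vertices treated as having infinite capacity. Replace each undirected edge $\{u,v\}$ of $G$ by two directed edges $(u^+, v^-)$ and $(v^+, u^-)$, each of capacity $\infty$ (with the obvious adjustment when $u$ or $v$ lies in $I \cup J$). Add a super-source $s$ with infinite-capacity edges to every vertex of $I$, and a super-sink $t$ with infinite-capacity edges from every vertex of $J$. By construction, any integer $s$-$t$ flow of value $k$ in $G'$ decomposes (by Ford--Fulkerson integrality) into $k$ edge-disjoint $s$-$t$ paths, and the unit capacities on the split edges force these to use disjoint internal vertices of $G$, yielding $k$ internally vertex-disjoint paths from $I$ to $J$. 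Conversely, any $s$-$t$ cut of finite capacity must use only split edges $(v^-, v^+)$, and the set of such $v$ is a vertex separator between $I$ and $J$ in $G$ of the same size.

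Applying the max-flow min-cut theorem to $G'$ then equates the maximum flow value, which equals the maximum number of vertex-disjoint paths in $G$, with the minimum cut capacity, which equals the minimum vertex separator size. The main subtlety I would have to handle carefully is the treatment of vertices in $I \cap J$ and vertices that serve as endpoints: these must be assigned effectively infinite vertex capacity so that no min cut routes through them, and a path of length zero at a vertex $v \in I \cap J$ must be counted exactly once. Bookkeeping this carefully (for instance by adding $|I \cap J|$ to both sides and removing $I \cap J$ before running the argument) matches the two quantities exactly. Since this is a classical result I would cite a standard reference (e.g., Diestel's \emph{Graph Theory}) rather than reproduce the full argument in the paper.
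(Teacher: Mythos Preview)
The paper does not prove this statement at all: it is stated as a \texttt{Fact} (Menger's Theorem) and simply invoked as a classical result, with no argument given. Your proposal to prove it via the vertex-splitting reduction to max-flow/min-cut is a correct and standard route, and your handling of the $I\cap J$ subtlety is the right concern to flag. Since the paper itself treats this as background, your final suggestion---to cite a standard reference rather than reproduce the argument---is exactly what the paper does; the detailed sketch you wrote is more than the paper provides.
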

\renewcommand{\G}{G}
\subsection{Fourier Analysis}
Any graph function $f:G \rightarrow \R$ can be represented as a Fourier polynomial in the variables $\g_e$: $$f(\g) = \sum_{W \subseteq \nchoose{2}} \hat{f}(W) \chi_W(\g),$$  where $\chi_W(\g)$ is the \emph{parity} function on edges in $W$: $$\chi_W(\g) = \Pi_{e \in W} \g_e.$$ The parity function $\chi_W$ are an orthonormal basis for functions on $G$ under the inner product defined by $\iprod{f,h} = \E_{G \sim \G(n, \frac{1}{2})}[f(G)h(G)]$ for any graph functions $f$ and $h$.

The following fact is easy to verify:
\begin{fact}
Let $G$ be a graph on $n$ described by the vector $\g \in \on^{{n \choose 2}}$. For any subset $S \subseteq [n]$ of the vertices, we have the identity:
\[
\sum_{W \subseteq {S \choose 2}} \chi_W(G) = \begin{cases}
2^{{{|S|} \choose 2}} & \text{ if $S$ is a clique in $G$,}\\
0 & \text{ otherwise. }
\end{cases}
\] \label{fact:Fourier-and}
\end{fact}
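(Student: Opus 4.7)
The plan is to recognize the sum as the expansion of a product, which will cleanly split into a per-edge factor that vanishes exactly when the corresponding pair is a non-edge.

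First I would rewrite the sum by unfolding the definition $\chi_W(G) = \prod_{e \in W} G_e$ and observing that summing over all subsets $W \subseteq \binom{S}{2}$ of a product of per-element factors is the same as expanding a product of binomials. Concretely, the identity
\[
\sum_{W \subseteq \binom{S}{2}} \prod_{e \in W} G_e \;=\; \prod_{e \in \binom{S}{2}} (1 + G_e)
\]
holds by distributivity: each term on the right contributes either $1$ (choosing not to include $e$ in $W$) or $G_e$ (choosing to include $e$).

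Next I would do a case analysis on each factor $1 + G_e$ using the $\{\pm 1\}$ encoding. If $e$ is an edge of $G$ then $G_e = +1$ and $1 + G_e = 2$; if $e$ is a non-edge then $G_e = -1$ and $1 + G_e = 0$. Therefore the product $\prod_{e \in \binom{S}{2}} (1 + G_e)$ equals $2^{\binom{|S|}{2}}$ when every pair in $\binom{S}{2}$ is an edge, i.e.\ $S$ is a clique in $G$, and equals $0$ as soon as any single pair in $\binom{S}{2}$ fails to be an edge, i.e.\ $S$ is not a clique. This matches the two cases in the statement exactly.

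There is no real obstacle here; the only thing to be careful about is the $\{\pm 1\}$ versus $\{0,1\}$ encoding convention fixed earlier in the Preliminaries (where $G_e = +1$ indicates an edge), so that $1 + G_e$ has the right interpretation. Given that convention, the proof is a two-line identity followed by a one-line case split.
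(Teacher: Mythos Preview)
Your proof is correct and is exactly the standard argument one would give; the paper itself does not prove this fact, merely stating it is ``easy to verify,'' and your product expansion $\sum_{W \subseteq \binom{S}{2}} \prod_{e \in W} G_e = \prod_{e \in \binom{S}{2}}(1+G_e)$ followed by the case split on $G_e \in \{\pm 1\}$ is precisely the intended verification.
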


% Next, we recall $(2,q)$-hypercontractivity of low-degree polynomials over the hypercube. 
% \begin{lemma}[$(2,q)$-hypercontractivity]
% For $f: \on^n \rightarrow \R$, a polynomial of degree $d$, $$\E[ f(x)^q] \leq (q-1)^{q \cdot d} \E[ f(x)^{2}]^{q/2} .$$ 
% \end{lemma}

% Combining the above with Markov's inequality, we obtain the following simple concentration bound low-degree polynomials over the hypercube.
% \begin{lemma}
% \label{lem:poly-conc}
% For any degree $d$ $f: \on^n \rightarrow \R$, $$\Pr[ |f| > t] \leq (2q-1)^{2qd} \E[ f^2]^{q}/t^{2q}.$$
% \end{lemma}

\subsection{The Sum-of-Squares Algorithm} 

The sum of squares algorithm has several equivalent definitions.
We follow the notation of \emph{pseudoexpectations} as in the survey of Barak and Steurer~\cite{BarakS14}.

\begin{definition}[Pseudoexpectation]
A linear operator $\pE : \P_d^n \rightarrow \R$ is said to be a \emph{degree} $d$-\emph{pseudoexpectation} if it satisfies:
\begin{enumerate}
\item Normalization: $\pE[\bm{1}] = 1$.
\item Positive Semidefiniteness: $\pE[ p^2] \geq 0$ for every polynomial $p \in \P_d^n$.
\end{enumerate}
A pseudoexpectation operator $\pE$ on $\P_d^n$ is said to satisfy a constraint $\{p = 0\}$ for any $p \in \P_d^n$ if for every polynomial $q \in \P_d^n$ such that $p \cdot q \in \P_d^n$, $\pE[ pq] = 0$. 
\end{definition}

Given a set of constraints $\{p_i = 0\}$ for $1 \leq i \leq m$ and an objective polynomial $p$, degre sum of squares algorithm of degree $d$ solves the problem $$\arg \max \pE[p]$$ over all degree $d$ pseudoexpectations $\pE$ that satisfy $\{p_i = 0\}$ for $1 \leq i \leq m$.

\section{The Pseudo-expectation}
\label{sec:def-intro}

We now define our pseudo-distribution operator $\pE_G$. 
As discussed in Section~\ref{sec:intro:pseudo-dist}, it is based on requiring (\ref{eq:pseudo-calibration}) to hold for every $f$ that has degree at most $\tau$ in $G$ and $d$ in $x$. 

% In this section, we define a linear operator $\pE$ on $\P_n^d$. This linear operator will be the main object of study in the paper and our main theorem will show that $\pE$ is a valid pseudo-expectation operator that satisfies the constraints in \eqref{eq:prog-formulation} (up to scaling) and has an objective value $\pE[\sum_{i = 1}^n x_i] \approx \omega$.
% \pnote{removed footnote here that said that we only prove approximate psdness} 
% % \footnote{Actually, we will show that $\pE$ is \emph{almost} a valid pseudo-expectation operator, and deduce that there exists some slightly different operator $\pE'$ which is actually a valid pseudo-expectation satisfying the same properties.}

\begin{center}
  \fbox{\begin{minipage}{\textwidth}
  \paragraph{Important Parameters}
The following parameters will be fixed for the rest of the paper.  \begin{itemize}
  \item $\epsilon \in (0,1/2)$, which determines the size $\omega = n^{1/2 - \epsilon}$ of the planted clique.
  \item $d = d(n) \in \N$, the degree of the SoS relaxation against which we prove a lower bound.
  \item $\tau = \tau(n) \in \N$, the degree of our pseudoexpectation $\pE$ as a function of $G \sim \G(n,1/2)$.
  \end{itemize}
  We always assume that $Cd/\epsilon \leq \tau \leq (\epsilon/C) \log n$ and $\epsilon \geq C \log \log n / \log n$ for a sufficiently-large constant $C$.
  Eventually we will set $d = (\epsilon/C)^2 \log n$, (this yields the parameters stated in Theorem \ref{thm:main}, since then $n^{1/2 - \epsilon} = n^{1/2 - \Omega(d/\log n)^{1/2}}$), which implies that $\epsilon \gg \log \log n / \log n$.
  \end{minipage}}
\end{center}

\subsection{Definition of $\pE$}
As discussed previously, $\pE$ is completely specified by its \emph{multilinear moments}: $\pE[ x_I]$ for $I \subseteq [n]$ and $|I| \leq d$. $\pE[x_I]$ is a function of $\g_e$ for $e \in \nchoose{2}$ and can be written as a polynomial in $\g_e$ with coefficients $\hat{\pE[x_S]}(T)$ for each $T \subseteq \nchoose{2}$ (the "Fourier coefficients"). These Fourier coefficients will be fixed by our insistence on the pseudoexpectation being pseudocalibrated with respect to the planted distribution $G(n, 1/2,\omega)$.

\begin{definition}[$\pE$ of degree $d$, clique-size $\omega$, truncation $\tau$]
  Let $S \subseteq [n]$ be a set of vertices of size $|S| \leq d$.
  Let $T \subseteq \nchoose{2}$ be a set of edges.
  Let $\chi_T = \prod_{e \in T} \g_e$.
  Let
  \[
    \widehat{\pE[x_S]}(T) = \begin{cases}
        \E_{(G,x) \sim \G(n,1/2,\omega)}[\chi_T(G) x_S] & \quad \text{if $|\V(T) \cup S| \leq \tau$}\\
        0 & \quad \text{otherwise}\mper
        \end{cases}
  \]
  As usual, $\pE[x_S] = \sum_{T \subseteq \nchoose{2}} \widehat{\pE[x_S]}(T) \cdot \chi_T(G)$.
\end{definition}
  % We define $\pEs$ as a correctly scaled version of $\pE$ forced to satisfy $\pEs[1] = 1$.
  % Let
  % \[
  %   \pEs[x_S] = \frac{\sum_{T \subseteq \nchoose{2}} \widehat{\pE^*[x_S]}(T) \chi_T}{\sum_{T \subseteq \nchoose{2}} \widehat{\pE^*[1]}(T) \chi_T}\mper
  % \]
% \end{definition}

% Moving on, we establish that the definition chosen above in fact ``fools" all low instance degree ``tests" - as in no low instance degree (in both instance and solution variables) polynomial ``detects" the difference between the operators $\E_{G \sim \G(n, \frac{1}{2})} (\pE)$ and $\E_{(C,H) \sim \G(n, \frac{1}{2}, \omega)}$. Notice that while the latter operator is an expectation (and thus positive semidefinite) operator in both solution and instance variables, the former is an expectation only in the instance variables and not a priori guaranteed to be PSD.

The Fourier coefficients can in fact be explicitly computed easily:
\begin{lemma} \label{lem:Fourier-pseudo-expectation}
Let $T \subseteq \nchoose{2}$, $S \subseteq [n]$ and $\V(T) \subseteq [n]$ be the vertices incident to edges in $T$. Then $$ \E_{(H,x) \sim \G(n,1/2,\omega)} [\chi_T \cdot x_S] = \left(\tfrac{\omega}{n}\right)^{|\V(T) \cup S|}.$$
\end{lemma}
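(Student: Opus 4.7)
The plan is a direct conditional expectation calculation. I would interpret $G(n,1/2,\omega)$ as the distribution in which each vertex is independently included in the planted clique $K$ with probability $\omega/n$, and then the graph $H$ is obtained from a $G(n,1/2)$ sample by setting $H_e = +1$ for every edge $e$ with both endpoints in $K$ (this is the interpretation under which the claimed identity holds as an equality rather than only approximately; this convention is consistent with the Fourier-analytic use of $G(n,1/2,\omega)$ elsewhere in the paper). Write $x_i = \1[i \in K]$, so that $x_S = \1[S \subseteq K]$.

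The first step is to condition on $K$ and factor:
\[
\E[\chi_T \cdot x_S \mid K] = \1[S \subseteq K] \cdot \prod_{e \in T} \E[H_e \mid K].
\]
For $e = \{i,j\}$ with $\{i,j\} \subseteq K$ we have $H_e = +1$ deterministically, so the factor is $1$. For any other $e$, the value $H_e$ is an independent uniform $\pm 1$ variable, so $\E[H_e \mid K] = 0$ and the whole product vanishes. Thus the product over $e \in T$ collapses to $\1[\V(T) \subseteq K]$, and combining with the $x_S$ factor gives
\[
\E[\chi_T \cdot x_S \mid K] \;=\; \1\!\left[\V(T) \cup S \subseteq K\right].
\]

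The second step is to take expectation over $K$. Since the events $\{i \in K\}$ are independent with probability $\omega/n$,
\[
\Pr\!\left[\V(T) \cup S \subseteq K\right] \;=\; \left(\tfrac{\omega}{n}\right)^{|\V(T) \cup S|},
\]
which is exactly the claimed identity.

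There is no real obstacle here; the only mild subtlety is to note that the $\pm 1$ encoding of edges is what makes the Fourier character $\chi_T$ vanish in expectation whenever $T$ contains any edge not fully inside $K$, and hence why the support reduces to $\V(T) \subseteq K$. If instead $G(n,1/2,\omega)$ is interpreted as placing a uniformly random $\omega$-subset as the clique, the identical argument yields $\binom{n-|\V(T)\cup S|}{\omega-|\V(T)\cup S|}/\binom{n}{\omega}$ in place of $(\omega/n)^{|\V(T)\cup S|}$, and the two agree up to the lower-order factors that are absorbed elsewhere in the paper's analysis.
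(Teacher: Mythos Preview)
Your proof is correct and takes essentially the same approach as the paper: both arguments reduce the expectation to $\Pr[\V(T)\cup S \subseteq K]$ by observing that any edge of $T$ not fully inside the clique contributes a mean-zero factor, and then evaluate that probability as $(\omega/n)^{|\V(T)\cup S|}$. Your remark about the two possible interpretations of $G(n,1/2,\omega)$ (independent inclusion versus a uniformly random $\omega$-subset) is a valid observation; the paper's own proof simply asserts $\Pr[x_{\V(T)\cup S}=1]=(\omega/n)^{|\V(T)\cup S|}$ without comment, which is exact under the independent-inclusion convention and only approximate under the fixed-size convention.
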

\begin{proof}
Throughout this proof, we suppress explicit notation for the underlying random variable which is $(H,x) \sim \G(n, \frac{1}{2},\omega)$.
We claim that $\E[\chi_T \cdot x_S] = \Pr[ x_{\V(T)\cup S} = 1 ]$.
To see this, note that
\begin{multline}
\E[\chi_T \cdot x_S] = \Pr[ x_{\V(T) \cup S}=1 ] \cdot \E[ \chi_T \cdot x_S \, | \, x_{\V(T) \cup S} =0 ]\\
+ (1 - \Pr[ x_{\V(T) \cup S} =1]) \cdot \E[ \chi_T \cdot x_S \, | \, x_{\V(T) \cup S}=0 ].
\end{multline} We note that the second term above is $0$. It's easy to see if $x_S =0 $. Otherwise, $x_{\V(T)}=0$, and there is an edge $e  \in T$ but not contained in the clique $x$. Thus, 
  \[
    \E[ \chi_e \chi_{T \setminus e} \cdot x_S \, | \, x_{\V(T) \cup S} =0 ] = 0\mper
  \]

If $x_{\V(T) \cup S}=1$ then $\chi_T = 1$, so $\E[ \chi_T \cdot x_S \, | \, x_{\V(T) \cup S}=1 ] = 1$. By a simple computation,
  \[
    \Pr [x_{V(T) \cup S}=1 ] = \left(\tfrac{\omega}{n}\right)^{|\V(T) \cup S|}\mper\qedhere
  \]
 \end{proof}

As discussed in Section \ref{sec:provingpositivity}, our construction of $\pE$ is pseudocalibrated. The following lemma captures this formally.
We include the (straightforward) proof in Appendix~\ref{sec:calibration}.
\begin{lemma}\torestate{
    \label{lem:pE-fools-simple-tests}
  Let $f_G(x)= \sum_{|S| \leq 2d} c_S(G) \cdot x_S$ be a real-valued polynomial on $\zo^n$ whose coefficients have degree at most $\tau$ when expressed in the $\pm 1$ indicators $\g_e$ for edges in $G$.
  Then, $\E_{G \sim \G(n, \frac{1}{2})}[ \pE[f_G(x)]] = \E_{(H,x) \sim \G(n,1/2,\omega)}[f_H(x)]$. 
  }
\end{lemma}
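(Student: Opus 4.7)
The plan is to prove this by linearity of both sides and a direct Fourier-orthogonality computation, which is really what the pseudocalibration definition of $\widehat{\pE[x_S]}(T)$ was designed for.

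First I would expand $f_G$ fully. Each coefficient $c_S(G)$ is a function of the $\pm 1$ indicators $\g_e$, and by hypothesis it is a polynomial of degree at most $\tau$ in these variables. Writing $c_S(G)=\sum_T \hat c_S(T)\,\chi_T(G)$, the terms $T$ that appear all satisfy the "simple" size condition $|\V(T)\cup S|\leq \tau$ (this is the natural reading of "degree at most $\tau$" consistent with the pseudocalibration truncation used to define $\widehat{\pE[x_S]}(T)$; in particular the relevant $T$ are exactly those on which $\widehat{\pE[x_S]}(T)$ is \emph{not} set to zero by truncation). By linearity of $\pE$ and of both expectations appearing in the statement, it then suffices to verify the identity for a single Fourier-character monomial
\[
f_G(x)\;=\;\chi_T(G)\cdot x_S,\qquad |\V(T)\cup S|\leq \tau.
\]

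Next I would compute the two sides separately and match. For the left-hand side,
\[
\E_{G\sim \G(n,\tfrac12)}\bigl[\pE[\chi_T(G)\cdot x_S]\bigr]\;=\;\E_G\bigl[\chi_T(G)\cdot \pE[x_S]\bigr]\;=\;\E_G\!\!\left[\chi_T(G)\sum_{T'\subseteq\binom{[n]}{2}}\widehat{\pE[x_S]}(T')\,\chi_{T'}(G)\right],
\]
since $\pE[x_S]$ is a polynomial in $G$ and $\chi_T(G)$ is independent of $x$ (so can be pulled inside $\pE$). Under $\G(n,\tfrac12)$, the characters $\{\chi_{T'}\}$ are orthonormal, so $\E_G[\chi_T\chi_{T'}]=\mathbf 1_{T=T'}$, and the whole expression collapses to $\widehat{\pE[x_S]}(T)$. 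By the very definition of the Fourier coefficients of $\pE[x_S]$, and because our chosen $T$ satisfies $|\V(T)\cup S|\leq \tau$ so we are \emph{not} in the truncated regime,
\[
\widehat{\pE[x_S]}(T)\;=\;\E_{(H,x)\sim \G(n,1/2,\omega)}\bigl[\chi_T(H)\cdot x_S\bigr].
\]
The right-hand side of the lemma, by definition, is the same quantity $\E_{(H,x)}[\chi_T(H)\cdot x_S]$. Hence both sides agree for every monomial $\chi_T\cdot x_S$ in the "simple" range, and by linearity the identity extends to all $f_G$ of the stated form. (As a sanity check one can evaluate both sides explicitly using Lemma~\ref{lem:Fourier-pseudo-expectation}, obtaining $(\omega/n)^{|\V(T)\cup S|}$ on each side.)

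Since the identity holds monomial-by-monomial after reducing to Fourier characters, the only real content is the definitional step. The "main obstacle," to the extent there is one, is simply bookkeeping: making sure that the notion of "degree at most $\tau$" used in the hypothesis is precisely the one that keeps the relevant $(T,S)$ pairs inside the non-truncated support of $\widehat{\pE[x_S]}$. Once this alignment is noted, pseudocalibration is tautological for "simple" tests, and no further probabilistic or combinatorial argument is needed — which is exactly the point of the pseudocalibration construction of $\pE$.
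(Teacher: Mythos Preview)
Your proof is correct and follows essentially the same approach as the paper: expand each coefficient $c_S(G)$ in the Fourier basis, use orthonormality of the characters $\chi_T$ under $\G(n,1/2)$ to collapse $\E_G[\chi_T(G)\,\pE[x_S]]$ to $\widehat{\pE[x_S]}(T)$, and then invoke the definition of $\widehat{\pE[x_S]}(T)$ in the non-truncated regime. Your explicit remark that the hypothesis ``degree at most $\tau$'' must be read so that the relevant pairs $(S,T)$ fall inside the non-truncated support is exactly the one point the paper's proof leaves implicit.
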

%In particular, this implies that for any $f(G,x)$ whose coefficients are instance polynomials of degree at most $\tau/2$, we must have $\E_{G \sim G(n,1/2)} \pE[F(G,x)^2] \gg 0$, which rules out all counterexamples known to break previous candidate pseudodistributions.

\subsection{$\pE$ Satisfies Constraints}
%P: removed "linear" from "\pE satisfies linear constraints"
We now show that the $\pE$ defined in the previous section satisfies all linear constraints among (1) -- (6) in Section~\ref{sec:intro:pseudo-dist} and has an objective value of $\omega$. 
That is, 1) $\pE[1] \approx 1$, 2) $\pE[\sum_{i \in [n]} x_i] \approx \omega$, and 3) $\pE[x_S] = 0$ for every $S \subseteq [n]$ which is not a clique in $G$.
%These approximations will hold with error small enough to be corrected by simple means; see Lemma~\ref{lem:mix-with-uniform}.

%With slightly stronger assumptions on $\tau$, in particular the assumption $\tau \ll \tfrac{\epsilon \log n}{\log \log n}$, the result follows by applying standard concentration inequalities (in particular from $(2,q)$-hypercontractivity of low-degree polynomials on the hypercube) to the Fourier expansions of the quantities in question.
%(Notice that in the lemmas below they are always functions of degree at most $O(\tau)$ in $\pm 1$ indicators for edges $\g_e$ in $G \sim \G(n,1/2)$.)
%However, this stronger assumption would cause the loss of a factor of $\log \log n$ in the degree $d$ to which our lower bound applies.

% resolved by email, 2/09/16
%\Snote{This is not quite right; the hypercontractivity inequality is a bit worse than the moment method proof so we only get $\tau \ll \sqrt n / \log \log n$ I think.}
%\pnote{What claim are you referring to? If the lemma that follows, then I think it holds fine with the hypercontractivity based proof. It loses an single exponential factor in the degree which is $\tau^2$ for us. Does the moment method lose less than this? (I doubt it.)}
We analyze $\pE[1]$ and $\pE[ \sum_{i \in [n]} x_i]$ in the next lemma and include a proof based on moment-method in Appendix~\ref{sec:normalization}.
\begin{lemma}\torestate{\label{lem:normalization}
  With high probability, $\pE[1] = 1 \pm n^{-\Omega(\epsilon)}$ and $\pE[\sum_{i \in [n]} x_i] = \omega \cdot (1 \pm n^{-\Omega(\epsilon)})$.}
\end{lemma}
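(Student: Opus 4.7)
}

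The plan is to analyze $\pE[1]$ and $\pE[\sum_i x_i]$ through their Fourier expansions as polynomials in the independent Rademacher variables $\g_e$, isolating the deterministic ``main terms'' and controlling the random fluctuations by the moment method.

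\textbf{Setup and identification of main terms.} By Lemma~\ref{lem:Fourier-pseudo-expectation} and the definition of $\pE$,
\[
\pE[1] \;=\; \sum_{T \subseteq \binom{[n]}{2},\; |\V(T)|\le\tau} \Paren{\tfrac{\omega}{n}}^{|\V(T)|} \chi_T(G),
\]
so the $T=\emptyset$ term contributes exactly $1$. Likewise, expanding $\pE[\sum_i x_i] = \sum_i \pE[x_i]$ and grouping by Fourier character yields, for each $T$ with $v = |\V(T)| \le \tau$, a coefficient equal to $v(\omega/n)^v + (n-v)(\omega/n)^{v+1}$ (with the second summand absent when $v=\tau$); the $T=\emptyset$ contribution is $n\cdot(\omega/n) = \omega$. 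I will write $f := \pE[1]-1$ and $h := \pE[\sum_i x_i] - \omega$ for the fluctuations, which are polynomials of degree at most $\binom{\tau}{2}$ in the $\g_e$.

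\textbf{Variance bounds.} Because the $\chi_T$ are orthonormal under the uniform measure on $\{\pm 1\}^{\binom{n}{2}}$, Parseval gives $\E f^2 = \sum_{T \neq \emptyset,\, |\V(T)|\le\tau}(\omega/n)^{2|\V(T)|}$. Grouping by $v = |\V(T)|$ and bounding the number of such $T$ by $\binom{n}{v}\cdot 2^{\binom{v}{2}} \le n^v 2^{v^2/2}$, each group contributes at most $n^{-2\epsilon v}\cdot 2^{v^2/2}$ (using $(\omega/n)^2 = n^{-1-2\epsilon}$). The key quantitative input is the hypothesis $\tau \le (\epsilon/C)\log n$: since $v\le\tau$, we have $2^{v^2/2} \le 2^{v\tau/2} \le n^{v\epsilon/(2C')}$, which for $C$ large enough is $\le n^{\epsilon v/2}$. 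Summing the resulting geometric tail from $v=2$ gives $\E f^2 \le n^{-\Omega(\epsilon)}$. An analogous calculation for $h$, now including the extra factor $(\omega + O(\tau))^2$ in the squared coefficients, yields $\E h^2 \le \omega^2 \cdot n^{-\Omega(\epsilon)}$.

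\textbf{Boosting to high probability.} Chebyshev is insufficient because the paper requires probability $1 - n^{-10\log n}$; I will instead compute a high moment $\E f^{2k}$ by expanding as
\[
\E f^{2k} \;=\; \sum_{(T_1,\dots,T_{2k}):\, \triangle_i T_i = \emptyset} \prod_{i=1}^{2k}\Paren{\tfrac{\omega}{n}}^{|\V(T_i)|},
\]
and apply the hypercontractive inequality $\|f\|_{2k} \le (2k-1)^{D/2}\|f\|_2$ valid for degree $D = \binom{\tau}{2}$ polynomials on $\{\pm 1\}^{\binom{n}{2}}$. By Markov, $\Pr\big[|f|>t\big]\le\big((2k-1)^{D/2}\|f\|_2/t\big)^{2k}$; choosing $k \sim \log n/\epsilon$ and $t = n^{-\Omega(\epsilon)}$ (slightly larger than $\|f\|_2$), the exponent $(2k-1)^{D/2}$ costs a factor of $n^{O(\tau^2\log\log n/\log n)} = n^{o(\epsilon)}$ by our choice of $\tau$, and the resulting bound is $n^{-\omega(\log n)}$, comfortably beating the w.h.p.\ threshold. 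The same argument applied to $h$ with the rescaled target $t = \omega\cdot n^{-\Omega(\epsilon)}$ yields the desired bound $\pE[\sum_i x_i] = \omega(1\pm n^{-\Omega(\epsilon)})$ w.h.p.

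\textbf{Main obstacle.} The only substantive obstacle is balancing the degree $D=\binom{\tau}{2}$ against the strength of the second moment bound when pushing hypercontractivity through: the slack built into the hypothesis $\tau \le (\epsilon/C)\log n$ with $C$ sufficiently large is precisely what makes the exponential loss $(2k-1)^{D/2}$ subdominant to the variance savings $\|f\|_2 \le n^{-\Omega(\epsilon)}$. Verifying this quantitatively---tracking constants so that the exponent in $n^{-\Omega(\epsilon)}$ does not collapse and simultaneously the tail probability stays below $n^{-10\log n}$---is the only delicate point; everything else is a direct Fourier/Parseval calculation.
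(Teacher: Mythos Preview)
Your setup and variance computation are fine; the gap is in the boosting step. Applying hypercontractivity to $f=\pE[1]-1$ as a generic degree-$D$ polynomial with $D=\binom{\tau}{2}$ and $k\sim\log n/\epsilon$ incurs a loss $(2k-1)^{D/2}=n^{\Theta(\tau^2\log\log n/\log n)}$, and your claim that this equals $n^{o(\epsilon)}$ is false under the paper's hypothesis $\tau\le(\epsilon/C)\log n$. Plugging in $\tau=\Theta(\epsilon\log n)$ gives an exponent of order $\epsilon^2\log n\cdot\log\log n$, which swamps the $\Theta(\epsilon)$ savings from $\|f\|_2$ whenever $\epsilon\log n\cdot\log\log n\gg 1$---and this quantity is $\ge C^2\log\log n$ by the standing assumption $\tau\ge Cd/\epsilon$. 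In fact the paper remarks explicitly (just before Lemma~\ref{lem:high-deg-scalar}) that hypercontractivity only suffices under the strictly stronger hypothesis $\tau\ll\epsilon\log n/\log\log n$.

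The paper's fix is to replace the black-box hypercontractive bound by a direct moment computation that exploits vertex structure rather than edge-degree. It partitions the characters $\chi_T$ by shape (isomorphism type) and, for a single shape class $\cT$ on $t$ vertices, bounds $\E\big(\sum_{T\in\cT}\chi_T\big)^\ell$ by $n^{t\ell/2}(t\ell)^{t\ell}$: in any nonzero term each edge, hence each \emph{vertex}, must appear at least twice, so at most $t\ell/2$ distinct labels are used (Lemma~\ref{lem:sum-parities}). With $\ell=(\log n)^2$ this gives $\big|\sum_{T\in\cT}\chi_T\big|\le n^{t/2}(\log n)^{O(t)}$ with probability $1-n^{-\Omega(\log n)}$; multiplying by $(\omega/n)^t$ and the $2^{O(t^2)}$ shape count, the loss is $(\log n)^{O(t)}\cdot 2^{O(t^2)}$, absorbed by $n^{-\epsilon t}$ precisely when $\tau\le(\epsilon/C)\log n$ and $\epsilon\gg\log\log n/\log n$. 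The point is that the combinatorial moment bound pays polylog per \emph{vertex}, whereas hypercontractivity pays polylog per \emph{edge}; since a $\tau$-vertex character can have $\Theta(\tau^2)$ edges, this is the difference between success and failure in the paper's regime.
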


The next lemma shows that $\pE[x_S] = 0$.
\begin{lemma}\torestate{\label{lem:clique-constraints}
With probability $1$, if $S \subseteq [n]$ of size at most $d$ is not a clique in $G$, then $\pE[x_S] = 0$.}
\end{lemma}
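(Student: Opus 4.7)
The plan is to expand the sum defining $\pE[x_S]$ and regroup its terms so that the non-cliqueness of $S$ forces a cancellation via Fact~\ref{fact:Fourier-and}. Writing out the definition,
\[
\pE[x_S] \;=\; \sum_{\substack{T \subseteq \binom{[n]}{2} \\ |\V(T) \cup S| \leq \tau}} \left(\tfrac{\omega}{n}\right)^{|\V(T) \cup S|} \chi_T(G).
\]
The key observation is that the coefficient depends on $T$ only through the vertex set $\V(T) \cup S$. So the natural first step is to fix $U := \V(T) \cup S$ (which must contain $S$ and satisfy $|U| \leq \tau$) and to sum over those $T \subseteq \binom{U}{2}$ whose vertex set covers $U \setminus S$:
\[
\pE[x_S] \;=\; \sum_{\substack{U \supseteq S \\ |U| \leq \tau}} \left(\tfrac{\omega}{n}\right)^{|U|} \sum_{\substack{T \subseteq \binom{U}{2} \\ \V(T) \supseteq U \setminus S}} \chi_T(G).
\]

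The second step is to strip the covering constraint on $\V(T)$ by standard inclusion-exclusion over the set $A \subseteq U \setminus S$ of vertices in $U \setminus S$ that are \emph{not} touched by $T$:
\[
\sum_{\substack{T \subseteq \binom{U}{2} \\ \V(T) \supseteq U \setminus S}} \chi_T(G) \;=\; \sum_{A \subseteq U \setminus S} (-1)^{|A|} \sum_{T \subseteq \binom{U \setminus A}{2}} \chi_T(G).
\]
For each $A$, the innermost sum over subsets of $\binom{U \setminus A}{2}$ is exactly the expression evaluated in Fact~\ref{fact:Fourier-and}, and therefore equals $2^{\binom{|U \setminus A|}{2}}$ if $U \setminus A$ is a clique in $G$, and $0$ otherwise.

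The final step is to invoke the hypothesis that $S$ is not a clique in $G$. Since $A \subseteq U \setminus S$, we always have $S \subseteq U \setminus A$; thus $U \setminus A$ contains $S$ as a subset and hence cannot be a clique either. Consequently every term in the inclusion-exclusion vanishes, so the inner sum is $0$ for every $U$, and $\pE[x_S] = 0$ with probability $1$. There is no real obstacle here---the only mild care needed is to verify that the regrouping by $U$ is a bijective change of summation variables (using that the constraint $|\V(T) \cup S| \leq \tau$ translates cleanly to $|U| \leq \tau$) and that the inclusion-exclusion identity is applied correctly for the case $U = S$ (where $U \setminus S = \emptyset$ and the inner sum is just $\sum_{T \subseteq \binom{S}{2}} \chi_T(G) = 0$ by Fact~\ref{fact:Fourier-and}).
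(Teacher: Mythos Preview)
Your proof is correct. It differs from the paper's argument, which is shorter: the paper observes directly that the Fourier coefficient $\widehat{\pE[x_S]}(T) = (\omega/n)^{|\V(T)\cup S|}$ is invariant under $T \mapsto T \oplus T'$ for any $T' \subseteq \binom{S}{2}$, since edges inside $S$ do not change $\V(T)\cup S$. This invariance lets one factor
\[
\pE[x_S] \;=\; \Bigl(\sum_{T' \subseteq \binom{S}{2}} \chi_{T'}\Bigr)\cdot f_S(G) \;=\; 2^{\binom{|S|}{2}}\,\1_{S \text{ is a clique in } G}\cdot f_S(G),
\]
applying Fact~\ref{fact:Fourier-and} once with the set $S$ itself. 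Your route---grouping by $U=\V(T)\cup S$ and then stripping the covering constraint via inclusion--exclusion---is more computational but equally valid; it applies Fact~\ref{fact:Fourier-and} to each superset $U\setminus A \supseteq S$ rather than to $S$ alone. The paper's factorization has the small bonus of exhibiting $\pE[x_S]$ explicitly as a multiple of the clique indicator, a form that is reused later (for instance in the proof that $\cQ_0 = \Pi \cQ_0 = \cQ_0 \Pi$).
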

\begin{proof}
  Let $S \subseteq [n]$ have size at most $d$.
  Recall that $\1_{S \text{ is a clique in } G} = 2^{-{|S| \choose 2}} \sum_{T \subseteq {S \choose 2}} \chi_T$.
  Becasue the Fourier expansion of $\pE[x_S]$ is truncated using the threshold $|\cV(T) \cup S| \leq \tau$, two Fourier characters $\chi_T, \chi_{T'}$ have the same coefficient in $\pE[x_S]$ if $T \xor T' \subseteq {S \choose 2}$.
  So we can factor $\pE[x_S] = \1_{S \text{ is a clique in } G} \cdot f_S(G)$ for some function $f_S$.
\end{proof}

\subsection{Proof of Main Theorem}
Our main technical claim is that $\pE=\pE_G$ is (approximately) PSD. That is:
\begin{lemma}\torestate{\label{lem:PSD-main}
  With high probability over $G$ from $G(n,1/2)$, every $p \in \P_{d}$ satisfies,
\[
  \pE_G[p(x)^2] \geq 0
\]}
\end{lemma}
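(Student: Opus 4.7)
The plan is to prove positivity by establishing that the moment matrix $\cM$ with entries $\cM(I,J) = \pE_G[\prod_{i \in I} x_i \prod_{j \in J} x_j]$ for $|I|, |J| \leq d/2$ is positive semidefinite with high probability. Positivity for all $p \in \P_d$ reduces to PSDness of $\cM$ because $\pE_G[p(x)^2]$ is a quadratic form in the coefficients of $p$ against $\cM$. Using the explicit Fourier expansion from Lemma~\ref{lem:Fourier-pseudo-expectation}, we get
\[
\cM(I,J) = \sum_{\substack{T \subseteq \nchoose{2} \\ |\V(T) \cup I \cup J| \leq \tau}} \omegan^{|\V(T) \cup I \cup J|} \chi_T(G) \mper
\]
The first step will be to regroup terms in this sum by the leftmost and rightmost minimum vertex separators $S_\ell = \ls(T)$, $S_r = \rs(T)$ between $I$ and $J$ in $T$, stratified by separator size $q$. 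Using Menger's theorem (Fact~\ref{fact:Mengers}) this $q$ will also control the spectral norm of the resulting graphical matrices.

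The second step is to exploit the multiplicative structure: after fixing $S_\ell, S_r$, we split $T$ canonically into three edge sets $\cR_\ell$ (from $I$ up to $S_\ell$), $\cR_m$ (from $S_\ell$ to $S_r$, including all edges within the separators), and $\cR_r$ (from $S_r$ to $J$), so that $\chi_T = \chi_{\cR_\ell}\chi_{\cR_m}\chi_{\cR_r}$ and the $\omegan$-factor also factorizes appropriately. If there were no cross-term issues, we could realize
\[
\cM = \sum_{q \geq 0} \cL_q \, \cQ_q \, \cL_q^\top,
\]
where $\cL_q(I,S)$ sums $\omegan^{|\V(\cR_\ell)|}\chi_{\cR_\ell}$ over all $\cR_\ell$ connecting $I$ to an index set $S$ of size $q$, and $\cQ_q(S,S')$ sums the corresponding middle pieces. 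Then PSDness of each $\cQ_q$ would immediately yield $\cM \succeq 0$. I would carry this step out in parallel with truncating the admissible degrees for each of $\cR_\ell, \cR_m, \cR_r$ to roughly $\tau/3$ and showing (via standard spectral norm bounds on graphical matrices of Medarametla--Potechin \cite{MP16}) that the untruncated tail contributes at most $n^{-\Omega(1)}$ in spectral norm.

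The third step — and the main obstacle — handles the fact that the naive factorization overcounts configurations where vertices repeat between $\cR_\ell, \cR_m, \cR_r$ outside of $S_\ell \cup S_r$. This introduces an error matrix $\cE_1$ that is not negligible. The plan is to recursively re-decompose $\cE_1$ itself by the same leftmost/rightmost separator idea applied to the parity of edge sets and to the enlarged vertex set of repeated vertices. Iterating this gives, for each $q$, a telescoping identity
\[
\cM_q = \cL\Bigl(\cQ_0 - \cQ_1 + \cQ_2 - \cdots + \cQ_{2d}\Bigr)\cL^\top \;-\; \bigl(\xi_0 - \xi_1 + \cdots + \xi_{2d}\bigr),
\]
where the $\xi_i$ collect truncation error and will be negligible in spectral norm. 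Within this framework I would then show $\cQ_0 \succeq D$ for an explicit PSD ``diagonal'' matrix $D$ (obtained from pseudo-calibration: $D$ is essentially the expectation of a low-degree square and so naturally PSD), and prove the quantitative bound $\|\cQ_i\| \leq \|D\|/(8d)$ for $i \geq 1$. This is the technical heart of the argument: each $\cQ_i$ decomposes as $\sum_U \cQ_i^U$ over shapes $U$, where $\cQ_i^U(S_\ell,S_r) = \sum_{\text{shape}(\cR_m) = U} c_i(\cR_m)\,\chi_{\cR_m}$, and one must carefully bound the recursively-defined coefficients $c_i(\cR_m)$ against the combinatorial parameters of $U$ that control $\|\cQ_i^U\|$ (the number of vertex-disjoint paths through $U$, $|\V(U)|$, and $|A \cap B|$). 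The bound I would aim for is that the combinatorial tradeoffs from the recursion — each layer of repeated vertices shrinks the separator size but enlarges the coefficients — net out as a factor of $n^{-\Omega(\epsilon)}$ per layer whenever $\omega \leq n^{1/2 - \epsilon}$ and $\tau \gtrsim d/\epsilon$, so that the alternating sum $\sum_{i=1}^{2d} (-1)^i \cQ_i$ is dominated spectrally by $\cQ_0 - \|D\|/2 \cdot \mathrm{sign}$ terms and therefore $\cM \succeq (\cL D \cL^\top)/2 - o(1) \succeq 0$ with high probability. The bulk of the work is the bookkeeping and spectral norm control of the coefficient recursion; everything else (truncation errors, PSDness of $\cQ_0$, and concentration of graphical matrices) is comparatively routine given the framework.
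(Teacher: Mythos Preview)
Your plan matches the paper's approach almost exactly: reduce to PSDness of the moment matrix, factor via leftmost/rightmost minimum separators into $\cL \cQ_0 \cL^\top$, recursively refactor the nondisjointness error to obtain the telescoping identity $\cM = \cL(\cQ_0 - \cQ_1 + \cdots + \cQ_{2d})\cL^\top - (\xi_0 - \cdots + \xi_{2d})$, and then show $\cQ_0 \succeq D$, the $\cQ_i$ are small, and the $\xi_i$ are negligible. Two points, however, are stated in a way that would make the argument break.

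First, the bound you propose for $i \geq 1$ is $\|\cQ_i\| \leq \|D\|/(8d)$. This is too weak: $D$ is the diagonal matrix with entries $2^{\binom{|S|}{2}}/4$, so its eigenvalues range from $1/4$ up to $2^{\binom{d}{2}}/4$. A spectral-norm bound against $\|D\|$ only yields $\cQ_0 - \sum_i \cQ_i \succeq D - \tfrac{\|D\|}{4} I$, which is \emph{not} PSD on the small-$|S|$ blocks. What is actually needed (and what the paper proves) is the two-sided, block-respecting comparison $-D/(8d) \preceq \cQ_i \preceq D/(8d)$; this forces you to control each $(s_\ell,s_r)$-block of $\cQ_i$ separately against $2^{\binom{s_\ell}{2}/2 + \binom{s_r}{2}/2}$ and then combine via Cauchy--Schwarz. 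Relatedly, your description of the recursion has the sign backwards: in the separating factorization the separator size \emph{increases} by at least $1/2$ each step (this is why the recursion terminates after $2d$ steps), while the coefficients $c_i$ decay; the tradeoff you need is Lemma~\ref{lem:graph-tradeoff}, which bounds the increase in separator size plus lost paths plus newly isolated vertices by the number of intersections $r$.

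Second, you are missing a step at the end. From $\cQ_0 - \cQ_1 + \cdots \succeq D/2$ you get $\cL(\cQ_0 - \cdots)\cL^\top \succeq \tfrac12 \cL D \cL^\top$, but to absorb the (tiny but nonzero) $\xi$-errors you must lower-bound the least eigenvalue of $\cL D \cL^\top$ on the clique subspace. The paper does this as a separate lemma showing $\Pi \cL \Pi \cL^\top \Pi \succeq \Omega(\omega/n)^{d+1}\Pi$; without it the ``$-\,o(1)$'' in your final line cannot be compared to anything.
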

It is easy to complete the proof of Theorem \ref{thm:main} now:
% The last ingredient we need to prove Theorem~\ref{thm:main} is the following lemma, which says that the small approximation errors in $\pE$ can be fixed.

% \begin{lemma}
% \torestate{\label{lem:mix-with-uniform}
%   Let $G$ be a graph on $n$ vertices.
%   Suppose $\tilde \bbD : \P_{2d}[x_1,\ldots,x_n] \rightarrow \R$ is a linear functional on degree $2d$ polynomials with
%   \begin{align*}
%   \tilde \bbD[1] & = 1\\
%   \tilde \bbD [x_i x_j x_S] & = 0 \quad \text{for any $S \subseteq [n]$ if $(i,j)$ is not an edge in $G$}\\
%   \tilde \bbD \Brac{\sum_{i \in [n]} x_i} & = \omega' > 0\\
%   \tilde \bbD[p(x)^2] & \geq -\lambda \cdot \Paren{\sum_{I : |I| \leq d} p_I^2} \quad \text{for every $p \in \P_{d}$}
%   \end{align*}
%   where $\lambda \leq (4(d+1)n^{3(d+1)})^{-1}$.
%   Then there is a degree-$2d$ pseudo-distribution $\pE'$ (which is PSD and normalized) so that $\pE'[x_i x_j x_S] = 0$ for any $S \subseteq [n]$ if $(i,j)$ is not an edge in $G$ and $\pE'[\sum_{i \in [n]} x_i] \geq \omega'/2$.
%   }
% \end{lemma}
% The proof takes $\pE'$ to be a statistical mixture of $\tilde \bbD$ with the uniform distribution over actual cliques in $G$; the details are deferred to the appendix.

\begin{proof}[Proof of Theorem~\ref{thm:main}]
  By Lemma~\ref{lem:normalization}, Lemma~\ref{lem:clique-constraints}, and Lemma~\ref{lem:PSD-main}, there is a universal $C$ so that if $Cd/\epsilon \leq \tau \leq (1/C) \epsilon \log n$, (by a union bound) with high probability the following all hold:
  \begin{enumerate}
    \item $\pE[1] = 1 \pm n^{-\Omega(\epsilon)}$.\label{itm:1}
    \item $\pE[x_S] = 0$ for every $S$ of size at most $d$ not a clique in $G$.\label{itm:cliques}
    \item $\pE[\sum_i x_i] \geq (1 - n^{-\Omega(\epsilon)}) \omega$.\label{itm:objective}
    \item $\pE[p(x)^2] \geq 0$ for every $p \in \P_{d}$.\label{itm:psd}
  \end{enumerate}
  Thus, choose $\epsilon = (C^2 d /\log n)^{1/2}$ and $\tau = (1/C) \epsilon \log n$.
  The operator given by $\pE^*[p(x)] = \pE[p(x)]/\pE[1]$ is a valid degree-$d$ pseudo-distribution with $\pE[\sum_i x_i] \geq \Omega(n^{1/2 - \Theta(d/\log n)^{1/2}})$ as desired.
  
\subsection{Proof Plan} 
% \pnote{This subsection is provisional right now - I think including a plan for what's to come is a good idea at this point.}
% \pnote{TODO: should include maybe a paragraph long plan of the proof (otherwise what we are doing next wouldn't make much sense right away). There was something to that effect in the earlier version. }
% We briefly discuss our high level plan for the proof of Theorem \ref{lem:PSD-main} before going further with the details. 
 % We will begin by showing that $\pE$ is \emph{approximately} PSD---that is, give a lower bound on the least eigenvalue of the operator $\pE$.
As is standard, we can reduce Lemma \ref{lem:PSD-main} to showing that the associated \emph{moment matrix}, is positive semidefinite. 
%Towards this end, let $N_d = \sum_{i \leq d } \nchoose{i}$.
\begin{definition}[Moment Matrix] \label{def:moment-matrix}
Let $\cM \in \R^{\nchoose{\leq d} \times \nchoose{\leq d}}$ be given by $\cM(I,J) = \pE[x_I x_J]$.
\end{definition}
%The way we truncated the Fourier expansion of $\pE$ in order to satisfy the constraint that non-cliques receive zero pseudo-expectation is somewhat inconvenient to work with in the PSDness proof.
%Thus we define a proxy for the moment matrix, $\cM$, which we will work with for the rest of the proof, and show that $\|\cM - \cM'\|$ is negligible.
%\begin{claim}
%\[
%     \widehat{M(I,J)}(T) = \begin{cases}
%        \E_{(H,x) \sim \G(n,1/2,\omega)}[\chi_T(H) \1_{(I \cup J) \subseteq C}] & \quad \text{if $|\V(T)| \leq \tau$}\\
%        0 & \quad \text{otherwise}\mper
%        \end{cases}
%\]
%\end{claim}

% Since for any $I \subseteq [n]$ of size at most $d$ which is not a clique in $G$, the row $\cM(I,\cdot)$ is all zeros, the following is a robust reformulation of
Thus, Lemma \ref{lem:PSD-main} is equivalent to showing:
\begin{lemma}\label{lem:PSD-main2}
%   Let $\Pi$ be the projector to $\Span \{ e_C \, : \, C \in \cC_{\leq d} \}$.
  With high probability, $\cM \succeq 0.$
%   \[
%     \cM \succeq \Omega\Paren{\frac \omega n}^{d + 1} \cdot \Pi\mper
%   \]
\end{lemma}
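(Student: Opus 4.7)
My plan is to show $\cM \succeq 0$ by an explicit symbolic approximate factorization of the form $\cM \approx \cL \cQ \cL^\dagger$, where the columns of $\cL$ span the ``structured'' directions that pseudo-calibration forces to have nonnegative pseudoexpectation, and $\cQ$ is an approximately block-diagonal matrix whose dominant block provides a comfortable PSD cushion that absorbs every error term. The starting point is the Fourier expansion \eqref{eq:Mdef}. For each index pair $(I,J)$ and each Fourier character $\chi_T$ appearing in $\cM(I,J)$, I would partition terms according to the \emph{leftmost} and \emph{rightmost} minimum vertex separators $S_\ell,S_r$ of $I$ from $J$ in $T$, and write $T = \cR_\ell \sqcup \cR_m \sqcup \cR_r$ where $\cR_\ell$ lives between $I$ and $S_\ell$, $\cR_m$ between $S_\ell$ and $S_r$, and $\cR_r$ between $S_r$ and $J$. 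This grouping is exactly the one motivated by Menger's theorem (Fact~\ref{fact:Mengers}) and by the graphical-matrix spectral bounds: matrices whose shape has a small separator must be absorbed into ``diagonal-like'' contributions indexed by that separator.

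Next, for each separator size $q$ I would define matrices $\cL_q \in \R^{\binom{[n]}{\leq d}\times\binom{[n]}{q}}$ and $\cQ_q \in \R^{\binom{[n]}{q}\times\binom{[n]}{q}}$ whose entries are Fourier sums over the permitted middle and side graphs (with a truncation $\tau/3$ on each piece, which I would justify by showing the truncated-off terms contribute negligible spectral norm). If vertices in the three pieces were always disjoint outside of $S_\ell \cup S_r$, then one would have $\cM = \sum_q \cL_q \cQ_q \cL_q^\dagger$ exactly. The actual product $\cL_q \cQ_q \cL_q^\dagger$ overcounts configurations with a shared vertex across two of the pieces, so I would write
\[
\cM_q \;=\; \cL_q \cQ_q^{(0)} \cL_q^\dagger - \cE_q^{(1)},
\]
where $\cE_q^{(1)}$ cancels the overcount. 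The crucial observation is that $\cE_q^{(1)}$ has the same symbolic shape as $\cM$ \---- it is again a sum over graphs with a separator structure, except that one now tracks ``repeated vertices'' and takes parities of the three edge sets. This lets me apply the factorization again to get $\cE_q^{(1)} = \cL_q \cQ_q^{(1)} \cL_q^\dagger - \cE_q^{(2)}$, and iterate, producing the alternating series
\[
\cM_q \;=\; \cL_q\Bigl(\cQ_q^{(0)} - \cQ_q^{(1)} + \cQ_q^{(2)} - \cdots \pm \cQ_q^{(2d)}\Bigr)\cL_q^\dagger \;-\; \bigl(\xi_q^{(0)} - \xi_q^{(1)} + \cdots\bigr),
\]
where the $\xi_q^{(i)}$ are the truncation residuals.

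To finish, I would verify three things for each $q$. First, $\cQ_q^{(0)}$ dominates a convenient PSD matrix $D_q$ built from the ``pseudo-calibration-forced'' squares; this is the direct computational payoff of pseudo-calibration, since by Lemma~\ref{lem:pE-fools-simple-tests} the expectations of these squares equal planted-distribution moments and are manifestly nonnegative. Second, each higher-order $\cQ_q^{(i)}$ for $i \geq 1$ can be written as a sum of graphical matrices $\cQ^U$ and bounded in spectral norm via the Medarametla--Potechin-type estimate \cite{MP16}, which controls $\|\cQ^U\|$ in terms of $|\V(U)|$, $|A\cap B|$, and the number of vertex-disjoint paths between the sides of $U$. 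Third, the coefficients $c_i(\cR_m)$ produced by the recursion can be charged against the graph-theoretic parameters of $\cR_m$ in a way that produces geometric decay with $i$, so that $-\frac{D_q}{8d} \preceq \cQ_q^{(i)} \preceq \frac{D_q}{8d}$ and the alternating sum stays $\succeq \frac{1}{2} D_q \succeq 0$. The residual matrices $\xi_q^{(i)}$ are bounded directly from the truncation cutoff $\tau$ and the parameter window $Cd/\epsilon \leq \tau \leq (\epsilon/C)\log n$.

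The main obstacle, and the heart of the argument, is the third step: understanding the coefficients $c_i(\cR_m)$ generated by the recursion and showing that the combinatorics of repeated vertices create tradeoffs among $|\V(\cR_m)|$, the separator size, and the number of vertex-disjoint paths that exactly compensate for the growth of the number of preimages $\cR_m'$ mapping to a given $\cR_m$. This is where the low-degree/high-entropy nature of the pseudo-calibrated construction really enters: the decay of $(\omega/n)^{|\V|}$ with $|\V|$, combined with the Menger-type spectral bounds and the symmetry group acting on shapes, has to be shown to win against the blow-up in the number of recursive contributions. Everything else \---- the graphical decomposition, the leftmost/rightmost separator factorization, and the assembly of $\cL\cQ\cL^\dagger$ \---- sets up the bookkeeping so that this single combinatorial inequality completes the proof.
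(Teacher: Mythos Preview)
Your proposal is essentially the paper's proof: the canonical factorization via leftmost/rightmost minimum separators, the recursive handling of the nondisjointness error producing an alternating sum $\cL(\cQ_0 - \cQ_1 + \cdots \pm \cQ_{2d})\cL^\dagger$ minus truncation residuals, and the three verifications ($\cQ_0 \succeq D$, $|\cQ_i| \preceq D/8d$ via graphical-matrix norm bounds and coefficient decay, and negligible $\xi$'s) are exactly what the paper does. Two small points to watch when you write it out: (i) after one step of the separating factorization the new separators $S_\ell',S_r'$ may have \emph{different} sizes, so the $\cQ^{(i)}$ for $i\geq 1$ are not block-diagonal in $q$ and your per-$q$ notation $\cL_q \cQ_q^{(i)} \cL_q^\dagger$ must be replaced by a single $\cL \cQ^{(i)} \cL^\dagger$ with $\cL$ spanning all column sizes; (ii) to absorb the truncation residuals $\xi$ (which are bounded only in operator norm) into the PSD part $\cL(\sum \pm \cQ^{(i)})\cL^\dagger$, you need an explicit lower bound on the smallest eigenvalue of $\Pi\cL\Pi\cL^\dagger\Pi$, which the paper proves separately as $\succeq \Omega(\omega/n)^{d+1}\Pi$.
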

% \pnote{Changed to just PSDness claim.}

At a high level our plan involves first getting an approximate factorization of the moment matrix $\cM = \cL \cQ_{0} \cL^{\transposed} +"error"$ for appropriately defined matrices $\cL$ and $\cQ_{0}$. This step is the key technical part of the proof - given such a factorization, our task reduces to showing that $\cQ_{0}$ and $\cL \cL^{\transposed}$ has large enough positive eigenvalues to compensate for the error. The first approximate factorization step will occupy us in Section \ref{sec:factorization}. The technical work in second step involves showing upper bounds on the spectral norms of appropriately defined pieces of $\cQ_0$ and is the content of Section \ref{sec:spectral-norms}. 
% Our proof of Lemma \ref{lem:PSD-main} is based on constructing a matrix $\M''$ represented in an appropriate factorized form that approximates $\M$ up to a small Frobenius error. The factorization of $\M''$ that we give is particularly useful in showing that $\M'' \succeq 0$.

% \begin{lemma} \label{lem:mommt-psd}
% Let $\M$ be the moment matrix associated with $\pE$ (Def. \ref{def:moment-matrix}). W.h.p. there exists a positive semidefinite matrix $\M''$ such that: $$\| \M - \M''\|_{F} \leq (\omega/n)^{\Theta(\tau)}.$$
% \end{lemma}

% In the next section, we embark on the proof of Lemma \ref{lem:mommt-psd}. We will first construct the  matrix $\M''$, then show that it is close in Frobenius norm to the moment matrix $\M$, and finally show that $\M''$ is PSD.

\newcommand{\Shape}{\mathsf{shape}}

% To describe the matrix $\M''$, we need to set up some definitions and basic results. We do this in the next subsection.

  % Let $A \in \R^{\nchoose{\leq d} \times \nchoose{\leq d}}$ be a matrix with $A(I,J) = \pE[x_{I \cup J}]$ if $S$ is not a clique in $G$, and $A(I,J) = 0$ otherwise.
  % By item~\ref{itm:cliques},
  % \[
  %   \|A\| \leq \|A\|_F \leq n^{-6d}\mper
  % \]
  % For a polynomial $p \in \P_{2d}$ given by $p(x) = \sum_{|S| \leq 2d} p_S x_S$, define $\tilde \bbD : \P_{2d}[x_1,\ldots,x_n] \rightarrow \R$ by
  % \[
  %   \tilde \bbD[p(x)] = \frac{\pE[p(x)] - \iprod{A, M_p}}{\pE[1]}
  % \]
  % where $M_p$ is the matrix so that $\iprod{A,M_p} = \sum_{|S| \leq 2d \text{ not a clique}} \pE[x_S]p_S$.
  % Notice that $\tilde \bbD[1] = 1$ and for any non-clique $S \subseteq [n]$ of size at most $2d$, we have $\tilde \bbD[x_S] = 0$.

  % Let $p \in \P_d$ be given by $p(x) \sum_{|I| \leq d} p_I x_I$.
  % Notice that $\tilde \bbD[p(x)^2] = \tilde \bbD[p'(x)^2]$ where $p'(x) = \sum_{I \in \cal C_{\leq d}} p_I x_I$.
  % By item~\ref{itm:psd} and the observation that $\|A\| \leq n^{-6d}$,
  % \begin{align*}
  %   \tilde \bbD[p'(x)^2] & \geq \frac{-2n^{-6d}}{\pE[1]} \cdot \Paren{\sum_{I \in \cal C_{\leq d}} p_I^2}^{1/2} \geq \frac {n^{-6d}}{4} \Paren{\sum_{I \in \cal C_{\leq d}} p_I^2}^{1/2} \quad \text{by item~\ref{itm:1}}
  % \end{align*}
  % Thus, we can apply lemma~\ref{lem:mix-with-uniform} to $\tilde \bbD$, which, by item~\ref{itm:objective}, concludes the proof.
\end{proof}

%\section{$\pE$ is (almost) PSD: Lemma~\ref{lem:PSD-main} and Lemma~\ref{lem:mix-with-uniform}}
%
%In this section we prove Lemma~\ref{lem:PSD-main} and Lemma~\ref{lem:mix-with-uniform}.
%
%\subsection{Almost-PSDness}
%We start with Lemma~\ref{lem:PSD-main}.

\section{Approximate Factorization of the Moment Matrix}
\label{sec:factorization}

\subsection{Ribbons and Vertex Separators}
In this section we get set up for the first step in the proof of Lemma~\ref{lem:PSD-main2} by setting up some definitions.
\emph{Ribbons} will play a crucial role in our analysis:

\begin{definition}[Ribbon]
An $(I,J)$-ribbon $\cR$ is a graph with edge set $W_{\cR} \subseteq \nchoose{2}$ and vertex set $V_{\cR} \supseteq \V(W_{\cR}) \cup I \cup J$, for two specially identified subsets $I,J \subseteq [n]$, each of size at most $d$, called the \emph{left} and the \emph{right ends}, respectively. We sometimes write $\V(\cR) \eqdef V_{\cR}$ and call $|\V(\cR)|$ the \emph{size} of $\cR$. Also, we write $\chi_{\cR}$ for the monomial $\chi_{W_{\cR}}$ where $W_{\cR}$ is the edge set of the ribbon $\cR$. 
\end{definition}

 In our analysis, $(I,J)$-ribbons arise as the terms in the Fourier decomposition of the entry $\cM(I,J)$ in the moment matrix. It is important to emphasize that the subsets $I$ and $J$ in an $(I,J)$-ribbon are allowed to intersect.
Also $\V(\cR)$ can contain vertices that are not in $\V(W_\cR)$ if there are isolated vertices in the ribbon.

Ultimately, we will want to partition a ribbon into three subribbons in such a way that we can express the moment matrix as the sum of positive semidefinite matrices, and some error terms. Our partitioning will be based on minimum vertex separators. 

\begin{definition}[Vertex Separator]
For an $(I,J)$-ribbon $\cR$ with edge set $W_{\cR}$, a subset $Q \subseteq \V(\cR)$ of vertices is a \emph{vertex separator} if $Q$ separates $I$ and $J$ in $W_{\cR}$. A vertex separator is \emph{minimum} if there are no other vertex separators with strictly fewer vertices. The \emph{separator size} of $\cR$ is the cardinality of any minimum vertex separator of $\cR$.
\end{definition}

The following elementary lemma establishes that a ribbon has a unique \emph{leftmost} and \emph{rightmost} vertex separator of minimum size. We defer its proof to Appendix~\ref{sec:ribbons-proofs}.

\begin{lemma}[Leftmost/Rightmost Vertex Separator]\torestate{\label{lem:left-right-sep}
Let $\cR$ be an $(I,J)$-ribbon. There is a unique minimum vertex separator $S$ of $\cR$ such that $S$ separates $I$ and $Q$ for any vertex separator $Q$ of $\cR$. We call $S$ the \emph{leftmost} separator in $\cR$. We define the \emph{rightmost} separator analogously and we denote them by $S_L(\cR)$ and $S_R(\cR)$ respectively.}
\end{lemma}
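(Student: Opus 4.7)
The plan is to prove this via the classical lattice structure on minimum vertex separators, then extend the separation property to arbitrary (not-necessarily-minimum) separators. Let $q$ denote the separator size of $\cR$, and let $\mathcal{S}_q$ be the set of minimum $(I,J)$-vertex separators in $W_\cR$, which is nonempty by Fact~\ref{fact:Mengers}. For any $S \in \mathcal{S}_q$, write $A(S) \subseteq \V(\cR)$ for the set of vertices $v$ such that either $v \in I$, or there is a path in $W_\cR$ from some vertex of $I$ to $v$ whose internal vertices all lie outside $S$. I would define a partial order on $\mathcal{S}_q$ by $S_1 \preceq S_2$ iff $A(S_1) \subseteq A(S_2)$, and identify the leftmost separator $S_L(\cR)$ as the unique $\preceq$-least element.

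Existence of a least element goes through a meet-semilattice argument. Given $S_1, S_2 \in \mathcal{S}_q$, construct a meet $S_1 \wedge S_2$ consisting of the vertices in $S_1 \cup S_2$ lying on the boundary of $A(S_1) \cap A(S_2)$, and dually a join $S_1 \vee S_2$ on the boundary of $A(S_1) \cup A(S_2)$. A direct check shows both are again $(I,J)$-vertex separators, and a double-counting argument yields the submodular inequality
\[
|S_1 \wedge S_2| + |S_1 \vee S_2| \leq |S_1| + |S_2| = 2q.
\]
Combined with the lower bounds $|S_1 \wedge S_2|, |S_1 \vee S_2| \geq q$ coming from the fact that both are separators, equality must hold throughout, so meet and join both lie in $\mathcal{S}_q$. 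Moreover $A(S_1 \wedge S_2) = A(S_1) \cap A(S_2)$ by construction, so $S_1 \wedge S_2 \preceq S_i$, and since $\mathcal{S}_q$ is finite, iterating produces a unique $\preceq$-least element $S_L$.

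It remains to verify that $S_L$ separates $I$ from every (possibly non-minimum) vertex separator $Q$. Suppose for contradiction there is a path in $W_\cR$ from some $v \in I \setminus S_L$ to some $w \in Q \setminus S_L$ that avoids $S_L$; then $w \in A(S_L) \setminus S_L$. Using Menger's theorem, fix $q$ internally vertex-disjoint $I$-to-$J$ paths, each crossing $S_L$ and $Q$ in exactly one vertex. Rerouting one such path through $w$ and prepending the path from $v$ to $w$ produces an $I$-to-$J$ path in $W_\cR$ avoiding $S_L$, contradicting that $S_L$ is itself a separator. Uniqueness of $S_L$ with the claimed property is then immediate, since two candidates would each be $\preceq$ the other. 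The rightmost separator $S_R(\cR)$ and its analogous property follow by the symmetric argument with the roles of $I$ and $J$ interchanged.

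The main obstacle is the submodularity step for \emph{vertex} separators. Unlike the analogous statement for edge cuts (which is essentially immediate from the submodularity of the cut function), the vertex version requires careful bookkeeping for vertices in $S_1 \cap S_2$ and in $I \cap J$, both of which must lie in every separator; the meet-and-join construction must be arranged so that such vertices are neither under- nor over-counted. A secondary subtlety is the extension to non-minimum $Q$, which needs the Menger rerouting argument above to turn an $I$-to-$Q$ violation into a genuine $I$-to-$J$ violation of $S_L$.
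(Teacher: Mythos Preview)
Your main argument---constructing meets and joins of minimum separators and invoking the submodular inequality $|S_1 \wedge S_2| + |S_1 \vee S_2| \leq |S_1| + |S_2|$ to show both remain minimum---is essentially the paper's proof. The paper's $Q_L = (Q_1 \cap Q_2) \cup W_L$, where $W_L$ is the part of $Q_1 \,\triangle\, Q_2$ reachable from $I$ without passing through $Q_1 \cup Q_2$, is your meet, and the paper's counting $2k = 2|U| + |V| \geq 2|U| + |W_L| + |W_R|$ is exactly your submodularity step in different clothing.

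The genuine gap is the extension to non-minimum $Q$. Your rerouting argument asserts that the Menger paths each cross $Q$ exactly once, which already requires $Q$ to be minimum; more fatally, the vertex $w \in Q$ need not lie on any $I$-to-$J$ path at all, so ``rerouting one such path through $w$'' has no content. In fact the claim you are attempting here is false as stated: take $I = \{a_1,a_2\}$, $J = \{d\}$, edge set $\{a_1b,\, a_2b,\, bc,\, cd,\, a_1e\}$. The unique leftmost minimum separator is $S_L = \{b\}$, while $Q = \{c,e\}$ is a (non-minimum) separator, yet the edge $a_1e$ is an $I$-to-$Q$ path missing $S_L$ entirely.

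The paper's own proof establishes the result only for minimum $Q$ (it opens with ``It is enough to show that for any two minimum separators $Q_1, Q_2$\ldots''), and downstream the lemma is only ever used to compare minimum separators in the canonical factorization. The fix is to read ``vertex separator $Q$'' in the lemma statement as ``minimum vertex separator $Q$'' and delete the rerouting paragraph.
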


We illustrate the notion of a leftmost and rightmost vertex separator in the example below. 
 \begin{figure}[h]
\begin{center}
\includegraphics[scale = 0.5]{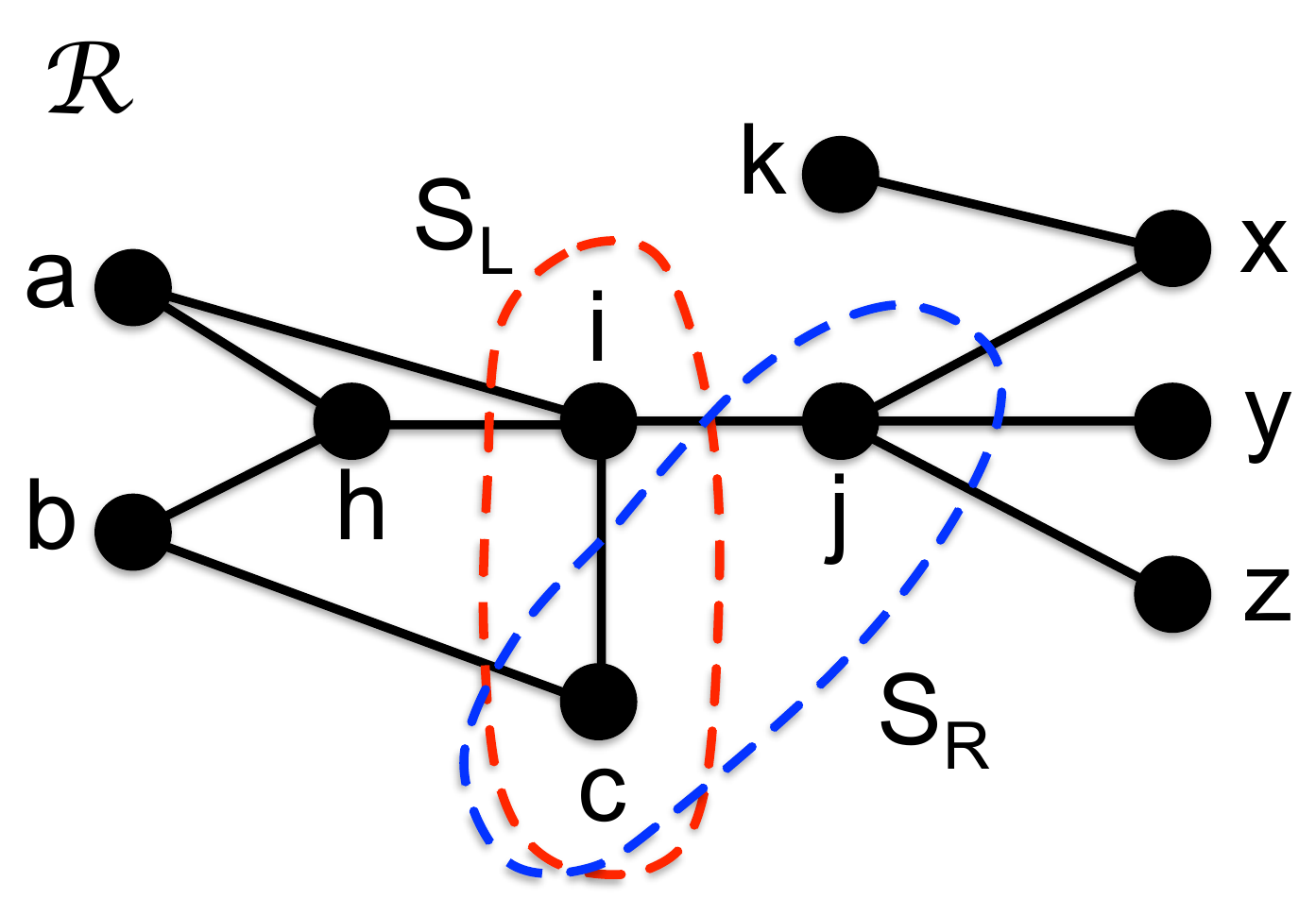}
\end{center}
\end{figure}

\noindent Let $I = \{a, b, c\}$ and let $J = \{c, x, y, z\}$. The maximum number of vertex disjoint paths from $I$ to $J$ is $2$ \---- for example, we could take the path $\{c\}$ and the path $\{b, h, i, j, z\}$. 
The leftmost and rightmost separators are $S_L = \{c, i\}$ and $S_R = \{c, j\}$ respectively. This example illustrates an important point that when $I$ and $J$ intersect, $S_L$ and $S_R$ must both contain $I \cap J$.

\subsection{Factorization of Monomials}
Our factorization of $\cM$ will rely on an iterative argument for grouping and factoring the Fourier characters in the decomposition of $\cM(I,J)$.

\begin{definition}[Canonical Factorization]\label{def:canfact}
  Let $\cR$ be an $(I,J)$-ribbon with edge set $W_{\cR}$ and vertex set $V_\cR$.
  Let $V_\ell$ be the vertices reachable from $I$ without passing through $S_L(\cR)$, and similarly for $V_r$, and let $V_m = V_\cR \setminus (V_\ell \cup V_r)$.
  Let $W_\ell \subseteq W_{\cR}$ be given by $$W_\ell = \{ (u,v) \in W_{\cR} \, : \, u \in V_\ell  \mbox{ and } v \in V_\ell \cup S_L \}$$
  and similarly for $W_r$.
  Finally, let $W_m = W_{\cR} \setminus (W_\ell \cup W_r)$.

  Let $\cR_\ell$ be the $(I, S_L(\cR))$-ribbon with vertex set $V_\ell \cup S_L(\cR)$ and edge set $W_\ell$ and similarly for $\cR_r$.
  Let $\cR_m$ be the $(S_L(\cR), S_R(\cR))$-ribbon with vertex set $V_m$ and edge set $W_m$ .
  The triple $(\cR_\ell, \cR_m, \cR_r)$ is the \emph{canonical factorization} of $\cR$.
\end{definition}

Some facts about the canonical factorization are worth emphasizing. First, $W_\ell, W_m$ and $W_r$ are disjoint and are a partition of $W_\cR$ by construction. Hence $\chi_{\cR} = \chi_{W_\ell} \cdot \chi_{W_m} \cdot \chi_{W_r}$. Second, some vertices in $I$ may not be in $V_\ell$ at all. However any such vertices that are in $I$ but not $V_\ell$ are necessarily in $S_L$ and thus will be contained in $\cR_\ell$ anyways. This is why we can say that $\cR_\ell$ is an $(I, S_L(\cR))$-ribbon. The following illustrates what the canonical factorization would look like in our earlier example:
% \pnote{TODO:In the figure make the subscript of script R a small letter.}

 \begin{figure}[h]
\begin{center}
\includegraphics[scale = 0.5]{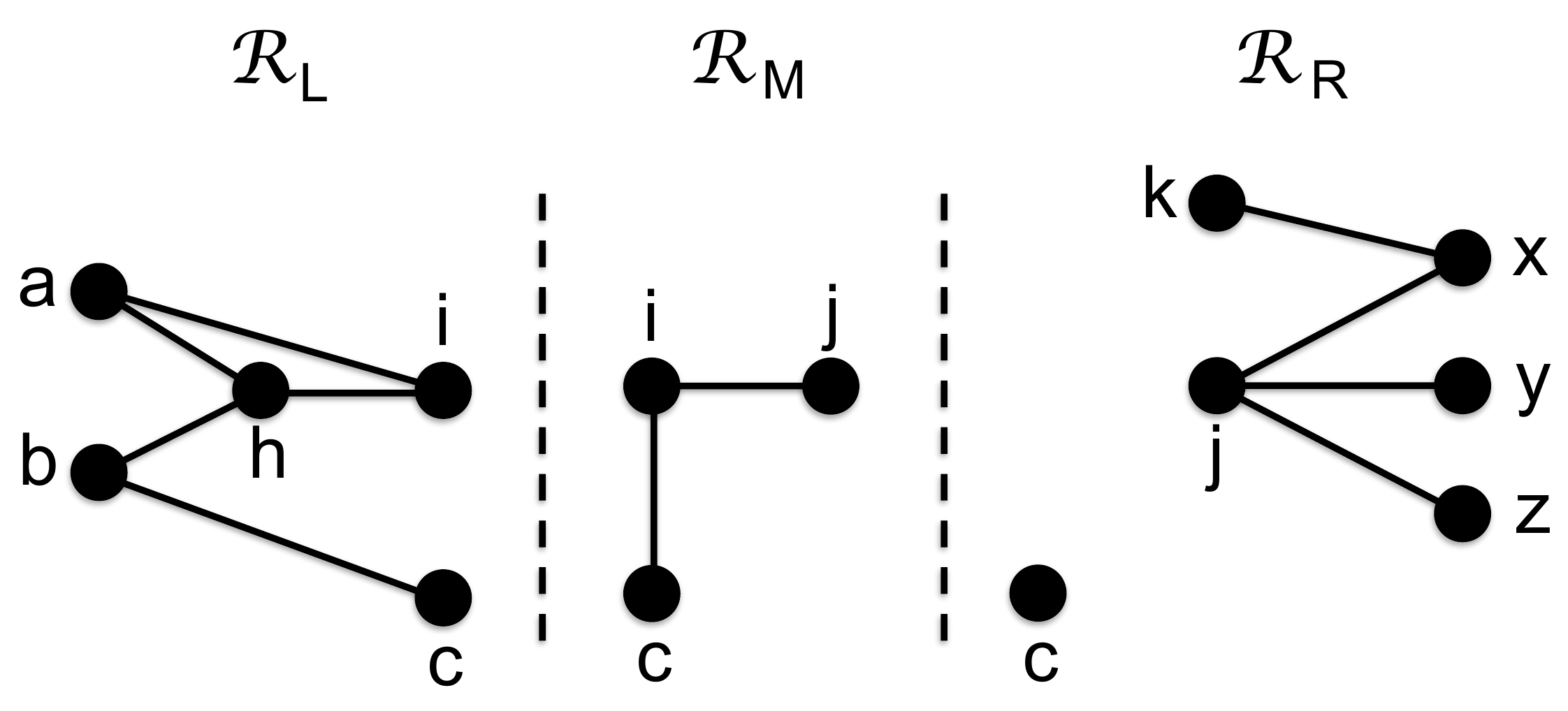}
\end{center}
\end{figure}

We chose this example to illustrate a subtle point. The edge $(i, c)$ has both its endpoints in both $\cR_\ell$ and $\cR_m$. We could in principle choose to place it in either, but we have adopted the convention that because both of its endpoints are in $S_L$ we place it in $\cR_m$. In this way, there are no edges within $S_L$ in $\cR_\ell$ or within $S_R$ in $\cR_m$. Finally, note that there can be isolated vertices in $\cR_\ell$ or $\cR_r$ but such vertices need to be in $I$ or $J$ respectively.

 With the definition of the canonical factorization in hand, we will collect some important properties about it that we will make use of later:

\begin{claim}\label{clm:vertex-factorization}
  Let $\cR$ be an $(I,J)$-ribbon with canonical factorization $(\cR_\ell, \cR_m, \cR_r)$.
  Then $$|\cV(\cR)| = |\cV(\cR_\ell)| + |\cV(\cR_m)| + |\cV(\cR_r)| - |S_L(\cR)| - |S_R(\cR)|.$$
\end{claim}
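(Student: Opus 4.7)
I would unpack the claim by writing the three quantities as $|\cV(\cR_\ell)| = |V_\ell \cup S_L(\cR)|$, $|\cV(\cR_m)| = |V_m|$, and $|\cV(\cR_r)| = |V_r \cup S_R(\cR)|$. Under the natural reading of ``reachable from $I$ without passing through $S_L(\cR)$,'' vertices of $S_L(\cR)$ are themselves excluded from $V_\ell$, so $V_\ell \cap S_L(\cR) = \emptyset$ and analogously $V_r \cap S_R(\cR) = \emptyset$; hence $|\cV(\cR_\ell)| = |V_\ell|+|S_L(\cR)|$ and $|\cV(\cR_r)| = |V_r|+|S_R(\cR)|$. Since $V_m = V_\cR \setminus (V_\ell \cup V_r)$ is automatically disjoint from both $V_\ell$ and $V_r$, adding these three quantities and subtracting $|S_L(\cR)|+|S_R(\cR)|$ collapses the right-hand side of the claim to $|V_\ell| + |V_m| + |V_r|$.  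On the other hand, $V_\cR = V_\ell \cup V_m \cup V_r$, so by inclusion-exclusion the identity we want reduces to the single disjointness statement $V_\ell \cap V_r = \emptyset$.

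The crux is therefore to establish that disjointness. Suppose for contradiction that some $v$ lies in $V_\ell \cap V_r$. Then there is a path $P_1$ from some $i \in I \setminus S_L(\cR)$ to $v$ that avoids $S_L(\cR)$, and a path $P_2$ from some $j \in J \setminus S_R(\cR)$ to $v$ that avoids $S_R(\cR)$. Every vertex of $I \cap J$ must belong to each vertex separator of $I$ from $J$ (otherwise it could not be separated from itself), so in particular $I \cap J \subseteq S_L(\cR)$, and since $i \notin S_L(\cR)$ we conclude $i \neq j$. Thus $P_1 \cdot P_2^{-1}$ is a non-trivial walk from $I$ to $J$ and so contains an honest $I$-to-$J$ path; because $S_R(\cR)$ separates $I$ from $J$, this path must meet some $r \in S_R(\cR)$. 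Since the $P_2$ half avoids $S_R(\cR)$, the vertex $r$ must sit on $P_1$, and since $P_1$ avoids $S_L(\cR)$ we also have $r \notin S_L(\cR)$. The prefix of $P_1$ from $i$ to $r$ is therefore a path from $I$ to $S_R(\cR)$ that avoids $S_L(\cR)$ entirely.

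At this point I invoke Lemma~\ref{lem:left-right-sep}: the leftmost separator $S_L(\cR)$ is characterized by separating $I$ from every other vertex separator of $\cR$, in particular from $Q = S_R(\cR)$. Hence every path from $I$ to $S_R(\cR)$ must meet $S_L(\cR)$, contradicting the path we just constructed. This forces $V_\ell \cap V_r = \emptyset$, after which the claimed identity follows from the bookkeeping of the first paragraph. I expect this disjointness step to be the only real obstacle: it is the only place in the argument where the leftmost/rightmost \emph{structure} of the separators (as opposed to their mere existence and cardinality) actually plays a role, and everything surrounding it is routine inclusion-exclusion.
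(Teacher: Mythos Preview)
Your proof is correct and follows the same inclusion--exclusion approach as the paper. The paper simply asserts that $V_\ell, V_m, V_r$ are pairwise disjoint ``by construction'' and then counts multiplicities of vertices in $S_L(\cR) \cup S_R(\cR)$ across the three pieces; you go further and actually prove the nontrivial disjointness $V_\ell \cap V_r = \emptyset$ via Lemma~\ref{lem:left-right-sep}, which the paper leaves implicit.
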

\begin{proof}
It is important to note that $S_L(\cR)$ and $S_R(\cR)$ are not necessarily disjoint (indeed, this happens in the example above). Nevertheless, we know that by construction $V_\ell$, $V_m$ and $V_r$ are disjoint and that $S_L(\cR) \cup S_R(\cR) \subseteq V_m$. Every vertex that appears just once in $S_L(\cR)$ and $S_R(\cR)$ appears twice in the canonical factorization. And every vertex that is in $S_L(\cR) \cap S_R(\cR)$ appears three times. Thus
$$|\cV(\cR)| = |\cV(\cR_\ell)| + |\cV(\cR_m)| + |\cV(\cR_r)| - |S_L(\cR) / S_R(\cR)| - |S_R(\cR)/ S_L(\cR)| - 2 |S_L(\cR) \cap S_R(\cR)|$$
which completes the proof. 
\end{proof}

In the discussion above, we established some properties that a canonical factorization must satisfy. Next we show the reverse direction, that any collection of ribbons that satisfies the below properties must be a canonical factorization. Consider a collection of ribbons $\cR_0, \cR_1, \cR_2$, and the following list of properties:

\begin{center}
  \fbox{\begin{minipage}{\textwidth}
      \paragraph{$S_\ell, S_r$ Factorization Conditions for $\cR_0, \cR_1, \cR_2$ (Here $S_\ell, S_r \subseteq [n]$.)}
      \begin{enumerate}
    \item $\cR_0$ is an $(I,S_\ell)$-ribbon with $S_L(\cR_0) = S_R(\cR_0) = S_\ell$, and all vertices in $\V(\cR_0)$ are either reachable from $I$ without passing through $S_\ell$ or are in $I$ or $S_\ell$.
    Finally, $\cR_0$ has no edges between vertices in $S_\ell$. \label{itm:left}
    \item $\cR_2$ is an $(S_r,J)$-ribbon with $S_L(\cR_2) = S_R(\cR_2) = S_r$,
    and all vertices in $\V(\cR_2)$ are either reachable from $J$ without passing through $S_r$ or are in $J$ or $S_r$.
    Finally, $\cR_2$ has no edges between vertices in $S_r$. \label{itm:right}
    \item $\cR_1$ is an $(S_\ell, S_r)$-ribbon with $S_L(\cR_1) = S_\ell$ and $S_R(\cR_1) = S_r$. Every vertex in $\cV(\cR_1) \setminus (S_\ell \cup S_r)$ has degree at least $1$. \label{itm:middle}
    \item $W_{\cR_0}, W_{\cR_1}, W_{\cR_2}$ are pairwise disjoint. Also, $V_{\cR_0} \cap V_{\cR_1} = S_\ell, V_{\cR_1} \cap V_{\cR_2} = S_r$, and $V_{\cR_0} \cap V_{\cR_2} = S_\ell \cap S_r$. \label{itm:disjoint}
      \end{enumerate}
  \end{minipage}}
\end{center}

\begin{lemma}
  Let $\cR_0, \cR_1, \cR_2$ be ribbons.
  Then $(\cR_0, \cR_1, \cR_2)$ is the canonical factorization of the $(I,J)$-ribbon $\cR$ with edge set $W_{\cR_0} \oplus W_{\cR_1} \oplus W_{\cR_2}$ and vertex set $\V(\cR_0) \cup \V(\cR_1) \cup \V(\cR_2)$ if and only if the $S_\ell, S_r$ factorization conditions hold for $\cR_0, \cR_1, \cR_2$ for some $S_\ell, S_r \subseteq [n]$.
\end{lemma}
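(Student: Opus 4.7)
The plan is to prove the biconditional by handling each direction separately; the forward direction is essentially an unpacking of Definition~\ref{def:canfact}, while the reverse direction hinges on certifying that the prescribed sets $S_\ell, S_r$ really agree with $S_L(\cR), S_R(\cR)$. A useful preliminary observation is that the hypothesis $S_L(\cR_0) = S_R(\cR_0) = S_\ell$ in condition~1 forces $S_\ell$ to be the \emph{unique} minimum separator of $I$ from $S_\ell$ in $\cR_0$: if $Q$ were any other minimum separator, then by Menger's theorem (Fact~\ref{fact:Mengers}) every $q \in Q$ lies on some $I$-to-$S_\ell$ path in $\cR_0$, and the leftmost-separator property of $S_\ell$ demands that every such path hit $S_\ell$; but any $q \in V_{\cR_0} \setminus S_\ell$ is reachable from $I$ without crossing $S_\ell$, a contradiction which forces $q \in S_\ell$ and hence $Q = S_\ell$. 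The same reasoning applies to $\cR_2$ via condition~2.

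For the forward direction I will take $(\cR_0,\cR_1,\cR_2)$ to be the canonical factorization of $\cR$ and set $S_\ell = S_L(\cR)$, $S_r = S_R(\cR)$. Condition~4 is immediate from the way $V_\ell, V_m, V_r$ partition $\V(\cR)$, together with the convention of Definition~\ref{def:canfact} that edges with both endpoints in $S_\ell$ (resp.\ $S_r$) are placed in $W_m$. Conditions~1 and 2 follow from the reachability characterizations of $V_\ell$ and $V_r$; the separator equalities inside those conditions follow from Menger applied to $\cR_0$ and $\cR_2$, noting that any minimum separator of $\cR_0$ distinct from $S_\ell$ would paste together with the leftmost/rightmost structure of $\cR_m, \cR_r$ to contradict $S_L(\cR) = S_\ell$. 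Condition~3 is verified analogously inside the middle ribbon, with the two endpoints playing symmetric roles.

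For the reverse direction, assume the factorization conditions hold and construct $\cR$ as in the statement. Condition~4 implies that any walk in $\cR$ from $V_{\cR_0} \setminus S_\ell$ to $V_{\cR_1} \cup V_{\cR_2}$ must cross $S_\ell$, so $S_\ell$ is a separator of $I$ from $J$ in $\cR$, and symmetric reasoning handles $S_r$. Concatenating $|S_\ell|$ vertex-disjoint $I$-to-$S_\ell$ paths inside $\cR_0$ (guaranteed by $S_L(\cR_0) = S_\ell$ and Menger), $|S_\ell| = |S_r|$ disjoint $S_\ell$-to-$S_r$ paths inside $\cR_1$, and $|S_r|$ disjoint $S_r$-to-$J$ paths inside $\cR_2$, yields $|S_\ell|$ disjoint $I$-to-$J$ paths in $\cR$, whence $S_\ell$ and $S_r$ are minimum separators of $\cR$. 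Since any simple path from $I$ to $S_\ell$ in $\cR$ cannot exit $V_{\cR_0}$ without revisiting $S_\ell$, all such paths lie in $\cR_0$, so $S_L(\cR) \cap V_{\cR_0}$ separates $I$ from $S_\ell$ inside $\cR_0$; the cardinality constraint $|S_L(\cR)| = |S_\ell|$ then forces $S_L(\cR) \subseteq V_{\cR_0}$ and identifies $S_L(\cR)$ as a minimum separator of $\cR_0$. By the preliminary uniqueness observation, $S_L(\cR) = S_\ell$, and the argument for $S_R(\cR) = S_r$ is symmetric. With these identifications, the equalities $\cR_0 = \cR_\ell$, $\cR_1 = \cR_m$, $\cR_2 = \cR_r$ follow by matching Definition~\ref{def:canfact} against conditions~1--4: the vertex partition is forced by reachability and the edge partition by the no-edges-inside-$S_\ell$ and no-edges-inside-$S_r$ constraints.

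I expect the main obstacle to be the preliminary uniqueness observation, particularly in handling separators that contain vertices of $I$ or $J$ (as arises in the motivating example following Lemma~\ref{lem:left-right-sep}), where one must be careful about what ``path'' and ``separates'' mean when endpoints overlap with the separator. Aside from this, the remainder is essentially bookkeeping: confirming that reachability and the edge-placement conventions propagate correctly between the abstract canonical factorization and the hypothesized data.
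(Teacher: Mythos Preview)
Your approach matches the paper's: verify conditions 1--4 from Definition~\ref{def:canfact} in the forward direction, and for the reverse direction certify $S_\ell = S_L(\cR)$ and $S_r = S_R(\cR)$ by pasting together Menger paths across $\cR_0,\cR_1,\cR_2$ and invoking the separator hypotheses on $\cR_0$ and $\cR_2$. One small slip worth flagging: the claim that every simple $I$-to-$S_\ell$ path in $\cR$ lies in $\cR_0$ is false (such a path may exit $V_{\cR_0}$ through one vertex of $S_\ell$ and later re-enter through another, remaining simple), but what you actually need---that $S_L(\cR) \cap V_{\cR_0}$ separates $I$ from $S_\ell$ inside $\cR_0$---follows more directly, since every $I$-to-$S_\ell$ path in $\cR_0$ is already a path in $\cR$ (by $W_{\cR_0} \subseteq W_{\cR}$) and hence is cut by $S_L(\cR)$ at a vertex of $V_{\cR_0}$.
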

\begin{proof}
  If $\cR$ is a ribbon with leftmost and rightmost vertex separators $S_\ell$ and $S_r$ and canonical factorization $(\cR_0, \cR_1, \cR_2)$, then 
  many of the conditions above are automatically satisfied. By construction, $W_{\cR_0}, W_{\cR_1}, W_{\cR_2}$ are pairwise disjoint. Because any edge with both endpoints in $S_\ell$ is included in $\cR_m$ we have that there are no edges between vertices in $S_\ell$ in $\cR_0$, and similarly for $\cR_2$. Finally suppose there is a vertex $u$ in $\cR_0$. If $u$ is not reachable from $I$ without passing through $S_\ell$ and is not in $I$ or $S_\ell$ then it would not be included in $\cR_0$. An identical argument holds for $\cR_2$. 
  
All that remains is to verify that $S_L(\cR_0) = S_R(\cR_0) = S_\ell$ and similarly for $\cR_1, \cR_2$.
  If $S_\ell = S_L(\cR)$ is not a minimum-size vertex separator for $\cR_0$, then it is also not a minimum-size vertex separator for $\cR$, which is impossible.
  Similarly, if it is not the leftmost separator for $\cR_0$ then it was not the leftmost separator for $\cR$.
  Since $\cR_0$ is an $(I,S_\ell)$-ribbon and $S_\ell$ is a minimum-size separator, it must also be the right-most minimum-size separator.
  
  Now in the reverse direction, suppose that $\cR_0, \cR_1, \cR_2$ are ribbons that meet the $S_\ell, S_r$ factorization conditions. We claim that $S_\ell$ is the leftmost separator for $\cR$. If not, then either their is a smaller vertex separator, or there is a vertex separator $S'_\ell$ of the same size that separates $I$ and $S_\ell$. To rule out the former case, note that since $S_\ell$ and $S_r$ are both minimum vertex separators for $\cR_1$, we must have $|S_\ell| = |S_r|$. Then it follows from the $S_\ell, S_r$ factorization conditions that there are $|S_\ell|$ vertex disjoint paths from $I$ to $J$, but this would contradict the fact that there is a vertex separator with fewer than $|S_\ell|$ vertices. In the latter case, any other vertex separator $S'_\ell$ of the same size that separates $I$ and $S_\ell$ would contradict the condition $S_L(\cR_0) = S_\ell$. An identical argument shows that $S_r$ is the rightmost separator for $\cR$. 
  
  Finally, by assumption all the vertices in  $\V(\cR_0)$ are either reachable from $I$ without passing through $S_\ell$ or are in $I$ or $S_\ell$ and hence would be included in $\cR_0$. Similarly, there are no edges in $W_{\cR_0}$ with both endpoints in $S_\ell$. Thus if we were to compute the canonical factorization for $\cR$ we would get the same set of vertices in each ribbon and the same partition of the edges. 
\end{proof}

\subsection{Factorization of Matrix Entries}
This leads to our first factorization of the entries $\cM(I,J)$ of $\cM$.
Unfortunately, the error terms in this first attempt will be too large.
Using canonical factorizations and Claim~\ref{clm:vertex-factorization}, for any $I,J\subseteq [n]$ of size at most $d$ we can write
\begin{align}
  \cM(I,J) & = \sum_{\substack{ \cR \text{ an $(I,J)$-ribbon with edge set $W$,}\\ |\V(W)| \leq \tau \\ \text{canonical factorization $(\cR_\ell,\cR_m,\cR_r)$}}}  \Paren{\frac \omega n}^{|\V(\cR)|} \cdot \chi_{\cR_\ell} \cdot \chi_{\cR_m} \cdot \chi_{\cR_r} \nonumber \\
           & = \sum_{\substack{S_\ell, S_r \subseteq [n]\\|S_\ell| = |S_r| \leq d}} \Paren{\frac \omega n}^{-\frac{|S_\ell| + |S_r|}{2}} \sum_{\substack{\cR_\ell, \cR_m, \cR_r \subseteq \nchoose{2}\\ \text{ satisfying $S_\ell, S_r$ factorization conditions}\\ \text{and } |\V(\cR_\ell) \cup \V(\cR_m) \cup \V(\cR_r)| \leq \tau}} \Paren{\frac \omega n}^{|\cV(\cR_\ell)| + |\cV(\cR_m)| + |\cV(\cR_r)| -\frac{|S_\ell| + |S_r|}{2}} \cdot \chi_{\cR_\ell} \cdot \chi_{\cR_m} \cdot \chi_{\cR_r} \nonumber
   \intertext{Notice that except for the disjointness condition, the $S_\ell, S_r$ factorization conditions can be separated into condition~\ref{itm:left} for $\cR_\ell$, condition~\ref{itm:middle} for $\cR_m$, and condition~\ref{itm:right} for $\cR_r$. We use this to rewrite as}
  & = \sum_{\substack{S_\ell, S_r \subseteq [n]\\ |S_\ell| = |S_r| \leq d}} \Paren{\frac \omega n}^{- \frac{|S_\ell| + |S_r|}{2}}
  \Paren{\sum_{\substack{\cR_\ell \text{ having \ref{itm:left}} \\ |\V(\cR_\ell)| \leq \tau}} \Paren{\frac \omega n}^{|\V(\cR_\ell)|} \chi_{\cR_\ell} }
  \Paren{\sum_{\substack{\cR_m \text{having \ref{itm:middle}} \\ |\V(\cR_m)| \leq \tau}} \Paren{\frac \omega n}^{|\V(\cR_m)| - \frac{|S_\ell| + |S_r|}{2}} \chi_{\cR_m} }
  \Paren{\sum_{\substack{\cR_r \text{having \ref{itm:right}} \\ |\V(\cR_r)| \leq \tau}} \Paren{\frac \omega n}^{|\V(\cR_r)|} \chi_{\cR_r} } \label{eq:factor-0} \\
  & -  \underbrace{\sum_{\substack{S_\ell, S_r \subseteq [n]\\ |S_\ell| = |S_r| \leq d}} \Paren{\frac \omega n}^{-\frac{|S_\ell| - |S_r|}{2}} \sum_{\substack{\cR_\ell, \cR_m, \cR_r \\ \text{ satisfying $S_\ell, S_r$ conditions}\\ |\V(\cR_\ell)|, |\V(\cR_m)|, |\V(\cR_r)| \leq \tau,  \\ |\V(\cR_\ell) \cup \V(\cR_m) \cup \V(\cR_r)| > \tau}} \Paren{\frac \omega n}^{|\V(\cR_\ell)| + |\V(\cR_m)| + |\V(\cR_r)| - \frac{|S_\ell| + |S_r|}{2}} \cdot \chi_{\cR_\ell} \cdot \chi_{\cR_m} \cdot \chi_{\cR_r}}_{\defeq \xi_0(I,J), \text{ the error from ribbon size}} \label{eq:factor-1}\\
  & - \underbrace{\sum_{\substack{S_\ell, S_r \subseteq [n]\\ |S_\ell| = |S_r| \leq d}} \Paren{\frac \omega n}^{-\frac{|S_\ell| - |S_r|}{2}} \sum_{\substack{\cR_\ell, \cR_m, \cR_r \text{ satisfying}\\\text{\ref{itm:left},\ref{itm:middle},\ref{itm:right} and not \ref{itm:disjoint}}  \\
|\V(\cR_\ell)|, |\V(\cR_m)|, |\V(\cR_r)| \leq \tau}} \Paren{\frac \omega n}^{|\V(\cR_\ell)| + |\V(\cR_m)| + |\V(\cR_r)| - \frac{|S_\ell| + |S_r|}{2}} \cdot \chi_{\cR_\ell} \cdot \chi_{\cR_m} \cdot \chi_{\cR_r}}_{\defeq E_0(I,J), \text{ the error from ribbon nondisjointness}}\label{eq:factor-2}\mper
\end{align}

\subsection{Factorization of the Matrix $\cM$}
In lines \ref{eq:factor-1} and \ref{eq:factor-2} we have defined two error matrices, $\xi_0, E_0 \in \R^{\nchoose{\leq d} \times \nchoose{\leq d}}$.
Inspired by the factorization of $\cM(I,J)$ in line \ref{eq:factor-0}, we define another pair of matrices as follows:
\begin{align*}
  \cQ_0 \in \R^{\nchoose{d} \times \nchoose{d}} \quad \text{ given by } \quad \cQ_0(S_\ell, S_r) = \sum_{\substack{\cR_m \text{ having \ref{itm:middle}} \\ |\V(\cR_m)| \leq \tau}} \Paren{\frac \omega n}^{|\V(\cR_m)| - \frac{|S_\ell| + |S_r|}{2}} \chi_{\cR_m}\\
  \cL \in \R^{\nchoose{d} \times \nchoose{d}} \quad \text{ given by } \quad \cL(I,S) = \Paren{\frac \omega n}^{-\frac{|S|}{2}} \sum_{\substack{\cR_\ell \text{ having \ref{itm:left}} \\ |\V(\cR_\ell)| \leq \tau}} \Paren{\frac \omega n}^{|\V(\cR_\ell)|} \chi_{\cR_\ell}\mper
\end{align*}
The powers of $(\omega/n)$ are split between $\cQ_0$ and $\cL$ so that the typical of eigenvalue of $\cQ_0$ will be approximately $1$ (although it will be some time before we are prepared to prove that).

The equation in lines \ref{eq:factor-0}, \ref{eq:factor-1}, and \ref{eq:factor-2} can be written succinctly as
\[
  \cM = \cL \cQ_0 \cL^\top - \xi_0 - E_0\mper
\]
As we will see later, with high probability $\cQ_0 \succeq 0$, and thus also $\cL \cQ_0 \cL^\top \succeq 0$.
So long as $\tau$ is sufficiently large, the spectral norm $\|\xi_0\|$ of the error term that accounts for ribbons whose size is too large will be negligible.
However, the error $E_0$ does not turn out to be negligible.
To overcome this we will apply a similar factorization approach to $E_0$ as we did for $\cM$; iterating this factorization will push down the error from ribbon nondisjointness.

We record an elementary fact about $\cQ_0$:
\begin{lemma}
  Let $\Pi$ be the projector to $\Span\{e_C \, : \, C \in \cC_{\leq d}\}$.
  Then $\cQ_0 = \Pi\cQ_0 = \cQ_0 \Pi$.
\end{lemma}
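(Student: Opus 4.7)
The plan is to prove $\Pi \cQ_0 = \cQ_0$ by a group-action / pairing argument based on Fact~\ref{fact:Fourier-and}; the identity $\cQ_0 \Pi = \cQ_0$ then follows by the symmetric argument with the roles of $S_\ell$ and $S_r$ interchanged. Equivalently, I want to show that for every $S_\ell, S_r \subseteq [n]$ of size at most $d$, $\cQ_0(S_\ell, S_r) = 0$ whenever $S_\ell$ is not a clique of $G$. Fix such $S_\ell, S_r$ and consider the action of the group $\{0,1\}^{{S_\ell \choose 2}}$ on the set of summands of $\cQ_0(S_\ell, S_r)$: an element $W \subseteq {S_\ell \choose 2}$ sends a ribbon $\cR_m$ (satisfying condition~\ref{itm:middle} with $|\cV(\cR_m)| \leq \tau$) to the ribbon $\phi_W(\cR_m)$ with the same vertex set $\cV(\cR_m)$ and edge set $W_{\cR_m} \triangle W$.

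The key step is to verify that $\phi_W$ maps this set to itself. The vertex set and the ribbon containment $\cV \supseteq \cV(W_{\cR_m} \triangle W) \cup S_\ell \cup S_r$ are preserved because $W \subseteq {S_\ell \choose 2} \subseteq {\cV(\cR_m) \choose 2}$. The interior-degree requirement of condition~\ref{itm:middle} is clearly preserved, since XORing edges inside ${S_\ell \choose 2}$ does not change the degree of any vertex outside $S_\ell$. For the separator requirements $S_L(\phi_W(\cR_m)) = S_\ell$ and $S_R(\phi_W(\cR_m)) = S_r$, I will use two observations: (i) $I = S_\ell$ is automatically a vertex separator of any $(S_\ell, S_r)$-ribbon, so whenever it has minimum size it is necessarily the leftmost minimum separator, and symmetrically for $J = S_r$; and (ii) the minimum separator size between $S_\ell$ and $S_r$ is invariant under XORing ${S_\ell \choose 2}$-edges, because any maximal collection of vertex-disjoint paths from $S_\ell$ to $S_r$ may be shortened, without losing vertex-disjointness, to avoid all edges entirely inside $S_\ell$ — a path beginning $u \to u' \to \cdots$ with $u,u' \in S_\ell$ may simply be replaced by the subpath starting at $u'$.

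Once $\phi_W$ is seen to be well-defined, it acts freely (distinct $W$'s produce distinct edge sets), so its orbits partition the set of summands into pieces of size exactly $2^{\binom{|S_\ell|}{2}}$, each with a unique canonical representative $\cR_m^0$ satisfying $W_{\cR_m^0} \cap {S_\ell \choose 2} = \emptyset$. Using $\chi_{W_{\cR_m^0} \triangle W} = \chi_{\cR_m^0} \cdot \chi_W$ and the fact that the weight $(\omega/n)^{|\cV(\cR_m)|-(|S_\ell|+|S_r|)/2}$ depends only on $|\cV(\cR_m^0)|$ (which equals $|\cV(\phi_W(\cR_m^0))|$), the defining sum factors as
\[
  \cQ_0(S_\ell, S_r) = \Paren{\sum_{\cR_m^0} \Paren{\frac{\omega}{n}}^{|\cV(\cR_m^0)| - \tfrac{|S_\ell|+|S_r|}{2}} \chi_{\cR_m^0}} \cdot \sum_{W \subseteq {S_\ell \choose 2}} \chi_W(G).
\]
An appeal to Fact~\ref{fact:Fourier-and} then finishes the proof: the second factor equals $2^{\binom{|S_\ell|}{2}}$ when $S_\ell$ is a clique of $G$ and vanishes otherwise, so $\cQ_0(S_\ell, S_r) = 0$ whenever $S_\ell \notin \cC_{\leq d}$.

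The only step that requires any real care — and hence the main obstacle — is the verification in the second paragraph that both the minimum-separator size and the leftmost/rightmost-separator identities survive the XOR action. Everything else is bookkeeping plus a direct invocation of the Fourier identity for cliques.
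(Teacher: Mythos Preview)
Your proof is correct and follows essentially the same approach as the paper's: both arguments show that the Fourier coefficients of $\cQ_0(S_\ell,S_r)$ are constant on cosets of $\{0,1\}^{\binom{S_\ell}{2}}$, so that the clique indicator $\sum_{W \subseteq \binom{S_\ell}{2}} \chi_W$ factors out. The paper simply asserts this coset-invariance in one line, whereas you carefully verify that the XOR action preserves condition~\ref{itm:middle} (in particular the leftmost/rightmost separator identities); your verification is sound---indeed, since $|S_\ell|$ vertex-disjoint $S_\ell$--$S_r$ paths must each meet $S_\ell$ in a single vertex, they already avoid all $\binom{S_\ell}{2}$-edges and survive the XOR unchanged.
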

\begin{proof}
  Suppose $S$ is not a clique in $G$.
  We need to show that the row $\cQ_0(S,\cdot)$ is zero.
  For every entry $\cQ_0(S,S')$, notice that the Fourier coefficients $\widehat{\cQ_0(S,S')}(T) = \widehat{\cQ_0(S,S')}(T')$ if $T,T' \subseteq \nchoose{2}$ disagree only on edges inside $S$. (That is, $T \xor T' \subseteq {S \choose 2}$.)
  This means that $\cQ_0(S,S') = \1_{S \text{ is a clique in $G$}} \cdot f_{S,S'}(G)$ for some function $f_{S,S'}$.
\end{proof}

\subsection{Iterative Factorization of $E_0$}
We recall now the definition of the matrix $E_0 \in \R^{\nchoose{\leq d} \times \nchoose{\leq d}}$.
\[
  E_0(I,J) = \sum_{\substack{S_\ell, S_r \subseteq [n]\\ |S_\ell| = |S_r| \leq d}} \Paren{\frac \omega n}^{-\frac{|S_\ell| + |S_r|}{2}} \sum_{\substack{\cR_\ell, \cR_m, \cR_r \text{ satisfying}\\\text{\ref{itm:left},\ref{itm:middle},\ref{itm:right} and not \ref{itm:disjoint}}  \\
  |\V(\cR_\ell)|, |\V(\cR_m)|, |\V(\cR_r)| \leq \tau}} \Paren{\frac \omega n}^{|\V(\cR_\ell)| + |\V(\cR_r)| + |\V(\cR_m)| - \frac{|S_\ell| + |S_r|}{2}} \cdot \chi_{\cR_\ell} \cdot \chi_{\cR_m} \cdot \chi_{\cR_r}\mper
\]
In what follows, we will show how to factor a slightly more general sort of matrix; this factorization will be applicable iteratively, starting with $E_0$.

\subsubsection{The matrix $\cE_c$ and its factorization}
To express the family of matrices we will factor, we introduce a relaxation of our definition of ribbon and a corresponding relaxation \hyperref[itm:middle-sep]{3*} of condition \ref{itm:middle} of the $S_\ell, S_r$ factorization conditions.
\begin{definition}[Improper Ribbon]
  An \emph{improper $(I,J)$-ribbon} $\cR$ is an $(I,J)$-ribbon $\cR_0$ together with a set $\cZ(\cR) \subseteq [n]$ of vertices disjoint from $\cV(\cR_0)$.
  (Think of adding the vertices $\cZ(\cR)$ to the ribbon $\cR_0$ as degree-$0$ nodes.)
  We write $\cV(\cR) = \cV(\cR_0) \cup \cZ(\cR)$.
  When we need to distinguish, we sometimes call ordinary ribbons ``proper''.
\end{definition}
Every ribbon is also an improper ribbon by taking $\cZ(\cdot) = \emptyset$, and every improper ribbon has a corresponding ribbon given by deleting its degree-$0$ vertices.

\begin{center}
  \fbox{\begin{minipage}{\textwidth}
  \paragraph{Relaxed Factorization Condition for ribbon $\cR_1$ with $\cS_\ell, \cS_r \subseteq [n]$}
      \begin{enumerate}
        \item [\quad 3*.] $\cR_1$ is an improper $(S_\ell, S_r)$-ribbon.\label{itm:middle-sep}
      \end{enumerate}
  \end{minipage}}
\end{center}

Let $c$ be a $\R$-valued function $c(\cR)$ on (possibly improper) ribbons. Let $\cE_c \in \R^{\nchoose{\leq d} \times \nchoose{\leq d}}$ be given by
\begin{align}
  \cE_c(I,J) = \sum_{\substack{S_\ell, S_r \subseteq [n] \\ |S_\ell|, |S_r| \leq d}} \Paren{\frac \omega n }^{-\frac{|S_\ell| + |S_r|}{2}}\sum_{\substack{\cR_\ell, \cR_m, \cR_r \text{ satisfying}\\\text{\ref{itm:left},\hyperref[itm:middle]{3*},\ref{itm:right} and not \ref{itm:disjoint}}  \\
  |\V(\cR_\ell)|, |\V(\cR_m)|, |\V(\cR_r)| \leq \tau}} c(\cR_m) \Paren{\frac \omega n}^{|\V(\cR_\ell)| + |\V(\cR_r)| + |\V(\cR_m)|-\frac{|S_\ell| + |S_r|}{2}} \cdot \chi_{\cR_\ell} \cdot \chi_{\cR_m} \cdot \chi_{\cR_r}\mper\label{eqn:Ec}
\end{align}

Note that  \ref{itm:middle} is a strictly more restrictive condition than \hyperref[itm:middle]{3*}. Hence we can define the function $c_0$ by $c_{0}(\cR_m) = 1$ if $\cR_m$ satisfies \ref{itm:middle} and $c_{0}(\cR_m) = 0$ otherwise. Then $E_0 = \cE_{c_{0}}$. In this subsection, we will show how to factor any matrix of the form $\cE_c$ as
\[
  \cE_c = \cL \cQ_{c'} \cL^\top - \cE_{c'} - \xi_c
\]
for some function $c'$ on ribbons and matrices $\cQ_{c'}, \xi_c \in \R^{\nchoose{\leq d} \times \nchoose{\leq d}}$ where $\|\xi_c\|$ is negligible with high probability.

Just as our initial factorization of $\cM$ began with a factorization of each ribbon appearing in the Fourier expansion, our factorization of $\cE_c$ depends on a factorization for each triple $(\cR_\ell, \cR_m, \cR_r)$ appearing in \ref{eqn:Ec}.
Since they do not satisfy \ref{itm:disjoint}, there must be some vertices occurring in more than one of $\cV(\cR_\ell), \cV(\cR_m), \cV(\cR_\ell)$.
Before, the canonical factorization depended on the leftmost and rightmost vertex separators in an $(I,J)$-ribbon $\cR$ separating $I$ from $J$. But now we will be interested in leftmost and rightmost separators that separate both $I$ and $J$ from each other and from these repeated vertices.
\begin{definition}[Separating Factorization]
  Let $\cR_\ell, \cR_m, \cR_r$ be ribbons satisfying $S_\ell, S_r$ factorization conditions \ref{itm:left}, \hyperref[itm:middle]{3*}, \ref{itm:right} but not \ref{itm:disjoint}, with $|\V(\cR_\ell)|, |\V(\cR_m)|, |\V(\cR_r)| \leq \tau$.
  Let $\cR$ be the $(I,J)$-ribbon with edge set $W_{\cR_\ell} \oplus W_{\cR_m} \oplus W_{\cR_r}$  and vertex set $\V(\cR_\ell) \cup \V(\cR_m) \cup \V(\cR_r)$.
  (Thus, $\chi_{\cR_\ell} \cdot \chi_{\cR_m} \cdot \chi_{\cR_r} = \chi_{\cR}$.)

  Let $S'_\ell$ be the leftmost minimum-size vertex separator in $\cR$ which separates $I$ from $J$ and any vertices appearing in more than one of $\cV(\cR_\ell), \cV(\cR_m), \cV(\cR_r)$.
  Similarly, let $S'_r$ be the rightmost minimum-size vertex separator in $\cR$ separating $J$ from $I$ and these repeated vertices.
  (Notice that $S'_\ell$ and $S'_r$ could have different sizes.)

% \Snote{TODO: can possibly simplify this by removing explicit edge sets.}
  Let $V_\ell'$ be the vertices reachable from $I$ without passing through $S'_\ell$ and similarly for $V_r'$. Let $V_m' = V_\cR \setminus (V_\ell' \cup V_r')$.
  Let $W_\ell' = \{(u,v) \in W_{\cR} \, : \, u \in V_\ell, v \in V_\ell \cup S_\ell' \}$ and similarly for $W_r'$, and let $W_m' = W_\cR \setminus (W_\ell' \cup W_r')$.

  Let $\cR_\ell'$ be the $(I,S'_\ell)$-ribbon with vertex set $V_\ell' \cup S_\ell'$ and edge set $W_\ell'$ and let $\cR_r'$ be the $(S'_r, J)$-ribbon with vertex set $V_r' \cup S_r'$ and edge set $W_r'$.
  Finally, let $\cR_m'$ be the improper $(S'_\ell, S'_r)$-ribbon with edge set $W'_m$ and vertex set $(\cV(\cR) \setminus (V'_\ell  \cup V'_r)) \cup S_\ell' \cup S_r')$.
\end{definition}
% Before pressing on we remark that there may be vertices in $\cV(\cR_\ell) \cup \cV(\cR_m) \cup \cV(\cR_r)$ which do not appear in $\cV(\cR_\ell') \cup \cV(\cR_m') \cup \cV(\cR_r')$, if every edge incident to that vertex in the multiset $\cW(\cR_\ell) \cup \cW(\cR_m) \cup \cW(\cR_r)$ occurs with even multiplicity.

Note that $\chi_{\cR_\ell} \cdot \chi_{\cR_m} \cdot \chi_{\cR_r} = \chi_{\cR_\ell'} \cdot \chi_{\cR_m'} \cdot \chi_{\cR_r'}$ if $\cR_\ell', \cR_m', \cR_r'$ is the separating factorization for $\cR_\ell, \cR_m, \cR_r$.
We can use this to rewrite $\cE_c$ as
\begin{align}
  & \cE_c(I,J) = \nonumber \\
  & \sum_{\substack{S_\ell, S_r \subseteq [n] \\ |S_\ell|, |S_r| \leq d}} \Paren{\frac \omega n }^{-\frac{|S_\ell| + |S_r|}{2}}\sum_{\substack{\cR_\ell, \cR_m, \cR_r \text{ satisfying}\\\text{\ref{itm:left},\hyperref[itm:middle]{3*},\ref{itm:right} and not \ref{itm:disjoint}}  \\
  |\V(\cR_\ell)|, |\V(\cR_m)|, |\V(\cR_r)| \leq \tau \\ \text{separating factorization} \\ \cR_\ell', \cR_m', \cR_r', S_\ell', S_r'}} c(\cR_m) \Paren{\frac \omega n}^{|\V(\cR_\ell)| + |\V(\cR_r)| + |\V(\cR_m)|-\frac{|S_\ell| + |S_r|}{2}} \cdot \chi_{\cR_\ell'} \cdot \chi_{\cR_m'} \cdot \chi_{\cR_r'}\label{eq:exactEc}
  \end{align}
  Our goal is to find some coefficient function $c'$ on (improper) ribbons and a matrix $\cQ_{c'}$ so that this is approximately equal to $\cL \cQ_{c'} \cL^\top - \cE_{c'}$. For $c'$ yet to be chosen, we take
\[
 \cQ_{c'}(S'_\ell, S'_r) \defeq \sum_{\substack{\cR'_m \text{ having \hyperref[itm:middle]{3*}} \\ |\V(\cR'_m)| \leq \tau}} c'(\cR'_m)\Paren{\frac \omega n}^{|\V(\cR'_m)| - \frac{|S'_\ell| + |S'_r|}{2}} \chi_{\cR'_m}
\]
and have that 
  \begin{align}
 & \cL \cQ_{c'} \cL^\top(I,J) - \cE_{c'}(I,J) = \nonumber \\
& \sum_{\substack{S'_\ell, S'_r \subseteq [n] \\ |S'_\ell|, |S'_r| \leq d}} \Paren{\frac \omega n }^{-\frac{|S'_\ell| + |S'_r|}{2}}\sum_{\substack{\cR'_\ell, \cR'_m, \cR'_r \text{ satisfying}\\\text{\ref{itm:left},\hyperref[itm:middle]{3*},\ref{itm:right}, and \ref{itm:disjoint}}  \\
  |\V(\cR'_\ell)|, |\V(\cR'_m)|, |\V(\cR'_r)| \leq \tau}} c'(\cR'_m) \Paren{\frac \omega n}^{|\V(\cR'_\ell)| + |\V(\cR'_r)| + |\V(\cR'_m)|-\frac{|S'_\ell| + |S'_r|}{2}} \cdot \chi_{\cR'_\ell} \cdot \chi_{\cR'_m} \cdot \chi_{\cR'_r}\mper\label{eqn:approxEc}
\end{align}
  We will compare \eqref{eq:exactEc} and \eqref{eqn:approxEc} by collecting like terms, but first we handle the discrepancy in the size bounds on the ribbons with a corresponding error term $\xi_c$.
  The following matrix is similar to $\cE_c$, but places a size bound on the ribbons in the separating factorization $|\cV(\cR_\ell')|, |\cV(\cR_m')|, |\cV(\cR_r')| \leq \tau$.
  We define
\begin{align*}
  & \cE_c'(I,J) = \\
  & \sum_{\substack{S_\ell, S_r \subseteq [n] \\ |S_\ell|, |S_r| \leq d}} \Paren{\frac \omega n }^{-\frac{|S_\ell| + |S_r|}{2}}\sum_{\substack{\cR_\ell, \cR_m, \cR_r \text{ satisfying}\\\text{\ref{itm:left},\hyperref[itm:middle]{3*},\ref{itm:right} and not \ref{itm:disjoint}}  \\
  \text{separating factorization} \\ \cR_\ell', \cR_m', \cR_r', S_\ell', S_r' \\
|\V(\cR_\ell')|, |\V(\cR_m')|, |\V(\cR_r')| \leq \tau }}  c(\cR_m) \Paren{\frac \omega n}^{|\V(\cR_\ell)| + |\V(\cR_r)| + |\V(\cR_m)|-\frac{|S_\ell| + |S_r|}{2}} \cdot \chi_{\cR_\ell'} \cdot \chi_{\cR_m'} \cdot \chi_{\cR_r'}
  \end{align*}
  We take $\xi_c = \cE_c' - \cE_c$ and we will show below that with high probability the error $\|\xi_c\|$ is negligible. Before doing this, we show that $\cE_c'$ is exactly equal to $\cL^\top \cQ_{c'} \cL^\top - \cE_{c'}$ for the correct choice of $c'$.

  To collect like terms, it helps to define the following quantity $\gamma_{\cR_\ell', \cR_m', \cR_r',I,J,S_\ell',S_r'}$.
  \[
    \gamma_{\cR_\ell', \cR_m', \cR_r',I,J,S_\ell', S_r'} \eqdef \sum_{\substack{\cR_\ell, \cR_m, \cR_r \text{ satisfying}\\\text{\ref{itm:left},\hyperref[itm:middle]{3*},\ref{itm:right} and not \ref{itm:disjoint} for some } S_\ell, S_r
   \\ \text{separating factorization } \cR_\ell', \cR_m', \cR_r', S_\ell', S_r'}}
    c(\cR_m) \Paren{\frac \omega n}^{|\cV(\cR_\ell)| + |\cV(\cR_m)| + |\cV(\cR_r)| + \frac{|S_\ell'| + |S_r'|}{2} - |S_\ell| - |S_r|}\mper
  \]
  Then we can rewrite $\cE_c'(I,J)$ again as
\begin{align*}
  \cE_c'(I,J) = \sum_{\substack{S'_\ell, S_r' \subseteq [n]\\ |S_\ell'|,|S_r'| \leq d}} \Paren{\frac \omega n}^{-\frac{|S_\ell'| + |S_r'|}{2}}
  \sum_{\substack{\cR'_\ell, \cR'_m, \cR'_r\\ \text{ satisfying \ref{itm:left}, \hyperref[itm:middle]{3*}, \ref{itm:right}, \ref{itm:disjoint} for $S_\ell', S_r'$} \\ |\cV(\cR'_\ell)|, |\cV(\cR'_m)|, |\cV(\cR_r)| \leq \tau }}   \gamma_{\cR_\ell', \cR_m', \cR_r',I,J,S_\ell',S_r'}
\cdot \chi_{\cR_\ell'} \cdot \chi_{\cR_m'} \cdot \chi_{\cR_r'}
\end{align*}
We will obtain $\cE'_c = \cL^\top \cQ_{c'} \cL^\top - \cE_{c'}$ if we define $c'(\cR_m')$ so that
\[
c'(\cR'_m) \Paren{\frac \omega n}^{|\V(\cR'_\ell)| + |\V(\cR'_r)| + |\V(\cR'_m)|-\frac{|S'_\ell| + |S'_r|}{2}} = \gamma_{\cR_\ell', \cR_m', \cR_r',I,J,S_\ell', S_r'}
\]
To express this in terms of the function $c$, we expand out $\gamma_{\cR_\ell', \cR_m', \cR_r', I,J,S_\ell',S_r'}$. It is useful to define:
\begin{definition}
  Let 
  \[
    r = (|\cV(\cR_\ell)| + |\cV(\cR_m)| + |\cV(\cR_r)| - |S_\ell| - |S_r|)- (|\cV(\cR_\ell')| + |\cV(\cR_m')| + |\cV(\cR_r')| - |S_\ell'| - |S_r'|)\mper
   \]
  (The ribbons $\cR_\ell, \cR_m, \cR_r, \cR_\ell', \cR_m', \cR_r'$ will always be clear from context.)
\end{definition}
  
Note that $(\cV(\cR_\ell)| + |\cV(\cR_m)| + |\cV(\cR_r)| - |S_\ell| - |S_r|)$ is the total number of vertices we would have in the $(I,J)$-ribbon with vertex set $\cV(\cR_\ell) \cup \cV(\cR_m) \cup \cV(\cR_\ell)$ if $\cR_\ell,\cR_m,\cR_r$ satisfied condition \ref{itm:disjoint} (which they do not!).
Similarly, $(|\cV(\cR'_\ell)| + |\cV(\cR'_m)| + |\cV(\cR'_r)| - |S'_\ell| - |S'_r|)$ is the total number of vertices in the $(I,J)$-ribbon with edge set $\cW(\cR_\ell') \cup \cW(\cR_m') \cup \cW(\cR_r')$ and vertex set $\cV(\cR_\ell') \cup \cV(\cR_m') \cup \cV(\cR_r')$.
Thus, $r$ is the number of vertices occurring with multiplicity higher than they should in $\cV(\cR_\ell) \cup \cV(\cR_m) \cup \cV(\cR_r)$.

We can rewrite the $\gamma$'s as
\[
    \gamma_{\cR_\ell', \cR_m', \cR_r',I,J,S_\ell', S_r'} = \Paren{\frac \omega n}^{|\cV(\cR_\ell')| + |\cV(\cR_m')| + |\cV(\cR_r')| - \frac{|S_\ell'| + |S_r'|}{2}} \sum_{\substack{\cR_\ell, \cR_m, \cR_r \text{ satisfying}\\\text{\ref{itm:left},\hyperref[itm:middle]{3*},\ref{itm:right} and not \ref{itm:disjoint} for some } S_\ell, S_r\\
    r \text{ intersections outside } S_\ell, S_r\\
    \\ \text{separating factorization } \cR_\ell', \cR_m', \cR_r' S_\ell', S_r'}}
    c(\cR_m) \Paren{\frac \omega n}^{r}\mper
\]
Thus, we will have that $\cE_c' = \cL \cQ_{c'} \cL^\top - \cE_{c'}$ if and only if for every $(S_\ell', S_r')$-ribbon $\cR_m'$ and every $\cR_\ell', \cR_r'$ satisfying \ref{itm:left}, \ref{itm:right},
\[
  c'(\cR_m') = \sum_{\substack{\cR_\ell, \cR_m, \cR_r \text{ satisfying}\\\text{\ref{itm:left},\hyperref[itm:middle]{3*},\ref{itm:right} and not \ref{itm:disjoint} for some } S_\ell, S_r\\
    r \text{ intersections outside } S_\ell, S_r\\
    \\ \text{separating factorization } \cR_\ell', \cR_m', \cR_r' S_\ell', S_r'}}
    c(\cR_m) \Paren{\frac \omega n}^{r}\mper
\]
Note that for this to happen, the right hand side must be independent of $\cR_\ell'$ and $\cR_r'$. If this is the case, then we can define 
\[
  c'(\cR_m') \defeq \sum_{\substack{\cR_\ell, \cR_m, \cR_r \text{ satisfying}\\\text{\ref{itm:left},\hyperref[itm:middle]{3*},\ref{itm:right} and not \ref{itm:disjoint} for some } S_\ell, S_r\\
    r \text{ intersections outside } S_\ell, S_r\\
    \\ \text{separating factorization } \cR_\ell', \cR_m', \cR_r' S_\ell', S_r'}}
    c(\cR_m) \Paren{\frac \omega n}^{r} \text{for some } \cR_\ell', \cR_r' \text{ satisfying } \ref{itm:left}, \ref{itm:right}\mper
\]
The next claim shows that, indeed, the choice of $\cR_\ell', \cR_r'$ does not matter.
(This would not have been true without passing from $\cE_c$ to $\cE_c'$.)
\begin{claim}
  Let $\cR'_\ell, \cR_m', \cR_r'$ satisfy \ref{itm:left}, \hyperref[itm:middle]{3*}, \ref{itm:right}, \ref{itm:disjoint} for some $S_\ell',S_r' \subseteq [n]$.
  Let $\cR_\ell^{''}$ and $\cR_r^{''}$ also satisfy \ref{itm:left} and \ref{itm:right}, respectively, for $S_\ell', S_r'$, respectively.
  Then
  \[
\sum_{\substack{\cR_\ell, \cR_m, \cR_r \text{ satisfying}\\\text{\ref{itm:left},\hyperref[itm:middle]{3*},\ref{itm:right} and not \ref{itm:disjoint} for some } S_\ell, S_r\\
    r \text{ intersections outside } S_\ell, S_r\\
    \\ \text{separating factorization } \cR_\ell', \cR_m', \cR_r' S_\ell', S_r'}}
    c(\cR_m) \Paren{\frac \omega n}^{r}
    = \sum_{\substack{\cR_\ell, \cR_m, \cR_r \text{ satisfying}\\\text{\ref{itm:left},\hyperref[itm:middle]{3*},\ref{itm:right} and not \ref{itm:disjoint} for some } S_\ell, S_r\\
    r \text{ intersections outside } S_\ell, S_r\\
\\ \text{separating factorization } \cR_\ell^{''}, \cR_m', \cR_r^{''} S_\ell', S_r'}}
    c(\cR_m) \Paren{\frac \omega n}^{r}\mper
  \]
  (Notice that the left-hand sum refers to $\cR_\ell', \cR_r'$ and the right-hand one to $\cR_\ell^{''}, \cR_r^{''}$.)
\end{claim}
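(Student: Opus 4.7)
My plan is to establish the claim by exhibiting an explicit bijection between the sets of triples indexed on the two sides, such that the summand $c(\cR_m) (\omega/n)^r$ is preserved. The bijection will fix $\cR_m$ and modify $\cR_\ell$ and $\cR_r$ by swapping their ``inner'' pieces $\cR_\ell', \cR_r'$ for $\cR_\ell^{''}, \cR_r^{''}$, while leaving the ``outer'' portions (those reaching into the middle region through $S_\ell', S_r'$) untouched.

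The foundational step is a structural lemma: for any triple $(\cR_\ell, \cR_m, \cR_r)$ whose separating factorization is $(\cR_\ell', \cR_m', \cR_r')$, the vertices in $V_\ell' \setminus S_\ell'$ lie in $\cV(\cR_\ell)$ and are disjoint from $\cV(\cR_m) \cup \cV(\cR_r)$, and moreover $W_{\cR_\ell'} \subseteq W_{\cR_\ell}$. This follows because $S_\ell'$ was chosen precisely to separate $I$ from every vertex that is repeated among $\cV(\cR_\ell), \cV(\cR_m), \cV(\cR_r)$, so any vertex in $V_\ell' \setminus S_\ell'$ is non-repeated; combined with the fact that $\cR_m$ is an $(S_\ell, S_r)$-ribbon and $\cR_r$ is an $(S_r, J)$-ribbon (so neither has edges meeting $I$'s connected component without passing through $S_\ell$), one concludes that every edge of $\cR = \cR_\ell \oplus \cR_m \oplus \cR_r$ with an endpoint in $V_\ell' \setminus S_\ell'$ comes from $\cR_\ell$. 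The symmetric claim holds on the right.

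Using this, I will define $\cR_\ell^*$ by deleting $W_{\cR_\ell'}$ and the vertices $V_\ell' \setminus S_\ell'$ from $\cR_\ell$ and gluing in $\cR_\ell^{''}$ along $I \cup S_\ell'$; analogously define $\cR_r^*$. Then I will check three things: (a) the new triple $(\cR_\ell^*, \cR_m, \cR_r^*)$ satisfies conditions \ref{itm:left}, \hyperref[itm:middle]{3*}, \ref{itm:right} for some $S_\ell^*, S_r^*$ (but not \ref{itm:disjoint}), since we only altered the strictly-left/right regions and $\cR_\ell^{''}, \cR_r^{''}$ already satisfy \ref{itm:left}, \ref{itm:right} with endpoint $S_\ell', S_r'$; (b) the separating factorization of the new triple is exactly $(\cR_\ell^{''}, \cR_m', \cR_r^{''})$ with the same intermediate separators $S_\ell', S_r'$; and (c) the exponent $r$ is preserved, which reduces to the algebraic identity $|\cV(\cR_\ell^*)| - |\cV(\cR_\ell)| = |\cV(\cR_\ell^{''})| - |\cV(\cR_\ell')|$ together with its right-side analogue.

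I expect the main obstacle to be verifying step (b): that the leftmost minimum-size separator in $\cR^* = \cR_\ell^{''} \oplus \cR_m' \oplus \cR_r^{''}$ which separates $I$ from $J$ and from vertices repeated among $\cV(\cR_\ell^*), \cV(\cR_m), \cV(\cR_r^*)$ is still $S_\ell'$. Here I will use that the set of repeated vertices is unchanged by the swap (they all lie in $\cV(\cR_m') \subseteq V_m' \cup S_\ell' \cup S_r'$, which is untouched), together with the containment $W_{\cR_\ell'} \subseteq W_{\cR_\ell}$ to argue that the $I$-to-(repeated-set) path structure in $\cR^*$ is in separator-preserving bijection with that in $\cR$. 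The inverse map is defined symmetrically by swapping $\cR_\ell^{''}, \cR_r^{''}$ back for $\cR_\ell', \cR_r'$, yielding the required bijection and hence the identity of the two sums.
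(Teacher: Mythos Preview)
Your approach is essentially the same as the paper's: both construct a bijection between terms on the two sides by replacing the portion of $\cR_\ell$ between $I$ and $S_\ell'$ (which must be exactly $\cR_\ell'$) by $\cR_\ell^{''}$, leaving the portion of $\cR_\ell$ between $S_\ell'$ and $S_\ell$ (which becomes part of $\cR_m'$) and $\cR_m$ itself untouched, and symmetrically on the right. Your structural lemma that vertices of $V_\ell' \setminus S_\ell'$ lie only in $\cV(\cR_\ell)$ is exactly the observation the paper uses to argue $r$ is preserved; you are simply more explicit than the paper about verifying that the separating factorization of the new triple is $(\cR_\ell^{''}, \cR_m', \cR_r^{''})$, which the paper leaves implicit.
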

\begin{proof}
We prove this by showing that there is an exact match between terms on the left hand side and terms on the right hand side. Consider a term on the left hand side. Note that the part of $\cR_{\ell}$ between $I$ and $S'_{\ell}$ must be $\cR'_{\ell}$ while the part of $\cR_{\ell}$ between $S'_{\ell}$ and $S_{\ell}$ becomes part of $\cR'_m$. To shift from $\cR'_{\ell}$ to $\cR''_{\ell}$, we simply replace $\cR'_{\ell}$ by $\cR''_{\ell}$ within $\cR_{\ell}$. Similarly, to shift from $\cR'_{r}$ to $\cR''_{r}$, we simply replace $\cR'_{r}$ by $\cR''_{r}$ within $\cR_{r}$.

To show that this gives an exact match, we need to show that $r$ is unaffected by these shifts. To see that shifting from $\cR'_{\ell}$ to $\cR''_{\ell}$ does not affect $r$, note that all vertices in $\cV(\cR'_{\ell}) \setminus S'_{\ell}$ or  $\cV(\cR'_{\ell}) \setminus S'_{\ell}$ must appear in the corresponding $\cR_{\ell}$ and cannot appear in $\cR_{m}$ or $\cR_{r}$. Thus, these vertices always have multiplicity $1$ in $\cV(\cR_{\ell}) \cup \cV(\cR_{m}) \cup \cV(\cR_{r})$. All other vertices (including the ones in $S'_{\ell}$) may appear in $\cR_{m}$ or $\cR_{r}$ as well as $\cR_{\ell}$ but whether or not they do so is unaffected by the shift so their multiplicities in $\cV(\cR_{\ell}) \cup \cV(\cR_{m}) \cup \cV(\cR_{r})$ are unaffected by the shift and $r$ remains the same. A similar argument holds for shifting from $\cR'_{r}$ to $\cR''_{r}$
\end{proof}
\begin{remark}
For this argument, it was important to keep track of the isolated vertices in $\cR'_m$. If we did not keep track of isolated vertices and instead had them disappear, we could have a situation where there is a vertex $v$ which appears in $\cR_{\ell}$ and $\cR_{m}$ but disappears from $\cR'_{m}$ and is not in $S'_{\ell}$. Since $v$ is no longer in $\cR'_{m}$, $\cR''_{\ell}$ could contain $v$. If so, then we cannot shift from  $\cR'_{\ell}$ to $\cR''_{\ell}$ as this would create a copy of $v$ to the left of $S'_{\ell}$ but $v$ should be to the right of $S'_{\ell}$.
\end{remark}
Putting everything together, $\cE_c' = \cL \cQ_{c'} \cL^\top - \cE_{c'}$. Since we defined $\xi_c = \cE'_{c} - \cE_{c}$, we get that $\cE_c = \cL \cQ_c \cL^\top - \cE_{c'} - \xi_c$, as needed.

The remaining step will be to show that with high probability, the error term $\xi_c$ has negligible norm, which we will accomplish in Section~\ref{sec:xi}.

Finally, we record the following easy lemma about separating factorizations, which will be useful in the application of the foregoing to factor $\cE_0$.
\begin{lemma}\label{lem:sep-size-incr}
Suppose $\cR_\ell, \cR_m, \cR_r$ satisfy conditions \ref{itm:left}, \hyperref[itm:middle]{3*}, \ref{itm:right}, but not \ref{itm:disjoint}.
Let $\cR_\ell', \cR_m', \cR_r'$ be their separating factorization, with separators $S_\ell', S_r'$.
Then
\[
\frac{|S_\ell'| + |S_r'|}{2} - \frac{|S_\ell| + |S_r|}{2} \geq \frac 1 2
 \]
\end{lemma}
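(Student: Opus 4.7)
The plan is to derive the inequality from the fact that failure of condition \ref{itm:disjoint} must introduce an ``extra'' repeated vertex that forces one of the new separators to grow, by way of Menger's theorem applied inside one of $\cR_\ell$ or $\cR_r$. I will first reduce to the case where vertex-disjointness (not just edge-disjointness) fails: if some edge $e=(u,w)$ sits in $W_{\cR_i} \cap W_{\cR_j}$ for $i \ne j$, then both endpoints live in $\cV(\cR_i) \cap \cV(\cR_j)$; conditions \ref{itm:left} and \ref{itm:right} forbid edges entirely inside $S_\ell$ in $\cR_\ell$ or inside $S_r$ in $\cR_r$, so at least one endpoint of $e$ must lie outside $S_\ell \cup S_r$ while appearing in two subribbons, already furnishing a vertex-overlap witness. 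So pick a vertex $v$ violating vertex-disjointness. By the left/right symmetry and because a $v \in \cV(\cR_\ell) \cap \cV(\cR_r)$ with $v \notin S_\ell \cap S_r$ must lie outside $S_\ell$ or outside $S_r$, it suffices to consider the case $v \in \cV(\cR_\ell) \setminus S_\ell$ with $v$ also in one of $\cV(\cR_m), \cV(\cR_r)$, so that $v$ is a genuine repeated vertex in the set $R$ of vertices appearing in at least two of $\cV(\cR_\ell), \cV(\cR_m), \cV(\cR_r)$.

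Next I would run the Menger argument inside $\cR_\ell$. By condition \ref{itm:left}, $v$ is in $I$ or is reachable from $I$ in $\cR_\ell$ by a path avoiding $S_\ell$; in either case, every vertex on some $I$-to-$v$ path lies in the set $V_\ell \subseteq \cV(\cR_\ell)$ of vertices reachable from $I$ without traversing $S_\ell$. I will argue that the minimum vertex cut in $\cR_\ell$ between $I$ and $S_\ell \cup \{v\}$ has size at least $|S_\ell|+1$. Suppose $C$ were such a cut with $|C| \le |S_\ell|$; then $C$ also separates $I$ from $S_\ell$, so $|C|=|S_\ell|$ and $C$ is a minimum $I$-to-$S_\ell$ separator in $\cR_\ell$. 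The hypothesis $S_L(\cR_\ell) = S_\ell$ then says $S_\ell$ separates $I$ from $C$, which forces $C \cap V_\ell = \emptyset$; but then the $I$-to-$v$ path lying inside $V_\ell$ avoids $C$, contradicting the assumption. Invoking Menger's theorem, we obtain $|S_\ell|+1$ vertex-disjoint paths from $I$ to $S_\ell \cup \{v\}$ in $\cR_\ell$. A dual application to $\cR_r$ (using $S_R(\cR_r)=S_r$) gives $|S_r|$ vertex-disjoint paths from $J$ to $S_r$ in $\cR_r$. Every endpoint of both families sits in $R$, since $v \in R$ by construction and $S_\ell \subseteq \cV(\cR_\ell) \cap \cV(\cR_m) \subseteq R$, $S_r \subseteq \cV(\cR_m) \cap \cV(\cR_r) \subseteq R$ by the subribbon structure.

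The last step is to transport these path families into the XOR-combined ribbon $\cR$ and then apply Menger there. Any edge killed by XOR is shared between two subribbons, so both of its endpoints lie in $R$; if one of my paths uses a cancelled edge, I would truncate the path at the $R$-vertex immediately preceding that edge, producing a shorter path terminating in $R$ that uses only edges of $\cR$. Truncation only deletes vertices, so vertex-disjointness within each path family is preserved. This yields $|S_\ell|+1$ vertex-disjoint $I$-to-$R$ paths and $|S_r|$ vertex-disjoint $J$-to-$R$ paths inside $\cR$; Menger on $\cR$ then gives $|S_\ell'| \ge |S_\ell|+1$ and $|S_r'| \ge |S_r|$, which sum to the desired bound. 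The main obstacle is this transport step: XOR cancellations could in principle destroy path connectivity, and the delicate point is that every cancelled edge is pinned to a pair of $R$-vertices, so truncation produces valid paths in $\cR$ terminating in $R$ without losing count.
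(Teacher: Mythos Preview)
Your proof is correct and follows essentially the same logic as the paper's: both establish that the violation of condition~\ref{itm:disjoint} forces at least one of $|S_\ell'|>|S_\ell|$ or $|S_r'|>|S_r|$, using that $S_\ell$ is the unique minimum separator in $\cR_\ell$. The paper's version is terser—it asserts that $S_\ell'$ separates $I$ from $S_\ell$ in $\cR_\ell$ and then invokes uniqueness—whereas you make the Menger/path argument explicit and spell out the XOR-truncation step (which is exactly the content implicitly needed to justify the paper's assertion).
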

\begin{proof}
We claim that $|S_\ell| + |S_r| + 1 \leq |S_\ell'| + |S_r'|$
    By the violation of condition \ref{itm:disjoint}, we cannot have $S_\ell = S_\ell'$ and $S_r = S_r'$.
    But since $S_\ell'$ separates $I$ from $S_\ell$ in $\cR_\ell$ and $\cR_\ell$ is an $(I,S_\ell)$-ribbon whose rightmost vertex separator is also $S_\ell$, if $S_\ell \neq S_\ell'$ then $|S_\ell| < |S_\ell'|$, and similarly for $S_r$ and $S_r'$.
    So either $|S_\ell| < |S_\ell'|$ or $|S_r| < |S_r'|$, and since the separator sizes are integers, so the difference must be at least $1$ and we are done.
\end{proof}

\subsubsection{Application to $E_0$ and $\cM$}
We are ready to define our recursive factorization of $E_0$. Recall that $c_{0}(\cR_m) = 1$ if $\cR_m$ satisfies \ref{itm:middle} and $c_{0}(\cR_m) = 0$ otherwise and $E_0 = \cE_{c_{0}}$.
Applying the factorization above to $\cE_{c_{0}}$ we obtain matrices $\xi_1 = \xi_{c_0}, \cQ_1$, and $\cE_{c_1}$.
Then of course we can apply the factorization again to $\cE_{c_1}$.

Proceeding inductively, for all $i \in [1,2d]$ let $\xi_i = \xi_{c_{i-1}}, \cQ_i,$ and $\cE_{c_i}$ be the matrices given by applying the factorization to $\cE_{c_{i-1}}$ at step $i$.
\begin{claim}\label{clm:factorization}
  \[
    \cM = \cL(\cQ_0 - \cQ_1 + \cQ_2 - \ldots - \cQ_{2d-1} + \cQ_{2d})\cL^\top - (\xi_{0} - \xi_{1} + \xi_{2} - \ldots - \xi_{2d-1} + \xi_{2d})\mper
  \]
\end{claim}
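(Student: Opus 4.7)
The plan is to prove Claim~\ref{clm:factorization} by induction on the number of iterative factorizations, closed off by a termination argument at step $2d$. The ingredients are already in hand: the base factorization $\cM = \cL\cQ_0\cL^\top - \xi_0 - \cE_{c_0}$ derived earlier in this section (with $E_0 = \cE_{c_0}$), and the general identity $\cE_{c_{k-1}} = \cL\cQ_k\cL^\top - \cE_{c_k} - \xi_k$ produced by the factorization of the previous subsection (taking $c = c_{k-1}$, $c' = c_k$, under the conventions $\cQ_k = \cQ_{c_k}$ and $\xi_k = \xi_{c_{k-1}}$).

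First I will prove by induction on $k$ that for every $k \in \{0, 1, \ldots, 2d\}$,
\begin{equation}
  \cM \;=\; \cL\Bigl(\sum_{i=0}^{k}(-1)^i \cQ_i\Bigr)\cL^\top \;+\; (-1)^{k+1}\cE_{c_k} \;-\; \sum_{i=0}^{k}(-1)^i\xi_i\mper \label{eq:plan-ih}
\end{equation}
The base case $k=0$ is precisely the factorization recalled above (with the convention $(-1)^{0+1}\cE_{c_0} = -\cE_{c_0}$). The inductive step is purely algebraic: substituting $\cE_{c_k} = \cL\cQ_{k+1}\cL^\top - \cE_{c_{k+1}} - \xi_{k+1}$ into \eqref{eq:plan-ih}, the $(-1)^{k+1}\cE_{c_k}$ term flips to $(-1)^{k+2}\cE_{c_{k+1}}$, while $(-1)^{k+1}\cQ_{k+1}$ extends the $\cQ$-sum and $-(-1)^{k+1}\xi_{k+1} = (-1)^{k+1}\cdot(-\xi_{k+1})$ extends the $\xi$-sum, yielding \eqref{eq:plan-ih} at $k+1$.

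To recover the claim from the $k=2d$ instance of \eqref{eq:plan-ih}, I need $\cE_{c_{2d}} = 0$. Tracing the recursive construction of $c_k$ and iterating Lemma~\ref{lem:sep-size-incr}, the support of $c_k$ is contained in middle ribbons $\cR_m'$ whose separators satisfy $|S_\ell'| + |S_r'| \geq k$: each iteration demands a violation of condition~\ref{itm:disjoint}, and the lemma then forces the separating factorization to strictly increase the separator sum by at least $1$. On the other hand, $I$ and $J$ are themselves vertex separators of any $(I,J)$-ribbon, so every minimum separator has size at most $\min(|I|,|J|) \leq d$, and hence $|S_\ell'| + |S_r'| \leq 2d$ always.

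I expect the main obstacle to lie in the borderline case $k = 2d$, where the two bounds collide to force $|S_\ell'| = |S_r'| = d$, $|I|=|J|=d$, and (by uniqueness of the leftmost minimum-size separator, Lemma~\ref{lem:left-right-sep}) $S_\ell' = I$, $S_r' = J$. To finish the termination, I must rule out any triple $(\cR_\ell, \cR_m, \cR_r)$ satisfying \ref{itm:left}, \hyperref[itm:middle]{3*}, \ref{itm:right} together with the required \emph{violation} of \ref{itm:disjoint} at this extreme. The plan is to exploit condition~\ref{itm:left}: with $S_\ell = I$, every vertex of $\cR_\ell$ must lie in $I\cup S_\ell = I$ or be reachable from $I$ without passing through $I$, and there are no edges inside $S_\ell = I$; this collapses $\cR_\ell$ to the trivial ribbon on vertex set $I$, and symmetrically $\cR_r$ is trivial on $J$. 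But then $\cV(\cR_\ell) \cap \cV(\cR_m) = S_\ell$, $\cV(\cR_m) \cap \cV(\cR_r) = S_r$, and $\cV(\cR_\ell) \cap \cV(\cR_r) = I \cap J = S_\ell \cap S_r$, so condition~\ref{itm:disjoint} is automatically satisfied, contradicting its mandated violation. Thus $\cE_{c_{2d}} = 0$, and substituting into \eqref{eq:plan-ih} at $k=2d$ yields the claim with signs matching exactly the asserted alternating pattern.
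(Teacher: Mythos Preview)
Your proof is correct and follows the same inductive unrolling as the paper: both start from $\cM = \cL\cQ_0\cL^\top - \xi_0 - \cE_{c_0}$, substitute $\cE_{c_{k-1}} = \cL\cQ_k\cL^\top - \cE_{c_k} - \xi_k$ repeatedly, and finish by arguing that the residual $\cE_{c_{2d}}$ vanishes.

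The only real difference is in how termination is established. The paper argues that after $2d$ applications of Lemma~\ref{lem:sep-size-incr} the separator sum must have started at $0$ and grown by exactly one per step, then observes that from empty separators the first separating factorization already forces both $|S_\ell'|\geq 1$ and $|S_r'|\geq 1$, a jump of at least two---contradiction. You instead push the separator-sum lower bound all the way to $2d$, combine it with the ceiling $|S_\ell|,|S_r|\leq d$ to pin $S_\ell=I$, $S_r=J$, and then show directly from condition~\ref{itm:left} that $\cR_\ell$ (and symmetrically $\cR_r$) collapses to the edgeless ribbon on $I$, so that condition~\ref{itm:disjoint} is \emph{automatically} satisfied and the ``not \ref{itm:disjoint}'' sum defining $\cE_{c_{2d}}$ is empty. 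Your route is a bit more structural and avoids the unproved-but-plausible claim about the first step from empty separators; the paper's route is terser. Both are valid.
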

\begin{proof}
We have that $\cM = \cL(\cQ_0)\cL^\top - \cE_0 - \xi_{0}$ and $\cE_{i-1} = \cL \cQ_{i} \cL^\top - \cE_{i} - \xi_{c_{i-1}} = \cL \cQ_{i} \cL^\top - \cE_{i} - \xi_{i}$. We prove the claim by starting with the first formula and appliying the second formula for each $i \in [1,2d]$. At the end, we are left with an extra term $\cE_{2d}$. We must show that $\cE_{2d} = 0$.

To see why $\cE_{2d} = 0$, note that every time we have a separating factorization $\cR'_{\ell},\cR'_{m},\cR'_{r}$ for $\cR_{\ell},\cR_{m},\cR_{r}$, the size of either the left separator or the right separator must increase (see Lemma~\ref{lem:sep-size-incr}). However, the size of these separators is always at most $d$, so the only way we can do this for $2d$ steps is if we started with the empty set as the separators and increased the size of either the left or right separator by $1$ each time, but not both. However, this too is impossible as if we start with the empty set as the separators, after the first step both the new left separator and the new right separator must have size at least $1$.
\end{proof}
\section{$\cM$ is PSD} \label{sec:spectral-norms}
In this section we combine the factorization of $\cM$ in terms of the matrices $\cL, \cQ_i, \xi_i$ that we obtained in Section~\ref{sec:factorization} with estimates on the eigenvalues of the $\cQ$s and $\xi$s.
The starting point is the following PSDness claim for $\cQ_0$.
\begin{lemma}\torestate{
\label{lem:Q0psd}
  Let $D \in \R^{\nchoose{\leq d} \times \nchoose{\leq d}}$ be the diagonal matrix with $D(S,S) = 2^{{|S| \choose 2}}/4$ if $S$ is a clique in $G$ and $0$ otherwise.
  With high probability, $\cQ_0 \succeq D$.
  }
\end{lemma}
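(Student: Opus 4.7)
The plan is to extract a dominant diagonal mass from $\cQ_0$ using one particularly simple family of ribbons, and then control all remaining contributions as small-norm random matrices. Since the lemma immediately preceding \pref{lem:Q0psd} shows $\cQ_0 = \Pi \cQ_0 \Pi$, where $\Pi$ is the projector onto $\Span\{e_S : S \in \cC_{\leq d}(G)\}$, and $D$ is also supported on this subspace, it suffices to prove $\Pi(\cQ_0 - D)\Pi \succeq 0$.

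To expose the diagonal mass, I would isolate the contribution $\cQ_0^{\mathsf{triv}}$ coming from ribbons $\cR_m$ with $S_\ell = S_r = S$ and $\V(\cR_m) = S$. Every such ribbon satisfies condition~\ref{itm:middle}: when the ambient vertex set is $S$ itself, the only vertex separator of the $(S,S)$-pair is $S$, so $S_L(\cR_m) = S_R(\cR_m) = S$ automatically, and the degree condition on $\V(\cR_m) \setminus (S_\ell \cup S_r) = \emptyset$ is vacuous. The prefactor $(\omega/n)^{|\V(\cR_m)| - q}$ equals $1$, and the edge sets range freely over $T \subseteq \binom{S}{2}$, so by \pref{fact:Fourier-and},
\[
\cQ_0^{\mathsf{triv}}(S, S) \;=\; \sum_{T \subseteq \binom{S}{2}} \chi_T(G) \;=\; 2^{\binom{|S|}{2}} \cdot \1_{S \text{ is a clique in } G}.
\]
Hence $\Pi \cQ_0^{\mathsf{triv}} \Pi = 4D$, leaving a slack of $3D$ in which to absorb the remainder.

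Every residual ribbon contributing to $\cQ_0 - \cQ_0^{\mathsf{triv}}$ satisfies $|\V(\cR_m)| > q$: if $S_\ell = S_r$ then any extra vertex must lie outside $S_\ell \cup S_r$, and if $S_\ell \neq S_r$ then already $|S_\ell \cup S_r| > q$. Each such extra vertex contributes a prefactor of at most $\omega/n \leq n^{-1/2 - \epsilon}$. Grouping the residual by the shape of $(\cR_m, S_\ell, S_r)$ (an equivalence class under relabeling of non-boundary vertices) yields only $O_{d, \tau}(1)$ shapes, and for each shape $U$ the associated graphical matrix $\cQ_0^U$ can be controlled in spectral norm by the trace-method estimates of Medarametla--Potechin~\cite{MP16}, to be developed in \pref{sec:spectral-norms}. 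These bounds express $\|\cQ_0^U\|$ in terms of $|\V(U)|$, the minimum separator size, the number of vertex-disjoint paths between the two boundaries of $U$, and the $(\omega/n)^{|\V(U)| - q}$ prefactor; in our parameter regime this prefactor dominates the entropic factor contributed by the trace method, so $\|\cQ_0^U\| \ll 2^{\binom{|S|}{2}}/4$ with high probability. A union bound over the $O_{d, \tau}(1)$ shapes then yields $\Pi\bigl(\cQ_0 - \cQ_0^{\mathsf{triv}}\bigr)\Pi \succeq -3D$, and combining with $\Pi \cQ_0^{\mathsf{triv}} \Pi = 4D$ completes the proof.

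The main obstacle I anticipate is calibrating the graphical-matrix norm bounds finely enough that the sum over all $O_{d, \tau}(1)$ shapes fits inside the $3D$ slack. This requires exploiting not only the $(\omega/n)$ prefactor but also the combinatorial structure of each shape---separator size and number of vertex-disjoint paths between boundaries---and is precisely the technical heart of the paper; the bulk of \pref{sec:spectral-norms} will be devoted to establishing these estimates uniformly across all admissible shapes.
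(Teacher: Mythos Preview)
Your proposal is correct and follows essentially the same approach as the paper: extract the dominant diagonal contribution (the paper takes the full diagonal $\cQ_0^{\text{diag}}(S,S)$ rather than just the $\cV(\cR_m)=S$ terms, but this is cosmetic), then decompose the remainder by ribbon shape and bound each graphical matrix via \pref{lem:single-topology-matrix}. One caution: writing ``$O_{d,\tau}(1)$ shapes'' is misleading since $\tau$ grows with $n$ and the shape count is $2^{\binom{t}{2}+O(t)}$ for each size $t\le\tau$; the paper makes explicit that this is beaten by the $n^{-\epsilon(t-s)}$ decay from the $(\omega/n)^{t-s}$ prefactor combined with the $n^{(t-s)/2}$ norm bound (which uses crucially that condition~\ref{itm:middle} forces exactly $s$ vertex-disjoint paths from $S_\ell$ to $S_r$), and this tradeoff is the entire content of the ``calibration'' you flag as the main obstacle.
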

We also need to bound $\|\cQ_i\|$ for $i > 0$.
\begin{lemma}\torestate{
\label{lem:Qismall}
Let $D \in \R^{\nchoose{\leq d} \times \nchoose{\leq d}}$ be the diagonal matrix with $D(S,S) = 2^{{S \choose 2}}/4$ if $S$ is a clique and is otherwise zero.
With high probability, every $\cQ_i$ for $i \in [1,2d]$ satisfies
\[
  \frac {-D}{8d} \preceq \cQ_i \preceq \frac{D}{8d}\mper
\]
}
\end{lemma}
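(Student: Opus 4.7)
The plan is to decompose each $\cQ_i$ into a sum of "graphical matrices" indexed by the shape of the middle ribbon $\cR_m'$, bound each piece in spectral norm using the framework of Medarametla--Potechin \cite{MP16}, and then sum the contributions using careful estimates on the coefficient function $c_i$. Specifically, grouping the terms in the definition of $\cQ_i$ by the isomorphism class (shape) $U$ of $\cR_m'$ yields a decomposition $\cQ_i = \sum_U \cQ_i^U$, where $U$ is a (constant-size, since $\tau \le (\eps/C) \log n$) graph with distinguished "left" and "right" boundary vertex sets $A$ and $B$ corresponding to $S_\ell'$ and $S_r'$. Each $\cQ_i^U$ is (up to the scalar coefficient $c_i(\cR_m')$, which depends only on the shape $U$) exactly a graphical matrix of the type studied in \cite{MP16}, multiplied by the scalar $(\omega/n)^{|\V(U)| - (|A|+|B|)/2}$.

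Next, I would apply the spectral norm bound for graphical matrices: for shape $U$ with minimum vertex separator size $q(U)$ between $A$ and $B$, the spectral norm of $\cQ_i^U$ satisfies, with high probability, $\|\cQ_i^U\| \le \tilde O(1) \cdot |c_i|_\infty \cdot (\omega/n)^{|\V(U)| - (|A|+|B|)/2} \cdot n^{(|\V(U)| - q(U))/2}$, where $|c_i|_\infty$ is the worst-case magnitude of $c_i$ over shapes $U$. Combined with $\omega = n^{1/2 - \eps}$, one checks that for fixed $q$, each factor of "extra" vertex in $\V(U)$ beyond the minimum separator contributes a multiplicative loss of at most $n^{-\eps}$, so that after summing over the finitely many (for constant $d$) shapes, the geometric series in these parameters converges. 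The natural target to charge against is the block of $D$ indexed by cliques of size $q$, on which $D(S,S) = 2^{\binom{q}{2}}/4$; the factor $(\omega/n)^{-(|A|+|B|)/2}$ inside $\cQ_i^U$ is what allows this charging to succeed for all $q$ simultaneously.

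The main obstacle, and the bulk of the technical work, is controlling the coefficients $c_i$. Unwinding the recursion from Section~\ref{sec:factorization}, $c_i(\cR_m')$ is an $i$-fold iterated sum in which each step replaces a middle ribbon $\cR_m$ by the middle component of its separating factorization, paying a factor of $(\omega/n)^r$ for $r$ repeated vertices. My approach would be to bound $c_i(\cR_m')$ shape by shape by observing a \emph{tradeoff}: any pre-image triple $(\cR_\ell, \cR_m, \cR_r)$ of $\cR_m'$ under the separating factorization can only create an "improvement" in one combinatorial parameter (say, a smaller separator $S_\ell$ at an earlier step) at the cost of $r \ge 1$ repeated vertices (by Lemma~\ref{lem:sep-size-incr} and the accounting in Claim~\ref{clm:vertex-factorization}), each of which costs a factor $\omega/n \le n^{-1/2}$ that dominates the combinatorial entropy of choosing the intersection pattern. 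Encoding this tradeoff inductively on $i$ then gives a uniform bound of the form $|c_i(\cR_m')| \le (Cd/\eps)^{O(i)} \cdot n^{-\eps i / 2}$, or an analogous estimate with enough slack in $n^{-\eps}$ to absorb all combinatorial overhead.

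Putting these together, $\|\cQ_i\| \le \sum_U \|\cQ_i^U\|$, which I expect to bound by $n^{-\Omega(\eps)} \cdot \|D\|$ after summing the convergent geometric series over shapes $U$ and using the coefficient estimate. Since $i$ ranges over $[1, 2d]$ and $d \le (\eps/C)^2 \log n$, the slack $n^{-\Omega(\eps)}$ easily beats the required $1/(8d)$, so a union bound over the (polynomially many) high-probability events, one per shape $U$ and index $i$, completes the proof. The genuinely delicate step is the coefficient bound on $c_i$, because naively expanding the recursion produces a sum with many terms and one must show that the combinatorial count of pre-image configurations does not overwhelm the $(\omega/n)^r$ decay; this is where the shape-dependent decomposition interacts most tightly with the recursive structure from the factorization in Section~\ref{sec:factorization}.
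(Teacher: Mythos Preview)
Your high-level strategy---decompose $\cQ_i$ by shape, invoke the graphical-matrix spectral bound, and control the coefficients $c_i$ inductively---matches the paper's. The gap is in the coefficient bound itself: a \emph{uniform} estimate of the form $|c_i(\cR_m')|\le (Cd/\eps)^{O(i)} n^{-\eps i/2}$ does not suffice, and the hedge ``or an analogous estimate'' hides the main difficulty.

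Here is why. After one step of the recursion, the middle ribbon $\cR_m'$ is an $(S_\ell',S_r')$-ribbon that satisfies only the relaxed condition~\hyperref[itm:middle-sep]{3*}, not condition~\ref{itm:middle}. In particular the number $p'$ of vertex-disjoint paths from $S_\ell'$ to $S_r'$ in $\cR_m'$ can be strictly smaller than $s'=\tfrac{|S_\ell'|+|S_r'|}{2}$, and $\cR_m'$ can have isolated vertices $\cZ(\cR_m')\ne\emptyset$. Plugging into the spectral bound and $\omega=n^{1/2-\eps}$, one finds
\[
\Paren{\frac{\omega}{n}}^{t-s'}\cdot n^{(t-p')/2}
\;=\; n^{-\eps(t-s')}\cdot n^{(s'-p')/2},
\]
so each shape with $s'>p'$ contributes a factor $n^{(s'-p')/2}$ that a shape-independent bound on $c_i$ cannot absorb (and $s'-p'$ can be as large as $d$). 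The paper instead proves the shape-dependent bound (Lemma~\ref{lem:ci-bound})
\[
c_i(\cR_m')\ \le\ \Paren{\frac{\omega}{n}}^{s'}\, n^{\frac{p'-|\cZ(\cR_m')|-i/2}{2}+\eps s'}
\ =\ n^{-(s'-p')/2 - |\cZ(\cR_m')|/2 - i/4},
\]
which cancels $n^{(s'-p')/2}$ exactly and also accounts for isolated vertices; the latter you do not track at all, yet must, since $\cR_m'$ is improper and the sum over improper extensions of a fixed proper ribbon costs up to $n^{|\cZ|}$ combinatorial choices.

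The engine behind that bound is not Lemma~\ref{lem:sep-size-incr} (separator size increases by at least $1/2$), but a four-way tradeoff (Lemma~\ref{lem:graph-tradeoff}) showing that
\[
\underbrace{(|S_\ell'|+|S_r'|)-(|S_\ell|+|S_r|)}_{\text{separator increase}}
\;+\;\underbrace{(p-p')}_{\text{lost paths}}
\;+\;\underbrace{|\cZ(\cR_m')\setminus\cZ(\cR_m)|}_{\text{new isolated vertices}}
\ \le\ r,
\]
where $r$ is the number of intersections (each costing $(\omega/n)$ in the recursion for $c_i$). This is exactly what makes the inductive hypothesis close: every unit of ``badness'' in the shape of $\cR_m'$ (larger $s'$, smaller $p'$, or more isolated vertices) is paid for by at least one intersection. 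Your proposal references a tradeoff only between separator size and $r$, which is one of three needed terms; without controlling $p-p'$ and the isolated vertices simultaneously, the induction on $c_i$ does not go through.
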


The preceding lemmas are enough to obtain $\cQ_0 - \ldots + \cQ_{2d} \succeq D/2$, but in the end we need to work with the matrix $\cL (\cQ_0 - \ldots + \cQ_{2d}) \cL^\top - (\xi_0 - \ldots + \xi_{2d})$.
The next two lemmas allow us to make this last step.

\begin{lemma}\torestate{\label{lem:Lbound}
  With high probability, $\Pi \cL \Pi \cL^\top \Pi \succeq \Omega (\omega/n)^{d+1} \cdot \Pi$, where as usual $\Pi$ is the projector to $\Span \{e_C \, : \, C \in \cC_{\leq d}\}$.
  }
\end{lemma}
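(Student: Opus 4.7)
The plan is to decompose $\cL = \cL_0 + \cL_1$, where $\cL_0 = \E_G \cL$ is deterministic and $\cL_1$ is the mean-zero random fluctuation; establish a lower bound for $\cL_0 \cL_0^\top$ on the clique subspace via a tensor-product structure; and absorb $\cL_1$ using the graphical matrix norm machinery developed elsewhere in the paper.

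First, I would compute $\cL_0$. Only ribbons with empty edge set contribute to the expectation (since $\E_G \chi_W = 0$ unless $W = \emptyset$). For an empty-edge $(I,S)$-ribbon to satisfy condition \ref{itm:left}, which demands that $S$ be both the leftmost and rightmost minimum separator, one must have $S \subseteq I$: in the edge-free setting the only minimum separator is $I \cap S$, and condition \ref{itm:left} also forbids stray isolated vertices. Hence
\[
\cL_0(I, S) = \left(\tfrac{\omega}{n}\right)^{|I| - |S|/2} \mathbf{1}[S \subseteq I].
\]
Identifying subsets with their indicator vectors in $\{0,1\}^n$, this factorizes as $\cL_0 = M^{\otimes n}$ with $M = \bigl(\begin{smallmatrix} 1 & 0 \\ \omega/n & (\omega/n)^{1/2} \end{smallmatrix}\bigr)$.

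Second, I would lower bound $\cL_0 \cL_0^\top$ on the clique subspace via the tensor structure of $\cL_0^\top \cL_0 = (M^\top M)^{\otimes n}$. Direct computation shows that the $2 \times 2$ matrix $M^\top M$ has eigenvalues $\lambda_+ = 1 + O((\omega/n)^2)$ and $\lambda_- = \omega/n + O((\omega/n)^3)$, with eigenvectors $\xi_+, \xi_-$ differing from $e_0, e_1$ by only $O((\omega/n)^{3/2})$. Therefore $(M^\top M)^{\otimes n}$ has eigenvalues $\lambda_+^{n-|B|}\lambda_-^{|B|} \approx (\omega/n)^{|B|}$ with orthonormal eigenvectors $\hat\xi_B$ satisfying $|\langle e_A, \hat\xi_B\rangle| = O((\omega/n)^{3|A \triangle B|/2})$. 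For $|B| \leq d$, the mass of $\hat\xi_B$ lying outside the size-$\leq d$ subspace is at most $\sum_{k \geq 1}\binom{n}{k}(\omega/n)^{3k} = O(n \cdot (\omega/n)^3) = O(n^{-1/2 - 3\epsilon}) = o(1)$; hence the $\{P\hat\xi_B\}_{|B|\leq d}$ form a near-orthonormal basis of the size-$\leq d$ subspace (where $P$ denotes its orthogonal projection), and the smallest eigenvalue of $P (M^\top M)^{\otimes n} P$ is $(1-o(1)) \lambda_+^{n-d} \lambda_-^d = \Omega((\omega/n)^d)$. Restricting further from size-$\leq d$ to the clique subspace $\cC_{\leq d}$ only raises the smallest eigenvalue by the variational characterization; moreover, because $\cL_0(I, S) = 0$ unless $S \subseteq I$ and any subset of a clique is a clique, one verifies $\Pi \cL_0 \Pi \cL_0^\top \Pi = \Pi \cL_0 \cL_0^\top \Pi$, so $\Pi \cL_0 \Pi \cL_0^\top \Pi \succeq \Omega((\omega/n)^d) \Pi$.

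Third, I would bound the fluctuation. The matrix $\cL_1 = \cL - \cL_0$ decomposes into graphical matrices indexed by shapes of non-empty-edge ribbons satisfying condition \ref{itm:left}; each is controlled by the same graphical matrix norm machinery used to bound $\|\cQ_i\|$ for $i \geq 1$ in Lemma \ref{lem:Qismall}, yielding $\|\cL_1\| \leq (\omega/n)^{d/2}/\text{polylog}(n)$ with high probability. Letting $A = \Pi \cL_0 \Pi$ and $B = \Pi \cL_1 \Pi$, the matrix inequality $(A + B)(A + B)^\top \succeq \tfrac{1}{2} A A^\top - B B^\top$ gives
\[
\Pi \cL \Pi \cL^\top \Pi = (A+B)(A+B)^\top \succeq \tfrac{1}{2} \Pi \cL_0 \cL_0^\top \Pi - \|\cL_1\|^2 \Pi \succeq \Omega((\omega/n)^d)\Pi,
\]
which implies the claimed $\Omega((\omega/n)^{d+1}) \Pi$ bound with room to spare.

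The main obstacle is the quantitative tensor perturbation argument in the second step: one must uniformly control how the $O((\omega/n)^{3/2})$ corrections to the $2 \times 2$ eigenvectors compound across the $n$-fold tensor product and interact with the projection onto the size-$\leq d$ subspace. In particular, one must verify that the approximate eigenvectors $\{P\hat\xi_B : |B|\leq d\}$ genuinely span the restricted subspace without introducing spurious small eigenvalues, and that the "extra" contributions from tensor eigenvectors with $|B| > d$ (whose $P$-projections are nonzero but small in norm) do not perturb the lower bound. The small parameter $n(\omega/n)^3 = n^{-1/2 - 3\epsilon}$, which is $o(1)$ in our parameter regime, is what makes this perturbation analysis go through.
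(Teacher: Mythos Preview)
Your steps 1 and 2 are essentially correct, but the real obstacle is step 3, not the tensor perturbation in step 2: the bound $\|\cL_1\| \leq (\omega/n)^{d/2}/\polylog(n)$ is simply false.

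Look at the column $S = \emptyset$ of $\Pi\cL_1\Pi$, restricted to rows of size $2$. For a pair $\{i,j\}$, the ribbon with vertex set $\{i,j\}$ and the single edge $(i,j)$ satisfies condition~\ref{itm:left} for $(I,S)=(\{i,j\},\emptyset)$ and has nonempty edge set, so it contributes $(\omega/n)^2 G_{i,j}$ to $\cL_1(\{i,j\},\emptyset)$. By orthogonality of Fourier characters this alone gives $\E\|\cL_1 e_\emptyset\|^2 \geq \binom{n}{2}(\omega/n)^4 \approx n^{-4\epsilon}$, and (after projecting to cliques, which only costs a constant) $\|\Pi\cL_1\Pi\| \gtrsim n^{-2\epsilon}$ with high probability. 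For any $d\geq 2$ and $\epsilon<1/2$ one has $n^{-4\epsilon} \gg (\omega/n)^d$, so the inequality $(A+B)(A+B)^\top \succeq \tfrac{1}{2}AA^\top - \|B\|^2\Pi$ is vacuous here.

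The graphical-matrix machinery does control $\cL_1$, but only \emph{blockwise}: the $(s_\ell,s_r)$ block has norm $\lesssim (\omega/n)^{s_r/2}$, i.e.\ small relative to the diagonal magnitude $(\omega/n)^{s_r/2}$ in that block, not small in absolute terms. Since the eigenvalues of $\cL_0\cL_0^\top$ (equivalently of $D^2$) range all the way from $1$ down to $(\omega/n)^d$, no single scalar bound on the fluctuation can be compared against the smallest of them. The same issue would arise if you invoked ``the machinery of Lemma~\ref{lem:Qismall}'' literally: that lemma outputs a bound of the form $-D'\preceq\cQ_i\preceq D'$ for a \emph{diagonal} matrix $D'$, not a scalar norm bound.

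The paper's proof is built around exactly this relative smallness. It writes $\cL = D + F$ with $D(S,S)=(\omega/n)^{|S|/2}$ the deterministic diagonal and shows each $(s_\ell,s_r)$ block of $F$ has norm at most $(\omega/n)^{s_r/2}/d^4$. Then every block of $FD^{-1}$ is bounded by $1/d^4$, so $\|FD^{-1}\|<1/2$; on the clique subspace $D+F=(I+FD^{-1})D$ is invertible with $\|(D+F)^{-1}\|\leq 2(\omega/n)^{-d/2}$, which gives $\Pi\cL\Pi\cL^\top\Pi\succeq\tfrac14(\omega/n)^d\,\Pi$. Your tensor computation of $\cL_0\cL_0^\top$ recovers the same $(\omega/n)^d$ lower bound that $D^2$ gives directly, but it does not address---and cannot replace---this blockwise comparison.
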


Finally, we need a bound on the $\xi$ matrices.
\begin{lemma}\torestate{\label{lem:xi-bound}
  With high probability, $\|\xi_0 - \ldots + \xi_{2d} \| \leq n^{-16d}$.
  }
\end{lemma}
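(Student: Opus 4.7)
The plan is to bound each $\|\xi_i\|$ separately for $i \in \{0,1,\ldots,2d\}$ and combine by the triangle inequality:
\[
  \|\xi_0 - \xi_1 + \cdots + \xi_{2d}\| \;\leq\; \sum_{i=0}^{2d} \|\xi_i\|,
\]
so it suffices to prove $\|\xi_i\| \leq n^{-17d}$ with high probability for each $i$. Recall by definition $\xi_i = \cE'_{c_{i-1}} - \cE_{c_{i-1}}$, so every Fourier term $\chi_{\cR_\ell}\chi_{\cR_m}\chi_{\cR_r}$ contributing to $\xi_i(I,J)$ satisfies exactly one of the two size bounds: either the original triple has $|\V(\cR_\ell)|,|\V(\cR_m)|,|\V(\cR_r)| \leq \tau$ but the separating factorization has $\V(\cR'_\ell)\cup\V(\cR'_m)\cup\V(\cR'_r)$ with some $|\V(\cR'_\ast)| > \tau$, or vice versa. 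In either case, the total vertex count $|\V(\cR)| = |\V(\cR_\ell)\cup\V(\cR_m)\cup\V(\cR_r)|$ of the combined $(I,J)$-ribbon satisfies $|\V(\cR)| > \tau$.

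The first step is to group terms in $\xi_i(I,J)$ by the \emph{shape} (unlabeled graph with distinguished left/right endpoints) of the combined ribbon $\cR$, together with the extra combinatorial data indicating which vertices are shared between $\cR_\ell, \cR_m, \cR_r$ and the applicable separator sizes. This expresses $\xi_i = \sum_U \alpha_U \cdot M_U$ as a sum of graphical matrices $M_U$ in the sense of Definition~\ref{def:graphical-matrix}, where each coefficient $\alpha_U$ carries a factor $(\omega/n)^{|\V(U)| - (|S_\ell|+|S_r|)/2}$ and $|\V(U)| > \tau$. The number of shapes $U$ that appear is at most $2^{O(\tau^2)}$ since $|\V(U)| \leq \tau + O(d)$, which is $\leq n^{o(1)}$ for our parameter regime $\tau \leq (\epsilon/C)\log n$.

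The second step is to bound $\|M_U\|$ using the spectral-norm estimates for graphical matrices (the Medarametla--Potechin style bounds referred to in Section~\ref{sec:overview-positivity}): for any such $U$, with high probability $\|M_U\| \leq n^{|\V(U)|/2} \cdot \mathrm{polylog}(n)$. Combining with the coefficient and using $|S_\ell|,|S_r| \leq 2d$,
\[
  |\alpha_U| \cdot \|M_U\| \;\leq\; \Paren{\frac{\omega}{n}}^{|\V(U)| - 2d} \cdot n^{|\V(U)|/2} \cdot \mathrm{polylog}(n).
\]
Substituting $\omega = n^{1/2-\epsilon}$ gives $(\omega/n)^{|\V(U)|} \cdot n^{|\V(U)|/2} = n^{-\epsilon|\V(U)|}$, so the bound becomes $n^{-\epsilon|\V(U)| + O(d)} \cdot \mathrm{polylog}(n) \leq n^{-\epsilon\tau + O(d)}$ since $|\V(U)| > \tau$. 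Taking a union bound over the $n^{o(1)}$ shapes and summing, $\|\xi_i\| \leq n^{-\epsilon\tau + O(d) + o(1)}$, which is at most $n^{-17d}$ provided the constant in $\tau \geq Cd/\epsilon$ is chosen large enough (consistent with the parameter setting in Section~\ref{sec:def-intro}).

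The main obstacle is getting the correct spectral-norm estimate for the graphical matrices appearing in $\xi_i$: the combinatorics is slightly subtler than for $\cM$ itself because the ``shape'' must also encode the pattern of vertex collisions between $\cR_\ell,\cR_m,\cR_r$ that caused condition \ref{itm:disjoint} to fail. However, this extra structure only \emph{decreases} the effective vertex count (hence the spectral norm), and the dominant gain comes from the large power $(\omega/n)^{|\V(U)|}$ forced by the size-truncation defect $|\V(U)|>\tau$. Thus the truncation threshold $\tau$ can be chosen large enough so that all $\xi_i$ are negligible on the scale we need, yielding the lemma.
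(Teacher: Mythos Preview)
Your approach differs from the paper's. Rather than decomposing $\xi_i$ into graphical matrices and invoking the spectral bound of Lemma~\ref{lem:single-topology-matrix}, the paper argues entry by entry: it shows $|\xi_i(I,J)| \leq n^{-20d}$ with high probability via a scalar concentration bound (essentially Lemma~\ref{lem:sum-parities} applied to the Fourier expansion of a single entry), takes a union bound over the at most $n^{2d}$ entries, and then passes to the Frobenius norm. This is simpler and completely sidesteps the ``main obstacle'' you flag about encoding collision patterns in the shape.

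Your outline is in the right spirit but contains quantitative errors that keep the arithmetic from closing as written. First, the assertion $2^{O(\tau^2)} \leq n^{o(1)}$ is false: with $\tau = \Theta(\epsilon \log n)$ one has $2^{O(\tau^2)} = n^{O(\epsilon^2 \log n)}$, which in the final parameter setting is $n^{\Theta(d)}$, not $n^{o(1)}$. Second, the ``$\mathrm{polylog}(n)$'' in your norm bound is actually $2^{O(t)}(\log n)^{O(t)}$ with $t = |\V(U)| = \Theta(\tau)$. Neither error is fatal by itself---both factors are of size $n^{O(\epsilon\tau/C)}$ under the hypothesis $\tau \leq (\epsilon/C)\log n$ and so can be absorbed into the main gain $n^{-\epsilon\tau}$ for large enough $C$---but you should state this, not claim they are negligible outright. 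More substantively, the combined $(I,J)$-ribbon can have degree-$0$ vertices outside $I \cup J$ (coming from $\cZ(\cR_m)$ in the improper middle ribbon, for $i \geq 2$), and Lemma~\ref{lem:single-topology-matrix} explicitly forbids such shapes; each isolated vertex contributes a full factor $n$ to the norm rather than $\sqrt n$, breaking your estimate $(\omega/n)^{|\V(U)|} \cdot n^{|\V(U)|/2} = n^{-\epsilon|\V(U)|}$. To repair this you would need the fine coefficient bound of Lemma~\ref{lem:ci-bound}, specifically its $n^{-|\cZ(\cR_m)|/2}$ factor, rather than the crude $c_{i-1} \leq (\omega/n)^{-O(d)}$ your calculation implicitly uses. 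The paper's entry-wise approach avoids this issue entirely, since the scalar bound depends only on the edge set $T$ and isolated vertices simply get folded into the count of contributing triples.
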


We can now prove Lemma~\ref{lem:PSD-main2}.
\begin{proof}[Proof of Lemma~\ref{lem:PSD-main2}]
  By Claim~\ref{clm:factorization},
  \[
    \cM = \cL(\cQ_0 - \cQ_1 + \cQ_2 - \ldots - \cQ_{2d-1} + \cQ_{2d})\cL^\top - (\xi_{0} - \xi_{1} + \xi_{2} - \ldots - \xi_{2d-1} + \xi_{2d})\mper
  \]
  By a union bound, with high probability the conclusions of Lemmas~\ref{lem:Q0psd},~\ref{lem:Qismall},~\ref{lem:Lbound}, and~\ref{lem:xi-bound} all hold.
  By Lemma~\ref{lem:Q0psd} and Lemma~\ref{lem:Qismall},
  \[
     \cQ_0 - \cQ_1 + \cQ_2 - \ldots - \cQ_{2d-1} + \cQ_{2d} \succeq \frac D 2 \succeq \frac \Pi 2\mper
  \]
  where as usual $\Pi$ is the projector to $\Span{e_C \, : \, C \in \cC_{\leq d}}$.
  Thus by Lemma~\ref{lem:Lbound}, we obtain $\cL (\cQ_0 - \ldots + \cQ_{2d}) \cL^\top \succeq \Omega(\omega/n)^{d+1} \cdot \Pi$.
  Finally, by Lemma~\ref{lem:xi-bound} we have
  \[
    \cM = \Pi \cdot \cM \cdot \Pi \succeq \Omega \Paren{\frac \omega n}^{d + 1} \cdot \Pi + n^{-16d} \cdot \Pi \succeq 0\mper\qedhere
  \]
\end{proof}

In the next subsections, we prove the foregoing lemmas.

\subsection{Ribbons and Spectral Norms}
Our PSDness arguments require bounds on the spectral norm of certain random matrices.
 
%  Thus for any $I$,$J$, the entry $M(I,J)$ can be expanded as a Fourier polynomial in variables $\g_e$. 

Our random matrices arise out of decompositions of the moment matrix from Definition \ref{def:moment-matrix} and are functions of a graph $G$ on vertex set $[n]$. Our norm bounds will hold for what we call as \emph{graphical matrices}, that are are defined to capture the matrices are invariant under permutation of vertices in the graph $G$ and are in fact "minimal" such matrices. 

We first identify the \emph{shape} of a ribbon that basically identifies the structure of a ribbon up to renaming.
\newcommand{\shape}{\mathsf{shape}}
\begin{definition}[Shape of a Ribbon]
For an $(I,J)$-ribbon $\cR$, consider the graph $U$ on the vertex set $[|\cV(\cR)|]$ whose edges are
\[
  E(U) = \{(i,j) \, : \, \text{ there is an edge in $\cR$ from the $i$-th to the $j$-th least element of $\cV(\cR)$}  \}\mper
\]
(Here we are considering $\cV(\cR)$ to have the usual ordering inherited from $[n]$.)
Also, let $U$ have two distinguished subsets of vertices $A$ and $B$, where $A = \{ i \, : \, \text{the $i$-th element of $\cV(\cR)$ is in $I$} \}$, and similarly for $B$ and $J$.
We call $U$ the \emph{shape} of $\cR$ and write $\shape(\cR) = U$.
\end{definition}

We record some observations on shapes of ribbons.
\begin{itemize}
  \item If $\cR$ is a ribbon (not an improper ribbon), its shape satisfies the assumptions of Lemma~\ref{lem:single-topology-matrix} (namely, that every vertex outside $A \cup B$ has degree at least $1$).
  \item If, for example, $\cR$ is an $(I,J)$ ribbon where $I \cap J = \{1\}$ (which must be the least element in both $I$ and $J$), then $(I',J')$-ribbon $\cR'$ only has the same shape as $\cR$ if $|I' \cap J'| = 1$ and contains only the least element in $I$ and $J$.
  More broadly, specifying the shape of a ribbon in particular specifies the pattern of intersection of its endpoints.
  \item A matrix $M \in \R^{{n \choose \leq d} \times {n \choose \leq d}}$ whose entries are given by $M(I,J) = \sum_{\cR \text{ an $(I,J)$-ribbon with shape $U$}} \chi_{\cR}$ satisfies the assumptions of Lemma~\ref{lem:single-topology-matrix}.
  In the following sections, our main strategy will be to decompose the matrices $\cQ_i$ into matrices of this form.
\end{itemize}

We are now ready to define graphical matrices.
\begin{definition}[Graphical Matrices] \label{def:graphical-matrix}
Let $U$ be a graph on the vertex set $[t]$ with two distinguished sets of vertices $A,B \subseteq [t]$. Let $\cal{T}(U)$ be the collection of all $I,J$ ribbons with shape $U$. The graphical matrix $M \in \R^{\nchoose{|A|} \times \nchoose{|B|}}$ of shape $U$ is defined by $$M(I,J) = \sum_{ \cR: \cR \text{ is an $(I,J)$-ribbon and } \shape(\cR) = U} \chi_{\cR}.$$
\end{definition}
\begin{example} \label{ex:adjacency}
When $U$ is a graph on $2$ vertices with distinguished sets $\{1\}$ and $\{2\}$ of size $1$ each and a single edge connecting vertex $1$ and $2$, the graphical matrix of shape $U$ is just the standard $\on$-adjacency matrix of the graph $G$. 
\end{example}
The following lemma will be our main tool.
It is in essence due to Medarametla and Potechin \cite{MP16} and special cases of the bound have been proven and used in \cite{HKPRS16, HKP15, DM15}.
We give a proof in the appendix for completeness.

\begin{lemma}\torestate{
  \label{lem:single-topology-matrix}
  Let $U$ be a graph on $t \leq O(\log n)$ vertices, with two distinguished subsets of vertices $A$ and $B$, and suppose:
  \begin{itemize}
    \item $U$ admits $p$ vertex-disjoint paths from $A \setminus B$ to $B \setminus A$.
    \item $|A \cap B| = r$.
    \item Every vertex outside $A \cup B$ has degree at least $1$.
  \end{itemize}
%   Let $G \sim \cG(n,1/2)$.
%   Call an injective $\sigma : U \rightarrow G$ which has $\sigma(i) < \sigma(j)$ whenever $i < j$ for $i, j \in [t]$ a labelling (of vertices in $U$ with vertices in $G$).
%   For a labeling $\sigma$, let $\val(\sigma) = \prod_{(i,j) \in U} \g_{\sigma(i),\sigma(j)}$, recalling that $\g_e$ is the $\pm 1$ indicator for the presence of $e \in G$.
Let $M = M(G)$ be the graphical matrix with shape $U$. Then, whp, 
 $\|M\| \leq n^{\tfrac{t - p - r}{2}} \cdot 2^{O(t)} \cdot (\log n)^{O(t -r + p)}$.
  }
\end{lemma}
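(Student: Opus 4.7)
The plan is to bound $\|M\|$ by the trace/moment method. For an even integer $q \approx \log n$ chosen at the end, I will estimate $\E\,\Tr\bigl((MM^\top)^q\bigr)$ and use Markov's inequality on $\|M\|^{2q} \leq \Tr((MM^\top)^q)$ to conclude the high-probability bound on $\|M\|$.

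Expanding the trace gives
\[
\Tr((MM^\top)^q) \;=\; \sum_{I_1,\ldots,I_q,\, J_1,\ldots,J_q} M(I_1,J_1)M(I_2,J_1)M(I_2,J_2)\cdots M(I_1,J_q),
\]
and substituting the definition of the graphical matrix turns each $M(I_k,J_k)$ into a sum over ribbons with shape $U$ and ends $(I_k,J_k)$. Thus $\Tr((MM^\top)^q)$ is a sum over tuples $(\cR_1,\ldots,\cR_{2q})$ of shape-$U$ ribbons whose endpoint sequence alternates $I_1{\to}J_1, I_2{\to}J_1, I_2{\to}J_2,\ldots$ in a closed walk. Such a tuple corresponds to a sequence of injections $\phi_k:[t]\to[n]$ with $\phi_k|_A$ and $\phi_k|_B$ agreeing on the appropriate neighbors in the walk. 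Taking expectation, $\E[\chi_{\cR_1}\cdots\chi_{\cR_{2q}}] \in \{0,1\}$, and equals $1$ exactly when every edge in $\binom{[n]}{2}$ appears an even number of times across all $\cR_k$'s. Call such a tuple \emph{even}.

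The main work is to count even tuples. Fix an even tuple and let $v$ be the total number of distinct vertices of $[n]$ used by $\phi_1,\ldots,\phi_{2q}$; then the number of edge-labels used is at most $v \cdot 2^{O(t)}\cdot (\text{something in } q)$ from the structure of $U$. I will build up the tuple by processing the $\cR_k$'s in walk order. For each new ribbon, the shared endpoints (either the $A$-side or the $B$-side, matching the walk) are already determined by the preceding ribbon; what is free are the ``internal'' vertices of $\cR_k$, i.e.\ those in $[t]\setminus(A\cup B)$, plus the $t-|A\cup B|$ new vertices at the other end. The evenness constraint forces that a large fraction of these internal vertices must coincide with vertices introduced by an earlier ribbon, because otherwise the edges incident to them cannot be doubled up. A careful charging argument (identical in spirit to the non-backtracking walk counting for Wigner matrices, but organized around the shape $U$) yields that $v \leq r + q\cdot(t-r) - q\cdot p$ plus an $O(t)$ slack, the $-qp$ saving coming from the next paragraph.

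The key place where the hypothesis on $p$ vertex-disjoint paths enters is the per-step accounting. In each step of the walk, the ribbon $\cR_k$ must be ``glued'' to the previous one along either its $A$-image or its $B$-image, which has size $|A|$ or $|B|$. By Menger's theorem, the $p$ vertex-disjoint paths from $A\setminus B$ to $B\setminus A$ mean every vertex separator between these sets has size at least $p$; in particular the minimum number of internal vertices of $U$ that any vertex-cover of the $A$-$B$ connecting structure must touch is at least $p$. Consequently, for the edges of $\cR_k$ incident to newly introduced vertices to be paired up by some later ribbon, those later ribbons must ``return'' through at least $p$ matching vertices, giving a saving of $p$ free labels per step (up to $O(t)$ boundary terms). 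Together with the standard bound $v$-many free labels contribute $n^v$ to the count, one gets
\[
\E\,\Tr((MM^\top)^q) \;\leq\; \binom{n}{r}\cdot\bigl(n^{\,t-p-r}\cdot 2^{O(t)}\cdot(\log n)^{O(t-r+p)}\bigr)^{q}\cdot \text{poly}(q,t)^{q},
\]
where the $\binom{n}{r}\leq n^r$ factor accounts for choosing the common $A\cap B$ image, and the $(\log n)^{O(t-r+p)}$ absorbs the polynomial-in-$q$ factor after setting $q = \Theta(\log n)$. Taking $2q$-th roots, using Markov, and a union bound over the at most $n^{-10\log n}$ failure probability yields $\|M\| \leq n^{(t-p-r)/2}\cdot 2^{O(t)}\cdot(\log n)^{O(t-r+p)}$ with high probability.

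The main obstacle I expect is the combinatorial bookkeeping in the charging step: justifying the $-qp$ saving rigorously requires designating, for each new vertex introduced by $\cR_k$, a ``return ribbon'' later in the walk that doubles up its edges, and showing that these return ribbons collectively ``cost'' at least $p$ missing vertex-labels per $q$-block via the vertex-separator hypothesis. This is exactly the argument carried out by Medarametla and Potechin~\cite{MP16}, and I would adapt their encoding of walks by a sequence of ``fresh'', ``repeat'', and ``glue'' moves, proving the per-move entropy bound from the structure of $U$.
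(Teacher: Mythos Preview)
Your high-level strategy matches the paper's: trace-power method, expand $\E\Tr(M^\top M)^\ell$ as a sum over $2\ell$-tuples of shape-$U$ ribbons with matching endpoint sequence, and bound the number of tuples with nonzero expectation by bounding the number of distinct vertex labels such a tuple can use. The target label count $r + \ell(t-p-r) + O(t)$ and the final arithmetic are also correct.

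Where you diverge from the paper is in how the $-\ell p$ saving is obtained, and here your outline has a real gap. Your description (``later ribbons must return through at least $p$ matching vertices, giving a saving of $p$ free labels per step'') is too vague to be a proof and does not match the actual mechanism. The difficulty is that an edge lying on one of the $p$ vertex-disjoint paths in some $\cR_k$ could in principle be cancelled by a \emph{non}-path edge of some other $\cR_{k'}$ whose endpoints happen to receive the same labels; if so, the path-edge doubling you need for the connected-component count fails. The paper handles this with a preliminary \emph{dependence-breaking} step you omit: it fixes a random bipartition $[n]=V_1\sqcup V_2$ and restricts to labelings sending path vertices into $V_1$ and non-path vertices into $V_2$ (covering all labelings by $m=2^{O(t)}\log n$ such partitions and writing $M=\sum_k M_k$). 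Under this restriction, path edges carry label-pairs in $V_1\times V_1$ and can only be doubled by other path edges. One then forms the multigraph of all $2\ell\sum_j|q_j|$ path edges across the tuple; evenness collapses it to at most $\ell\sum_j|q_j|$ edges in $p$ connected components, hence at most $\ell\sum_j|q_j|+p$ distinct path-vertex labels. This global count---not a per-step charging---is where the $-\ell p$ comes from. You should either incorporate this partition trick or give a precise substitute; as written, the Menger-based ``per-step return'' heuristic does not control cross-cancellation between path and non-path edges.
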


\begin{remark}
Lemma \ref{lem:single-topology-matrix} can be seen as a generalization of the standard upper bound on the spectral norm of the adjacency matrix. Example \ref{ex:adjacency} shows how adjacency matrix is a graphical matrix with a shape $U$ on $2$ vertices with a single edge connecting them, thus, $t = 2$ and $r = 1$. Lemma \ref{lem:single-topology-matrix} thus shows an upper bound of $\sqrt{n} \poly \log{(n)}$ on the spectral norm of the adjacency matrix which is tight up to a $\poly \log{(n)}$ factor.
\end{remark}

\subsection{PSDness for $\cQ_0$---Proof of Lemma~\ref{lem:Q0psd}}
In this section we prove Lemma~\ref{lem:Q0psd}, which we restate here.
\restatelemma{lem:Q0psd}
\begin{proof}[Proof of Lemma~\ref{lem:Q0psd}]
  To begin, we split $\cQ_0$ into its diagonal $\cQ_0^{\text{diag}}$ and its off-diagonal $\cQ_0^{\text{off-diag}}$ parts.
  \[
    \cQ_0^{\text{diag}}(S_\ell, S_r) = \begin{cases} \cQ_0(S_\ell, S_r) \text{ if $S_\ell = S_r$} \\ 0 \text{ otherwise.}\end{cases} \qquad
     \cQ_0^{\text{off-diag}}(S_\ell, S_r) = \begin{cases} \cQ_0(S_\ell, S_r) \text{ if $S_\ell \neq S_r$} \\ 0 \text{ otherwise.}\end{cases}
  \]
  Then $\cQ_0 = \cQ_0^{\text{diag}} + \cQ_0^{\text{off-diag}}$.
  Expanding $\cQ_0^{\text{diag}}$,
  \[
    \cQ_0^{\text{diag}}(S,S) = 2^{{|S| \choose 2}} \cdot \1_{S\text{ is a clique}} \cdot \left (1 + \sum_{\substack{\cR \text{ nonempty, having \ref{itm:middle}}\\\text{and no edges inside $S$}\\ |S| < |\cR| \leq \tau}} \Paren{\frac \omega n}^{|\cV(\cR)| - |S|} \cdot \chi_{\cR} \right ) = 2^{{|S| \choose 2}} \cdot \1_{S\text{ is a clique}} \cdot (1 \pm n^{-\Omega(\epsilon)})
  \]
  for all $S \in \nchoose{d}$ with high probability by a similar argument as in Lemma~\ref{lem:normalization} and a union bound.

  Next, we bound $\|\cQ_0^{\text{off-diag}}\|$ be decomposing it according to ribbon shape.
  Fix $s, t \leq \tau$.
  Let $U_1^{(s,t)},\ldots,U_q^{(s,t)}$ be all the graphs on vertex set $[t]$ with two distinguished sets of vertices $A,B$, both of size $s$, with $|A \cap B| \leq s-1$, and where there are $s - |A \cap B|$ vertex-disjoint paths from $A \setminus B$ to $B \setminus A$.
  Let $M_i^{(s,t)}$ be given by
  \[
    M_i^{(s,t)}(S_\ell,S_r) = \sum_{\cR \text{ an $(S_\ell,S_r)$-ribbon with shape $U_i^{(s,t)}$}} \chi_{\cR}\mper
  \]
  Then
  \[
    \cQ_0^{\text{off-diag}} = \sum_{\substack{s \leq d \\ t \leq \tau \\ i \leq q}} \Paren{\frac \omega n}^{t - s} \cdot M_i^{(s,t)}\mper
  \]

  We can apply Lemma~\ref{lem:single-topology-matrix} to conclude that with probability at least $1 - O(n^{-100 \log n})$,
  \[
    \Norm{ \Paren{\frac \omega n}^{t - s} \cdot M_i^{(s,t)}} \leq \Paren{\frac \omega n}^{t - s} \cdot n^{\frac{t - s}{2}} \cdot 2^{O(t)} \cdot (\log n)^{O(t - |A \cap B| + |A \setminus B|)} \leq n^{-\epsilon(t - s)} \cdot 2^{O(t)} \cdot (\log n)^{O(t - s)}\mcom
  \]
  where to conclude the bound on the exponent in $(\log n)^{O(t - |A \cap B| + |A \setminus B|)}$ we have used that $t \geq 2s - |A \cap B|$.

  Notice that for fixed $s$ and $t$, there are at most $2^{{t \choose 2} + O(t)}$ unique shapes $U_1^{(s,t)},\ldots,U_q^{(s,t)}$.
  Thus, a union bound followed by the triangle inequality, we obtain that for fixed $s$ and $t$, with probability at least $1 - O(n^{-99 \log n})$,
  \[
    \Norm{\Paren{\frac \omega n}^{t - s} \sum_{i \leq q} M_i^{(s,t)}} \leq 2^{{t \choose 2} + O(t)} \cdot n^{-\epsilon(t - s)} \cdot 2^{O(t)} \cdot (\log n)^{O(t - s)}\mper
  \]
  Under our assumptions on the parameters $d, \tau,$ and $\epsilon$, this is at most $2^{{s \choose 2}}/ (100 \tau)$.
  Summing over all $t \leq \tau$, for a fixed $s$ we have
  \[
    \Norm{\Paren{\frac \omega n}^{t - s} \sum_{\substack{t \leq \tau \\ i \leq q}} M_i^{(s,t)}} \leq \frac{2^{{s \choose 2}}}{100}\mper
  \]
  Notice that the above matrix is exactly the block of $\cQ_0^{\text{off-diag}}$ corresponding to subsets of size $s$.
  Together with our bound on $\cQ_0^{\text{diag}}$, this proves the lemma.
\end{proof}
%\begin{definition}[Left/Right Factorization by a separator]
%Given a separator $S$ for an $(I,J)$-ribbon $\cR$ with edge set $W$, we define the \emph{left factorization} of $\cR$ \emph{by $S$} as a partition of the edges $W$ into $W_1$ and $W_2$ so that $W_1 \subseteq W$ is the subset of edges that lie on some path from $I$ to $S$ and $W_2  = W \setminus W_1$. Similarly, a \emph{right factorization of $\cR$ by $S$} is a partition of $W$ into $W_1$ and $W_2$ where $W_2 \subseteq W$ is the subset of edges that lie on a path from $S$ to $J$ and $W_1 = W \setminus W_2$.
%\end{definition}

\subsection{Norm Bounds for $\cQ_i$---Proof of Lemma~\ref{lem:Qismall}}
In this section we prove Lemma~\ref{lem:Qismall}, restated here.
\restatelemma{lem:Qismall}

We will need to bound the coefficients $c_i(\cR_m')$ used to define the matrices $\cQ_i$ which we set up in Section~\ref{sec:factorization}.
\begin{lemma}\label{lem:ci-bound}
  Let $c_1,\ldots,c_{2d}$ be the coefficient functions defined in Section~\ref{sec:factorization}.
  For all improper $(S_\ell,S_r)$-ribbons $\cR_m$ admitting exactly $p$ vertex-disjoint paths from $S_\ell$ to $S_r$, and all $i \leq 2d$,
  writing $s = \frac{|S_\ell| + |S_r|}{2}$,
  \[
    c_i(\cR_m) \leq \Paren{\frac \omega n}^{s} \cdot n^{\frac { p - |\cZ(\cR_m)| -i/2}{2} + \epsilon s}\mper 
  \]
 recalling that $\omega = n^{1/2 - \epsilon}$.
 Furthermore, if $\cR_m$ and $\cR_m'$ have the same shape, then $c_i(\cR_m) = c_i(\cR_m')$.
\end{lemma}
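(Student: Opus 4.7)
The plan is to prove both the quantitative bound and the shape-invariance claim simultaneously by induction on $i$. Shape-invariance is essentially transparent from the form of the recursion: any relabeling bijection between two ribbons of the same shape lifts to a bijection on the indexing triples in the recursive definition of $c_i(\cR_m')$, preserving both the number $r$ of outside-separator intersections and (by inductive hypothesis on shape-invariance of $c_{i-1}$) the summand values. For the base case $i=0$ of the quantitative bound, note that $c_0(\cR_m) = 1$ only if $\cR_m$ satisfies condition~\ref{itm:middle}, which forces $\cZ(\cR_m) = \emptyset$ and makes $S_\ell, S_r$ themselves minimum separators. Menger's theorem then gives $p = |S_\ell| = |S_r| = s$, and substituting $\omega/n = n^{-1/2 - \epsilon}$ into the right-hand side of the claimed bound yields exactly $n^{(p-s)/2} = 1 \geq c_0(\cR_m)$.

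For the inductive step I would expand using the recursion from Section~\ref{sec:factorization}, writing $c_i(\cR_m')$ as a sum of $c_{i-1}(\cR_m)(\omega/n)^r$ over triples $(\cR_\ell, \cR_m, \cR_r)$ whose separating factorization produces $\cR_m'$ and which have exactly $r$ repeated vertices outside $S_\ell \cup S_r$. I would organize the sum by: (a) the choice of separators $S_\ell, S_r$ of average size $s$, which by Lemma~\ref{lem:sep-size-incr} satisfies $s \leq s' - 1/2$; (b) the \emph{collision pattern} describing how the vertices of $\cV(\cR_\ell) \cup \cV(\cR_m) \cup \cV(\cR_r)$ collapse onto $\cV(\cR_m')$; and (c) the residual labels in $[n]$ for vertices appearing in $\cR_\ell$ or $\cR_r$ but not in $\cR_m'$. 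Applying the inductive bound to $c_{i-1}(\cR_m)$ and collecting factors of $n$ turns the required inequality into a comparison between an accumulated exponent and the target exponent $(\omega/n)^{s'} \cdot n^{(p' - |\cZ(\cR_m')| - i/2)/2 + \epsilon s'}$.

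The main obstacle is the combinatorial trade-off lemma needed to close this comparison. Writing out the difference of exponents, the increment $s' - s \geq 1/2$ from Lemma~\ref{lem:sep-size-incr} supplies a factor of $n^{1/4}$ that must be balanced, while the factor $(\omega/n)^r = n^{-r(1/2+\epsilon)}$ supplies further negative mass against the label freedom in the residual-vertex sums. The crux is to show that these sources of slack dominate the change $\Delta = (p - |\cZ(\cR_m)|) - (p' - |\cZ(\cR_m')|)$. The combinatorial observation driving this is that every repeated vertex outside $S_\ell \cup S_r$ ends up, in $\cR_m'$, doing exactly one of three things: (i) joining one of the new separators $S_\ell', S_r'$, thereby increasing $s' - s$; (ii) becoming an isolated vertex of $\cR_m'$, thereby increasing $|\cZ(\cR_m')|$; or (iii) lying on a collapsed path and so decreasing $p - |\cZ|$. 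In each case the cost of the additional label freedom is compensated by exactly $(\omega/n)^r$, and after accounting for the extra half-step in $s'-s$, one is left with a net $n^{-1/4}$ per inductive step, giving the promised $-i/4$ in the exponent. Since $\tau \leq O(\log n)$, the total number of shapes and collision patterns is $\tau^{O(\tau)} = n^{o(1)}$, so the combinatorial multiplicities are absorbed into the $n^{\epsilon s}$ factor and the induction closes.
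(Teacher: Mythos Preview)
Your overall strategy---induction on $i$, with the inductive step driven by a combinatorial tradeoff between the number $r$ of repeated vertices and the changes in separator size, path count, and isolated-vertex count---is the same as the paper's. The base case and the shape-invariance argument are fine. But two details of the inductive step do not hold up.

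First, there are no ``residual labels in $[n]$'' to sum over. In the recursive definition of $c_{i}(\cR_m')$ one fixes not only $\cR_m'$ but also some particular $\cR_\ell', \cR_r'$ satisfying conditions \ref{itm:left} and \ref{itm:right}, and then sums over triples $(\cR_\ell, \cR_m, \cR_r)$ whose separating factorization is exactly $(\cR_\ell', \cR_m', \cR_r')$. Every vertex of $\cR_\ell, \cR_m, \cR_r$ already lies in the fixed set $\cV(\cR_\ell') \cup \cV(\cR_m') \cup \cV(\cR_r')$, so the counting is purely combinatorial: how many triples produce a given separating factorization with a given $r$.

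Second, and more importantly, both your term count and your absorbing mechanism are wrong. The number of contributing triples with $r$ intersections is not $\tau^{O(\tau)}$ but $2^{O(r\tau)}$: after deciding which of $\cR_\ell, \cR_m, \cR_r$ each vertex of $\cR_m'$ came from (at most $7^\tau$ choices), one must still resolve edges that appear with multiplicity at least two among $W_{\cR_\ell}, W_{\cR_m}, W_{\cR_r}$, and there can be $\Theta(r\tau)$ such potential edges. As for absorption, the factor $n^{\epsilon s'}$ in the target bound is not slack---it is exactly what $n^{\epsilon s}$ becomes in the calculation, with nothing left over. The real slack is the residue $n^{-\epsilon r}$ remaining from $(\omega/n)^r = n^{-r/2} \cdot n^{-\epsilon r}$ after the $n^{-r/2}$ portion is spent on the tradeoff inequality
\[
  2(s'-s) + (p-p') + |\fD| \leq r
\]
(Lemma~\ref{lem:graph-tradeoff}). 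Using $\tau \leq (\epsilon/C)\log n$ gives $2^{O(r\tau)} \cdot n^{-\epsilon r} \ll 1/\tau$ for every $r \geq 1$, and summing over $r$ closes the induction. Your proposed absorption via $n^{\epsilon s}$ would fail even if the term count were only $\tau^{O(\tau)}$: when $\epsilon$ is a constant, $\tau^{O(\tau)} = n^{\Theta(\log\log n)}$, not $n^{o(1)}$. Finally, your trichotomy ``every repeated vertex does exactly one of three things'' is too clean; the actual lemma is only an inequality, and its proof must treat three-way intersections (which contribute $2$ to $r$) and the interaction with the $u$-vertices on the Menger paths with some care.
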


With this lemma in hand we can prove Lemma~\ref{lem:Qismall}.

\begin{proof}[Proof of Lemma~\ref{lem:Qismall}]
  Fix some $0 < i \leq 2d$.
  We will use Lemma~\ref{lem:single-topology-matrix}, which requires that we first decompose each $\cQ_i$ into simpler matrices.
  First of all, for a proper ribbon $\cR_m$, let
  \[
    \tilde c_i(\cR_m) = \sum_{\cR_m' \text{ an improper ribbon whose largest proper subribbon is $\cR_m$}} \Paren{\frac \omega n}^{|\cZ(\cR_m')|} \cdot c_i(\cR'_m)\mper
  \]
  
Note that we include $\cR_m$ itself in this sum as a proper ribbon is also an improper ribbon. 
\begin{claim}
$\tilde c_i(\cR_m) \leq 2(\omega/n)^{s} \cdot n^{\frac{p - i/2}{2}+\epsilon{s}}$, where $p$ is the number of vertex-disjoint paths from $S_\ell$ to $S_r$ in $\cR_m$.
\end{claim}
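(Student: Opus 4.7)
The plan is to expand the definition of $\tilde c_i(\cR_m)$, apply Lemma~\ref{lem:ci-bound} term-wise, and check that the sum over the size of $\cZ$ telescopes into a convergent series.

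First, I would observe that an improper ribbon $\cR_m'$ whose largest proper subribbon is $\cR_m$ is determined by $\cR_m$ together with the choice of the set $\cZ(\cR_m') \subseteq [n] \setminus \cV(\cR_m)$ of additional isolated vertices. For a fixed cardinality $|\cZ(\cR_m')| = k$, the number of such extensions is at most $\binom{n}{k} \leq n^k/k!$. Crucially, because the extra vertices are isolated and disjoint from $S_\ell \cup S_r$, they contribute no new paths from $S_\ell$ to $S_r$, so the number of vertex-disjoint $S_\ell$-to-$S_r$ paths in $\cR_m'$ is exactly $p$, the same as in $\cR_m$.

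Applying Lemma~\ref{lem:ci-bound} to each $\cR_m'$ with $|\cZ(\cR_m')| = k$ gives
\[
  c_i(\cR_m') \;\leq\; \Paren{\frac \omega n}^{s} \cdot n^{(p - k - i/2)/2 + \epsilon s}\mper
\]
Substituting into the definition of $\tilde c_i(\cR_m)$ yields
\[
  \tilde c_i(\cR_m) \;\leq\; \Paren{\frac \omega n}^{s} \cdot n^{(p - i/2)/2 + \epsilon s} \sum_{k \geq 0} \frac{n^k}{k!} \Paren{\frac \omega n}^k \cdot n^{-k/2} \;=\; \Paren{\frac \omega n}^{s} \cdot n^{(p - i/2)/2 + \epsilon s} \sum_{k \geq 0} \frac{1}{k!} \Paren{\frac{\omega}{\sqrt n}}^k\mper
\]

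The main (and only) thing to check is that the tail sum is bounded by $2$. Since $\omega = n^{1/2-\epsilon}$, we have $\omega/\sqrt n = n^{-\epsilon}$, and the sum equals $\exp(n^{-\epsilon})$. Under our assumption that $\epsilon \geq C \log\log n / \log n$ for large $C$, we have $n^{-\epsilon} \leq 1$, and in fact $n^{-\epsilon} = o(1)$, so $\exp(n^{-\epsilon}) \leq 1 + 2 n^{-\epsilon} \leq 2$ for large enough $n$. This yields the desired bound
\[
  \tilde c_i(\cR_m) \;\leq\; 2 \Paren{\frac \omega n}^{s} \cdot n^{(p - i/2)/2 + \epsilon s}\mper
\]
The step that requires care is confirming that adding isolated vertices in $\cZ(\cR_m')$ does not increase $p$ (so that we may apply Lemma~\ref{lem:ci-bound} with the same value of $p$ for every extension); this is immediate because isolated vertices participate in no path. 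Everything else is a routine counting/convergence calculation.
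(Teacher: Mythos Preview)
Your proposal is correct and follows essentially the same approach as the paper: group improper extensions of $\cR_m$ by the number $k$ of added isolated vertices, apply Lemma~\ref{lem:ci-bound} with the same $p$ (since isolated vertices add no paths), and sum over $k$. The only cosmetic difference is that the paper counts extensions by the cruder $n^k$ and bounds $(\omega/\sqrt n)^k < 2^{-k}$ to get a geometric series summing to $2$, whereas you use $n^k/k!$ and the exponential series $\exp(n^{-\epsilon}) \leq 2$; both are routine and yield the same conclusion.
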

\begin{proof}
Consider all of the improper ribbons $\cR'_{m}$ with $k$ isolated vertices whose largest proper subribbon is $\cR_{m}$. For each such ribbon $\cR'_{m}$, by Lemma~\ref{lem:ci-bound}, $(\omega/n)^{k} c_i(\cR'_m) \leq \Paren{\frac \omega n}^{k+s} \cdot n^{\frac { p - k - i/2}{2} + \epsilon s}$. There are at most $n^k$ such improper ribbons. Adding all of their contributions together gives at most
\[
\Paren{\frac \omega {\sqrt{n}}}^{k}\Paren{\frac \omega n}^{s} \cdot n^{\frac { p - i/2}{2} + \epsilon s} < 2^{-k}(\omega/n)^{s} \cdot n^{\frac{p - i/2}{2} + \epsilon{s}}
\]
Summing this up over all $k \geq 0$ gives the result.
\end{proof}

  Now fix $s_\ell, s_r \leq d$ and $t \leq \tau$ and let $U_1^{(s_\ell,s_r,t)},\ldots,U_q^{(s_\ell,s_r,t)}$ be all graphs on the vertex set $[t]$ with two distinguished subsets of vertices: $A$ of size $s_\ell$ and $B$ of size $s_r$.
  Let
  \begin{align*}
    M_j^{(s_\ell,s_r,t)}(S_\ell,S_r) & = \sum_{\cR \text{ is an $(S_\ell,S_r)$-ribbon with shape $U_j^{(s_\ell, s_r, t)}$}} \tilde c_i(\cR) \cdot \Paren{\frac \omega n}^{t - s} \cdot \chi_{\cR}\\
    & = \tilde c_i(U_j^{(s_\ell,s_r,t)})\sum_{\cR \text{ is an $(S_\ell,S_r)$-ribbon with shape $U_j^{(s_\ell, s_r, t)}$}} \Paren{\frac \omega n}^{t - s} \cdot \chi_{\cR}\mcom
  \end{align*}
  where $s = \frac{s_{\ell}+s_r}{2}$ and we have used the fact that $\tilde c_i(\cR)$ depends only on the shape of $\cR$.

  Let $r = |A \cap B|$ where $A,B$ are the distinguished sets of vertices for $U_j^{(s_\ell,s_r,t)}$, and let $\tilde p$ be the number of vertex-disjoint paths from $A \setminus B$ to $B \setminus A$,
  so that $p = r + \tilde p$.
  We can apply Lemma~\ref{lem:single-topology-matrix} and our bound on $\tilde c_i$ to get that with probability $1 - O(n^{-100 \log n})$,
  \begin{align*}
    \Norm{M_j^{(s_\ell,s_r,t)}} & \leq \Paren{\frac \omega n}^{t - s} \cdot n^{\frac{\tilde p + r - i/2}{2} + \epsilon{s}} \cdot n^{\frac{t - \tilde p - r}{2}} \cdot 2^{O(t)} \cdot (\log n)^{O(t - r + \tilde p)}\\
    & = n^{-\epsilon(t - s) - i/4} \cdot 2^{O(t)} \cdot (\log n)^{O(t - r + \tilde p)}\\
    & = n^{-\epsilon(t - s) - i/4} \cdot 2^{O(t)} \cdot (\log n)^{O(t - s)}\mcom
  \end{align*}
  where in the last step we have used that $t \geq 2s - r$ and $\tilde p \leq s - r$.

  By inspection,
  \[
    \cQ_i = \sum_{\substack{s_\ell, s_r \leq d \\ t \leq \tau \\ j \leq q}} M_j^{(s_\ell,s_r,t)}\mper
  \]
  For a fixed $t$ there are at most $2^{{t \choose 2} + O(t)}$ choices for $U$, so $q \leq 2^{{t \choose 2} + O(t)}$.
  Now we fix $s_\ell,s_r$ and sum over $t$ to obtain the block of $\cQ_i$ corresponding to size-$s_\ell$ and size-$s_r$ subsets.
  By triangle inequality and a union bound, with probability at least $1 - O(n^{-97 \log n})$,
  \[
    \Norm{\sum_{\substack{t \leq \tau \\ j \leq q}} M_j^{(s_\ell,s_r,t)}} \leq 2^{{t \choose 2} + O(t)} \cdot n^{-\epsilon(t - s) - i/4} \cdot 2^{O(t)} \cdot (\log n)^{O(t - s)}\mper
  \]
  From our assumptions on $d,\tau,$ and $\epsilon$, this is at most $2^{{s_\ell \choose 2}/2 + {s_r \choose 2}/2}/100 d^3$.

  As usual, let $\Pi$ be the projector to $\Span \{ e_C \, : \, C \in \cC_{\leq d} \}$.
  Note that $\Pi \cQ_i = \cQ_i \Pi = \cQ_i$, since $\cQ_i(I,J) = 0$ whenever $I$ or $J$ is not a clique.
  So, to show that $D/8d \succeq \cQ_i \succeq -D/8d$, it is sufficient to show that for all vectors $v$ with $v = \Pi v$ it happens that $|v^\top{\cQ_i}v| \leq v^{T}(D/8d)v$. To see this, let $v_k$ be the part of $v$ indexed by cliques of size exactly $k$.
  Now,
\begin{align*}
|v^\top{\cQ_i}v| &\leq \sum_{k_1 = 0}^{d}{\sum_{k_2 = 0}^{d}
{\Norm{v_{k_1}}\Norm{\sum_{\substack{t \leq \tau \\ j \leq q}} M_j^{(k_1,k_2,t)}}\Norm{v_{k_2}}}} \\
&\leq \sum_{k_1 = 0}^{d}{\sum_{k_2 = 0}^{d}{\frac{1}{100d^3}\left(2^{\binom{k_1}{2}/2 + \binom{k_2}{2}/2 }\Norm{v_{k_1}}\Norm{v_{k_2}}\right)}} \\
&\leq \sum_{k_1 = 0}^{d}{\sum_{k_2 = 0}^{d}{\frac{1}{200d^3}\left(2^{\binom{k_1}{2}}\Norm{v_{k_1}}^2+2^{\binom{k_2}{2}}{\Norm{v_{k_2}}^2}\right)}} \\
&\leq \sum_{k = 0}^{d}{\frac{2^{\binom{k}{2}}}{100d^2}\Norm{v_{k}}^2} \leq v^{\top}(D/8d)v
\end{align*}
\end{proof}

\subsubsection{Coefficient Decay in the Factorization: Proof of Lemma~\ref{lem:ci-bound}}
We turn to the proof of Lemma~\ref{lem:ci-bound}, for which we want the following characterization of the effect of the separating factorization on the underlying graph of a ribbon.

We require the following combinatorial quantities:

\begin{center}
  \fbox{\begin{minipage}{\textwidth}
    \paragraph{Definitions for Lemma~\ref{lem:graph-tradeoff}}
      \begin{enumerate}
        \item $I,J,S_\ell, S_r \subseteq [n]$ of size at most $d$.
        \item Ribbons $\cR_\ell, \cR_m, \cR_r$ satisfying \ref{itm:left},\hyperref[itm:middle-sep]{3*},\ref{itm:right} but not \ref{itm:disjoint} for $S_\ell, S_r, I,J \subseteq [n]$. (Remember that $\cR_m$ may be improper.)
        \item Ribbons $\cR_\ell', \cR_m', \cR_r'$ which are the separating factorization of $\cR_\ell, \cR_m, \cR_r$, with separators $S_\ell', S_r'$.
        (Remember that $\cR_m'$ may be improper.)
        \item $p$, the number of vertex-disjoint paths from $S_\ell$ to $S_r$ in $\cR_m$.
        \item $p'$, the number of vertex-disjoint paths from $S_\ell'$ to $S_r'$ in $\cR_m'$.
        \item $r = (|\cV(\cR_\ell)| + |\cV(\cR_m)| + |\cV(\cR_\ell)| - |S_\ell| - |S_r|) - (|\cV(\cR_\ell')| + |\cV(\cR_m')| + |\cV(\cR_\ell')| - |S_\ell'| - |S_r'|)$, the number of intersections among $R_\ell, R_m, R_r$.
        \item $\fD = \cZ(\cR_m') \setminus \cZ(\cR_m)$, the newly degree-$0$ (we write \emph{isolated}) vertices in $\cR_m'$.
        \item $\fU \subseteq \cV(\cR_\ell) \cup \cV(\cR_m) \cup \cV(\cR_r)$, the set of vertices appearing in more than one of $\cV(\cR_\ell), \cV(\cR_m)$, and $\cV(\cR_r)$. Note that $\fU \subseteq \cV(\cR'_m)$.
      \end{enumerate}
  \end{minipage}}
\end{center}

\begin{lemma}\label{lem:graph-tradeoff}
  \[
    \underbrace{|S'_\ell| + |S'_r| - (|S_\ell| +|S_r|)}_{\text{increase in separator size}} + \underbrace{p - p'}_{\text{lost paths between separators}} + \underbrace{|\fD|}_{\text{new isolated vertices}} \leq \underbrace{r}_{\text{number of intersections}}\mper
  \]
\end{lemma}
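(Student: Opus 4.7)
My plan is to recast the inequality using an inclusion–exclusion identity and then reduce the remaining estimate to a vertex-tracking argument based on Menger's theorem.

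First, since the primed ribbons $\cR'_\ell, \cR'_m, \cR'_r$ satisfy the disjointness condition~\ref{itm:disjoint}, inclusion–exclusion applied to $\cV(\cR'_\ell)\cup\cV(\cR'_m)\cup\cV(\cR'_r)=\cV(\cR)$ (where $\cR$ denotes the combined $(I,J)$-ribbon) gives
\[
|\V(\cR'_\ell)|+|\V(\cR'_m)|+|\V(\cR'_r)|-|S'_\ell|-|S'_r|=|\V(\cR)|.
\]
Combining with the definition of $r$, the statement of the lemma becomes equivalent to
\[
\bigl(|\V(\cR_\ell)|-|\V(\cR'_\ell)|\bigr)+\bigl(|\V(\cR_m)|-|\V(\cR'_m)|\bigr)+\bigl(|\V(\cR_r)|-|\V(\cR'_r)|\bigr) \;\ge\; (p-p')+|\fD|.
\]
This reformulation cleanly peels off the ``separator growth'' contribution to $r$ and leaves a vertex-count inequality that is amenable to a path-tracking argument.

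Second, I would use Menger's theorem to fix a maximum system of $p$ vertex-disjoint paths in $\cR_m$ from $S_\ell$ to $S_r$ and track how each such path is transformed when we XOR the three edge sets together and refactor through $S'_\ell, S'_r$. Every original path falls into one of three types: (i) it survives inside $\cR'_m$ as a path from $S'_\ell$ to $S'_r$, hence contributes to $p'$; (ii) it is absorbed into the enlarged separator, so that its interior vertices are pushed into $V'_\ell$ or $V'_r$ and thus reduce $|\V(\cR_m)|-|\V(\cR'_m)|$; or (iii) its interior edges all cancel under XOR, leaving isolated vertices in $\cZ(\cR'_m)\setminus\cZ(\cR_m)$ and contributing to $|\fD|$. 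In the latter two cases, each unit of $p-p'$ or $|\fD|$ is witnessed by at least one unit of drop in the aggregate vertex count $\sum_i|\V(\cR_i)|-\sum_i|\V(\cR'_i)|$, since the absorbed or isolated vertex is one of the repeated vertices $\fU$ responsible for the excess multiplicity $r$.

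The main obstacle I expect is the bookkeeping in this charging: I must ensure that a single repeated vertex does not simultaneously pay for both a lost path and a new isolated vertex, and that vertices in $S_\ell \cap S_r$ or with multiplicity three do not get double-counted. This will be handled by invoking the leftmost/rightmost property of $S'_\ell$ and $S'_r$, which uniquely determines the partition $V'_\ell\sqcup V'_m\sqcup V'_r$ of $\cV(\cR)$ and thereby provides a canonical attribution of each repeated vertex to exactly one of the three regions. With this canonical attribution in hand, I can exhibit an injection from ``units'' of $(p-p')+|\fD|$ into ``surplus units'' contributing to $\sum_i(|\V(\cR_i)|-|\V(\cR'_i)|)$, which completes the proof. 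As a sanity check, the small example where $\cR_\ell$ is a 3-vertex path, $\cR_m$ is a 3-vertex path sharing an internal vertex with $\cR_\ell$, and $\cR_r$ is disjoint, satisfies the inequality with equality, with the single unit of $r$ charged to a new isolated vertex in $\fD$.
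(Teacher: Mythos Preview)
Your algebraic reformulation in the first paragraph is correct: the lemma is equivalent to
\[
\sum_{i\in\{\ell,m,r\}}\bigl(|\V(\cR_i)|-|\V(\cR'_i)|\bigr)\;\ge\;(p-p')+|\fD|\,,
\]
and this is a clean way to peel off the separator-growth term.

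However, your second step has a real gap. First, note that $\V(\cR_m)\subseteq\V(\cR'_m)$ (every vertex of $\cR_m$ is separated from both $I$ and $J$ by $S'_\ell,S'_r$ since $S_\ell,S_r\subseteq\fU$), so the middle summand $|\V(\cR_m)|-|\V(\cR'_m)|$ is \emph{nonpositive}. Your proposed ``injection into surplus units'' therefore cannot simply map into the summands individually; you must offset the negative middle term against the two positive outer ones, and this is exactly the delicate bookkeeping you flag but do not carry out. Second, your trichotomy for the $p$ paths in $\cR_m$ is not exhaustive: a path from $S_\ell$ to $S_r$ can survive intact inside $\cR'_m$ with no edges cancelling and yet fail to connect $S'_\ell$ to $S'_r$ (its endpoints lie in $\fU\subseteq\V(\cR'_m)$ but need not lie on any $S'_\ell$--$S'_r$ path). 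Such a path contributes to $p-p'$ but falls into none of your cases (i)--(iii).

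The paper closes this gap with an ingredient you are missing: it applies Menger's theorem \emph{inside $\cR_\ell$} (not $\cR_m$) to produce $|S'_\ell|$ vertex-disjoint paths from $I$ to $\fU\cap\V(\cR_\ell)$, and observes that the last $\fU$-vertex along each such path cannot become isolated (its next neighbor toward $I$ is unrepeated, so the edge survives XOR). This, together with the analogous $|S'_r|$ paths in $\cR_r$, yields at least $|S'_\ell|+|S'_r|-a$ certified vertices in $\fU\setminus\fD$ (where $a$ counts coincidences), and the remaining argument tracks each of the $p$ paths in $\cR_m$ against a refined list of buckets that includes three-way intersections and $\fU$-vertices outside this certified set. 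Without this second application of Menger on the $\cR_\ell,\cR_r$ side, there is no mechanism in your outline to tie the size of $S'_\ell,S'_r$ to the count of non-isolated repeated vertices, and the injection cannot be made precise.
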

The following series of claims will help us in the proof of Lemma~\ref{lem:graph-tradeoff}
  \begin{claim}\label{clm:graph-1}
    $I \cap \cV(\cR_m') \subseteq S_\ell'$ and $J \cap \cV(\cR_m') \subseteq S_r'$.
  \end{claim}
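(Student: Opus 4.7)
The plan is to prove the first inclusion $I \cap \cV(\cR_m') \subseteq S_\ell'$; the inclusion $J \cap \cV(\cR_m') \subseteq S_r'$ then follows by the symmetric argument (interchanging $I \leftrightarrow J$, $\ell \leftrightarrow r$, and leftmost $\leftrightarrow$ rightmost). My starting point is the explicit description from the separating factorization: $\cV(\cR_m') = V_m' \cup S_\ell' \cup S_r'$, where $V_m' = \cV(\cR) \setminus (V_\ell' \cup V_r')$ and $V_\ell'$ (respectively $V_r'$) denotes the vertices of $\cR$ reachable from $I$ (respectively from $J$) without passing through $S_\ell'$ (respectively through $S_r'$), with the convention that the separator itself is excluded from the reachable set.

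Now take any $v \in I \cap \cV(\cR_m')$. Because $v \in I$, the trivial length-$0$ path at $v$ shows that $v \in V_\ell' \cup S_\ell'$. If $v \in S_\ell'$ we are immediately done, so assume $v \in V_\ell'$; in particular $v \notin V_m'$, so membership in $\cV(\cR_m')$ forces $v \in S_\ell' \cup S_r'$, and hence $v \in S_r' \setminus S_\ell'$. The bulk of the proof is therefore to rule out this last configuration. My approach will be to exploit the \emph{rightmost} property of $S_r'$ (as in Lemma~\ref{lem:left-right-sep}, applied with the endpoint sets $J$ and $I \cup \fU$) together with Menger's theorem (Fact~\ref{fact:Mengers}): because $v \in V_\ell'$ lies "strictly to the left" of $S_r'$ while also being in $S_r'$ itself, the $|S_r'|$ vertex-disjoint paths from $J$ to $I \cup \fU$ certifying the minimality of $S_r'$ admit a local swap in which $v$ is replaced by a neighbor one step closer to $J$ along the path through $v$, yielding a minimum-size separator that strictly separates $J$ from $S_r'$ — contradicting the rightmost choice of $S_r'$.

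The main obstacle will be executing the Menger swap cleanly in the presence of the repeated-vertex set $\fU$, because a priori the neighbor of $v$ selected in the swap might itself lie in $\fU$ and therefore on the "wrong side" of the would-be separator. My plan to handle this is to restrict the swap to the subribbon of $\cR$ strictly to the right of $S_\ell'$, in which all of $I \cup (\fU \setminus S_\ell')$ is already pushed past the left separator; in this restricted graph $S_r'$ remains a minimum-size rightmost separator, and the swap can be carried out without colliding with $\fU$. An edge case is $v \in I \cap J$, but in that case $v$ must lie in every separator of $I$ from $J$ in $\cR$, hence in $S_\ell'$, so the claim holds trivially.
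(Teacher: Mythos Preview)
You are working much harder than necessary. The paper's proof is two lines: since $\cR_\ell'$ is an $(I, S_\ell')$-ribbon we have $I \subseteq \cV(\cR_\ell')$, and the separating factorization satisfies the disjointness condition~\ref{itm:disjoint}, so $\cV(\cR_\ell') \cap \cV(\cR_m') = S_\ell'$. Hence any $u \in I \cap \cV(\cR_m')$ already lies in $\cV(\cR_\ell') \cap \cV(\cR_m') = S_\ell'$; the statement for $J$ and $S_r'$ is symmetric.

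Your reduction to ruling out $v \in I \cap V_\ell' \cap (S_r' \setminus S_\ell')$ is fine, and what is left is exactly the assertion $V_\ell' \cap S_r' = \emptyset$, which is part of condition~\ref{itm:disjoint} for the triple $(\cR_\ell', \cR_m', \cR_r')$ --- something the paper simply uses rather than reproves here. Your Menger-swap plan to establish this directly has a real gap, however. Replacing $v \in S_r'$ by its neighbor $w$ one step closer to $J$ along a fixed Menger path does not obviously produce a separator of $J$ from $I \cup \fU$: since $v \in I$ is itself one of the targets of the separation, there may be other $J$--$v$ paths that avoid $S_r' \setminus \{v\}$ yet do not pass through $w$, and any such path witnesses that the swapped set $(S_r' \setminus \{v\}) \cup \{w\}$ fails to separate. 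Your proposed fix --- pass to the subribbon strictly to the right of $S_\ell'$ --- is inconsistent with your own hypothesis: you assumed $v \in V_\ell'$, meaning $v$ lies strictly to the \emph{left} of $S_\ell'$, so $v$ is not present in that subribbon at all and no swap involving $v$ can be carried out there.
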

  \begin{proof}[Proof of claim]
    If $u \in I \cap \cV(\cR_m')$ then since $I \subseteq \cV(\cR_\ell')$, we have $u \in \cV(\cR_\ell') \cap \cV(\cR_m') = S_\ell'$, and similarly for the second part.
  \end{proof}
  Next we have a simple analysis of which vertices may possibly be newly isolated.
  \begin{claim}\label{clm:graph-2}
    $\fD \subseteq \fU$.
  \end{claim}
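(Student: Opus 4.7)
The plan is to prove the contrapositive: assume $u \notin \fU$ and show $u \notin \fD$.  Since $\cV(\cR_m') \subseteq \cV(\cR) = \cV(\cR_\ell) \cup \cV(\cR_m) \cup \cV(\cR_r)$, if $u \notin \cV(\cR)$ then $u \notin \cZ(\cR_m') \supseteq \fD$ is trivial.  So we may assume $u$ lies in exactly one of the three vertex sets $\cV(\cR_\ell), \cV(\cR_m), \cV(\cR_r)$, and I split into three cases according to which one.

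In Case A ($u \in \cV(\cR_m)$ only), the key observation is that since $u$ is absent from $\cV(\cR_\ell)$ and $\cV(\cR_r)$, no edge of $W_{\cR_\ell} \cup W_{\cR_r}$ is incident to $u$; hence every edge of $\cR$ at $u$ is exactly an edge of $\cR_m$ at $u$ (there is no cancellation in the symmetric difference $W_{\cR_\ell}\oplus W_{\cR_m}\oplus W_{\cR_r}$).  Since $u \notin I \cup J$ (as $I \subseteq \cV(\cR_\ell)$ and $J \subseteq \cV(\cR_r)$), and any neighbor of $u$ reachable through $\cR$-paths to $I$ or $J$ would have to lie in $\cV(\cR_\ell)$ or $\cV(\cR_r)$, I verify using the definitions of $V_\ell', V_r'$ and the splitting $W_\cR = W_\ell' \sqcup W_m' \sqcup W_r'$ that each such edge at $u$ lands in $W_m'$.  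Thus the edge-count of $u$ in $\cR_m'$ equals that in $\cR_m$, giving $u \in \cZ(\cR_m') \iff u \in \cZ(\cR_m)$, so $u \notin \fD$.

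In Case B ($u \in \cV(\cR_\ell)$ only), every edge of $\cR$ incident to $u$ lies in $W_{\cR_\ell}$ and connects $u$ to another vertex of $\cV(\cR_\ell)$.  I would argue $u \in V_\ell' \cup S_\ell'$: the connected component of $u$ in $\cR$ starts within $\cV(\cR_\ell)$ and can only reach $\fU \cup J$ by first passing through a vertex of $\fU \cap \cV(\cR_\ell)$, so the leftmost minimum $(I, \fU\cup J)$-separator places $u$ on the left-reachable side or on the boundary.  If $u \in V_\ell' \setminus (S_\ell' \cup S_r')$, then $u \notin \cV(\cR_m')$ and we are done.  If instead $u \in S_\ell'$, minimality of $S_\ell'$ forces $u$ to lie on an actual $I$-to-$(\fU\cup J)$ path in $\cR$ that uses no other separator vertex, giving $u$ at least one neighbor outside $V_\ell' \cup S_\ell'$; the edge to that neighbor then lands in $W_m'$ (it is not in $W_\ell'$ for lack of a $V_\ell'$-endpoint, and not in $W_r'$ since $u \notin V_r' \cup S_r'$ by an analogous argument), so $u \notin \cZ(\cR_m')$.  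Case C ($u \in \cV(\cR_r)$ only) is symmetric.

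The main obstacle is the bookkeeping around the separators in Case B: correctly using the minimality of $S_\ell'$ to conclude that a separator vertex must have an edge to the ``non-left'' side, and simultaneously ruling out that such a vertex could accidentally also belong to $V_r' \cup S_r'$ without triggering membership in $\fU$.  The other cases are largely mechanical once the edge-migration under the canonical and separating factorizations is tracked carefully.
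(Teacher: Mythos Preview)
Your Case B contains a genuine gap. The intermediate claim ``$u \in V_\ell' \cup S_\ell'$'' is false in general. Your justification---that any path from $u$ to $\fU \cup J$ must first pass through $\fU \cap \cV(\cR_\ell)$---is correct, but it does not force $S_\ell'$ to lie ``to the right of'' $u$; the leftmost separator may sit entirely inside $I$, making $V_\ell' = \emptyset$ and leaving $u$ in $V_m'$ even though $u$ lives only in $\cV(\cR_\ell)$.

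Concretely: take $I=\{1,2\}$, $S_\ell=\{3\}$, and $\cR_\ell$ on $\{1,2,3,4\}$ with edges $(1,3),(2,3),(1,4)$; take $\cR_m$ on $\{1,3,5\}$ with edges $(1,3),(1,5)$ and $S_r=\{5\}$; take $\cR_r$ trivial with $S_r=J=\{5\}$. All factorization conditions except~\ref{itm:disjoint} hold, and $u=4$ lies only in $\cV(\cR_\ell)$. Here $W_\cR = \{(2,3),(1,4),(1,5)\}$, $\fU=\{1,3,5\}$, and the leftmost minimum separator of $I$ from $J\cup\fU$ is $S_\ell'=\{1,2\}$, so $V_\ell'=\emptyset$ and $u\notin V_\ell'\cup S_\ell'$.

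The fix, which is essentially the paper's one-paragraph direct argument, bypasses your case split entirely. Assume $u\in\fD$ with $u\notin S_\ell\cup S_r\cup I\cup J$. Then $u\notin\cZ(\cR_m)$, and the conditions on $\cR_\ell,\cR_m,\cR_r$ force $u$ to be incident to an edge $e$ in at least one of them. Since $u\in\cZ(\cR_m')$ implies $u\notin V_\ell'\cup S_\ell'\cup V_r'\cup S_r'$, any edge of $W_\cR$ incident to $u$ would have to lie in $W_m'$; but $u$ is isolated in $\cR_m'$, so $u$ has no edges in $W_\cR$ at all. Hence $e$ appears at least twice among $W_{\cR_\ell},W_{\cR_m},W_{\cR_r}$, so $u$ lies in at least two of the vertex sets and $u\in\fU$. (Incidentally, your sub-case $u\in S_\ell'$ is immediate: $\cZ(\cR_m')$ is disjoint from $S_\ell'$ by the definition of an improper ribbon, so no minimality or path argument is needed there.)
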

  \begin{proof}[Proof of claim]
    Let $u \in \fD$.
    If $u \in S_\ell$ or $u \in S_r$ we are done.
    Otherwise, if $u \in I$ or $u \in J$, then $u$ appeared in more than one of $\cV(\cR_\ell), \cV(\cR_m),\cV(\cR_r)$ by the definition of the canonical factorization.

    If neither of these cases hold, then $u$ was incident to an edge in at least one of $\cR_\ell, \cR_m, \cR_r$.
    Since that edge does not exist in $\cR_m'$, it must have appeared at least twice among the edge sets of $\cR_\ell, \cR_m, \cR_r$, and therefore $u$ appeared at least twice among the vertex sets, thus proving the claim.
  \end{proof}
  Next we show that some vertices in $\fU$ cannot become isolated.
  \begin{claim}\label{clm:graph-3}
    By Menger's theorem, there are $|S_\ell'|$ vertex-disjoint paths from $\fU \cap \cV(\cR_\ell)$ to $I$ in $\cR_\ell$.
    Let $u_\ell^{(1)},\ldots,u_\ell^{(|S_\ell'|)}$ be distinct vertices so that $u^{(i)}$ is the last vertex in $\fU$ along the $i$-th vertex disjoint path.
    Let $u_r^{(1)},\ldots,u_r^{(|S_r'|)}$ be similarly defined.
    None of the vertices $u$ may be in $\fD$.
  \end{claim}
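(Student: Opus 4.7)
The plan is to handle the two halves of the claim separately: first the Menger-theoretic existence of $|S_\ell'|$ vertex-disjoint paths with distinct endpoints $u_\ell^{(i)}$, then the combinatorial statement that none of these endpoints lies in $\fD$.

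For the first half I would show that the minimum vertex separator in $\cR_\ell$ between $I$ and $\fU \cap \cV(\cR_\ell)$ has size exactly $|S_\ell'|$, and then invoke Fact~\ref{fact:Mengers}. The upper bound is immediate once one checks that $S_\ell' \subseteq \cV(\cR_\ell)$ (a consequence of $S_\ell'$ being the \emph{leftmost} separator in $\cR$ between $I$ and $\fU \cup J$, combined with the fact that $S_\ell \subseteq \fU \cap \cV(\cR_\ell)$). For the lower bound I would argue that any $\cR_\ell$-separator $S'$ between $I$ and $\fU \cap \cV(\cR_\ell)$ lifts to a separator in all of $\cR$ between $I$ and $\fU \cup J$: since $S_\ell \subseteq \fU \cap \cV(\cR_\ell)$, $S'$ disconnects $I$ from $S_\ell$ in $\cR_\ell$, and every vertex of $(\fU \cup J) \setminus \cV(\cR_\ell)$ can only be reached from $I$ through $S_\ell$ (by the canonical factorization of $\cR$). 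Thus $|S'| \geq |S_\ell'|$. Distinctness of $u_\ell^{(1)},\ldots,u_\ell^{(|S_\ell'|)}$ is then immediate from vertex-disjointness of the paths, since each $u_\ell^{(i)}$ lies on the $i$-th path and nowhere else.

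For the second half I would proceed by contradiction: assume $u_\ell^{(i)} \in \fD = \cZ(\cR_m') \setminus \cZ(\cR_m)$, so that $u_\ell^{(i)} \in \cV(\cR_m')$ and has no edges in $W_m'$. Let $v$ be the vertex directly after $u_\ell^{(i)}$ on the $i$-th path (moving toward $I$). By the choice of $u_\ell^{(i)}$ as the \emph{last} $\fU$-vertex on the path (the one closest to $I$), we have $v \notin \fU$, so $v$ appears in $\cV(\cR_\ell)$ only. Consequently the edge $(u_\ell^{(i)},v)$ lies in $W_{\cR_\ell}$ and neither in $W_{\cR_m}$ nor in $W_{\cR_r}$, so it survives uncancelled in $W_\cR = W_{\cR_\ell} \oplus W_{\cR_m} \oplus W_{\cR_r}$. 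The remaining argument is a short case analysis on where $(u_\ell^{(i)},v)$ lands in the partition $W_\cR = W_\ell' \sqcup W_m' \sqcup W_r'$: in each case the edge either forces $u_\ell^{(i)} \in V_\ell' \cup V_r'$ (so $u_\ell^{(i)} \notin \cV(\cR_m')$) or forces $(u_\ell^{(i)},v) \in W_m'$ (so $u_\ell^{(i)} \notin \cZ(\cR_m')$), contradicting $u_\ell^{(i)} \in \fD$.

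To carry out the case analysis cleanly, I would track where $v$ sits relative to $S_\ell'$. Since the $i$-th path crosses $S_\ell'$ at exactly one vertex (by vertex-disjointness together with the fact that $S_\ell'$ separates $I$ from $\fU$ and $u_\ell^{(i)} \in \fU$), either $v$ is that crossing vertex (so $v \in S_\ell'$, giving an edge from $u_\ell^{(i)}$ into $S_\ell'$ that must lie in $W_m'$ unless $u_\ell^{(i)} \in V_\ell'$) or the $S_\ell'$-crossing occurs strictly between $v$ and $I$ (so $v$ is reachable from $I$ in $\cR$ without meeting $S_\ell'$, placing $v \in V_\ell'$ and therefore forcing $u_\ell^{(i)} \in V_\ell' \cup S_\ell'$ by the definition of $W_\ell'$, which again contradicts $u_\ell^{(i)} \in \cZ(\cR_m') \setminus S_\ell'$ unless the edge itself is in $W_m'$). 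The anticipated main obstacle, and the one to write carefully, is the boundary case $u_\ell^{(i)} \in S_\ell'$: here one must additionally use that $u_\ell^{(i)} \in \fU$ already forces it to appear in two of $\cV(\cR_\ell), \cV(\cR_m), \cV(\cR_r)$, and combine this with the edge $(u_\ell^{(i)}, v)$ to rule out simultaneous isolation in $\cR_m'$ and non-isolation in $\cR_m$. The symmetric argument handles $u_r^{(j)}$ with the roles of $I, S_\ell, \cR_\ell$ replaced by $J, S_r, \cR_r$.
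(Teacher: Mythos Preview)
Your approach is essentially the same as the paper's, but you are overcomplicating both halves.

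For the Menger part, you only need the \emph{lower} bound on the separator size: if every $\cR_\ell$-separator between $I$ and $\fU \cap \cV(\cR_\ell)$ has size at least $|S_\ell'|$, Menger already gives $|S_\ell'|$ vertex-disjoint paths. Your argument for this lower bound (lifting a $\cR_\ell$-separator to a separator in $\cR$ between $I$ and $\fU \cup J$, using that a path in $\cR$ leaving the set of vertices lying solely in $\cV(\cR_\ell)$ must first hit $\fU \cap \cV(\cR_\ell)$) is correct and is exactly the content behind the paper's bare assertion ``By Menger's theorem.'' The upper bound you propose --- that $S_\ell' \subseteq \cV(\cR_\ell)$ and separates there --- is not needed, and your one-line justification for $S_\ell' \subseteq \cV(\cR_\ell)$ is not quite complete as stated (the edge set of $\cR$ is the XOR of the three edge sets, so ``leftmost in $\cR$'' does not immediately pin $S_\ell'$ inside $\cV(\cR_\ell)$).

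For the second half, the paper uses the same first step you outline --- take the next vertex $v$ along the path toward $I$, note $v \notin \fU$, so the edge $(u,v)$ survives uncancelled in $W_\cR$ --- but then finishes with a much simpler dichotomy: either $v \in \cV(\cR_m')$, in which case $(u,v) \in W_m'$ and $u$ is not isolated in $\cR_m'$; or $v \notin \cV(\cR_m')$, in which case the edge $(u,v) \in W_\cR$ forces $u \in S_\ell' \cup S_r'$, and then $u \notin \cZ(\cR_m')$ immediately because an improper ribbon's $\cZ$-set is by definition disjoint from its left and right ends. Your path-crossing case analysis is a correct refinement of this but unnecessary. In particular, the ``boundary case $u_\ell^{(i)} \in S_\ell'$'' that you flag as the main obstacle is in fact trivial: membership in $S_\ell'$ alone already excludes $u$ from $\cZ(\cR_m')$, with no further use of $\fU$ or the edge $(u,v)$ needed.
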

  \begin{proof}[Proof of claim]
    Fix one of these vertices $u$, and consider its neighbor $v$ one step farther along the path to $I$ (or $J$).
    By definition, the vertex $v$ does not appear in more than one of $\cV(\cR_\ell),\cV(\cR_m),\cV(\cR_r)$.
    If $v \in \cR_m'$, then the edge $(u,v)$ must be in $\cR_m'$, and so $u$ is not isolated in $\cR_m'$.
    If $v \notin \cR_m'$, then $u$ must be in $S_\ell' \cup S_r'$, in which case by definition $u \notin \fD$.
  \end{proof}
  We set up sets $q$ of vertices to divide up the intersecting vertices among $\cR_\ell, \cR_m,\cR_r$ according to which ribbons witness the intersection.
  \begin{claim}\label{clm:graph-4}
  Let
  \begin{align*}
    q_{\ell,m,r} & \defeq (\cV(\cR_r) \cap \cV(\cR_m) \cap \cV(\cR_\ell)) \setminus (S_\ell \cup S_r)\\
    q_{\ell,r} & \defeq (\cV(\cR_\ell) \cap \cV(\cR_r)) \setminus \cV(\cR_m)\\
    q_{\ell,m} & \defeq (\cV(\cR_\ell) \cap \cV(\cR_m)) \setminus (S_\ell \cup \cV(\cR_r))\\
    q_{r,m} & \defeq (\cV(\cR_r) \cap \cV(\cR_m)) \setminus (S_r \cup \cV(\cR_\ell))\mper
  \end{align*}
   The sets $q$ are pairwise disjoint, and
   \[
     r = 2|q_{\ell,m,r}| + |q_{\ell,r}| + |q_{\ell,m}| + |q_{r,m}| + |S_\ell \cap (\cV(\cR_r) \setminus S_r)| + |S_r \cap (\cV(\cR_\ell) \setminus S_\ell)|\mper
   \]
   Also, $\fU = q_{\ell,m,r} \cup q_{\ell,r} \cup q_{\ell,m} \cup q_{r,m} \cup S_\ell \cup S_r$.
   \end{claim}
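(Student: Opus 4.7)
The plan is to classify each vertex in $\cV(\cR_\ell) \cup \cV(\cR_m) \cup \cV(\cR_r)$ according to which subset of $\{\cR_\ell, \cR_m, \cR_r\}$ contains it, and then to recognize each intersection pattern as a disjoint union of pieces of $S_\ell$, $S_r$, and the four $q$-sets. For nonempty $X \subseteq \{\ell, m, r\}$ let $T_X$ denote the set of $v$ lying in exactly those $\cV(\cR_a)$ with $a \in X$, and let $\mathrm{mult}(v)$ denote the number of the three vertex sets containing $v$. Pairwise disjointness of $q_{\ell,m,r}, q_{\ell,r}, q_{\ell,m}, q_{r,m}$ will be immediate from their defining inclusions/exclusions: $q_{\ell,r}$ excludes $\cV(\cR_m)$ whereas the other three lie in $\cV(\cR_m)$, and within $\cV(\cR_m)$ the inclusion/exclusion of $\cV(\cR_r)$ and $\cV(\cR_\ell)$ separates the remaining three.

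The key step is a short case analysis that identifies each $T_X$ with its $q$- and $S$-pieces. Using the inclusions $S_\ell \subseteq \cV(\cR_\ell) \cap \cV(\cR_m)$ and $S_r \subseteq \cV(\cR_m) \cap \cV(\cR_r)$ (guaranteed by conditions \ref{itm:left}, \hyperref[itm:middle-sep]{3*}, and \ref{itm:right}), one checks that
\begin{align*}
T_{\{\ell,m\}} &= (S_\ell \setminus \cV(\cR_r)) \sqcup q_{\ell,m}, \qquad T_{\{m,r\}} = (S_r \setminus \cV(\cR_\ell)) \sqcup q_{r,m}, \qquad T_{\{\ell,r\}} = q_{\ell,r}, \\
T_{\{\ell,m,r\}} &= (S_\ell \cap S_r) \sqcup \bigl(S_\ell \cap (\cV(\cR_r) \setminus S_r)\bigr) \sqcup \bigl(S_r \cap (\cV(\cR_\ell) \setminus S_\ell)\bigr) \sqcup q_{\ell,m,r}.
\end{align*}
Since $\fU$ is exactly the union $T_{\{\ell,m\}} \cup T_{\{\ell,r\}} \cup T_{\{m,r\}} \cup T_{\{\ell,m,r\}}$, these four decompositions together with the partitions $S_\ell = (S_\ell \setminus \cV(\cR_r)) \sqcup (S_\ell \cap S_r) \sqcup (S_\ell \cap (\cV(\cR_r) \setminus S_r))$ and its analogue for $S_r$ immediately yield the claimed equality $\fU = q_{\ell,m,r} \cup q_{\ell,r} \cup q_{\ell,m} \cup q_{r,m} \cup S_\ell \cup S_r$.

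For the formula for $r$, recall from the discussion preceding the claim that $|\cV(\cR_\ell) \cup \cV(\cR_m) \cup \cV(\cR_r)|$ coincides with $|\cV(\cR_\ell')| + |\cV(\cR_m')| + |\cV(\cR_r')| - |S_\ell'| - |S_r'|$ (via Claim~\ref{clm:vertex-factorization} applied to the separating factorization), and that counting with multiplicity gives $|\cV(\cR_\ell)| + |\cV(\cR_m)| + |\cV(\cR_r)| - |\cV(\cR_\ell) \cup \cV(\cR_m) \cup \cV(\cR_r)| = \sum_v (\mathrm{mult}(v) - 1)$. Combining,
\[
r \;=\; \sum_v (\mathrm{mult}(v) - 1) - |S_\ell| - |S_r| \;=\; |T_{\{\ell,m\}}| + |T_{\{\ell,r\}}| + |T_{\{m,r\}}| + 2\,|T_{\{\ell,m,r\}}| - |S_\ell| - |S_r|.
\]
Substituting the $T_X$ decompositions and using the $S_\ell, S_r$ partitions above, all contributions of the form $|S_\ell \setminus \cV(\cR_r)|$, $|S_r \setminus \cV(\cR_\ell)|$, and $|S_\ell \cap S_r|$ cancel against the corresponding terms in $-|S_\ell|-|S_r|$, leaving precisely the claimed expression for $r$. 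The only delicate point is the $T_{\{\ell,m,r\}}$ decomposition, where a vertex enters the triple intersection via four qualitatively different routes (lying in $S_\ell \cap S_r$; lying in $S_\ell$ while reaching $\cV(\cR_r)$ outside $S_r$; the mirror case with $\ell$ and $r$ swapped; or lying in neither $S_\ell$ nor $S_r$); ensuring these four sub-cases are exhaustive and mutually disjoint is the only bit requiring genuine care, and everything else is elementary bookkeeping.
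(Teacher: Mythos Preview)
Your proof is correct and is precisely the careful case analysis that the paper's one-word proof (``By inspection.'') leaves implicit. The classification by $T_X$, the identification of each $T_X$ with its $S$- and $q$-pieces, and the cancellation computation for $r$ are all sound; your invocation of the vertex-count identity for the separating factorization is justified by the paper's remark immediately following the definition of $r$ (which plays the role of Claim~\ref{clm:vertex-factorization} here).
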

   \begin{proof}
     By inspection.
   \end{proof}

We are prepared to prove Lemma~\ref{lem:graph-tradeoff}.
\begin{proof}[Proof of Lemma~\ref{lem:graph-tradeoff}]
  We start by bounding the number of vertices in $\fU \setminus \fD$.
  By Claim~\ref{clm:graph-3}, there are at least $|\{u_\ell^{(1)},\ldots,u_\ell^{(|S_\ell'|)},u_r^{(1)},\ldots,u_r^{|S_r'|} \}|$ such vertices.

  Let $a$ be the number of pairs $i,j$ so that $u_\ell^{(i)} = u_r^{(j)}$.
  Then there are vertex-disjoint paths $w_1,\ldots,w_a$ from $S_\ell'$ to $S_r'$.
  The path $w$ corresponding to $u_\ell^{(i)} = u_r^{(j)}$ is given by following $u_\ell^{(i)}$'s path from $I$ to $\cU$, ending at $u_\ell^{(i)}$, then following $u_r^{(j)}$'s path from $\fU$ to $J$.
  This gives a path from $I$ to $J$, which must have a subpath from $S_\ell'$ to $S_r'$.

  Now consider the $p$ vertex-disjoint paths from $S_\ell$ to $S_r$ in $\cR_m$.
  We claim that
  \begin{align}
    p - |S_\ell \cap S_r| &\leq |q_{\ell,m,r}| + |S_{\ell} \cap \cV(\cR_r)  \setminus S_r| + |S_r \cap \cV(\cR_{\ell}) \setminus S_{\ell}| \nonumber \\
    & + |\fU \setminus (\{u_\ell^{(1)},\ldots,u_\ell^{(|S_\ell'|)},u_r^{(1)},\ldots,u_r^{|S_r'|} \} \cup \fD)| + (p' - a)\label{eq:graph-1}
  \end{align}
  In words, every nontrivial path from $S_\ell$ to $S_r$ contributes to at least one of:
  \begin{itemize}
    \item $|q_{\ell,m,r}|$, the number of $3$-way intersections,
    \item intersections between $S_{\ell}$ and $\cV(\cR_r)$ (but not $S_r$), intersections between $\cV(\cR_{\ell})$ and $S_r$ (but not $S_{\ell}$),
    \item vertices in $\fU$ which are guaranteed not to become isolated (and which we have not yet accounted for), or
    \item vertex-disjoint paths from $S_\ell'$ to $S_r'$ (which we have not yet accounted for).
  \end{itemize}

    Fix one such path.
  If it intersects $q_{\ell,m,r}$, $S_l \cap \cV(\cR_r)$, or $S_r \cap \cV(\cR_l)$ we are done, so suppose otherwise.
  If it is contained entirely in $q_{\ell,m} \cup q_{r,m} \cup (S_\ell \setminus \cV(\cR_r)) \cup (S_r \setminus \cV(\cR_l))$, then there is some edge along the path connecting a vertex in $\cV(\cR_\ell) \cap \cV(\cR_m) \setminus \cV(\cR_r)$ with one in $\cV(\cR_r) \cap \cV(\cR_m) \setminus \cV(\cR_{\ell})$.
  That edge can occur nowhere else among $\cR_\ell, \cR_m, \cR_r$, and so the incident vertices must not be in $\fD$.
  At the same time, if there is any vertex along the path which is outside $\fU$, then the nearest vertices along the path to either side which do lie in $\fU$ also must be outside $\fD$.

  In either case, there are two vertices along the path in $\fU \setminus \fD$.
  If either of these is not among the $u$ vertices, we are done.
  If both are, then by definition of the $u$ vertices this creates a path from $I$ to $J$, and so from $S_\ell'$ to $S_r'$.
  Furthermore, this path must be vertex disjoint from the paths $w_1,\ldots,w_a$ previously constructed, since the $u$ vertices involved in those paths were $\cV(\cR_\ell) \cap \cV(\cR_r)$.
  This proves \eqref{eq:graph-1}.

  It's time to put things together.
  By Claim~\ref{clm:graph-2}, we can bound $|\fD|$ by
  \[
    |\fD| \leq |\fU| - |\fU \setminus \fD|.
  \]
  We have $|\fU \setminus \fD| \geq |S_\ell'| + |S_r'| - a + |\fU \setminus (\{u_\ell^{(1)},\ldots,u_\ell^{(|S_\ell'|)},u_r^{(1)},\ldots,u_r^{|S_r'|} \} \cup \fD)|$, and $|\fU| = |q_{\ell,m,r}| + |q_{\ell,r}| + |q_{\ell,m}| + |q_{r,m}| + |S_\ell \cup S_r|$.
  This gives us
  \[
    |\fD| \leq |q_{\ell,m,r}| + |q_{\ell,r}| + |q_{\ell,m}| + |q_{r,m}| + |S_\ell \cup S_r| - |S_\ell'| - |S_r'| + a - |\fU \setminus (\{u_\ell^{(1)},\ldots,u_\ell^{(|S_\ell'|)},u_r^{(1)},\ldots,u_r^{|S_r'|} \} \cup \fD)|\mper
  \]
  Adding \eqref{eq:graph-1} to both sides and rearranging, we get
  \[
    p - p' + |\fD| \leq 2|q_{\ell,m,r}| + |S_{\ell} \cap (\cV(\cR_r) \setminus S_r)| + |S_{r} \cap (\cV(\cR_{\ell})  \setminus S_{\ell})| + |q_{\ell,r}| + |q_{\ell,m}| + |q_{r,m}| + |S_\ell \cup S_r| - |S_\ell'| - |S_r'| + |S_\ell \cap S_r|\mcom
\]
and substituting $r = 2|q_{\ell,m,r}| + |S_{\ell} \cap (\cV(\cR_r)  \setminus S_r)| + |S_{r} \cap (\cV(\cR_{\ell})  \setminus S_{\ell})| + |q_{\ell,r}| + |q_{\ell,m}| + |q_{r,m}|$ gives
\[
  p - p' + |\fD| \leq r + |S_\ell \cup S_r| - |S_\ell'| - |S_r'| + |S_\ell \cap S_r| \mper
\]
Notice that $|S_\ell \cup S_r| + |S_\ell \cap S_r| = |S_\ell| + |S_r|$, so we can rearrange to obtain the lemma.
\end{proof}

Now we can prove Lemma~\ref{lem:ci-bound}.
\begin{proof}[Proof of Lemma~\ref{lem:ci-bound}]
  First of all, we note that $c_i(\cR_m)$ depends only on the shape of $\cR_m$ by symmetry of our construction.
  We turn to the quantitative bound.

  The proof is by induction.
  The coefficients $c_0(\cR_m)$ are nonzero only for ribbons $\cR_m$ which have $\cZ(\cR_m) = \emptyset$ and admitting $|S_\ell| = |S_r| = p$ paths from $S_\ell$ to $S_r$.
  Thus in the case that $i = 0$, the statement reduces to $c_0(\cR_m) \leq 1$, which is true by definition.

  Suppose the lemma holds for $c_i$, and consider $c_{i+1}$.
  By definition, for an (improper) $S_\ell',S_r'$-ribbon $\cR_m'$ and ribbons $\cR_\ell', \cR_r'$ satisfying \ref{itm:left} and \ref{itm:right},
  \begin{align}
    c_{i+1}(\cR_m') = 
    \sum_{\substack{\cR_\ell, \cR_m, \cR_r \text{ satisfying}\\\text{\ref{itm:left},\hyperref[itm:middle]{3*},\ref{itm:right} and not \ref{itm:disjoint} for some } S_\ell, S_r\\
    r \text{ intersections outside } S_\ell, S_r\\
    \\ \text{separating factorization } \cR_\ell', \cR_m', \cR_r', S_\ell', S_r'}}
    c_i(\cR_m) \Paren{\frac \omega n}^{r}\mper \label{eq:5-1}
  \end{align}
  We introduce the shorthand $s' = \frac{|S_\ell'| + |S_r'|}{2}$.
  Consider first a particular term in the sum, $c_i(\cR_m) (\omega/n)^r$, where $\cR_m$ is an improper $S_\ell, S_r$ ribbon, and let $|\fD| = |\cZ(\cR_m') \setminus \cZ(\cR_m)|$.
  By induction and Lemma~\ref{lem:graph-tradeoff},
  \begin{align*}
  \Paren{\frac \omega n}^r \cdot c_i(\cR_m)
  & \leq \Paren{\frac \omega n}^r \cdot \Paren{\frac \omega n}^{s} \cdot n^{\frac { p - |\cZ(\cR_m)| - i/2}{2} + \epsilon s} \quad \text{ by induction}\\
  & = \Paren{\frac \omega n}^{s'} \cdot \Paren{\frac \omega n}^{r - s' + s} \cdot \cdot n^{\frac { p - |\cZ(\cR_m)| - i/2}{2} + \epsilon s}\\
  & = \Paren{\frac \omega n}^{s'} \cdot n^{-\epsilon(r - s' + s)} \cdot n^{-\tfrac 1 2 (r - s' + s)}\cdot n^{\frac { p - |\cZ(\cR_m)| - i/2}{2} + \epsilon  s} \quad \text{ using $\omega = n^{1/2 - \epsilon}$}\\
  & \leq \Paren{\frac \omega n}^{s'} \cdot n^{-\epsilon(r - s' + s)} \cdot n^{-\tfrac 1 2 ( s' - s + p - p' + |\fD|)}\cdot n^{\frac { p - |\cZ(\cR_m)| - i/2}{2} + \epsilon s} \quad \text{ by Lemma~\ref{lem:graph-tradeoff}}\\
  & = \Paren{\frac \omega n}^{s'} \cdot n^{-\epsilon(r - s' + s)} \cdot n^{\frac{p' - |\cZ(\cR_m')| - i/2 -s' +s}{2} + \epsilon s} \quad \text{ canceling terms, using $|\cZ(\cR_m')| = |\fD| + |\cZ(\cR_m)|$}\\
  & = n^{-\epsilon r} \cdot \Paren{\frac \omega n}^{s'} \cdot n^{\frac{p' - |\cZ(\cR_m')| - i/2 - (s' - s)}{2} + \epsilon s'}\\
  & \leq n^{-\epsilon r} \cdot \Paren{\frac \omega n}^{s'} \cdot n^{\frac{p' - |\cZ(\cR_m')| - (i+1)/2}{2} + \epsilon s'} \quad \text{ using $s' -s \geq 1/2$, by Lemma~\ref{lem:sep-size-incr}}\\
  \end{align*}
  
  Next we assess how many nonzero terms are in the sum \eqref{eq:5-1} for a fixed $r$ and a fixed $\cR'_m$. For each vertex of $\cR'_m$, there are 7 possibilities for which ribbon(s) it came from in $\{\cR_{\ell},\cR_m,\cR_r\}$ so there are at most $7^{\tau}$ choices overall (recall that $\cR'_m$ has at most $\tau$ vertices for the terms we are looking at). Once we have chosen which ribbon(s) each vertex of $\cR'_m$ came from, everything is fixed except for possible edges of $\cR'_m$ which appear at least twice in $\cR_{\ell}$, $\cR_m$, and $\cR_{r}$. There are two possibilities for each possible edge of $\cR'_m$ which appears twice in $\cR_{\ell}$, $\cR_{m}$, and $\cR_{r}$ and four possibilities for each possible edge of $\cR'_m$ which appers three times in $\cR_{\ell}$, $\cR_{m}$, and $\cR_{r}$. However, note that any such edge must be between an intersected vertex and either another intersected vertex or a vertex in $S_{\ell} \cup S_{r}$. Thus, there are at most $r\tau$ possible edges of $\cR'_m$ which appear at least twice in $\cR_{\ell}$, $\cR_{m}$, and $\cR_{r}$ and the total number of possibilities for these edges is at most $4^{r\tau}$. 
  
  All together there are at most $2^{O(r\tau)}$ nonzero terms for fixed $r$. This means that the total contribution from such terms is at most
  \[
  2^{O(r\tau)} \cdot n^{-\epsilon r} \cdot \Paren{\frac \omega n}^{s'} \cdot n^{\frac{p' - |\cZ(\cR_m')| - (i+1)/2}{2} + \epsilon s'}
  \]
  As long as $\tau \leq (\epsilon / C) \log n$ for some universal constant $C$, we have $2^{O(r\tau)}\cdot  n^{-\epsilon r} \ll 1/\tau$ for all $r \geq 1$.
  All in all, we obtain
  \[
  c_{i+1}(\cR_m') \leq \Paren{\frac \omega n}^{s'} \cdot n^{\frac{p' - |\cZ(\cR_m')| - (i+1)/2}{2} + \epsilon s'} 
  \]
  which completes the induction.
\end{proof}

\subsection{$\cL \cL^\top$ is Well-Conditioned---Proof of Lemma~\ref{lem:Lbound}}
In this section we prove Lemma~\ref{lem:Lbound}, restated here.

\restatelemma{lem:Lbound}
\begin{proof}[Proof of Lemma~\ref{lem:Lbound}]
  We recall the definition of $\cL$.
  \[
    \cL(I,S) = \Paren{\frac \omega n}^{-\frac{|S|}{2}} \sum_{\substack{\cR \text{ having \ref{itm:left}} \\ |\V(\cR_\ell)| \leq \tau}} \Paren{\frac \omega n}^{|\V(\cR_\ell)|} \chi_{\cR_\ell}\mper
  \]
  Consider a diagonal entry $\cL(S,S)$.
  Since every ribbon $\cR$ appearing in its expansion must have \ref{itm:left}, in particular it has no edges inside $S$.
  Thus, by the same argument as in Lemma~\ref{lem:normalization}, with probability at least $1 - O(n^{-10 \log n})$,
  \[
    \cL(S,S) = \Paren{\frac \omega n}^{\frac{|S|}{2}} (1 \pm n^{-\Omega(\epsilon)})\mper
  \]

  Let $\cL^{\text{off-diag}}$ be given by
  \[
    \cL^{\text{off-diag}}(I,S) = \begin{cases} \cL(I,S) \text{ if $I \neq S$}\\ 0 \text{ otherwise} \end{cases}.
  \]
  We will consider the block of $\cL^{\text{off-diag}}$ with rows indexed by sets of size $s_\ell$ and columns indexed by sets of size $s_r$ for some $s_\ell, s_r \leq d$.
  For a fixed $t \leq \tau$, let $U_1^{(s_\ell,s_r,t)}, \ldots, U_q^{(s_\ell, s_r,t)}$ be all the graphs on vertex set $[t]$ with distinguished subsets of vertices $A,B$ of size $s_\ell, s_r$ respectively, and where
  \begin{itemize}
    \item $A \neq B$,
    \item there are no edges inside $B$,
    \item every vertex in $U$ outside $A \cup B$ is reachable from $A$ without passing through $B$, and 
    \item $B$ is the unique minimum-size vertex separator in $U$ separating $A$ from $B$.
  \end{itemize}
  Then let $M_i^{(s_\ell,s_r,t)}$ be given by
  \[
    M_i^{(s_\ell,s_r,t)}(I,S) = \Paren{\frac \omega n}^{t - \frac{s_r}{2}} \cdot \sum_{\cR \text{ an $(I,S)$-ribbon with shape $U_i^{(s_\ell,s_r,t)}$}} \chi_{\cR}\mper
  \]
  By assumption on $U_i^{(s_\ell,s_r,t)}$, there are $s_r$ vertex-disjoint paths from $A$ to $B$.
  Let $r = |A \cap B|$.
  By Lemma~\ref{lem:single-topology-matrix}, with probability at least $1 - O(n^{-100 \log n})$,
  \begin{align*}
    \Norm{M_i^{(s_\ell,s_r,t)}} & \leq \Paren{\frac \omega n}^{\frac{s_r}{2}} \cdot \Paren{\frac \omega n}^{t - s_r}  \cdot n^{\frac{t - s_r}{2}} \cdot 2^{O(t)} \cdot (\log n)^{O(t - r + (s_r - r))}\\
    & = \Paren{\frac \omega n}^{\frac{s_r}{2}} \cdot n^{-\epsilon( t- s_r)} \cdot 2^{O(t)} \cdot (\log n)^{O(t - s_r)}\mcom
  \end{align*}
  where in the last step we have used that $t \geq s_\ell + s_r - r$ and $s_r \leq s_\ell$, which holds by the vertex-separator requirement on $B$.
  There are at most $2^{{t \choose 2} - {s_r \choose 2} + O(t)}$ choices for $U_i^{(s_\ell,s_r,t)}$ when $s_\ell,s_r,t$ are fixed, by the requirement that $U$ have no edges inside $B$.
  Summing over all $q$ for a fixed $t$, we get by triangle inequality
  \[
    \Norm{\sum_{i \leq q} M_i^{(s_\ell,s_r,t)}} \leq \Paren{\frac \omega n}^{\frac{s_r}{2}} \cdot 2^{{t \choose 2} - {s_r \choose 2} + O(t)} \cdot n^{-\epsilon(t - s_r)} \cdot (\log n)^{O(t - s_r)}
  \]
  with probability $1 - O(n^{-99 \log n})$.
  By our assumptions on $d, \tau,$ and $\epsilon$, this is at most $(\omega/n)^{s_r/2} \cdot 1/d^4$.

  The following standard manipulations now prove the lemma.
  Let $D' \in \R^{\nchoose{\leq d}}$ be the diagonal matrix with $D'(S,S) = (\omega/n)^{|S|/2}$ if $S$ is a clique in $G$ and $0$ otherwise.
  Then we can decompose $\cL = D + E + \cL^{\text{off-diag}}$, where $E$ is a diagonal matrix with $|E(S,S)| \leq n^{-\Omega(\epsilon)} \cdot (\omega/n)^{|S|/2}$.
  Then we have
  \begin{align*}
    \Pi \cL \Pi \cL^\top \Pi & = D^2 \\
                             & + \Pi(D \Pi \cL^{\text{off-diag}} + D \Pi E + E \Pi D + E \Pi \cL^{\text{off-diag}} + \cL^{\text{off-diag}} \Pi D + \cL^{\text{off-diag}} \Pi E \\
                             & + E \Pi E + \cL^{\text{off-diag}} \Pi \cL^{\text{off-diag}}) \Pi
  \end{align*}
  Each of the above matrices aside from $D^2$ is a $d \times d$ block matrix, where the $(s_\ell, s_r)$ block is $\nchoose{s_\ell} \times \nchoose{s_r}$ dimensional and has norm at most $(\omega/n)^{(s_\ell + s_r)/2} \cdot d^{-4}$.
  By the same argument as in the proof of Lemma~\ref{lem:Qismall}, using Cauchy-Schwarz to combine the $d^2$ blocks, we obtain the lemma.
\end{proof}

\subsection{High-Degree Matrices Have Small Norms}
\label{sec:xi}
In this section we prove Lemma~\ref{lem:xi-bound}, restated here:
\restatelemma{lem:xi-bound}

We recall the definition of $\xi_i$.
For a coefficient function on ribbons $c_{i-1}(\cR_m)$, we have a matrix $\cE$ given by
\begin{align*}
  & \cE(I,J) = \nonumber \\
  & \sum_{\substack{S_\ell, S_r \subseteq [n] \\ |S_\ell|, |S_r| \leq d}} \Paren{\frac \omega n }^{-\frac{|S_\ell| + |S_r|}{2}}\sum_{\substack{\cR_\ell, \cR_m, \cR_r \text{ satisfying}\\\text{\ref{itm:left},\hyperref[itm:middle]{3*},\ref{itm:right} and not \ref{itm:disjoint}}  \\
  |\V(\cR_\ell)|, |\V(\cR_m)|, |\V(\cR_r)| \leq \tau \\ \text{separating factorization} \\ \cR_\ell', \cR_m', \cR_r', S_\ell', S_r'}} c_{i-1}(\cR_m) \Paren{\frac \omega n}^{|\V(\cR_\ell)| + |\V(\cR_r)| + |\V(\cR_m)|-\frac{|S_\ell| + |S_r|}{2}} \cdot \chi_{\cR_\ell'} \cdot \chi_{\cR_m'} \cdot \chi_{\cR_r'}\mcom
  \end{align*}
  and another one, $\cE'$, given by
\begin{align*}
  & \cE'(I,J) = \\
  & \sum_{\substack{S_\ell, S_r \subseteq [n] \\ |S_\ell|, |S_r| \leq d}} \Paren{\frac \omega n }^{-\frac{|S_\ell| + |S_r|}{2}}\sum_{\substack{\cR_\ell, \cR_m, \cR_r \text{ satisfying}\\\text{\ref{itm:left},\hyperref[itm:middle]{3*},\ref{itm:right} and not \ref{itm:disjoint}}  \\
  \text{separating factorization} \\ \cR_\ell', \cR_m', \cR_r', S_\ell', S_r' \\
|\V(\cR_\ell')|, |\V(\cR_m')|, |\V(\cR_r')| \leq \tau }}  c_{i-1}(\cR_m) \Paren{\frac \omega n}^{|\V(\cR_\ell)| + |\V(\cR_r)| + |\V(\cR_m)|-\frac{|S_\ell| + |S_r|}{2}} \cdot \chi_{\cR_\ell'} \cdot \chi_{\cR_m'} \cdot \chi_{\cR_r'}\mper
  \end{align*}
Then the matrix $\xi_i$ is given by $\cE - \cE'$.

We will actually prove a bound on the Frobenious norm of each matrix $\xi_i$.
The following will allow us to control the magnitude of the entries.
It follows immediately from our concentration bound Lemma~\ref{lem:sum-parities}, which is proved via the moment method.
(Under the slightly stronger assumption $\tau \ll \epsilon \log n / \log \log n$, it would also follow from standard hypercontractivity.)

\begin{lemma}\label{lem:high-deg-scalar}
  Suppose $c_{T}$ are a collection of coefficients, one for each $T \subseteq \nchoose{2}$, and there is a constant $C$ such that
  \begin{enumerate}
  \item If $|T| > C \tau$ then $c_T = 0$.
  \item Otherwise, $|c_T| \leq (\omega/n)^{|T|/C - Cd}$.
  \end{enumerate}
  Then with probability at least $1 - O(n^{-100 \log n})$ it occurs that $\Abs{\sum_{T \subseteq \nchoose{2}} c_T \cdot \chi_T} \leq n^{-20d}$.
\end{lemma}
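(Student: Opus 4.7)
The plan is to use the moment method, with Lemma~\ref{lem:sum-parities} doing the main combinatorial bookkeeping. Set $X \defeq \sum_{T \subseteq \binom{[n]}{2}} c_T \chi_T$ and fix an even integer $2k$ to be chosen later. By the orthonormality of the Fourier characters $\{\chi_T\}$ under the uniform measure on $\{\pm 1\}^{\binom{[n]}{2}}$,
\[
\E[X^{2k}] \;=\; \sum_{\substack{T_1,\ldots,T_{2k} \subseteq \binom{[n]}{2} \\ T_1 \oplus \cdots \oplus T_{2k} = \emptyset}} \; \prod_{i=1}^{2k} c_{T_i},
\]
which is exactly the kind of weighted sum that Lemma~\ref{lem:sum-parities} is designed to estimate. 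The two hypotheses we feed into that lemma---the support cutoff $c_T = 0$ for $|T| > C\tau$, and the decay $|c_T| \leq (\omega/n)^{|T|/C - Cd}$ otherwise---are precisely the ones given to us.

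Next, I would combine the resulting bound on $\E[X^{2k}]$ with Markov's inequality, which yields
\[
\Pr\!\bigl[\,|X| > n^{-20d}\,\bigr] \;\leq\; n^{40 d k} \cdot \E[X^{2k}].
\]
Choose $k$ to be a sufficiently large constant multiple of $\log n$, so that the decay from $\E[X^{2k}]$ more than compensates for the factor $n^{40 d k}$ and also absorbs an extra $n^{-100 \log n}$ safety margin. The parameter relations $Cd/\epsilon \leq \tau \leq (\epsilon/C)\log n$ and $d \leq (\epsilon/C)^2 \log n$ ensure that there is room to do this, since the sum over $(T_1,\ldots,T_{2k})$ lives on an edge multiset of total size at most $2kC\tau$ in which every distinct edge appears with even multiplicity, contributing an ambient combinatorial factor of at most $n^{O(kC\tau)}$.

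The heart of the matter---and the expected main obstacle---lies in verifying that the bound obtained from Lemma~\ref{lem:sum-parities} actually produces such a rapid decay. The combinatorial count of XOR-empty $2k$-tuples must be offset by the product of coefficients, which, using $\omega/n = n^{-(1/2 + \epsilon)}$ together with the even-multiplicity constraint (so that $\sum_i |T_i| \geq 2\,|{\cup_i T_i}|$), gives roughly a factor of $(\omega/n)^{\Omega(k\tau)} = n^{-\Omega(\epsilon k \tau)}$ on any nonvanishing term. Once this tension is properly accounted for by Lemma~\ref{lem:sum-parities}, under the assumption $\tau \geq Cd/\epsilon$ the negative exponent dominates both the $n^{40 d k}$ Markov loss and the $n^{O(kC\tau)}$ combinatorial prefactor, and the claimed failure probability $O(n^{-100 \log n})$ follows.
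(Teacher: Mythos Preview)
Your invocation of Lemma~\ref{lem:sum-parities} misreads what that lemma does. It is not a moment estimator for a general weighted sum; it gives a tail bound on the \emph{unweighted} sum $\sum_{T \in \cT} \chi_T$ over a single orbit $\cT$ of edge sets under vertex permutation (with $t$ the common vertex count). There is nowhere in its statement to ``feed in'' coefficient hypotheses. The paper's route is not to compute $\E[X^{2k}]$ directly, but rather to partition the support $\{T : c_T \neq 0\}$ into such orbit classes, factor out the coefficient (constant on each orbit in the actual application, by permutation symmetry of the construction), apply Lemma~\ref{lem:sum-parities} to each unit-coefficient inner sum to get a bound of order $n^{t/2}$, multiply by the coefficient bound from hypothesis~(2), and sum over the at most $2^{O(\tau^2)}$ orbit classes by the triangle inequality. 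The proof of Lemma~\ref{lem:normalization} carries out exactly this pattern and is the intended template here.

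Your direct $2k$-th moment computation could in principle be developed into an alternative argument, but as written it has a real gap: the claim that every nonvanishing term contributes a factor $(\omega/n)^{\Omega(k\tau)}$ does not follow from the stated hypotheses. The XOR condition $T_1 \oplus \cdots \oplus T_{2k} = \emptyset$ gives no lower bound whatsoever on $\sum_i |T_i|$; in particular it is satisfied when every $T_i = \emptyset$, and under hypothesis~(2) that single term already contributes $(\omega/n)^{-2kCd}$, which is large rather than small. To push the direct moment route through you would need either a lower bound on $|T|$ for nonzero $c_T$ (which does hold in the application via Lemma~\ref{lem:xi-coeff}, but is not among the stated hypotheses) or a careful stratification of the tuple space by $\sum_i |T_i|$ together with a sharper count than the blanket $n^{O(kC\tau)}$ you quote.
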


We will also need several facts about the coefficients of ribbons in the expansion of each matrix $\xi_i$.

\begin{lemma}\label{lem:xi-coeff}
  Every triple $\cR_\ell, \cR_m, \cR_r$ appearing with nonzero coefficient in $\xi_c$ satisfies $|\cV(\cR_\ell)| + |\cV(\cR_m)| + |\cV(\cR_r)| = \Theta(\tau)$.
\end{lemma}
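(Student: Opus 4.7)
The plan is to unpack $\xi_c = \cE_c' - \cE_c$ and compare the two size constraints directly. Both matrices sum over the same family of triples $(\cR_\ell, \cR_m, \cR_r)$ (satisfying conditions \ref{itm:left}, \hyperref[itm:middle-sep]{3*}, \ref{itm:right} but not \ref{itm:disjoint}) with identical summands, but $\cE_c$ imposes $|\V(\cR_\ell)|, |\V(\cR_m)|, |\V(\cR_r)| \leq \tau$ while $\cE_c'$ imposes $|\V(\cR_\ell')|, |\V(\cR_m')|, |\V(\cR_r')| \leq \tau$ on the separating factorization. A triple therefore has nonzero coefficient in $\xi_c$ precisely when it satisfies exactly one of the two constraints (triples satisfying both cancel, and triples violating both appear in neither sum).

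For the lower bound $|\V(\cR_\ell)| + |\V(\cR_m)| + |\V(\cR_r)| > \tau$ I would split into two cases. If some $|\V(\cR_j)| > \tau$ then the bound is immediate. If instead some $|\V(\cR_j')| > \tau$ while the unprimed ribbons are each $\leq \tau$, I use the basic observation from the definition of the separating factorization that it preserves the total vertex set: $\V(\cR_\ell') \cup \V(\cR_m') \cup \V(\cR_r') = \V(\cR_\ell) \cup \V(\cR_m) \cup \V(\cR_r)$. Hence $|\V(\cR_\ell)| + |\V(\cR_m)| + |\V(\cR_r)| \geq |\V(\cR_\ell) \cup \V(\cR_m) \cup \V(\cR_r)| \geq |\V(\cR_j')| > \tau$.

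For the upper bound, the case $|\V(\cR_i)| \leq \tau$ for each $i$ yields the sum $\leq 3\tau$ trivially. In the other case, each $|\V(\cR_i')| \leq \tau$, and since the separating factorization satisfies disjointness condition \ref{itm:disjoint} with separators $S_\ell', S_r'$, the same identity as in Claim~\ref{clm:vertex-factorization} gives
\[
|\V(\cR_\ell') \cup \V(\cR_m') \cup \V(\cR_r')| = \sum_i |\V(\cR_i')| - |S_\ell'| - |S_r'| \leq 3\tau.
\]
Invoking once more that this union equals $\V(\cR_\ell) \cup \V(\cR_m) \cup \V(\cR_r)$, and observing that any vertex appears in at most three of the three sets $\V(\cR_\ell), \V(\cR_m), \V(\cR_r)$, I get $|\V(\cR_\ell)| + |\V(\cR_m)| + |\V(\cR_r)| \leq 3 \cdot 3\tau = 9\tau$. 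Combining the two cases yields the sum in $(\tau, 9\tau] = \Theta(\tau)$.

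There is no real obstacle here; the content is bookkeeping. The only thing to be careful about is systematically distinguishing which of the two constraints ($\cE_c$'s on unprimed ribbons vs.\ $\cE_c'$'s on primed ones) is being violated, and then exploiting the fact—built into the construction of the separating factorization—that $(\cR_\ell', \cR_m', \cR_r')$ rearranges the vertices of $(\cR_\ell, \cR_m, \cR_r)$ without creating or destroying any, so bounds on the union transfer between the two representations.
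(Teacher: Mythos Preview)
Your argument is correct and follows essentially the same route as the paper's proof: both split into the two cases according to which of the two size constraints (on the unprimed or the primed ribbons) is satisfied, and then use that the separating factorization preserves the underlying vertex union $\V(\cR_\ell) \cup \V(\cR_m) \cup \V(\cR_r) = \V(\cR_\ell') \cup \V(\cR_m') \cup \V(\cR_r')$ to transfer size bounds between the two representations. Your write-up is in fact more explicit than the paper's (which is quite terse), and your bound $9\tau$ matches the paper's implicit $9\tau$ in the case where each $|\V(\cR_i')| \leq \tau$.
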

\begin{proof}
  To appear with nonzero coefficient, the triple $\cR_\ell, \cR_m, \cR_r$ with separating factorization $\cR_\ell', \cR_m', \cR_r'$ must either have
  \[
    |\cV(\cR_\ell)|, |\cV(\cR_m)|, |\cV(\cR_r)| \leq \tau \quad \text{but} \quad |\cV(\cR_\ell')| > \tau \text{ or } |\cV(\cR_m')| > \tau \text{ or } |\cV(\cR_\ell')| > \tau\mcom
  \]
  or
  \[
    |\cV(\cR_\ell')|, |\cV(\cR_m')|, |\cV(\cR_r')| \leq \tau \quad \text{but} \quad |\cV(\cR_\ell)| > \tau \text{ or } |\cV(\cR_m)| > \tau \text{ or } |\cV(\cR_\ell)| > \tau\mper
  \]
  In the first case, we must have one of $|\cV(\cR_\ell)| \geq \tau/3$ or $|\cV(\cR_m)| \geq \tau /3$ or $|\cV(\cR_r)| \geq \tau/3$.
  In the second, we must have $|\cV(\cR_\ell)|, |\cV(\cR_m)|, \cV(\cR_r)| \leq 3\tau$.
\end{proof}

We are prepared to prove Lemma~\ref{lem:xi-bound}.
\begin{proof}[Proof of Lemma~\ref{lem:xi-bound}]
  We will apply Lemma~\ref{lem:high-deg-scalar} to $\xi_i(I,J)$ for each $i \leq 2d$ and $I,J \subseteq [n]$ with $|I|,|J| \leq d$.
  So consider the Fourier expansion of $\xi_i(I,J)$, given by
  \[
    \xi_i(I,J) = \sum_{T \subseteq \nchoose{2}} c_T \cdot \chi_T\mper
  \]
  From Lemma~\ref{lem:xi-coeff}, we obtain that if $|T| > C \tau$ then $c_T = 0$, for some absolute constant $C$.
  For smaller $T$ we need a bound on the magnitude $|c_T|$.
  The coefficient $c_T$ is bounded by
   \begin{align}
   |c_T| \leq
   \sum_{\substack{S_\ell, S_r \subseteq [n] \\ |S_\ell|, |S_r| \leq d}} \Paren{\frac \omega n }^{-\frac{|S_\ell| + |S_r|}{2}}\sum_{\substack{\cR_\ell, \cR_m, \cR_r \\ \text{nonzero in $\xi_i(I,J)$ as in \ref{lem:xi-coeff}}\\ \chi_{\cR_\ell} \cdot \chi_{\cR_m} \cdot \chi_{\cR_r} = \chi_T }}
  c_{i-1}(\cR_m) \Paren{\frac \omega n}^{|\V(\cR_\ell)| + |\V(\cR_r)| + |\V(\cR_m)|-\frac{|S_\ell| + |S_r|}{2}}\label{eq:xi-1}
  \end{align}
  By Lemma~\ref{lem:ci-bound}, we have $c_{i-1}(\cR_m) \leq n^{d} \leq (\omega/n)^{-2d}$.
  At the same time, there are at most $2^{O(\tau^2)}$ nonzero terms in the sum \eqref{eq:xi-1}.
  Thus by Lemma~\ref{lem:xi-coeff} and our assumptions on $d,\tau,$ and $\epsilon$, the coefficient $c_T$ is at most $(\omega/n)^{\tau/C - Cd}$ for some absolute constant $C$.

  Applying Lemma~\ref{lem:high-deg-scalar}, we obtain $|\xi_i(I,J)| \leq n^{-20d}$ with probability $1 - O(n^{-100 \log n})$.
  Taking a union bound over all $n^{2d} \leq n^{2\log n}$ entries of $\xi_i$, and over all $i \leq 2d$, we obtain that $\|\xi_0 - \ldots + \xi_{2d}\| \leq \|\xi_0 - \ldots + \xi_{2d}\|_F \leq n^{-16d}$ with probability $1 - O(n^{-96 \log n})$.
\end{proof}

\section*{Acknowledgements}

We thank Raghu Meka, Ryan O'Donnell, Prasad Raghavendra, Tselil Schramm, David Steurer, and Avi Wigderson for many useful discussions related to this paper. 

\addreferencesection 
\bibliographystyle{amsalpha}
\bibliography{mr,dblp,scholar}%,bib/tensor-pca}

\appendix

% adding moment method proof back b/c it is tighter -Sam
%\pnote{Removed Moment Method based proof for polynomial concentration}

\section{Omitted Proofs}
\subsection{Calibration of $\pE$}
\label{sec:calibration}
In this subsection we prove Lemma~\ref{lem:pE-fools-simple-tests}, restated here.
\restatelemma{lem:pE-fools-simple-tests}
\begin{proof}
  The proof is straightforward by expanding the coefficients $f$ in the Fourier basis.
  For $S \subseteq [n]$, let $c_S : G \mapsto \R$ be maps so that $f_G(x) = \sum_{S \subseteq [n]} c_S \cdot x_S$.
  \begin{align*}
    \E_{G \sim \G(n,\frac{1}{2})} [\pE[f_G(x)]] & =  \E_{G \sim \G(n,\frac{1}{2})} \Brac{ \pE\Brac{ \sum_{S \subseteq [n]} c_S \cdot x_S}}\\
    & = \sum_{S \subseteq [n]}  \E_{G \sim \G(n,\frac{1}{2})}  \Brac{c_S \pE [x_S]}\\
    & = \sum_{S \subseteq [n]} \E_{G \sim \G(n,\frac{1}{2})}\Brac{\sum_{T,T' \subseteq \nchoose{2}} \widehat{c_S}(T) \widehat{\pE[x_S]}(T') \cdot \chi_T \chi_{T'}}\\
    & = \sum_{S \subseteq [n]} \sum_{T} \widehat{c_S}(T) \E_{(H,x) \sim G(n,1/2,\omega)} \Brac{\chi_T(H) \cdot x_S}\\
    & = \E_{(H,x) \sim G(n,1/2,\omega)} \Brac{ \sum_{S \subseteq [n]} \sum_T \widehat{c_S}(T) \chi_T(H) \prod_{i \in S}x_i}\\
    & = \E_{(H,x) \sim G(n,1/2,\omega)} \Brac{ \sum_{S \subseteq [n]} c_S \prod_{i \in S} x_i }\\
    & = \E_{(H,x) \sim G(n,1/2,\omega)}[f_H(x)]\mper\qedhere
  \end{align*}
\end{proof}

\subsection{Concentration Bounds for Linear Constraints}
\label{sec:normalization}
In this section we prove Lemma~\ref{lem:normalization}.
We will use the following elementary concentration bound repeatedly.
(It is the scalar version of the matrix concentration bound Lemma~\ref{lem:single-topology-matrix}; we state and prove a scalar version here because it is a good warmup for Lemma~\ref{lem:single-topology-matrix}.)

\begin{lemma}
\label{lem:sum-parities}
  Let $\cT$ be a family of subsets of $\nchoose{2}$ so that for every $T,T' \in \cT$ there exists $\sigma : [n] \rightarrow [n]$ a permutation of vertices so that $\sigma(T) = T'$.
  Let $t$ be the number of vertices incident to edges in any $T \in \cT$.
  For every $s \geq 0$ and every even $\ell$,
  \[
  \Pr_{G \sim \G(n,1/2)} \left \{ \left | \sum_{T \in \cT} \chi_T(G) \right | \leq s \right \} \geq 1 - \frac{n^{t\ell/2} \cdot (t\ell)^{t\ell}}{s^\ell}\mper
  \]
\end{lemma}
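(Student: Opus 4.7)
The plan is to apply the moment method. For any even $\ell$, Markov's inequality applied to $X^{\ell}$ (where $X = \sum_{T \in \cT} \chi_T(G)$) yields $\Pr[|X| \geq s] \leq \E[X^{\ell}]/s^{\ell}$, so it suffices to show $\E[X^{\ell}] \leq n^{t\ell/2} \cdot (t\ell)^{t\ell}$. Expanding the moment, $\E[X^{\ell}] = \sum_{(T_1,\ldots,T_{\ell}) \in \cT^{\ell}} \E[\chi_{T_1 \triangle \cdots \triangle T_{\ell}}]$, and since the $\g_e$ are independent mean-zero $\pm 1$ random variables, each summand equals $1$ if $T_1 \triangle \cdots \triangle T_{\ell} = \emptyset$ and $0$ otherwise. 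So $\E[X^{\ell}]$ is exactly the number of $\ell$-tuples $(T_1,\ldots,T_\ell) \in \cT^\ell$ for which every edge in the multi-union $T_1 \cup \cdots \cup T_{\ell}$ occurs with even multiplicity.

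The heart of the argument is the combinatorial claim that every such tuple satisfies $|\V(T_1) \cup \cdots \cup \V(T_{\ell})| \leq t\ell/2$. I would prove this by contradiction: suppose some vertex $v$ of the union lies in $\V(T_i)$ for exactly one index $i$. By definition of $\V(\cdot)$, $v$ is incident to some edge $\{v,u\} \in T_i$. But no other $T_j$ contains $v$, so $\{v,u\}$ appears nowhere else in the multi-union and hence has odd multiplicity, contradicting the parity condition. Hence each vertex of the union appears in at least two of the $T_i$, and since the orbit hypothesis gives $|\V(T_i)| = t$ for every $i$, a double-counting argument yields $t\ell = \sum_i |\V(T_i)| \geq 2\,|\V(T_1) \cup \cdots \cup \V(T_{\ell})|$.

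Given this bound, the count is immediate: choose the at most $t\ell/2$ vertices of the union in at most $n^{t\ell/2}$ ways, then for each of the $\ell$ values of $i$ specify an embedding of $T$'s $t$ vertices into the chosen vertex set in at most $(t\ell)^{t}$ ways, for a total of at most $n^{t\ell/2} (t\ell)^{t\ell}$ valid tuples. Combining with the Markov step yields the stated bound. The one nontrivial ingredient is the vertex-count claim, which falls out cleanly from the parity and incidence conditions; the remainder is routine counting, so I anticipate no serious obstacle.
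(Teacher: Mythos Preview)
Your proposal is correct and follows essentially the same approach as the paper: expand the $\ell$-th moment, observe that only tuples with $T_1 \oplus \cdots \oplus T_\ell = \emptyset$ contribute, show that every vertex of $\bigcup_i \V(T_i)$ must lie in at least two of the $\V(T_i)$ (hence the union has at most $t\ell/2$ vertices), and count tuples by first choosing the vertex pool and then the embedding of each $T_i$. Your justification of the vertex-count step is in fact slightly more explicit than the paper's, but the argument is the same.
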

\begin{proof}
  Let $\ell \in \N$ be a parameter to be chosen later.
  We will estimate $\E_{G \sim \G(n,1/2)} [( \sum_{T \in \cT} \chi_T)^\ell]$.
  \begin{align*}
    \E_{G \sim \G(n,1/2)} \Brac{\Paren{ \sum_{T \in \cT} \chi_T}^\ell} & = \sum_{T_1,\ldots,T_{\ell} \in \cT} \E_{G \sim \G(n,1/2)} \prod_{j \leq \ell} \chi_{T_j}\\
    & = | \{ (T_1,\ldots,T_\ell) \, : \, \E \prod_{j \leq \ell} \chi_{T_j} = 1 \} |\mper
  \end{align*}
  In order to have $\E \prod_{j \leq \ell} \chi_{T_j} = 1$, every edge in the multiset $\bigcup_{j \leq \ell} T_j$ must appear at least twice, so every vertex in the multiset $\bigcup_{j \leq \ell} \V(T_j)$ also appears at least twice.
  Thus, this multiset contains at most $t\ell/2$ distinct vertices.
  Since each $T_j \in \cT$, each is uniquely determined by an ordered tuple of $t$ elements of $[n]$.
  Thus, there are at most $n^{t\ell/2} \cdot (t\ell)^{t\ell}$ distinct choices for $(T_1,\ldots,T_\ell)$, so
  \begin{align*}
    \E_{G \sim \G(n,1/2)} \Brac{\Paren{ \sum_{T \in \cT} \chi_T}^\ell}  \leq n^{t\ell/2} \cdot (t\ell)^{t\ell}.
  \end{align*}
  For even $\ell$, by Markov's inequality,
  \begin{align*}
    \Pr \left \{ \left | \sum_{T \in \cT} \chi_T \right | > s \right \} & = \Pr \left \{ \left | \sum_{T \in \cT} \chi_T) \right |^\ell > s^\ell \right \}\\
    & \leq \frac{n^{t\ell/2} \cdot (t\ell)^{t\ell}}{s^\ell}\mper \qedhere
    \end{align*}
\end{proof}

\restatelemma{lem:normalization}
\begin{proof}
  We will prove the statement regarding $\pE[1]$; the bound for $\pE[\sum_{i \in [n]} x_i]$ is almost identical.

  Recall the Fourier expansion
  \[
    \pE[1] - 1 = \sum_{\substack{T \subseteq \nchoose{2}\\ 2 \leq |\V(T)| \leq \tau}} \Paren{\frac \omega n}^{|\V(T)|} \cdot \chi_T\mper
  \]
  Considering each $T \subseteq \nchoose{2}$ as a graph, we partition $\{ T \subseteq \nchoose{2} \, : \, |\V(T)| = t\}$ into $p_t$ families $\{ \cT_i^t \}_{i = 1}^p$ by placing $T$ and $T'$ in the same family iff there exists a permutation $\sigma : [n] \rightarrow [n]$ of vertices so that $\sigma(T) = T'$.
  Thus,
  \[
    \pE[1] - 1 = \sum_{t = 2}^\tau \Paren{\frac \omega n}^{t} \sum_{i = 1}^{p_t} \sum_{T \in \cT_i^t} \chi_T \leq \sum_{t = 2}^\tau \Paren{\frac \omega n}^{t} \sum_{i = 1}^{p_t} \left | \sum_{T \in \cT_i^t} \chi_T \right |\mper
  \]
  By Lemma~\ref{lem:sum-parities} (taking $\ell = (\log n)^2$), and since $t \leq \tau \leq \log n$, each $\cT_i^t$ satisfies
  \[
    \Pr \left \{ \left | \sum_{T \in \cT_i^t} \chi_T \right | < O(n^{t/2} \cdot (\log n)^{3t} )\right \} \geq 1 - (\tau \cdot 2^{t^2} \cdot n^{\log n})^{-1}\mper
  \]
  By a union bound over all $p_t \leq 2^{t^2}$ families $\cT_i^t$, we get that with high probability,
  \[
    |\pE[1] - 1| \leq \tau \cdot \max_{t \leq \tau} \Paren{2^{t^2} \cdot \Paren{\frac \omega {\sqrt n}}^{t} }\mper
  \]
  For $\tau \leq (\epsilon /2) \log n$ and $\omega = n^{1/2 - \epsilon}$, this is at most $n^{-\Omega(\epsilon)}$.
\end{proof}.

\subsection{Combinatorial Proofs about Ribbons}
\label{sec:ribbons-proofs}
In this section we prove Lemma~\ref{lem:left-right-sep}, restated here:
\restatelemma{lem:left-right-sep}

We start by defining a natural partial order on the set of vertex separators in a ribbon $\cR$.
\begin{definition}
We write $Q_1 \leq Q_2$ for two vertex separators $Q_1$ and $Q_2$ of an $(I,J)$-ribbon $\cR$ if $Q_1$ separates $I$ and $Q_2$.
\end{definition}
Next, we check that the definition above indeed is a partial order.
\begin{lemma}
For any set of minimum vertex separators $Q_1, Q_2, Q_3$ an $(I,J)$-ribbon, we have:
\begin{enumerate}
\item $Q_1 \leq Q_1$.
\item If $Q_1 \leq Q_2$ and $Q_2 \leq Q_3$, then, $Q_1 \leq Q_3$. 
\item If $Q_1 \leq Q_2$ and $Q_2 \leq Q_1$, then, $Q_1 = Q_2$. 
\end{enumerate}
\end{lemma}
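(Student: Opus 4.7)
The plan is to verify the three order axioms in turn, with only the last one requiring any nontrivial input.

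For reflexivity (1), I would observe that any path from $I$ to $Q_1$ terminates at a vertex of $Q_1$, and so trivially ``passes through'' $Q_1$. Thus $Q_1$ separates $I$ from $Q_1$.

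For transitivity (2), given $Q_1 \leq Q_2$ and $Q_2 \leq Q_3$, I would pick an arbitrary path $P$ from $I$ to $Q_3$. By $Q_2 \leq Q_3$, the path $P$ contains some vertex $v \in Q_2$. The prefix of $P$ from $I$ up to $v$ is itself a path from $I$ to $Q_2$, so by $Q_1 \leq Q_2$ it must meet $Q_1$; hence $P$ meets $Q_1$.

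The heart of the argument is antisymmetry (3), and this is where minimality gets used via Menger's theorem (Fact~\ref{fact:Mengers}). Set $k = |Q_1| = |Q_2|$, which equals the separator size of $\cR$. By Menger, there exist $k$ pairwise vertex-disjoint paths $P_1,\ldots,P_k$ from $I$ to $J$ in $W_\cR$ (treating any vertex of $I\cap J$ as a length-$0$ path, which must lie in every separator). Every such path meets $Q_1$ at least once, and vertex-disjointness together with $|Q_1|=k$ forces each $P_i$ to contain \emph{exactly} one vertex $v_i \in Q_1$, with $\{v_1,\ldots,v_k\} = Q_1$; the analogous statement holds for $Q_2$ with vertices $w_i$. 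Now the assumption $Q_1 \leq Q_2$ applied to the prefix of $P_i$ from $I$ up to $w_i$ (which is a path from $I$ to $Q_2$) forces this prefix to hit $Q_1$, and the only vertex of $Q_1$ available on $P_i$ is $v_i$, so $v_i$ occurs at or before $w_i$ along $P_i$. Symmetrically, $Q_2 \leq Q_1$ forces $w_i$ to occur at or before $v_i$. Hence $v_i = w_i$ for each $i$, giving $Q_1 = \{v_i\} = \{w_i\} = Q_2$.

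The only real obstacle is ensuring that the Menger-based counting argument for (3) correctly accounts for the fact that $I$ and $J$ may intersect and that some vertices of $I\cup J$ may be isolated in $\cR$; these are handled by the standard set-to-set formulation of Menger's theorem (length-$0$ paths for $v\in I\cap J$, and irrelevant isolated vertices in $I\setminus J$ or $J\setminus I$ which impose no constraint on any separator). Everything else is bookkeeping.
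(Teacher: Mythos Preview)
Your proposal is correct and follows essentially the same approach as the paper: reflexivity is trivial, transitivity is the ``prefix-of-a-path'' argument, and antisymmetry uses Menger's theorem to find $k$ vertex-disjoint $I$--$J$ paths, each meeting $Q_1$ and $Q_2$ in exactly one vertex, and then deduces $v_i=w_i$ from the mutual ordering. The paper compresses the last step to ``it is then immediate,'' whereas you spell out the at-or-before argument along each $P_i$; your added remark about $I\cap J$ and isolated vertices is a harmless clarification.
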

\begin{proof}
The first statement is immediate from the definition. For the second, consider a path $P$ from $I$ to $Q_3$ in $\cR$. Since $Q_2 \leq Q_3$, $P$ passes through a vertex in $Q_2$. Thus, $P$ contains a subpath that connects $I$ and $Q_2$. But since $Q_1 \leq Q_2$, this subpath must pass through $Q_1$. Thus, any such $P$ must pass through $Q_1$ and thus, $Q_1 \leq Q_3$.

Finally, for the third statement, let $k = |Q_1| = |Q_2|$. Then, using Menger's theorem (Fact \ref{fact:Mengers}, there is a set of $k$ vertex disjoint paths $P_1, P_2, \ldots, P_k$ between $I$ and $J$. By virtue of $Q_1, Q_2$ being \emph{minimum} vertex separators of $\cR$, $Q_1$ and $Q_2$ must intersect each $P_i$ in exactly one vertex. It is then immediate that the only way $Q_1 \leq Q_2$ and $Q_2 \leq Q_1$ if every $P_i$ intersects $Q_1, Q_2$ in the same vertex.
\end{proof}

Now we can prove Lemma~\ref{lem:left-right-sep}.

\begin{proof}[Proof of Lemma~\ref{lem:left-right-sep}]
It is enough to show that for any two minimum separators $Q_1, Q_2$ of size $k$ in $R$, there are separators $Q_L, Q_R$ such that $Q_L \leq Q_1 \leq Q_R$ and $Q_L \leq Q_2 \leq Q_R$. We now construct $Q_L$ and $Q_R$ as required.

Let $U = Q_1 \cap Q_2$ and $V = Q_1 \Delta Q_2$. Let $W_L \subseteq V$ be the set of vertices $w$ such that there is a path from $I$ to $w$ that doesn't pass through $Q_1 \cup Q_2$. Similarly, let $W_R \subseteq V$ be the set of vertices such that there is a path from $w$ to some vertex in $J$ that doesn't pass through any vertex in $Q_1 \cup Q_2$. Then we first observe:

\begin{claim}
$W_L \cap W_R = \emptyset.$
\end{claim}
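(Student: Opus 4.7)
The plan is to prove the claim by contradiction. Suppose there exists some vertex $w \in W_L \cap W_R$. By definition of $W_L$, there is a path $P_L$ from some vertex of $I$ to $w$ whose internal vertices avoid $Q_1 \cup Q_2$. By definition of $W_R$, there is a path $P_R$ from $w$ to some vertex of $J$ whose internal vertices avoid $Q_1 \cup Q_2$. Concatenating $P_L$ and $P_R$ yields a walk from $I$ to $J$ whose only vertex in $Q_1 \cup Q_2$ is $w$ itself (since all other vertices along $P_L$ and $P_R$ are outside $Q_1 \cup Q_2$).

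Now use that $w \in V = Q_1 \symdiff Q_2$, so $w$ lies in exactly one of $Q_1$ or $Q_2$. Without loss of generality, assume $w \in Q_1 \setminus Q_2$. Then the concatenated walk from $I$ to $J$ avoids $Q_2$ entirely: its only intersection with $Q_1 \cup Q_2$ is $w$, and $w \notin Q_2$. This walk contains a simple $I$-to-$J$ path that also avoids $Q_2$, contradicting the fact that $Q_2$ is a vertex separator between $I$ and $J$ in $\cR$.

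The argument is symmetric in $Q_1$ and $Q_2$, so the case $w \in Q_2 \setminus Q_1$ is handled identically. Hence no such $w$ exists, and $W_L \cap W_R = \emptyset$. There is no real obstacle here; the only thing to be careful about is that concatenating $P_L$ and $P_R$ only produces a walk, not necessarily a simple path, but any walk from $I$ to $J$ must still meet every vertex separator, so this is harmless.
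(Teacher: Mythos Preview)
Your proof is correct and follows essentially the same approach as the paper: argue by contradiction, concatenate the $I$-to-$w$ and $w$-to-$J$ paths, and use that $w \in Q_1 \symdiff Q_2$ to conclude the resulting walk avoids one of the two separators. Your version simply spells out the details (in particular, why the walk misses one separator) that the paper's one-sentence proof leaves implicit.
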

\begin{proof}[Proof of Claim]
Assume otherwise and let $w \in W_L \cap W_R$. Then there is a path between $I$ and $J$ that doesn't go through any vertex in at least one of $Q_1$ or $Q_2$ contradicting that both are in fact vertex separators.
\end{proof}

Next, we have:
\begin{claim}
Let $Q_L = U \cup W_L$ and $Q_R = U \cup W_R$. Then $Q_L, Q_R$ are both vertex separators in $R$.
\end{claim}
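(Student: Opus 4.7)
The plan is to prove that $Q_L$ is a vertex separator for $\cR$ by contradiction, and then observe that the argument for $Q_R$ is the exact mirror image. Suppose toward contradiction there is a path $P$ in $\cR$ from some vertex of $I$ to some vertex of $J$ that avoids every vertex of $Q_L = U \cup W_L$. Since $Q_1$ is itself a separator, $P$ must contain at least one vertex of $Q_1$, so in particular $P$ meets $Q_1 \cup Q_2$. Let $v$ be the \emph{first} vertex along $P$ (starting from the $I$-endpoint) that lies in $Q_1 \cup Q_2$.

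Next I would split into two cases based on where $v$ lies. If $v \in U = Q_1 \cap Q_2$, then $v \in Q_L$, contradicting the choice of $P$. Otherwise $v \in V = Q_1 \Delta Q_2$. Consider the initial subpath $P'$ of $P$ from its $I$-endpoint up to $v$: by the minimality in our choice of $v$, no interior vertex of $P'$ lies in $Q_1 \cup Q_2$, so $P'$ is exactly a path of the sort used to define $W_L$. Hence $v \in W_L \subseteq Q_L$, again a contradiction. Therefore no such $P$ exists, so $Q_L$ separates $I$ from $J$. The argument for $Q_R$ is symmetric: given a hypothetical $I$-$J$ path $P$ avoiding $Q_R = U \cup W_R$, let $v$ be the \emph{last} vertex along $P$ lying in $Q_1 \cup Q_2$; either $v \in U \subseteq Q_R$, or else $v \in V$ and the terminal subpath from $v$ to the $J$-endpoint witnesses $v \in W_R \subseteq Q_R$.

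I do not expect any step to be a real obstacle — this is a standard extremality argument for vertex separators. The only mildly delicate point is the convention in the definitions of $W_L$ and $W_R$: the phrase ``path from $I$ to $w$ that does not pass through $Q_1 \cup Q_2$'' must be read as saying the vertices strictly between $I$ and $w$ avoid $Q_1 \cup Q_2$, since $w \in V \subseteq Q_1 \cup Q_2$ would otherwise trivialize the definition. Under this reading the two case analyses close the argument, and this claim then plugs directly into the proof of Lemma~\ref{lem:left-right-sep} by iterating over all pairs of minimum separators to obtain a unique leftmost and rightmost one.
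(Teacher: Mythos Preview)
Your proposal is correct and follows essentially the same approach as the paper: assume a path $P$ from $I$ to $J$ avoids $Q_L$, take the first vertex $v$ on $P$ lying in $Q_1 \cup Q_2$, and observe that $v$ must then lie in $U$ or in $W_L$, giving a contradiction. The paper's proof is slightly terser but identical in substance, and your observation about reading the definition of $W_L$ so that only the internal vertices of the witnessing path must avoid $Q_1 \cup Q_2$ is exactly the intended interpretation.
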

\begin{proof}[Proof of Claim]
We only give the argument for $Q_L$, the other case is similar. Assume there is a path $P$ from $I$ to $J$ that does not pass through $Q_L$. $P$ must intersect $Q_1 \cup Q_2$. Then there is a vertex $v \in Q_1 \cup Q_2$ such that there is a path  $I$ to $v$ which intersects no other vertices in $Q_1 \cup Q_2$. This implies that either $v \in U$ or $v \in W_L$. But by our construction of $W_L$ this is a contradiction.
\end{proof}

Finally, we note that both $Q_L, Q_R$ must in fact be \emph{minimum} vertex separators. 
\begin{claim}
$|Q_L| = |Q_R| = |Q_1| = |Q_2| = k$
\end{claim}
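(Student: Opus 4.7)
The plan is to use a counting argument that exploits the disjointness of $W_L$ and $W_R$ established in the previous claim, together with the minimality of $Q_1$ and $Q_2$.

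First I will note the lower bounds: since $Q_L$ and $Q_R$ were shown to be vertex separators of $\cR$, and since $Q_1, Q_2$ are \emph{minimum} separators of size $k$, we immediately get $|Q_L| \geq k$ and $|Q_R| \geq k$. Therefore it suffices to show that $|Q_L| + |Q_R| \leq 2k$, since this combined with both individual lower bounds forces both to be exactly $k$.

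To bound the sum, I would compute explicitly. By construction, $Q_L = U \cup W_L$ and $Q_R = U \cup W_R$ with $W_L, W_R \subseteq V$. Since $U \cap V = \emptyset$ by definition and $W_L \cap W_R = \emptyset$ by the previous claim, we have
\[
|Q_L| + |Q_R| = 2|U| + |W_L| + |W_R| \leq 2|U| + |V| = |Q_1| + |Q_2| = 2k,
\]
where the inequality uses that $W_L$ and $W_R$ are disjoint subsets of $V$, and the penultimate equality uses $Q_1 = U \cup (Q_1 \cap V)$ and $Q_2 = U \cup (Q_2 \cap V)$ together with $V = (Q_1 \cap V) \sqcup (Q_2 \cap V)$. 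This completes the proof.

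There is no serious obstacle here; the only subtle point is recognizing that the disjointness of $W_L$ and $W_R$ (which was needed anyway for the separator claim) is exactly what forces the cardinality bound to close. Once $|Q_L| = |Q_R| = k$ is established, the overall lemma follows: for any two minimum separators $Q_1, Q_2$, we have produced minimum separators $Q_L \leq Q_1, Q_2 \leq Q_R$, so the leftmost and rightmost separators in the partial order from Definition above are unique (any two minimum separators both lie between a common $Q_L$ and $Q_R$, and by iterating one extracts a unique minimum and maximum).
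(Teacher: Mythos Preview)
Your proof is correct and follows essentially the same approach as the paper: both use the lower bounds $|Q_L|,|Q_R|\geq k$ from minimality of $Q_1,Q_2$, and then bound $|Q_L|+|Q_R| = 2|U|+|W_L|+|W_R| \leq 2|U|+|V| = |Q_1|+|Q_2| = 2k$ using the disjointness of $W_L$ and $W_R$ inside $V$. The only difference is the order of presentation.
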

\begin{proof}[Proof of Claim]
Let $|Q_1| = |Q_2| = k$. Then $2k = |Q_1| + |Q_2| = 2|U| + |V| \geq 2|U| + |W_L| + |W_R| = |U \cup W_L| + |U \cup W_R| = |Q_L| + |Q_R|$. Since $Q_L$ and $Q_R$ are vertex separators, $|Q_L|, |Q_R| \geq k$. Thus, $|Q_L| = |Q_R| = k$.
\end{proof}

Finally, we have the ordering requirement on $Q_L$ and $Q_R$.
\begin{claim}
$Q_L \leq Q_1$ and $Q_2 \leq Q_R$.
\end{claim}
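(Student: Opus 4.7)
The plan is to verify the two ordering relations by straightforward case analysis using the reachability definitions of $W_L$ and $W_R$.

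For $Q_L \leq Q_1$, I will take an arbitrary path $P$ from $I$ to some vertex $v \in Q_1$ in $\cR$ and show $P$ must meet $Q_L = U \cup W_L$. If $v \in U$, then $v \in Q_L$ and we are done. Otherwise $v \in Q_1 \setminus Q_2 \subseteq V$. Walk along $P$ from $I$ and let $u$ be the first vertex of $P$ lying in $Q_1 \cup Q_2$ (such a vertex exists since $v$ itself is one candidate). By construction the prefix of $P$ from $I$ to $u$ meets $Q_1 \cup Q_2$ only at $u$, so $u$ is reachable from $I$ via a path avoiding $Q_1 \cup Q_2$. If $u \in U$ then $u \in Q_L$; if $u \in V$ then by definition $u \in W_L \subseteq Q_L$. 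Either way $P$ hits $Q_L$, so $Q_L$ separates $I$ from $Q_1$.

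For $Q_2 \leq Q_R$, I will take an arbitrary path $P$ from $I$ to some $v \in Q_R = U \cup W_R$ and argue $P$ meets $Q_2$. The case $v \in Q_2$ (which subsumes $v \in U$) is immediate. The remaining case is $v \in W_R \setminus Q_2$; since $W_R \subseteq V = Q_1 \Delta Q_2$, this forces $v \in Q_1 \setminus Q_2$. By definition of $W_R$, there is a path $P'$ from $v$ to some vertex of $J$ whose interior avoids $Q_1 \cup Q_2$. Concatenating $P$ and $P'$ yields a walk from $I$ to $J$, which must meet the separator $Q_2$. Since the interior of $P'$ and its endpoint $v$ both lie outside $Q_2$, the intersection must occur on $P$, giving $Q_2 \leq Q_R$.

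The only subtlety worth flagging is the endpoint bookkeeping in the definitions of $W_L$ and $W_R$: a path witnessing membership in $W_L$ is allowed to end at a vertex of $Q_1 \cup Q_2$ but cannot pass through one earlier, and similarly for $W_R$. Both arguments above use precisely this: in the first, I extract such a witnessing prefix as the portion of $P$ up to its first hit with $Q_1 \cup Q_2$; in the second, I use the existing witnessing path $P'$ from $v$ and rely on the fact that $v$ itself is not in $Q_2$ to push the obligatory crossing of $Q_2$ onto $P$. No other combinatorial input is needed, and the claim (and hence Lemma~\ref{lem:left-right-sep}) follows.
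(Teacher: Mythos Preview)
Your proof is correct and, for $Q_L \leq Q_1$, is exactly the paper's argument: take the first vertex of $P$ lying in $Q_1\cup Q_2$ and observe it belongs to $U\cup W_L=Q_L$. For $Q_2 \leq Q_R$ the paper only writes ``the other case is similar'' without details; your concatenation argument is a clean direct route.

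One small wording issue to tighten: when you describe $P'$ as a path ``whose interior avoids $Q_1\cup Q_2$,'' your final step needs the $J$-endpoint $j$ of $P'$ to lie outside $Q_2$ as well (otherwise the concatenated walk could meet $Q_2$ at $j$ rather than on $P$). Your last paragraph makes clear you intend the reading that the $W_R$-witnessing path avoids $Q_1\cup Q_2$ at every vertex except the starting vertex $v$; under that reading $j\notin Q_2$ and the argument is complete. This is also the interpretation the paper implicitly uses (it is needed, for instance, for the earlier claim $W_L\cap W_R=\emptyset$), so you should simply replace ``interior'' by ``every vertex other than $v$'' in that sentence.
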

\begin{proof}[Proof of Claim]
Let $P$ be a path from $I$ to $Q_1$, let $v$ be the first vertex on this path which is in $Q_1 \cup Q_2$. Then, $v \in U$ or $v \in W_L$. Thus, $Q_L \leq Q_1$. The other case is similar.\qedhere
\end{proof}
This concludes the proof of the lemma.
\end{proof}

%\end{proof}

\section{Spectral Norms}
The results in this section are in essence due to Medarametla and Potechin \cite{MP16}.
For completeness, we state and prove them here in the language and notation of the current paper, with minor modifications as needed.

\restatelemma{lem:single-topology-matrix}

\begin{proof}[Proof of Lemma~\ref{lem:single-topology-matrix}]
  We proceed by the trace power method, with a dependence-breaking step beforehand.

  \paragraph{Breaking Dependence} Let $q_1,\ldots,q_p$ be vertex-disjoint paths from $A \setminus B$ to $B \setminus A$ in $U$.
  Without loss of generality we can take each to intersect $A \setminus B$ and $B \setminus A$ only at its endpoints.
  We will partition the space of labelings $\sigma$ into disjoint sets $S_1,\ldots,S_m$.
  For each $S_k$ there will be a partition $V_1^k,V_2^k$ of $[n]$ so that $\sigma(\bigcup_{j \leq p} q_j) \subseteq V_1^k$ and $\sigma(U \setminus (\bigcup_{j \leq p} q_j)) \subseteq V_2^k$ for every $\sigma \in S_k$.
  Let $(V_1^1,V_2^1),\ldots,(V_1^m,V_2^m)$ be a sequence of independent uniformly random partitions of $[n]$.
  Call a labeling $\sigma$ \emph{good} at $k$ if the preceeding conditions apply to $\sigma$ for the partition $V_1^k,V_2^k$ and not for any $V_1^{k'}, V_2^{k'}$ for some $k' < k$.
  Let $S_k = \{ \sigma \, : \, \sigma \mbox{ is good at $k$} \}$.

  \begin{claim}
    There is $m = O(2^t \cdot t \cdot \log n)$ so that $\bigcup_{k = 1}^m S_k$ contains every labeling $\sigma : U \rightarrow G$.
  \end{claim}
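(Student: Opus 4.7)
The plan is to prove the claim by the probabilistic method: I will show that with positive probability over the choice of $m$ independent uniformly random partitions, every relevant labeling $\sigma : U \to G$ is good at some $k \in [m]$, and hence a deterministic sequence of partitions with the required property must exist.

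First I would fix an arbitrary labeling $\sigma : U \to [n]$ that actually arises from a ribbon of shape $U$; in particular $\sigma$ is injective on $\cV(U)$, so its image $\sigma(U) \subseteq [n]$ has exactly $t$ elements and the images $\sigma(\bigcup_j q_j)$ and $\sigma(U \setminus \bigcup_j q_j)$ are disjoint subsets of $[n]$. Under a single uniformly random partition $(V_1^k, V_2^k)$, each element of $[n]$ lies in $V_1^k$ independently with probability $1/2$. Therefore the event ``$\sigma(\bigcup_j q_j) \subseteq V_1^k$ and $\sigma(U \setminus \bigcup_j q_j) \subseteq V_2^k$'' depends only on the placement of the $t$ distinct elements of $\sigma(U)$ and occurs with probability exactly $2^{-t}$.

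Next, since the $m$ partitions are independent, the probability that $\sigma$ fails the preceding conditions for every $k \in [m]$ is at most $(1 - 2^{-t})^m \leq \exp(-m \cdot 2^{-t})$. The number of injective labelings $\sigma$ is at most $n^t$, so a union bound yields that the probability that some relevant labeling fails to be consistent with any of the $m$ partitions is at most $n^t \cdot \exp(-m \cdot 2^{-t})$. Choosing $m = C \cdot 2^t \cdot t \cdot \log n$ for a sufficiently large absolute constant $C$ makes this strictly less than $1$, so some deterministic choice of $m$ partitions has the property that every relevant $\sigma$ satisfies the two inclusions for at least one $k$. Each such $\sigma$ is then good at the smallest such $k$, and hence $\bigcup_{k=1}^m S_k$ covers every labeling that will be relevant to the trace-power-method analysis that follows.

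The main subtlety -- more a bookkeeping issue than a genuine obstacle -- is to confirm that the labelings $\sigma$ one ultimately needs to cover really do have the images of the path portion and non-path portion of $U$ disjoint in $[n]$, so that the per-labeling consistency probability is exactly $2^{-|\sigma(U)|} \geq 2^{-t}$ rather than zero. For labelings arising from ribbons of shape $U$ this holds because the ribbon's vertex set is a set of $t$ distinct elements of $[n]$; any other labeling (e.g.\ one that collapses a path vertex with a non-path vertex) simply does not appear as a term in $M_U$ and can be ignored. With this observation in hand the probabilistic argument applies as stated, completing the proof of the claim.
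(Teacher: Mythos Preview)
Your proposal is correct and follows essentially the same probabilistic-method argument as the paper: both bound the failure probability for a single labeling by $(1-2^{-t})^m$, union-bound over at most $n^t$ labelings, and choose $m = \Theta(2^t \cdot t \cdot \log n)$ to make this probability strictly less than one. If anything, you are slightly more careful than the paper in explicitly noting that the labelings arising from ribbons are injective, so that the path and non-path images are disjoint in $[n]$ and the per-partition success probability really is $2^{-t}$.
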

  \begin{proof}
    For a fixed $\sigma$,
    \[
      \Pr \{\sigma \mbox{ not good for some $k \leq m$} \} \leq (1 - 2^{-t})^m
    \]
    since every vertex $u \in U$ is in $V_i$ with probability $1/2$.
    If $m \geq 10 t 2^t \log n$, then by a union bound over all $\sigma : U \rightarrow G$ (of which there are at most $n^t$), we get $\Pr \{ \mbox{ all $\sigma$ good for some $k \leq m$ } \} > 0$.
  \end{proof}

  Henceforth, let $S_1,\ldots,S_m$ be the partition guaranteed by the preceeding claim.
  For $k \leq m$, let $M_k(I,J) = \sum_{\sigma \in S_k \, : \, \sigma(A) = I, \sigma(B) = J} \val(\sigma)$.
  Then $M = \sum_{k = 1}^m M_k$.

  \paragraph{Moment Calculation} Let $\ell = \ell(n)$ be a parameter to be chosen later.
  By the triangle inequality, $\|M\| \leq \sum_{k =1}^m \|M_k\|$.
  Fix $k$.
  We expand $\E_G \Tr (M_k^\top M_k)^\ell$ as
  \[
    \E \Tr (M_k^\top M_k)^\ell = \E \sum_{\substack{\sigma_1,\ldots,\sigma_{2\ell} \in S_k \\ \sigma_{2i}(A) = \sigma_{2i-1}(A) \\ \sigma_{2i}(B) = \sigma_{2i+1}(B)}} \prod_{j=1}^{2\ell} \val(\sigma_j)\mper
  \]
  (Here arithmetic with indices $i$ is modulo $2\ell$, so for example we take $2i + 1 = 1$.)
  For any $\sigma$,
  \[
    \val(\sigma) = \prod_{(i,j) \in U} \g_{\sigma(i),\sigma(j)}\mper
  \]
  Notice that for all $\sigma_1,\ldots,\sigma_{2\ell}$, the expectation $\E \prod_{j=1}^{2\ell} \val(\sigma_j)$ is either $0$ or $1$.
  We will bound the number of $\sigma_1,\ldots,\sigma_{2\ell}$ for which $\E \prod_{j = 1}^{2\ell} \val(\sigma_j) = 1$ by bounding the number of distinct labels such a family of labelings may assign to vertices in $U$.

  Fix $\sigma_1,\ldots,\sigma_{2\ell} \in S_k$.
  Consider the family $q_1,\ldots,q_p$ of vertex-disjoint paths.
  Every edge in every $q_j$ receives one pair of labels from each $\sigma_i$.
  Consider these labels arranged on $2\ell$ adjoined copies of each $q_j$, one for each $\sigma$ (giving $p$ paths with $2\ell \sum_{j \leq p} |q_j|$ edges in total, where $|q_j|$ is the number of edges in $q_j$).
  Every pair of labels $\{\sigma_i(v), \sigma_i(w) \}$ appearing on an edge $(v,w)$ in this graph must also appear on some distinct edge $(v',w')$ in order to have $\E \prod_{i =1}^{2\ell} \val(\sigma_i) =1$; otherwise the disjointness of $V_1^k, V_2^k$ would be violated.
  Merging edges which received the same pair of labels, we arrive at a graph with at most $p$ connected components and at most $\ell \sum_{j \leq p} |q_j|$ edges, and so at most $\ell \sum_{j \leq p} |q_j| + p$ vertices.
  Thus, the vertices in $q_1,\ldots,q_p$ together receive at most $\ell \sum_{j \leq p} |q_j| + p$ distinct labels among all $\sigma_1,\ldots,\sigma_{2\ell}$.

  Next we account for labels of $v \notin (\bigcup_{j \leq p} q_j \cup A \cup B)$.
  If $\E_G \prod_{i = 1}^{2\ell} \val(\sigma_i) = 1$ then the $2\ell$-size multiset $\{ \sigma_i(v) \}_{i \leq 2\ell}$ of labels for such $v$ contains at most $\ell$ distinct labels, since by assumption $v$ has degree at least $1$ in $U$.

  Next we account for labels of vertices in $A \setminus (B \cup \bigcup_{j \leq p} q_j)$ and $B \setminus (A \cup \bigcup_{j \leq p} q_j)$.
  Every such vertex receives a label from every $\sigma_i$, but $\sigma_{2i}$ and $\sigma_{2i -1}$ must agree on $A$-labels and $\sigma_{2i}$ and $\sigma_{2i + 1}$ must agree on $B$-labels.
  So in total there are at most $\ell (|A| + |B| - 2p - 2r)$ distinct labels for such vertices.

  This means that among the labels $\sigma_i(j)$ for all $j \notin A \cap B$, there are at most
  \[
    \underbrace{\ell \sum_{j \leq p} |q_j| + p}_{\mbox{labels from paths}} + \underbrace{\ell (|A| + |B| - 2p - 2r)}_{\mbox{additional vertices in } A \cup B \setminus (A \cap B)} + \underbrace{\ell(t - (|A| + |B| -r) - (\sum_j |q_j| -p))}_{\mbox{vertices in } U \setminus (\bigcup_j q_j \cup A \cup B)} = \ell(t - p -r) + p
  \]
  unique labels.

  Finally, consider the labels of the $r$ vertices $j_1,\ldots, j_r$ in $A \cap B$.
  The first labelling $\sigma_1$ assigns these vertices some $\sigma_1(j_1),\ldots,\sigma_1(j_r)$ labels in $G$.
  Since $\sigma_2$ agrees with $\sigma_1$ on $A$-vertices, we must have $\sigma_2(j_1) = \sigma_1(j_1),\ldots,\sigma_1(j_r) = \sigma_2(j_r)$.
  Since $\sigma_3$ agrees with $\sigma_2$ on $B$-vertices, we must have $\sigma_3(j_1) = \sigma_2(j_1),\ldots,\sigma_3(j_r) = \sigma_2(j_r)$, and so on.
  So there are at most $r$ unique labels for such vertices.

  Now we can assess how many choices there are for $\sigma_1,\ldots,\sigma_{2\ell} \in S_k$ so that $\E \prod_{i \leq 2\ell} \val(\sigma_i) = 1$.
  To choose such a collection $\sigma_1,\ldots,\sigma_{2\ell}$, we proceed in stages.
  \begin{enumerate}[\textbf{{Stage}} 1.]
    \item Choose the labels $\sigma_i(j_1),\ldots,\sigma_i(j_r)$ of all the vertices in $A \cap B$.
      Here there are at most $n^r$ options.
    \item For each pair $(i,j)$, where $j \notin A \cap B$, choose whether $\sigma_i(j)$ it will be the first appearance of the index $\sigma_i(j) \in [n]$ or if there is some $i' < i$ and $j'$ so that $\sigma_{i'}(j') = \sigma_i(j)$.
      Here there are $2^{2\ell t}$ options.
    \item Choose the labels $\sigma_i(j) \in [n]$ for all $j \notin A \cap B$ and pairs $(i,j)$ which in Stage 2 we chose to be the first appearance of a label.
      If there are $x$ such vertices, there are at most $n^x$ options.
    \item Choose the labels $\sigma_i(j) \in [n]$ for all the pairs $(i,j)$, with $j \notin A \cap B$, which in Stage 2 we chose not to be the first apperance of a label.
      Here there are at most $x^{2\ell t -2\ell r - x}$ options.
  \end{enumerate}

  All together, there are at most $n^r \cdot 2^{2\ell t} \cdot n^x \cdot x^{2\ell (t-r) - x} \leq n^r \cdot 2^{2\ell t} \cdot n^x \cdot (2\ell{t})^{2\ell (t-r) - x}$ choices for a given $x$.
  Since $4lt \ll n$, summing up over all $x \leq \ell(t - p - r) + p$ the total number of choices is at most $2n^r \cdot 2^{2\ell t} \cdot n^{\ell(t - p - r) + p} \cdot (2 \ell t)^{\ell(t -r + p) - p}$.
  Putting it together,
  \[
    \E \Tr(M_k^\top M_k)^\ell \leq 2n^r \cdot n^{\ell(t - p - r) + p} \cdot (2 \ell t)^{\ell(t -r + p) - p}\mper
  \]

%We now upper bound the number of ways there can be $x$ unique labels. There are at most $(2{\ell}t)^x$ choices for the $x$ vertices which get a label that has not yet appeared. There are at most $n^{x}$ ways to choose the unique labels in order. For the remaining $2{\ell}t-x$ vertices which get labels which have appeared before, there are at most $x^{2{\ell}t-x} \leq (2{\ell}t)^{2{\ell}t-x}$ choices for their labels. Putting everything together, there are at most $(2{\ell}t)^{2{\ell}t}n^{x}$ ways to have exactly $x$ unique lables. Summing this up over all $x \leq \ell(t - p) + p$, there are at most $2(2{\ell}t)^{2{\ell}t}n^{\ell(t - p) + p}$ ways to have at most $\ell(t - p) + p$ unique labels and thus at most $2(2{\ell}t)^{2{\ell}t}n^{\ell(t - p) + p}$ choices for $\sigma_1,\ldots,\sigma_{2\ell}$ with $\E \prod_{i \leq 2\ell} \val(\sigma_i) = 1$.
%  Thus, $\E \Tr (M_k^\top M_k)^{\ell} \leq 2(2{\ell}t)^{2{\ell}t}n^{\ell(t - p) + p}$.

  Now using Markov's inequality and standard manipulations, for any $s$,
  \begin{align*}
    \Pr\{\|M_k\| \geq s\} & = \Pr\{ \|M_k^\top M_k\|^\ell \geq s^{2\ell} \} \\
    & \leq \frac{\E \|(M_k^\top M_k)^{\ell}\|}{s^{2\ell}} \quad \text{by Markov's}\\
    & \leq \frac{\E \Tr (M_k^\top M_k)^{\ell}}{s^{2\ell}} \quad \text{since $\|(M_k^\top M_k)^\ell\| \leq \Tr (M_k^\top M_k)^\ell$}\\
    & \leq \frac{2n^r \cdot 2^{2\ell t} \cdot n^{\ell(t - p - r) + p} \cdot (2 \ell t)^{\ell(t - r + p) - p}}{s^{2\ell}}
  \end{align*}
  Taking $\ell = (\log n)^3$ and using $p \leq t \leq O(\log n)$, there is $s = 2^t \cdot n^{(t - p - r)/2} (\log n)^{O(t -r + p)}$ so that $\Pr\{\|M_k\| \geq s\} \leq n^{-100\log n} m^{-1}$.
  By a union bound, $\Pr \{\|M_k\| \leq s \mbox{ for all $k$} \} \geq 1 - n^{-100 \log n}$, so $\|M\| \leq sm$ with probability $1 - n^{-100 \log n}$.
  Since $m \leq 2^{O(t)} \cdot \log(n)^{O(1)}$, this completes the proof.
\end{proof}

\end{document}